\documentclass{article}
\usepackage[letterpaper, margin=1in]{geometry} 
\usepackage{amssymb}
\usepackage{amsmath}
\allowdisplaybreaks
\usepackage{amsthm}
\usepackage{xspace}
\usepackage{physics}
\usepackage{xcolor}
\usepackage{tikz}
\usetikzlibrary{positioning, calc}
\usepackage{mathrsfs}
\usepackage[dvipsnames]{xcolor}
\usepackage{mathtools}
\usepackage{thmtools}
\usepackage{dirtytalk}
\usepackage{stmaryrd}
\usepackage{enumitem}
\usepackage{relsize}
\usepackage{graphicx}
\usepackage[normalem]{ulem}
\usepackage{bbm}
\usepackage[colorlinks=true, allcolors=blue]{hyperref}
\usepackage[capitalise,nameinlink]{cleveref}
\usepackage{mathtools}
\usepackage{tcolorbox}
\usepackage{thmtools}
\usepackage{thm-restate}
\tcbuselibrary{skins, breakable}

\usepackage[font=small]{caption}

\newtheorem{theorem}{Theorem}[section]

\newtheorem{lemma}[theorem]{Lemma}
\newtheorem{definition}[theorem]{Definition}

\crefname{assumption}{Assumption}{Assumptions}
\newtheorem{proposition}[theorem]{Proposition}

\crefname{conjecture}{Conjecture}{Conjectures}
\newtheorem{observation}[theorem]{Observation}
\newtheorem{question}[theorem]{Question}

\newcommand{\cOp}[1]{\ensuremath{\mathrm{#1}}\xspace}
\newcommand{\cpO}{\cOp{cpO}}
\newcommand{\pC}{\cOp{pC}}
\newcommand{\cpwt}{\cOp{cpWT}}
\newcommand{\mcpwt}{\cOp{\widetilde{cpWT}}}
\newcommand{\cpwtZ}{\cOp{cpWT_{\circ}}}
\newcommand{\cpLabInv}{\cOp{cpO^{inv}_{lab}}}
\newcommand{\cpLabFor}{\cOp{cpO^{for}_{lab}}}
\newcommand{\cpPerm}{\cOp{cpO_{edg}}}
\newcommand{\xorOP}{\cOp{P}}
\newcommand{\wtO}{\ensuremath{O^{(\mathrm{wt})}}\xspace}
\newcommand{\wtOZ}{\ensuremath{O^{\mathrm{wt}}_{\circ}}}

\newcommand{\wtF}{\ensuremath{F^{(\mathrm{wt})}}\xspace}

\newcommand{\middleCol}{\ensuremath{\mathrm{B}_{\mathrm{mid}}}\xspace}
\newcommand{\middleEdge}{\mathrm{B}_{\mathrm{edg}}\xspace}
\newcommand{\middleNodes}{\mathrm{V}_{\mathrm{mid}}\xspace}
\newcommand{\midBoundNodes}{\delta\mathrm{V}_{\mathrm{mid}}\xspace}
\newcommand{\edgeIndic}{\mathrm{b_{\mathrm{edg}}}\xspace}
\newcommand{\cmfWT}{\cOp{cmfWT}}
\newcommand{\cmfO}{\cOp{cmfO}}

\newcommand{\edgeVertex}[2]{\mathbf{e}_{#1}(#2)\xspace}
\newcommand{\probAssignEdge}[2]{\ensuremath{\kappa}_{#1}(#2)}
\newcommand{\colSeqSet}{\ensuremath{\mathcal{H}}}
\newcommand{\emptySeq}{\epsilon}
\newcommand{\colorVertMap}{\phi}
\newcommand{\mapLabels}{\ensuremath{\mathcal{P}}}
\newcommand{\outsideColorDb}{\ensuremath{C^{\mathrm{out}}}}
\newcommand{\outsideLabelDb}{\ensuremath{L^{\mathrm{out}}}}
\newcommand{\reprTuple}{\ensuremath{\mathcal{D}}}
\newcommand{\nbhood}{\ensuremath{\mathcal{N}}}
\newcommand{\goodMap}{G}
\newcommand{\shapeToDb}{j_{\colorVertMap}}
\newcommand{\canVec}[1]{\ket{g_{#1}}}
\newcommand{\assignVar}{\ensuremath{\mathcal{Z}}}
\newcommand{\famQParts}{\ensuremath{\mathfrak{B}}}
\newcommand{\typeInv}{(\mathrm{i})}
\newcommand{\typeFor}{(\mathrm{ii})}
\newcommand{\typeEdgOut}{(\mathrm{iii})}
\newcommand{\typeEdgeIn}{\mathrm{(iv)}}

\newcommand{\vecRem}{\xi^{\mathrm{rem}}}
\newcommand{\vecUnif}{\xi^{\mathrm{unif}}}

\newcommand{\pCLabelInv}{\cOp{pC_{lab}^{inv}}}
\newcommand{\pCPerm}{\cOp{pC_{edg}}}
\newcommand{\pCLabel}{\cOp{pC_{lab}^{for}}}
\newcommand{\pCfromPlus}{\cOp{pCF}}
\newcommand{\pcwt}{\cOp{pCWT}}
\newcommand{\mpCPerm}{\cOp{\widetilde{pC_{edg}}}}
\newcommand{\stepFunct}{\mathrm{sF}}

\newcommand{\gcoin}{\mathrm{G}}
\newcommand{\shiftOp}{\mathrm{S}}
\newcommand{\shiftComp}{\mathrm{S^{\circ}}}
\newcommand{\SWAP}{\mathrm{SWAP}}

\newcommand{\labelPerm}{\ensuremath{\sigma_{\textup{label}}}\xspace}
\newcommand{\colorPerm}{\ensuremath{\sigma_{\textup{edg}}}\xspace}

\newcommand{\colorPresSet}{\ensuremath{S_{\textup{edg}}}\xspace}
\newcommand{\head}{\ensuremath{\mathrm{head}}}
\newcommand{\tail}{\ensuremath{\mathrm{tail}}}
\newcommand{\treeEdges}{E_{\mathrm{tr},c,\alpha}}
\newcommand{\weldedEdges}{E_{\mathrm{wld}, \alpha}}

\newcommand{\multDistr}{\ensuremath{\mathcal{D}_{\mathrm{prod}}}\xspace}
\newcommand{\singDistr}{\ensuremath{\mathcal{D}_{\mathrm{sing}}}\xspace}
\newcommand{\evSingle}{\ensuremath{\mathcal{S}}\xspace}

\newcommand{\register}[1]{\ensuremath{\mathrm{#1}}\xspace}
\newcommand{\X}{\register{X}}
\newcommand{\Y}{\register{Y}}
\newcommand{\Z}{\register{Z}}
\newcommand{\A}{\register{A}}
\newcommand{\I}{\register{I}}

\newcommand{\C}{\register{C}}
\newcommand{\regO}{\register{O}}
\newcommand{\Lab}{\register{L}}
\newcommand{\regP}{\register{P}}

\newcommand{\db}{\ensuremath{\mathbf{I}}\xspace}

\newcommand{\col}{\ensuremath{\textup{col}}\xspace}

\newcommand{\rG}{\ensuremath{\hat{G}}}
\newcommand{\rV}{\ensuremath{\hat{V}}}
\newcommand{\rE}{\ensuremath{\hat{E}}}
\newcommand{\vPerm}{\ensuremath{V_\mathrm{perm}}}

\newcommand{\InjLabel}{\ensuremath{\I\mathrm{lab}}\xspace}

\newcommand{\InjColPres}{\ensuremath{\I\mathrm{pres}}\xspace}

\newcommand{\Sym}{\ensuremath{\mathrm{Sym}}}

\newcommand{\blocks}{\mathrm{blocks}}
\newcommand{\partition}{\mathrm{partition}}
\newcommand{\intPart}{\ensuremath{\calV_{\partition}}}
\newcommand{\intBlocks}{\ensuremath{\calV_{\blocks}}}
\newcommand{\dbJ}{\mathbf{J}}
\newcommand{\inter}{\ensuremath{\calI_S}}

\newcommand{\prCross}{\ensuremath{\Pi_{\mathrm{Cross}}}}
\newcommand{\prCanonicalCross}{\ensuremath{\Pi_{\rho}^{\mathrm{can}}}}
\newcommand{\prWeld}{\ensuremath{\Pi_\rho}}
\newcommand{\prWeldOrtho}{\ensuremath{\Pi_\rho^{\bot}}}
\newcommand{\prFuture}{\ensuremath{W_{\rho, t}}}
\newcommand{\prBoundary}{\Pi^{\mathrm{bdry}}}
\newcommand{\prUnif}[1]{\ensuremath{\Pi_{#1}^{\mathrm{unif}}}}
\newcommand{\prGoodAssign}{\ensuremath{\Pi_{\colorVertMap}}}

\newcommand{\probWTO}{\ensuremath{p_{\mathrm{wt}}}}
\newcommand{\probCWT}{\ensuremath{p_{\cpwt}}}

\newcommand{\eventCrossEdge}{\ensuremath{\textup{Cross}_E}}
\newcommand{\eventBoundary}{\ensuremath{\textup{Bdry}}}

\newcommand{\insertDir}{\ensuremath{A_{\rho, t}^{\sigma}}}

\newcommand{\lastAddTime}{\ensuremath{\lambda}}

\DeclareMathOperator{\EX}{\mathbb{E}}
\newcommand{\poly}{\ensuremath{\mathrm{poly}}}

\DeclareMathOperator{\dom}{\mathrm{Dom}}
\DeclareMathOperator{\im}{\mathrm{Im}}
\DeclareMathOperator{\ran}{\mathrm{Ran}}
\DeclareMathOperator{\Id}{\mathbbm{1}}
\DeclarePairedDelimiter\ceil{\lceil}{\rceil}
\DeclarePairedDelimiter\floor{\lfloor}{\rfloor}

\newcommand{\groot}{\ensuremath{\mathrm{r}}} % I am g(raph) root :)
\newcommand{\depth}{\ensuremath{\mathrm{depth}}}
\newcommand{\strucIdx}{\ensuremath{\mathrm{w}}\xspace}
\newcommand{\alg}{\ensuremath{\mathcal{A}}\xspace}

\newcommand{\calH}{\ensuremath{\mathcal{H}}}
\newcommand{\calR}{\ensuremath{\mathcal{R}}}

\newcommand{\calA}{\ensuremath{\mathcal{A}}}

\newcommand{\calI}{\ensuremath{\mathcal{I}}}
\newcommand{\calM}{\ensuremath{\mathcal{M}}}
\newcommand{\calP}{\ensuremath{\mathcal{P}}}
\newcommand{\calV}{\ensuremath{\mathcal{V}}}
\newcommand{\calK}{\ensuremath{\mathcal{K}}}
\newcommand{\calS}{\ensuremath{\mathcal{S}}}
\newcommand{\calD}{\ensuremath{\mathcal{D}}}

\newcommand{\pairPath}{\mathbf{x_{\textup{path}}}}

\providecommand{\Span}{\mathsf{Span}}
\newcommand{\supp}{\mathrm{Supp}}
\newcommand{\decrFact}[2]{(#1)_{#2}}

\newcommand{\Ext}{\ensuremath{\mathrm{Ext}}}

\colorlet{colorA}{RoyalBlue}
\colorlet{colorB}{ForestGreen}
\colorlet{colorC}{BrickRed}
\newcommand{\colorSet}{\ensuremath{\mathcal{C}}}
\newcommand{\upcolor}{\ensuremath{\chi}}
\newcommand{\colF}{\ensuremath{\mathrm{Color}}\xspace}
\newcommand{\ncF}{\ensuremath{\mathrm{Nb}}}

\newcommand{\colClass}[1]{\ensuremath{\mathcal{C}_{#1}}}
\newcommand{\colClassSize}[1]{\ensuremath{\mathrm{N}_{#1}}}

\usepackage{mdframed}
  {%
  \begin{mdframed}%
  [leftmargin=.5em,rightmargin=0em,#1,bottomline=false,topline=false,rightline=false,linewidth=2pt,linecolor=black!20,innerleftmargin=.5em,innertopmargin=0pt,innerrightmargin=0pt]\begingroup\vspace*{0pt}}%
  {\endgroup
  \end{mdframed}
  }

  {%
  \begin{mdframed}%
  [leftmargin=.5em,rightmargin=0em,#1,bottomline=false,topline=false,rightline=false,linewidth=2pt,linecolor=red!20,innerleftmargin=.5em,innertopmargin=0pt,innerrightmargin=0pt]\begingroup\vspace*{-2pt}}%
  {\endgroup
  \end{mdframed}
  }

\definecolor{algbg}{HTML}{F7F7F7}
\definecolor{algframe}{HTML}{CCCCCC}
\definecolor{alglabel}{HTML}{555555}
\definecolor{inputlabel}{HTML}{888888}

\newcounter{algoctr}[section]

\crefname{algoctr}{Algorithm}{Algorithms}
\Crefname{algoctr}{Algorithm}{Algorithms}

\newcommand{\alglabel}[1]{%
  \setcounter{algoctr}{\value{theorem}}%
  \addtocounter{algoctr}{-1}%
  \refstepcounter{algoctr}%
  \label{#1}%
}

\newtcolorbox{algobox}[2][]{%          #1 = tcolorbox keys, #2 = title text
  enhanced,
  breakable,
  colback    = algbg,
  colframe   = algframe,
  boxrule    = 0.3pt,
  arc        = 1.5pt,
  left       = 10pt,
  right      = 10pt,
  top        = 2pt,
  bottom     = 10pt,
  fontupper  = \small,
  title      = {\refstepcounter{theorem}%
                \small\textsc{Algorithm \thetheorem}\;\;#2},
  coltitle   = alglabel,
  colbacktitle = algbg,
  toptitle   = 6pt,
  bottomtitle = 6pt,
  #1
}
 
\title{Hardness of Pathfinding in a Welded Tree}

\author{David Miloschewsky\\\small{Department of Computer Science, }\\
\small{Stony Brook University}\\
\small{dmiloschewsk@cs.stonybrook.edu}\and Supartha Podder
\\
\small{Department of Computer Science, }\\
\small{Stony Brook University}\\\small{supartha@cs.stonybrook.edu}}

\date{}

\begin{document}

\maketitle

\begin{abstract}
    Starting from the entrance of a welded tree, a quantum walk algorithm can find its exit vertex exponentially faster than any classical algorithm~\cite{CCD+03}. However, it has been an open question whether any quantum algorithm is able to efficiently find a path from the entrance to the exit~\cite{Aar21,CCG23}. We answer this by proving an exponential quantum query lower bound for finding such path. Specifically, for trees of height $n$, any quantum query algorithm requires at least $\Omega(2^{n/24})$ queries in order to succeed with constant probability.

    Our proof uses the compressed permutation oracle technique of~\cite{Car25} in order to construct databases which track the graph information an algorithm has learned and forgotten, and show that no efficient quantum algorithm can build an entrance-to-exit path in these records.
\end{abstract}

\tableofcontents

\section{Introduction}

The study of quantum algorithms is motivated by the search for problems which quantum computers can solve faster than their classical counterparts. While polynomial speedups such as those for search~\cite{Gro97} and amplitude estimation~\cite{BHMT02} are valuable, we are primarily interested in superpolynomial speedups. These include factoring~\cite{Sho97}, Simon's problem~\cite{Sim97}, forrelation~\cite{AA18} and the code intersection problem~\cite{YZ24}. Another example is the welded trees problem in which a quantum walk algorithm can find the exit vertex exponentially faster than any classical algorithm~\cite{CCD+03}. In this paper, we ask whether this speedup extends to outputting a path from the entrance to the exit.

A welded tree graph $G_n$ consists of two binary trees of height $n+1$ whose leaves are joined by a cycle which alternates between the two sides (see \cref{fig:basic_welded_tree}). We call the roots of the trees the \emph{entrance} and \emph{exit} and the edges on the cycle the \emph{welded edges}. %Letting $N\coloneq 2^n$, notice that $G_n$ contains $\Theta(N)$ vertices and each tree has $2^n$ leaves.
In the welded trees problem, each vertex is assigned a random label, an algorithm is given the label of the entrance vertex and oracle access to neighboring labels and their task is to output the exit label. Notice that once found, the exit label can be immediately identified since it is the only node besides the entrance with two neighbors.

\begin{figure}
    \center
    % The graph of glued_tree.tex, in the layout of fresh_oracle.tex.
\begin{tikzpicture}[
    scale=0.8,
    x=1cm, y=1cm,
    basicVertex/.style={circle, draw=black!75, fill=black!20,
        minimum size=14pt, inner sep=0pt, line width=0.9pt},
    basicEdge/.style={draw=black!80, line width=0.9pt}
]

% Two binary trees facing one another across the weld.
\foreach \v/\x/\y in {
    0/0/0, 1/1.8/1.8, 2/1.8/-1.8,
    3/3.6/2.7, 4/3.6/0.9, 5/3.6/-0.9, 6/3.6/-2.7,
    7/5.4/3.15, 8/5.4/2.25, 9/5.4/1.35, 10/5.4/0.45,
    11/5.4/-0.45, 12/5.4/-1.35, 13/5.4/-2.25, 14/5.4/-3.15,
    22/8.2/3.15, 23/8.2/2.25, 24/8.2/1.35, 25/8.2/0.45,
    26/8.2/-0.45, 27/8.2/-1.35, 28/8.2/-2.25, 29/8.2/-3.15,
    18/10/2.7, 19/10/0.9, 20/10/-0.9, 21/10/-2.7,
    16/11.8/1.8, 17/11.8/-1.8, 15/13.6/0
} { \node[basicVertex] (v\v) at (\x,\y) {}; }

% Tree edges.
\foreach \u/\v in {
    0/1, 0/2, 1/3, 1/4, 2/5, 2/6,
    3/7, 3/8, 4/9, 4/10, 5/11, 5/12, 6/13, 6/14,
    15/16, 15/17, 16/18, 16/19, 17/20, 17/21,
    18/22, 18/23, 19/24, 19/25, 20/26, 20/27, 21/28, 21/29%
} { \draw[basicEdge] (v\u) -- (v\v); }

% The same weld edges as in glued_tree.tex.
% White underlays keep crossing edges visually distinct.
\foreach \u/\v in {
    7/24, 7/28, 8/23, 8/27, 9/22, 9/28, 10/26, 10/29,
    11/25, 11/29, 12/23, 12/27, 13/22, 13/24, 14/25, 14/26%
} {
    \draw[basicEdge, preaction={draw=white, line width=2.9pt}]
        (v\u) -- (v\v);
}
% Distinct colors for the entrance, exit, and leaves of the two trees.
\node[basicVertex, draw=colorA, fill=colorA!25] at (v0) {};
\node[basicVertex, draw=colorB, fill=colorB!25] at (v15) {};
\foreach \v in {7,8,9,10,11,12,13,14,22,23,24,25,26,27,28,29} {
    \node[basicVertex, draw=colorC, fill=colorC!25] at (v\v) {};
}
\node[anchor=south, yshift=14pt, font=\small\bfseries] at (v0.north) {Entrance};
\node[anchor=south, yshift=14pt, font=\small\bfseries] at (v15.north) {Exit};
\end{tikzpicture}
    \caption{An instance of a welded tree graph. The \textcolor{colorA}{blue} node is the entrance, the \textcolor{colorB}{green} node is the exit and the \textcolor{colorC}{red} nodes are the leaves. The edges between the leaves are the weld edges.}
    \label{fig:basic_welded_tree}
\end{figure}

\cite{CFG02} first showed that quantum and random walks can have exponentially different traversal times using a version of $G_n$ with the leaves glued together. Subsequently,~\cite{CCD+03} established the exponential query separation between a quantum walk traversing $G_n$ and any classical algorithm. Intuitively, the classical hardness is due to the fact that after reaching the weld, the local information at each node does not tell the algorithm which way is towards the exit vertex. On the other hand, the quantum walk relies on the fact that its evolution from the entrance stays in the span of uniform superpositions over columns. This reduces the analysis to that of a walk on a line with $2n+2$ nodes, which allows one to show that outputting an exit with high probability is possible using $\poly(n)$ queries~\cite{CCD+03}. This illustrates how the structure of a graph can make a quantum walk evolve in a much smaller space than the vertex space and potentially lead to a speedup~\cite{KB07}.

Subsequent work has analysed the problem further. Jeffery and Zur introduce multidimensional quantum walks and obtain an $O(n)$-query quantum algorithm~\cite{JZ25}, Li, Li, and Luo give a succinct exact algorithm using coined quantum walks and amplitude amplification~\cite{LLL23}, and Belovs gives a short proof of the optimality of the linear hitting time using transducers~\cite{Bel24}. Finally, it has been shown that similar speedups hold for some generalizations of the welded trees graph~\cite{BLH25}. Nonetheless, the original result leaves a natural question.

\begin{question}\label{que:main_question}
    Does there exist an efficient quantum algorithm which outputs a path between the entrance and exit vertices of a welded tree?
\end{question}

\cref{que:main_question} is natural due to the fact that any random walk algorithm which outputs some vertex can output a path to it by tracking its steps. However, as mentioned in~\cite{Aar21}, there does not seem to be a trivial way to do this quantumly as keeping track of previously visited nodes changes the interference which the quantum walk relies on. Rosmanis explored the idea of tracking a path over some fixed length during a walk in order to finding an exit path, but this did not yield an efficient algorithm~\cite{Ros11}.

A partial negative answer to \cref{que:main_question} was established by Childs, Coudron, and Gilani~\cite{CCG23}. They showed that efficient quantum algorithms which are \emph{genuine} and \emph{rooted} can be simulated classically, meaning they cannot find an entrance-exit path. Roughly, rooted algorithms are those which always remember a path to the entrance, while genuine algorithms are restricted to certain operations on the vertex labels. However, general algorithms do not need to obey these restrictions. For example, an algorithm could try to first find the exit vertex using the quantum walk algorithm and then try to join two paths, each grown from a different root. Ruling out such strategies requires an argument which tracks what the algorithm retains during its computation.

Other papers focused on the quantum query complexity of pathfinding in various settings. For example,~\cite{LT25} give an $\tilde{O}(N^{1/3})$ quantum pathfinding algorithm for the problem, while~\cite{Li23,LZ25} construct families of graphs based on welded trees for which quantum algorithms produce a path exponentially faster than classical algorithms.

We give a negative answer to \cref{que:main_question} by proving an exponential quantum query lower bound for finding a path from the entrance to the exit.

\begin{theorem}[Informal version of \cref{thm:pathfinding_hard}]\label{thm:main_result}
    For any quantum algorithm $\alg$ making $q\leq \sqrt{N}/100$ queries to a welded trees graph oracle,
    \begin{align*}
        \Pr[\alg \text{ outputs an entrance-exit path}] = O\left( \frac{q^6}{N^{1/4}} \right).
    \end{align*}
\end{theorem}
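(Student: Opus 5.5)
We plan to follow the compressed permutation oracle framework of~\cite{Car25}. The random labelling of $G_n$ is modelled as a uniformly random injection (a permutation restricted to the vertex set) from vertices into a label space of size $\poly(N)$. The neighbor oracle is then simulated by a purified, compressed oracle. Its internal register holds a \emph{database} of partial graph information: pairs (label, vertex-position) together with the edges revealed between them. The first step is to set up this purification and prove that, up to an error $O(q^2/N)$ per query aggregate, the algorithm's view with the true oracle and with the compressed oracle agree. We would write the compressed oracle as a sum of local operators: recording a fresh label, answering from the database, and forgetting entries. We would also bound the norm of the "bad" parts, such as collisions or inconsistent databases, by $O(1/\sqrt N)$ per query.

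The second step is to define a progress measure. Let $\Pi_t$ be the projector onto databases at time $t$ containing a connected recorded path from the entrance label to a label whose underlying vertex lies in the exit tree beyond the weld, or more strongly containing a recorded welded edge on such a path. The key structural fact is the following. Each time a new label is recorded adjacent to a known one, its position within its column is uniformly random given the database. Hence a newly revealed leaf's welded neighbor is uniformly distributed among about $N$ leaves on the other side. For a database path to cross the weld and eventually link with a path grown from the exit, or grown toward it, two independently grown partial structures must meet. This is a collision event, which we would bound via a square-root amplitude argument: each query increases $\|\Pi_t\ket{\psi_t}\|$ by at most $O(q^{c}/N^{1/8})$ per step. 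A naive per-query collision bound of $O(\sqrt{q/N})$ fails because the algorithm may also hold superpositions of database sizes. We therefore expect to lose polynomial factors and only get $N^{-1/8}$ per step, which after $q$ steps and squaring gives the target $O(q^6/N^{1/4})$. The additional factor comes from the number of (column, position) pairs one must union over.

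The final step relates success to the database. If the algorithm outputs an entrance-exit path, then with an extra $O(\poly(q)/\sqrt N)$ error (one verification query per path edge, $O(n)$ edges), the final database must contain that whole path. Otherwise some output label was never recorded and is correct with probability $O(1/N)$. Combining these, $\Pr[\text{success}] \le \|\Pi_q\ket{\psi_q}\|^2 + O(q^6/N^{1/4})$.

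The main obstacle is the progress bound in the second step, because of the forgetting mechanism. The algorithm can run the quantum walk to reach the exit. Its database then contains a superposition of uniform column states, which the compressed oracle partially "forgets" and which gives no coherent path. We would try to show this by projecting onto databases that are uniform superpositions over the position assignments consistent with recorded edges, using the projectors $\prWeld$ and $\prWeldOrtho$ style decomposition. The step would then be to prove that the forgetting operator maps $\prWeldOrtho$ states into $\prWeld$ only with amplitude $O(q/N^{1/8})$. This requires careful commutation bounds between the oracle's recording/forgetting terms and the welded-edge projectors. We would attempt this first on a simplified oracle that reveals only welded edges, then extend.
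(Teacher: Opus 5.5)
Your second step targets an event that is not rare, so the bound you hope for cannot hold. A database containing a recorded path from the entrance into $T_2$, or a recorded welded edge on such a path, is obtained with probability essentially $1$ by a trivial strategy. From the entrance, follow a non-repeating color sequence for $n$ steps, starting with one of the entrance's two edge colors; each step goes down one level. Then query a weld color, keeping all answers. Hence no per-query bound of order $q^{c}/N^{1/8}$ on $\|\Pi_t\ket{\psi_t}\|$ is possible. Likewise, the claim that queries map $\prWeldOrtho$ into $\prWeld$ only with small amplitude is false once you sum over weld records.

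The hardness is also not a collision between two independently grown structures: the natural attack grows a single path from the entrance, across the weld, and up toward the exit. Joins and cycles are the cheap part. The paper removes them by switching to the fresh oracle \mcpwt at additive cost $O(q^3/\sqrt N)$ (\cref{lem:cpwt_mcpgt}), after which $\rG(I)$ is a forest growing one fresh leaf at a time. Finally, your model does not deliver your ``key structural fact.'' If only labels are random and the database stores positions, a tree edge out of a recorded vertex lands at a determined position. So positions are uniform only right after a weld crossing.

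The missing idea is the escape bound. A valid color sequence starting at a weld endpoint can reach column $n/2$ or $3n/2$ only through $n/2-1$ consecutive upward steps. Each step is upward with probability at most $\tfrac12+O(N^{-1/2})$ (\cref{lem:fixed_sequences_do_not_escape}). To make every middle-region step such a fresh random choice, the paper also randomizes the middle tree edges via the color-preserving permutations \colorPerm and compresses them. Their domains have size only $\Theta(\sqrt N)$, which is the source of the $N^{-1/8}$ intertwiner loss and hence of the final $q^6/N^{1/4}$. The path-finding part itself contributes only $O(q^6/\sqrt N)$, so your accounting of where the loss arises is off.

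Quantumly, the paper decomposes the crossing state by canonical weld $\rho$, last surviving insertion time $t$, and direction $\sigma$, at cost $8T^2$ (\cref{lem:bounding_future_bounds_exit_probability}). It then shows that while $\rho$ is retained, each primitive keeps the component beyond $r_{\rho,\sigma}$ within $8\sqrt\delta$ of the span of the canonical vectors $\canVec{D}$. These vectors have amplitudes $\sqrt{\mu_H}$ following the classical random-walk law, and their support never meets the boundary. Errors from distinct assignment groups are orthogonal, which is what controls superpositions of database sizes and forgetting. Your ``uniform superposition over position assignments'' points toward this, but it must be aimed at boundary-reaching after a retained weld, not at weld insertion.
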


In particular, any quantum algorithm must make at least $\Omega(N^{1/24})$ queries to the oracle to produce a path with constant probability. We note that the query model and distribution over instances in \cref{thm:main_result} are similar but not exactly the same as in~\cite{CCD+03}. We use an oracle model where one queries a label and color and receives another label if such neighbor exists, while the oracle in~\cite{CCD+03} returns all neighbors. Since we only use $3$ colors, the label-color oracle can query all colors to get a list of neighbors. In terms of the distribution,~\cite{CCD+03} samples a uniformly random cycle over welded edges, our distribution samples a cycle compatible with some fixed edge colorings and conditions on them being a single cycle.

\subsection{Technical overview}

Our proof is built on tracking the information an algorithm remembers about the graph during computation. The central challenge is showing that an algorithm cannot use interference to record a path from the entrance to the exit efficiently, even though a quantum walk can reach the exit while recording only its current position. For intuition, we focus on the case when $q=\poly(n)$, although the final bound applies beyond.

\paragraph{Recording graph information.} In order to record what an algorithm knows about the graph, we use the recording/compressed oracle framework of~\cite{Zha19}. Introduced as a technique which captures both the quantum query lower bound of search and collision problems, it has found a variety of other uses~\cite{CFHL21,HM23,BKW26,MH25,ACMT25,BHNZ25,GWWZ26,Car25}. For an introduction see~\cite{Ham25}. Intuitively, the technique describes the way a quantum algorithm interacts with a random oracle by tracking different input-output records in a database which represent the information the algorithm has received and remembers. Note that the database itself is quantum as the recording is being done in superposition. Unlike other approaches which track all the information an algorithm has queried, here we also allow to unquery, and hence forget, queries.

We apply the recording framework to welded trees using the recent construction of compressed permutation oracles of~\cite{Car25}, whose properties are used repeatedly throughout the paper. Specifically, we construct a graph oracle \wtO using two independent permutations, one which assigns random \emph{labels} to vertices, denoted $\labelPerm$, and one which permutes \emph{edges} with the same column and color in the middle of the graph, which are the nodes between columns $n/2$ and $3n/2$, denoted as $\colorPerm$. A single query to the graph oracle takes in some label $l$ and color $\alpha$, reverses the labeling, follows the edge along color $\alpha$ and then re-applies the label on the new vertex. We formally define this model in \cref{sec:construction_welded_trees}. Note that this construction is based on the construction in~\cite{CCG23} where they used a color-preserving permutation on the welded vertices.

Replacing $\labelPerm$ and $\colorPerm$ with compressed permutation oracles gives us two databases, a label database $L$ and an edge-color database $C$. We define the compressed welded trees oracle \cpwt in \cref{sec:compressed_welded_tree}, where we show that it closely approximates the original oracle. Specifically, we show that after $q$ queries the databases $(L,C)$ contain only $O(q)$ records and the probability of outputting a valid path is bounded by the probability such path is recorded in the database, up to a small error.

\paragraph{Fresh oracle.} Notice that the databases $(L,C)$ represent the subgraph $\rG$ which the algorithm has explored. Using a modified oracle \mcpwt we call \emph{fresh}, formally defined in \cref{sec:fresh_oracle}, we show that $\rG$ is always a forest, meaning it does not contain any cycles, and new edges cannot join distinct connected components in $\rG$. The oracle \mcpwt is defined by restricting the edge assignments in $C$ to endpoints which are outside of $\rG$. In the middle of the graph, each column contains at least $\sqrt{N}$ nodes, meaning that when the size of the database is much smaller, this excludes only a small fraction of the assignments. We show that the effect of this modification is exponentially small for $q=\poly(n)$. Note that the database records can still be erased or replaced, but any new edge assignment must be fresh. An example of how the oracle behaves can be seen in \cref{fig:fresh_oracle}.

\begin{figure}
    \centering
    % The two-tree layout and internal vertex indices follow glued_tree.tex.
% Unrecorded weld edges are random: arrows show alternative assignments.
% The recorded database has identity assignments for the seven bold
% edges, with labels on their endpoints. Its support is
% {4,9,10,11,14,18,19,22,23,25}. The additional point 25 and its three
% neighbors enter the support through the recorded star at 25.
% At vertex 9 the beta candidate endpoints are
% {22,23,24,27,28}, and the gamma candidates are {22,24,25,26,28,29}.
% We draw their union. Destinations 22, 23, and 25 are excluded.
\begin{tikzpicture}[
    scale=0.9,
    x=1cm, y=1cm, >=latex,
    freshVertex/.style={circle, draw=black!28, fill=white,
        minimum size=14pt, inner sep=0pt},
    freshKnown/.style={freshVertex, draw=black!75, fill=black!20,
        line width=0.9pt},
    freshBlocked/.style={freshKnown, draw=colorC, line width=1.3pt},
    freshQuery/.style={freshKnown, draw=colorA, fill=colorA!30,
        line width=1.6pt, minimum size=20pt},
    freshRecorded/.style={draw=black!80, line width=1.4pt},
    freshContext/.style={draw=black!19, line width=0.55pt},
    freshCandidate/.style={->, densely dashed, line width=1.1pt}
]

% The two binary trees, with the weld between columns 3 and 4.
% Coordinates are structural positions, not information given to the algorithm.
\foreach \v/\x/\y in {
    0/0/0, 1/1.8/1.8, 2/1.8/-1.8,
    3/3.6/2.7, 4/3.6/0.9, 5/3.6/-0.9, 6/3.6/-2.7,
    7/5.4/3.15, 8/5.4/2.25, 9/5.4/1.35, 10/5.4/0.45,
    11/5.4/-0.45, 12/5.4/-1.35, 13/5.4/-2.25, 14/5.4/-3.15,
    22/8.2/3.15, 23/8.2/2.25, 24/8.2/1.35, 25/8.2/0.45,
    26/8.2/-0.45, 27/8.2/-1.35, 28/8.2/-2.25, 29/8.2/-3.15,
    18/10/2.7, 19/10/0.9, 20/10/-0.9, 21/10/-2.7,
    16/11.8/1.8, 17/11.8/-1.8, 15/13.6/0
} { \node[freshVertex] (v\v) at (\x,\y) {}; }

% All tree edges of G_3.
\foreach \u/\v in {
    0/1, 0/2, 1/3, 1/4, 2/5, 2/6,
    3/7, 3/8, 4/9, 4/10, 5/11, 5/12, 6/13, 6/14,
    15/16, 15/17, 16/18, 16/19, 17/20, 17/21,
    18/22, 18/23, 19/24, 19/25, 20/26, 20/27, 21/28, 21/29%
} { \draw[freshContext] (v\u) -- (v\v); }

% Two recorded three-vertex components.
\foreach \u/\v in {4/9, 4/10, 18/22, 18/23} {
    \draw[freshRecorded] (v\u) -- (v\v);
}

% Alternative random weld assignments across the two unrecorded colors.
% Every opposite leaf is eligible for at least one of these colors.
\foreach \v in {24,26,27,28,29} {
    \draw[freshCandidate, colorB] (v9) -- (v\v);
}
\foreach \v in {22,23,25} {
    \draw[freshCandidate, colorC] (v9) -- (v\v);
}

% A third recorded component centered at a right-hand leaf.
% Its incident edge records put all three neighbors in the support.
% White underlays distinguish recorded weld edges at line crossings.
\foreach \u/\v in {19/25,11/25,14/25} {
    \draw[freshRecorded, preaction={draw=white, line width=3.4pt}]
        (v\u) -- (v\v);
}

% Red is relative to the blue query: only forbidden candidate destinations.
% Other recorded vertices stay neutral, including the revisitable parent.
\foreach \v in {4,10,11,14,18,19} {
    \node[freshKnown] at (v\v) {};
}
\foreach \v in {22,23,25} {
    \node[freshBlocked] at (v\v) {};
}
\node[freshQuery] at (v9) {};
\foreach \v in {24,26,27,28,29} {
    \node[freshVertex, draw=colorB, fill=colorB!8,
        line width=1.3pt] at (v\v) {};
}
\end{tikzpicture}
    \caption{Example of a \emph{fresh} query of a welded edge. The \textcolor{colorA}{blue} node is the origin, \textcolor{black!75}{gray} nodes and edges represent the recorded information in $(L,C)$, \textcolor{colorB}{green} nodes are available connections and \textcolor{colorC}{red} nodes are forbidden under the freshness condition. Note that in the actual oracle, every column in the middle of the graph behaves this way.}
    \label{fig:fresh_oracle}
\end{figure}

Intuitively, this formalizes the idea that walking down from both roots of the oracle and connecting in the middle is not a viable strategy. Suppose that at the end of an algorithm, the database $(L,C)$ contains an entrance-to-exit path. Consider a welded edge which is in this path and the last time it was inserted (using a Feynman history path). The endpoint of this crossing was fresh, so the final path from it must have grown afterwards. Along this path segment, the edges have increasing times of last insertion. In \cref{sec:reduction_to_fixed_weld}, we use this idea to reduce the probability of recording a path using \mcpwt to the probability that after crossing a welded edge, such component reaches a boundary in the middle region, with an $O(q^2)$ multiplicative overhead due to a union bound over all paths. We call the event of reaching the column $n/2$ or $3n/2$ \say{escaping}.

\paragraph{Probability of escaping.} In \cref{sec:escaping_probability_bound}, we show that if one fixes a weld record, the insertion time and direction, the probability its component escapes while the record remains present is exponentially small even for a quantum algorithm.

For a classical algorithm, this is due to the fact that when going from a weld to a root, you must always move upwards as if you go downwards once (assuming you do not retrace), you will go back to the weld. For a fixed sequence of colors under random label and edge assignments, the probability that each step goes upwards is exponentially close to $\frac{1}{2}$. Extending this over $n/2$ upward steps means that escaping is exponentially unlikely. By union-bounding over the possible ways to start such a run, we have that the probability of escaping is exponentially small.

We quantize this by splitting different databases into groups. Fix some weld $\rho$. We split the graph $\rG$ into two parts, an inside and outside. The \emph{inside} part contains a general description of the tree which starts in $\rho$ using only sequences of colors and outside labels. We emphasize that we do not describe the actual vertices inside. The \emph{outside} part contains the subset of $(L,C)$ which is not captured by the inside. Notice that the representation of the inside part captures what the algorithm actually knows about the graph as the actual vertices are hidden.

For such inside/outside description, we consider a superposition over possible vertex assignments to the inside color sequences. Each has an amplitude equal to the square root of its probability under a random walk which starts from $\rho$. Furthermore, we retain only assignments which stay in the middle region and avoid collisions, which occurs with overwhelming probability due to the arguments above. The key technical step is showing that applications of \mcpwt approximately preserve this superposition, even as records are added, erased or replaced. Specifically, we consider different types of queries based on how they affect the database and show that mixing is not possible and errors from different groups are orthogonal, hence controlling the error even for different superpositions of database groups. Intuitively, keeping an oracle answer in the algorithm's workspace prevents the database record from being erased. However, in order to output a path, one has to actually write it down. 

Lastly, the initial distribution over welded tree graphs is not exactly the same as that in~\cite{CCD+03}. Specifically, we allow the welded edges to be assigned randomly as long as they keep the local structure the same. This means that the welded edges can be a union of multiple cycles as opposed to a single cycle. However, sampling pairs of graphs from both distributions and showing that they are hard to distinguish using the hybrid method, we find that the two cases are approximately equivalent. Combining all of these steps gives us \cref{thm:main_result}.

\paragraph{Recovering the quantum walk speedup.} Before diving into the formal proof, it is useful to understand how the database develops under a quantum walk. It was shown in~\cite{CFG02,CCD+03} that a continuous quantum walk traverses a welded tree in $\poly(n)$ steps by using the power of quantum interference. In \cref{sec:quantum_walk_analysis}, we formalize the idea of the power of \say{forgetting} by showing that a quantum walk using the compressed welded tree oracle only remembers its current label in the database, up to exponentially small error, by uncomputing the previous label and temporary edge record. This allows different computation paths to interfere.

\subsection*{AI Disclosure}

The main ideas and original proofs were developed by the authors, who wrote the manuscript in its entirety. ChatGPT 6 Astra and its predecessors were used in the creation of this document. Specifically, it found a much simpler way to prove the statements in \cref{sec:single_cycle,sec:general_intertwiner_lemma}, created \cref{fig:basic_welded_tree,fig:welded_tree,fig:color_preserving,fig:fresh_oracle} and was used to discuss whether our proofs are correct. The authors verified the correctness and originality of all content including references.

\section{Preliminaries}

Let us describe the notation used in this paper. For positive integers $a,b$ such that $a\leq b$, we use $[a]$ to denote the set $\{1,2,\dots,a\}$ and $[a,b]$ for $\{a,a+1,\dots,b\}$. The symbol $\rho$ will be used to denote density operators and for a pure state $\ket{\psi}$, $\rho(\ket{\psi}) = \ketbra{\psi}{\psi}$. All Hilbert spaces considered here are finite-dimensional. We use $\norm{\,\cdot \,}$ for the Euclidean norm of a vector or the induced operator norm. All projectors are orthogonal projectors. We use $\Sym(n)$ to denote the set of permutations of $[n]$, and $\phi\sim \Sym(n)$ denotes drawing a uniformly random permutation. Similarly, $\Sym(D)$ for a set $D$ is the set of permutations over $D$.

A graph $G = (V,E)$ consists of vertices $V$ and edges $E$ between the vertices. When $G$ is a rooted tree, we let $\groot(G)$ be the root of $G$ and for any $v\in V$, let $\depth(v)$ be the distance between $v$ and $\groot(G)$.

A quantum query algorithm $\alg^{(O)}$ which makes $q$ queries to the oracle $O$ is specified by a sequence of oracle-independent unitaries $A_0,\dots,A_q$ on its workspace, producing the final state $A_q O\cdots A_1 O A_0\ket{0}$. Access to $O$ is provided using an XOR oracle, defined as,
\begin{align*}
    \ket{x,y} \xrightarrow{O_f} \ket{x, y\oplus f(x)}.
\end{align*}

Note that we are assuming that every algorithm makes at least $1$ query. The following result will be useful later in the paper.

\begin{lemma}\label{lem:operator_norm_max}
    Let $\calH = \calH_1\oplus \dots \oplus \calH_k$ be a Hilbert space and $O: \calH\rightarrow \calH$ an operator such that $O = O_1 \oplus \dots \oplus O_k$ where for all $i\in [k]$, $O_i: \calH_i\rightarrow \calH_i$. Then,
    \begin{align*}
        \norm{O} = \max_{i\in [k]} \norm{O_i}
    \end{align*}
\end{lemma}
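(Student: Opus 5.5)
We prove the two inequalities $\norm{O}\ge \max_i\norm{O_i}$ and $\norm{O}\le \max_i\norm{O_i}$ separately. Throughout, we read the decomposition $\calH=\calH_1\oplus\dots\oplus\calH_k$ as an orthogonal direct sum. This is the setting in which the lemma is used later. Every $v\in\calH$ then decomposes uniquely as $v=\sum_{i} v_i$ with $v_i\in\calH_i$ and $\norm{v}^2=\sum_i\norm{v_i}^2$. The hypothesis $O=O_1\oplus\dots\oplus O_k$ means $Ov=\sum_i O_iv_i$ with $O_iv_i\in\calH_i$.

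\textbf{Lower bound.} Fix $i\in[k]$ and a unit vector $v_i\in\calH_i$. Viewed as an element of $\calH$, it satisfies $Ov_i=O_iv_i$, so $\norm{O_iv_i}=\norm{Ov_i}\le\norm{O}$. Taking the supremum over such $v_i$ gives $\norm{O_i}\le\norm{O}$. Maximizing over $i$ yields $\norm{O}\ge\max_i\norm{O_i}$.

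\textbf{Upper bound.} Let $v\in\calH$ be arbitrary and set $M=\max_i\norm{O_i}$. The vectors $O_iv_i$ lie in the mutually orthogonal subspaces $\calH_i$, so Pythagoras applies:
\begin{align*}
    \norm{Ov}^2=\sum_{i=1}^k\norm{O_iv_i}^2\le\sum_{i=1}^k\norm{O_i}^2\norm{v_i}^2\le M^2\sum_{i=1}^k\norm{v_i}^2=M^2\norm{v}^2 .
\end{align*}
Hence $\norm{O}\le M$. Together with the lower bound, this gives $\norm{O}=\max_i\norm{O_i}$.

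The only point needing care is the orthogonality of the summands. It is used twice: in $\norm{v}^2=\sum_i\norm{v_i}^2$, and in the Pythagorean identity for $\norm{Ov}^2$. For a non-orthogonal direct sum the upper bound can fail, so the proof relies on the orthogonal reading of $\oplus$, which is standard for Hilbert spaces. Apart from this, the argument is routine and has no real obstacle.
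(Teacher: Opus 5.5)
Your proof is correct. The lower bound restricts to unit vectors in a single summand, and the upper bound applies Pythagoras to the orthogonal components; this is the standard argument, and the paper states the lemma without proof, treating it as exactly this folklore fact. Reading $\oplus$ as an orthogonal direct sum is the right interpretation, since every later application in the paper (e.g.\ the blocks $\calH_{x,I}$, or operators controlled on a basis register) decomposes into mutually orthogonal subspaces.
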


As our result extensively uses the compressed oracle in~\cite{Car25}, we follow their notation described below.

\paragraph{Injective databases.} We use $\db_N$ to denote the set of partial injective functions from $[N]$ to $[N]$ (i.e., functions $I : [N] \to [N] \cup \{\perp\}$ whose outputs in $[N]$ are pairwise distinct). For $I \in \db_N$, we write $\dom(I)$ and $\im(I)$ for its domain and image (both excluding $\perp$), and define its size $|I| = |\dom(I)|$. We write $\db_{\leq t} = \{I \in \db_N : |I| \leq t\}$. For $x\in[N]$ and $y \in [N] \cup \{\perp\}$, we define $I[x \to y]$ as,
\begin{align*}
    I[x \to y](x^\prime) \coloneq \begin{cases}
        I(x^\prime) &\text{If $x\neq x^\prime$ and either $I(x^\prime)\neq y$ or $y= \perp$}\\
        \perp &\text{If $x\neq x^\prime, y\neq \perp$ and $I(x^\prime)=y$}\\
        y &\text{If $x=x^\prime$.}
    \end{cases}
\end{align*}

The Hilbert space $\calH(\db_N)$ is spanned by orthonormal basis states $\{\ket{I} : I \in \db_N\}$, and we store the injective database in the register $\I$. We write $\Pi_{\leq t,\I}$ to be the following projector,
\begin{align*}
    \Pi_{\leq t, \I} \coloneq \sum_{I: \abs{I} \leq t} \ketbra{I}.
\end{align*}
The notation $\norm{\,\cdot \,}_{\leq t}$ means $\norm{\,\cdot \, \Pi_{\leq t, \I}}$.
When comparing two injective databases $I,J$, we say that $J\succeq I$ when for all $x\in \dom(I)$, we have $I(x)=J(x)$. We denote the set of databases which extend $I$ as $\Ext(I) \coloneq \{J: J \succeq I\}$. The projector $\Pi_{\succeq I}$ on all databases which \emph{include} $I$ is defined as,
\begin{align*}
    \Pi_{\succeq I} \coloneq \sum_{J: J\in \Ext(I)} \ketbra{J}.
\end{align*}

\subsection{Permutation oracles}\label{sec:permutation_oracles_definition}

Let us present permutation oracles and the compressed permutation oracle framework of~\cite{Car25}. We want to emphasize that throughout the paper, symbols such as $\I, \C$ and $\Lab$ symbolize registers, while $I,C$ and $L$ represent instances of databases (usually in these registers). Hence $\ket{I}_{\I}$ represents the database $I$ in register~$\I$.

\paragraph{Permutation oracle.} First, let us describe the oracle $O_\phi$ for $\phi\sim \Sym(N)$ on $b\in \{0,1\}$, $x\in[N]$, and $y\in\{0,1\}^d$,
\begin{align*}
    O_{\phi} \ket{b, x}_{\X} \ket{y}_\Y \coloneq \ket{b,x}_\X \ket{y \oplus \phi^{1-2b}(x)}_\Y,
\end{align*}
where the registers $\X,\Y$ describe the query input and output registers. In this case, the bit $b$ controls the direction of querying $x$. Let $\alg^{O_\phi}$ be a quantum algorithm which makes $q$ queries to the oracle $O_\phi$. The final mixed state of the algorithm is described as,

\begin{align*}
    \ket{\psi_q^{O_\phi}} &= A_q O_\phi \dots A_1 O_\phi A_0 \ket{0}\\
    \rho_{\alg}^{(O_{\phi})} &= \EX_{\phi \sim \Sym(N)}[\ketbra{\psi_q^{\phi}}{\psi_q^{\phi}}]
\end{align*}

\paragraph{Compressed permutation oracle.} Let us begin by defining several key terms. Let $\ket{\psi_1}, \ket{\psi_2}$ be arbitrary pure states such that $\braket{\psi_1}{\psi_2} =0$. Then $\pCfromPlus(\ket{\psi_1}, \ket{\psi_2})$ represents the compression primitive defined as,
\begin{align*}
    \pCfromPlus(\ket{\psi_1}, \ket{\psi_2}) \coloneq \Id - \ketbra{\psi_1} - \ketbra{\psi_2} + \ketbra{\psi_1}{\psi_2} + \ketbra{\psi_2}{\psi_1}.
\end{align*}
This operator swaps $\ket{\psi_1}$ and $\ket{\psi_2}$ and fixes their common orthogonal complement and is a self-adjoint unitary.

The database register $\I$ holds states in $\calH(\db_N)$, initialized to $\ket{\perp}_\I$, the trivial database mapping all inputs to $\perp$. In the forward operators below, we suppress the direction bit $b$. For $I \in \db_N$ and $x \notin \dom(I)$, let
\begin{align*}
    \ket{+_{x,I}} \coloneqq \frac{1}{\sqrt{N - |I|}} \sum_{y \in [N] \setminus \im(I)} \ket{I[x \to y]}
\end{align*}
be the uniform superposition over all non-colliding extensions of $I$ at $x$. Define
\begin{align*}
    \calH_{x,I} \coloneqq \Span\bigl(\{\ket{I}\}\cup\{\ket{I[x\to y]}:y\in[N]\setminus\im(I)\}\bigr).
\end{align*}
These subspaces form an orthogonal decomposition of $\calH(\db_N)$ for fixed $x$. On $\calH_{x,I}$, the \emph{permutation compression} operator $\pC_{x,I}$ swaps the trivially undefined state $\ket{I}$ with this uniform superposition:
\begin{align}
    \pC_{x,I} &\coloneqq \Id - \ketbra{+_{x,I}}{+_{x,I}} - \ketbra{I}{I} + \ketbra{+_{x,I}}{I} + \ketbra{I}{+_{x,I}}\label{eq:pc_x_I}\\
    &=\pCfromPlus(\ket{+_{x,I}}, \ket{I})\big|_{\calH_{x,I}}.
\end{align}
The key difference from the standard (function) compression operator is that the superposition ranges only over $[N] \setminus \im(I)$, enforcing injectivity at each step. The full operator $\pC$ is defined as,
\begin{align}
    \pC_x &\coloneq \bigoplus_{I: x \notin \dom(I)} \pC_{x,I}\\
    \pC &\coloneq \sum_{x\in[N]}\ketbra{x}_{\X}\otimes\pC_{x,\I}\label{eq:pc_def}
\end{align}
The \emph{extended purified oracle} $\xorOP$ XORs the current database value into the output register when the queried input is defined, and acts as the identity otherwise:
\begin{align*}
    \xorOP\ket{x}_\X\ket{y}_\Y\ket{I}_\I \coloneqq \begin{cases}
        \ket{x}_\X\ket{y \oplus I(x)}_\Y\ket{I}_\I & \text{if $x \in \dom(I)$,} \\
        \ket{x}_\X\ket{y}_\Y\ket{I}_\I & \text{otherwise.}
    \end{cases}
\end{align*}
Finally, the \emph{flip} operator $\cOp{F}$ inverts the injective database, $\cOp{F}\ket{I}_\I = \ket{I^{-1}}_\I$, where $I^{-1}(y) = x$ whenever $I(x) = y$. This allows inverse queries to be answered by first flipping the database, applying the forward-direction oracle, then flipping back.

The compressed permutation oracle \cpO is defined as,
\begin{align}
    \cpO \ket{b} \coloneq \ket{b} \otimes \begin{cases}
        \pC \cdot \xorOP \cdot \pC^{\dagger} &\text{If $b=0$}\\
        \cOp{F} \cdot \pC \cdot \xorOP \cdot \pC^{\dagger} \cOp{F}^{\dagger} &\text{otherwise.}
    \end{cases}\label{eq:cpo_definition}
\end{align}
The compressed oracle experiment initializes $\I$ to $\ket{\perp}_\I$ and replaces each call to $O_\phi$ with $\cpO$. The unitaries $A_i$ act only on $\A$, with the identity on $\I$ implicit. The joint state of the algorithm and database registers after $q$ queries is
\begin{align*}
    \ket*{\psi_q^{(\cpO)}}_{\A\I} \coloneqq A_q \cpO \cdots A_1 \cpO A_0 \ket{0}_\A\ket{\perp}_\I.
\end{align*}
Tracing out $\I$ gives the algorithm's reduced state
\begin{align*}
    \rho_\A^{(\cpO)} = \Tr_\I\!\left[\ketbra{\psi_q^{(\cpO)}}{\psi_q^{(\cpO)}}_{\A\I}\right].
\end{align*}
Next, let us state several properties of $\cpO$. 
\begin{theorem}[Soundness; Theorem 5.19 in~\cite{Car25}]\label{thm:soundness}
    For any $q$ query quantum algorithm,
    \begin{align*}
        \frac{1}{2}\norm{\rho_{\A}^{(O_{\phi})} - \rho_{\A}^{(\cpO)}}_1 = O\left(\frac{q^3}{N^{1/4}} \right)
    \end{align*}
\end{theorem}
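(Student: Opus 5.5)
The plan is to reprove this as a hybrid argument in the purified picture, along the lines of Zhandry's compressed-oracle soundness proof. Permutations require an approximate rather than exact correspondence, and we accept a polynomial loss for it. First we purify the random permutation. A register $\regP$ is initialized to $\frac{1}{\sqrt{N!}}\sum_{\phi\in\Sym(N)}\ket{\phi}_{\regP}$, and each query acts as the controlled oracle $\sum_\phi \ketbra{\phi}\otimes O_\phi$. Tracing out $\regP$ at the end gives exactly $\rho_\A^{(O_\phi)}$. The goal is then to relate the database register $\I$ to $\regP$ through a fixed isometry-like \emph{decompression} map $\calD:\calH(\db_N)\to\calH(\regP)$. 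The map sends $\ket{\perp}$ to the uniform permutation state. It sends a database $I$ to a state in which the points of $\dom(I)$ are pinned to their recorded values, while every other point is "uniformly compressed", built by successively applying the swaps $\pC_{x,\cdot}$ in a canonical order over $x$.

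The key steps are as follows. (1) Check that $\calD\ket{\perp}$ equals, or is within $O(1/\sqrt N)$ of, the uniform permutation state. (2) Prove a one-query commutation estimate: for a state supported on $\Pi_{\le t,\I}$,
\begin{align*}
\norm{\bigl(\calD\cdot \cpO - O_{\regP}\cdot \calD\bigr)\Pi_{\le t,\I}} \le \varepsilon_t ,
\end{align*}
for both directions $b\in\{0,1\}$. The inverse direction reduces to the forward one through the flip $\cOp{F}$, since $\calD$ intertwines $\cOp{F}$ with $\phi\mapsto\phi^{-1}$ up to the same error. (3) Show that each $\cpO$ application changes the domain at only one point, so after $i$ queries the state lies in the range of $\Pi_{\le i,\I}$. (4) Telescope over the $q$ queries. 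The algorithm unitaries $A_i$ commute with $\calD$, so the final pure states differ by at most $\sum_{i\le q}\varepsilon_i$. Tracing out the registers and converting to trace distance then yields $\frac12\norm{\rho^{(O_\phi)}_\A-\rho^{(\cpO)}_\A}_1\le \sum_i\varepsilon_i + O(1/\sqrt N)$. To reach the stated bound it suffices to obtain $\varepsilon_t = O(t^2/N^{1/4})$.

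The main obstacle is step (2). For functions, $\pC_x$ acts on independent tensor factors and commutes with everything, so the correspondence is exact. For permutations, $\pC_{x,I}$ depends on $\im(I)$, the uniform superpositions at different points overlap through the injectivity constraint, and reordering the compression swaps introduces error. I would first handle the case where the queried $x$ is not in $\dom(I)$. Comparing $\ket{+_{x,I}}$ with the marginal of the uniformly random permutation, conditioned on the records of $I$, shows that their overlaps differ by $O(t/N)$ per pair of interacting points, since the two sums range over $[N]\setminus\im(I)$ versus the unconstrained conditional distribution. Summing these contributions over the $t$ recorded points in amplitude, rather than in probability, only gives an $O(t^2/\sqrt N)$-type bound on the squared norm. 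Taking square roots to pass from the fidelity-type estimate to operator norm is where the $N^{1/4}$ loss enters. I expect the bookkeeping of the reorderings, i.e.\ how a new point's swap interacts with previously recorded points whose images collide with its candidate outputs, to be the delicate part. I would control it with \cref{lem:operator_norm_max}, decomposing into the blocks $\calH_{x,I}$ and bounding each block separately.
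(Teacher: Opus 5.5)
Your skeleton matches the one the paper relies on when it imports this theorem from \cite{Car25} (see \cref{lem:power_four_useful_facts,lem:general_intertwiner_lemma}). That skeleton is: purify $\phi$, embed the database register into the purification register by a fixed map, prove a one-query commutation estimate on $\Pi_{\le t,\I}$, telescope using bounded growth, then trace out. However, the whole content of the theorem sits in the embedding, and the one you propose does not work.

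\emph{Your map is not an isometry.} The final trace-out step needs $\calD^\dagger\calD=\Id$ on the relevant support. Your ``pinned plus uniform'' map is $\calD\ket{J}=((N-\abs{J})!)^{-1/2}\sum_{\phi\succeq J}\ket{\phi}$, which is what successively applying $\pC_{x}$ for $x\notin\dom(I)$ produces, and it is far from isometric. A direct computation gives $\calD\ket{+_{x,I}}=\calD\ket{I}$. Since $\pC_x$ is the identity on the rest of each block $\calH_{x,I}$, in fact $\calD\pC_x=\calD$, so $\calD$ annihilates $\ket{I}-\ket{+_{x,I}}$.

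\emph{Step (2) then fails already at $t=1$.} The forward query $\cpO$ at $x$ fixes $\ket{z}_{\Y}\ket{+_{x,\perp}}_{\I}$. Your map sends this to the product state $\ket{z}\otimes\calD\ket{\perp}$. By contrast, $O_{\regP}\calD$ sends it to $N^{-1/2}\sum_y\ket{z\oplus y}\otimes\calD\ket{\{x\to y\}}$, which is at distance about $\sqrt{2}$.

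\emph{The Zhandry-style alternative also fails.} Suppose you instead apply the unitary $\prod_x\pC_x$ over all $x$, including $\dom(I)$, in a fixed order. Its image then leaves the span of full permutations, in an order-dependent way. If $x$ comes last in the order, $\pC_x$ acts on a block with $m=1$ and simply erases the record. So $\calD\ket{\{x\to y\}}$ is supported entirely on injections undefined at $x$. The $\pC_x$ do not commute, and nothing in that map is compatible with $\cOp{F}$, so your reduction of $b=1$ to $b=0$ is unjustified there too.

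\emph{Your account of where the error comes from is also off.} Conditioned on the records of $I$, the marginal of a uniform permutation at $x\notin\dom(I)$ is exactly uniform on $[N]\setminus\im(I)$, which is exactly $\ket{+_{x,I}}$. So there is no $O(t/N)$ per-pair discrepancy to sum.

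The real obstruction is that injective databases admit no simple orthonormal family of ``decompressed'' states. Records can be erased, and the remaining records were uniform only conditioned on the erased one.

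The missing ingredient is Carolan's construction:
\begin{itemize}
\item an exact uniform-twirl purification, which needs $\abs{D}=4^k$ (this paper removes that restriction by the block partition in \cref{lem:appendix_local_permutation_comparison});
\item an isometry sending $\ket{I}$, for $\abs{I}\le\sqrt{N}$, to ``sophisticated states'';
\item a split of these states into a flip-elegant part, compared via \cite[Lemma 5.14]{Car25}, and a complement on which the query-valid projector has norm $O((t+3)N^{-1/4})$ by \cite[Lemma 5.4]{Car25}.
\end{itemize}
That split is where the $N^{-1/4}$ loss per query actually originates. Without an isometry of this kind, your steps (3) and (4) have nothing valid to telescope.
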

While we will not use \cref{thm:soundness} directly, it is useful to keep in mind that we may essentially replace the standard permutation oracle with its compressed version at the cost of small error.
\begin{lemma}[Bounded growth; Lemma 3.2 in~\cite{Car25}]\label{lem:bounded}
    The final state of a $q$ query quantum algorithm using the compressed permutation oracle is supported on databases of size at most $q$,
    \begin{align*}
        \Pi_{\leq q, \I} \ket*{\psi_q^{(\cpO)}}_{\A\I} = \ket*{\psi_q^{(\cpO)}}_{\A\I}
    \end{align*}
\end{lemma}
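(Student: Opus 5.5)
The plan is an induction on the number of queries, tracking the support of the joint state $\ket*{\psi_t^{(\cpO)}}_{\A\I}$ after $t$ queries. The claim to prove is that this state lies in the range of $\Pi_{\leq t,\I}$; the statement is the case $t=q$. The base case $t=0$ is immediate: the initial state is $\ket{0}_\A\ket{\perp}_\I$ and $\abs{\perp}=0$. The unitaries $A_i$ act only on $\A$, so they commute with $\Pi_{\leq t,\I}$ and preserve the invariant. It therefore suffices to show that one application of $\cpO$ maps the range of $\Pi_{\leq t,\I}$ into the range of $\Pi_{\leq t+1,\I}$.

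For the direction bit $b=0$, I will analyze the three factors of $\pC\cdot\xorOP\cdot\pC^\dagger$ on a basis state $\ket{x}_\X\ket{y}_\Y\ket{I}_\I$ with $\abs{I}\le t$.

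\begin{enumerate}
\item \emph{The operator $\pC^\dagger$.} Since $\pC$ is self-adjoint, $\pC^\dagger=\pC$. It acts as $\pC_{x,\I}$, which is block diagonal over the subspaces $\calH_{x,J}$. Each basis state $\ket{I}$ lies in exactly one block: either $x\notin\dom(I)$ and $J=I$, or $x\in\dom(I)$ and $J$ is $I$ with $x$ removed. Inside $\calH_{x,J}$, every vector is supported on $\ket{J}$ and the states $\ket{J[x\to y]}$, which have sizes $\abs{J}$ and $\abs{J}+1$.
\begin{itemize}
\item In the first case $\abs{J}=\abs{I}\le t$, so the output has size at most $t+1$.
\item In the second case $\abs{J}+1=\abs{I}\le t$, so the output has size at most $t$.
\end{itemize}
Hence $\pC$ sends the size-$\le t$ subspace into the size-$\le t+1$ subspace.
\item \emph{The operator $\xorOP$.} It does not change $\I$, so it preserves sizes.
\item \emph{The second $\pC$.} This is the step that needs care, because a naive repetition of the previous argument gives $t+2$. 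The way around it is to bound block membership rather than raw sizes. The first $\pC$ maps $\ket{x}\ket{y}\ket{I}$ into $\ket{x}\ket{y}\otimes\calH_{x,J}$ for the single $J$ determined above, and $\abs{J}\le t$ in both cases. The operator $\xorOP$ fixes $\ket{x}$ and only changes $\Y$, so the state stays in $\ket{x}\otimes\Y\otimes\calH_{x,J}$. The second $\pC$ acts on that same block, because the subspaces $\calH_{x,J}$ are invariant under $\pC_{x,\I}$. Therefore every database in the final support is $J$ or some $J[x\to y]$, of size at most $\abs{J}+1\le t+1$.
\end{enumerate}

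For $b=1$ the operator is $\cOp{F}\cdot\pC\cdot\xorOP\cdot\pC^\dagger\cdot\cOp{F}^\dagger$, and $\cOp{F}$ preserves $\abs{I}$ because $\abs{I^{-1}}=\abs{I}$. So the same block argument applies after conjugating by $\cOp{F}$.

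Extending these basis-state statements by linearity, and using that $\ket{x}_\X$ is untouched throughout each query, each query increases the maximal database size in the support by at most one. Induction over the $q$ queries then gives $\Pi_{\leq q,\I}\ket*{\psi_q^{(\cpO)}}=\ket*{\psi_q^{(\cpO)}}$.

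The main obstacle is the $t+2$ issue in the third step. It is resolved by the invariance of each block $\calH_{x,J}$ under both $\pC_{x,\I}$ and $\xorOP$, which holds because the query input $x$ is fixed during the whole query.
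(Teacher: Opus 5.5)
Your proof is correct, and it is essentially the approach behind Lemma 3.2 of~\cite{Car25}, which the paper imports without proof: $\pC^\dagger$, $\xorOP$ and $\pC$ all preserve the single block $\ket{x}\otimes\calH(\Y)\otimes\calH_{x,J}$ with $\abs{J}\le t$, and $\cOp{F}$ preserves database size. The paper's own proof of the \cpwt analogue, \cref{lem:cpwt_bounded_growth}, uses the same reasoning: each compression sandwich around $\xorOP$ preserves a fixed block, so the database size changes by at most one per query.
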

\cref{lem:bounded} represents the intuitive idea that after $q$ queries, an algorithm only \say{knows} at most $q$ points.

To state the third property, write $\Pi_{(x,y) \in \I} \coloneq \sum_{I : I(x) = y} \ketbra{I}{I}$ for the projector onto databases containing the pair $(x, y)\in[N]^2$. For a list $\mathbf{z}=((x_1,y_1),\dots,(x_l,y_l))$ output by the algorithm, define
\begin{align*}
    \Pi_{\mathbf{z},\I}^{(\cpO)} &\coloneq \Pi_{(x_1,y_1)\in\I}\cdots\Pi_{(x_l,y_l)\in\I},\\
    \Pi_{\A\I}^{(\cpO)} &\coloneq \sum_{\mathbf{z}} \Pi_{\mathbf{z},\I}^{(\cpO)} \otimes \ketbra{\mathbf{z}}_{\A} \\
    \Pi_{\mathbf{z},\I}^{O_\phi} &\coloneq
    \left(\pC_{x_1}\Pi_{(x_1,y_1)\in\I}\pC_{x_1}^\dagger\right)\cdots
    \left(\pC_{x_l}\Pi_{(x_l,y_l)\in\I}\pC_{x_l}^\dagger\right).\\
    \Pi_{\A\I}^{O_\phi}&\coloneq \sum_{\mathbf{z}} \Pi_{\mathbf{z},\I}^{(O_\phi)} \otimes \ketbra{\mathbf{z}}_{\A}
\end{align*}
We find the following relationship between compressed states with these operators.
\begin{lemma}[Fundamental lemma; Lemma 3.3 in~\cite{Car25}]\label{lem:fundamental}
    For any algorithm which makes $q$ queries and outputs a list of pairs whose length is at most $l$,
    \begin{align*}
        \norm{\Pi_{\A\I}^{O_\phi}\ket*{\psi_q^{(\cpO)}}_{\A\I} } \leq \norm{\Pi_{\A\I}^{(\cpO)}\ket*{\psi_q^{(\cpO)}}_{\A\I} } + \frac{\sqrt{2}l}{\sqrt{N-q-l}}.
    \end{align*}
\end{lemma}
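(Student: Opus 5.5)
The idea is a telescoping (hybrid) argument that swaps the ``decompressed'' projectors $\pC_{x_i}\Pi_{(x_i,y_i)\in\I}\pC_{x_i}^\dagger$ for the plain projectors $\Pi_{(x_i,y_i)\in\I}$ one at a time. Each swap costs at most $\sqrt{2}/\sqrt{N-q}$ on states supported on small databases.

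\emph{Reduction to fixed $\mathbf{z}$.} Write $\ket{\psi}=\ket*{\psi_q^{(\cpO)}}_{\A\I}=\sum_{\mathbf{z}}\ket{\mathbf{z}}_\A\otimes\ket{\psi_{\mathbf{z}}}$, where the $\A$ register is split into the output list $\mathbf{z}$ and the remaining workspace, which we absorb into $\ket{\psi_{\mathbf{z}}}$. Both operators act block-diagonally in $\mathbf{z}$. Hence
\begin{align*}
\bigl(\Pi^{O_\phi}_{\A\I}-\Pi^{(\cpO)}_{\A\I}\bigr)\ket{\psi}=\sum_{\mathbf{z}}\ket{\mathbf{z}}\otimes\bigl(\Pi^{(O_\phi)}_{\mathbf{z},\I}-\Pi^{(\cpO)}_{\mathbf{z},\I}\bigr)\ket{\psi_{\mathbf{z}}},
\end{align*}
and these summands are mutually orthogonal. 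It therefore suffices to show $\|(\Pi^{(O_\phi)}_{\mathbf{z},\I}-\Pi^{(\cpO)}_{\mathbf{z},\I})\Pi_{\leq q,\I}\|\leq \varepsilon$ with $\varepsilon=\sqrt2 l/\sqrt{N-q-l}$. Summing the squares over $\mathbf{z}$ and using \cref{lem:bounded} (so that $\ket{\psi}=\Pi_{\leq q,\I}\ket\psi$) then gives the claimed norm bound by the triangle inequality.

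\emph{Telescoping.} Fix $\mathbf{z}$ and set $P_j=\Pi_{(x_j,y_j)\in\I}$ and $Q_j=\pC_{x_j}P_j\pC_{x_j}^\dagger$. Write
\begin{align*}
Q_1\cdots Q_l-P_1\cdots P_l=\sum_{j=1}^{l}Q_1\cdots Q_{j-1}(Q_j-P_j)P_{j+1}\cdots P_l .
\end{align*}
Each $Q_i$ is a projector, since $\pC_{x_i}$ is unitary on the relevant space, so it has norm at most $1$. The $P_i$ are diagonal in the database basis, so $P_{j+1}\cdots P_l$ commutes with $\Pi_{\leq q,\I}$. It thus remains to bound $\|(Q_j-P_j)\Pi_{\leq q,\I}\|$ for a single $j$.

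\emph{Single-step bound (the main computation).} Fix $x=x_j$ and $y=y_j$. Both $P_j$ and $Q_j$ preserve each block $\calH_{x,I}$ with $x\notin\dom(I)$, because $\pC_x$ does and $P_j$ is diagonal. By \cref{lem:operator_norm_max} it suffices to work block by block. Only blocks meeting databases of size $\leq q$ matter, so we may assume $|I|\leq q$ and set $M=N-|I|\geq N-q$.

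If $y\in\im(I)$, no database in $\calH_{x,I}$ maps $x\mapsto y$, so $P_j=Q_j=0$ on this block. Otherwise $P_j$ restricted to the block is $\ketbra{e}$ with $e=I[x\to y]$. Decompose $\ket e=M^{-1/2}\ket{+_{x,I}}+\ket{e_\perp}$, where $\ket{e_\perp}$ is orthogonal to both $\ket{+_{x,I}}$ and $\ket I$. Then by \eqref{eq:pc_x_I},
\begin{align*}
\pC_{x,I}\ket e=\ket e+M^{-1/2}\bigl(\ket I-\ket{+_{x,I}}\bigr),
\end{align*}
so $Q_j$ on the block is the rank-one projector onto $\ket{e'}=\pC_{x,I}\ket{e}$, with $\|e-e'\|=\sqrt{2}/\sqrt M$. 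For two rank-one projectors, $\|\ketbra{e}-\ketbra{e'}\|=\sqrt{1-|\braket{e}{e'}|^2}\leq\|e-e'\|$. This gives $\|(Q_j-P_j)\Pi_{\leq q,\I}\|\leq \sqrt2/\sqrt{N-q}\leq\sqrt2/\sqrt{N-q-l}$.

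Summing over the $l$ terms yields $\varepsilon$, and the lemma follows. I expect the main care to be needed in the bookkeeping: checking that $\pC_x$ and $P_j$ truly respect the block decomposition $\{\calH_{x,I}\}$, and that the size restriction survives the intervening projectors. The slack $N-q-l$ in the statement leaves room if one prefers to track sizes more loosely.
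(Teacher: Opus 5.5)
Your argument is correct. It is the same argument the paper uses for its \cpwt analogue, \cref{lem:fundamental_cpwt}: telescope with the diagonal, size-preserving projectors on the right, then apply the per-primitive rank-one estimate \cref{eq:pc_step_interlace} blockwise (for \cref{lem:fundamental} itself the paper gives no proof and simply cites~\cite{Car25}). Because you keep $P_{j+1}\cdots P_l$ to the right of each difference, you actually obtain the slightly stronger bound $\sqrt{2}\,l/\sqrt{N-q}$, so the $-l$ slack in the statement is not needed.
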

The following observation is useful to understand how querying the same value may behave.

\begin{observation}\label{obs:compressed_permutation_oracle_already_queried}
    Let $I\in\db_N$, $x\notin\dom(I)$, and $y\in[N]\setminus\im(I)$, and let $m=N - \abs{I}$. Then,
    \begin{align*}
        \pC_x \ket{I[x\to y]}_{\I} = \ket{I[x\to y]} - \frac{1}{\sqrt{m}} \ket{+_{x,I}} + \frac{1}{\sqrt{m}}\ket{I}
    \end{align*}
\end{observation}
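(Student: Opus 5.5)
We verify the identity by computing directly on the subspace $\calH_{x,I}$. Set $J=I[x\to y]$. The first step is to locate $\ket{J}$ in the block decomposition of $\pC_x$. Since $x\notin\dom(I)$ and $y\notin\im(I)$, the update $I[x\to y]$ erases nothing, so $\ket{J}$ is one of the basis vectors spanning $\calH_{x,I}$. Because the subspaces $\calH_{x,I'}$ form an orthogonal decomposition for fixed $x$, $\ket{J}$ lies in no other block; note that $x\in\dom(J)$, so $J$ is not itself an index of the direct sum. Hence $\pC_x\ket{J}=\pC_{x,I}\ket{J}$, and we only need to evaluate the operator in \eqref{eq:pc_x_I} on $\ket{J}$.

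The second step computes the relevant inner products. We have $\braket{I}{J}=0$, since $x\in\dom(J)\setminus\dom(I)$ makes them distinct databases. We also have $\braket{+_{x,I}}{J}=1/\sqrt{m}$ with $m=N-\abs{I}$, because $\ket{+_{x,I}}$ is the uniform superposition over the $m$ orthonormal states $\ket{I[x\to y']}$ for $y'\in[N]\setminus\im(I)$, and $J$ is one of them. These states are distinct because they differ at $x$.

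Substituting into \eqref{eq:pc_x_I}, the term $-\ketbra{I}{I}$ and the term $\ketbra{+_{x,I}}{I}$ both vanish on $\ket{J}$. The term $-\ketbra{+_{x,I}}{+_{x,I}}$ contributes $-\frac{1}{\sqrt m}\ket{+_{x,I}}$, and the term $\ketbra{I}{+_{x,I}}$ contributes $\frac{1}{\sqrt m}\ket{I}$. The identity term returns $\ket{J}$. Summing these gives the stated formula.

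The only point needing care is the first step, namely confirming that $\ket{J}$ really lies in the block indexed by $I$. This reduces to observing that the hypotheses $x\notin\dom(I)$ and $y\notin\im(I)$ mean no overwriting occurs in $I[x\to y]$. Everything after that is mechanical.
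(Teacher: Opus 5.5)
Your proposal is correct. It is exactly the direct computation the paper has in mind: substitute $\ket{I[x\to y]}$ into \cref{eq:pc_x_I}, using $\braket{I}{I[x\to y]}=0$ and $\braket{+_{x,I}}{I[x\to y]}=1/\sqrt{m}$. The paper states the observation without proof as immediate from the definition, and your explicit check that $\ket{I[x\to y]}$ lies in the block $\calH_{x,I}$ (no overwriting, since $y\notin\im(I)$) is the right detail to make explicit.
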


Furthermore, it will be useful to be aware of the fact that two $\pC$ operators with similar $\ket{+}$ states behave similarly.

\begin{observation}\label{obs:superposition_similarity}
    Consider two finite sets $\calA, \calA^\prime$ of orthonormal basis labels such that $\abs{\calA}, \abs{\calA^\prime} \geq c>0$ and $\abs{(\calA \cup \calA^\prime) \setminus (\calA \cap \calA^\prime)} \leq b$. Letting $\ket{\calA}$ and $\ket{\calA^\prime}$ denote the normalized uniform superpositions over the sets,
    \begin{align*}
        \norm{\ket{\calA} - \ket{\calA^\prime}} \leq \sqrt{\frac{b}{c}}.
    \end{align*}
    Furthermore, for any database basis state $\ket{I}$ which is orthogonal to $\ket{\calA}$ and $\ket{\calA^\prime}$,
    \begin{align*}
        \norm{\pCfromPlus(\ket{\calA}, \ket{I}) - \pCfromPlus(\ket{\calA^\prime}, \ket{I})}  \leq 2 \sqrt{\frac{b}{c}}.
    \end{align*}
\end{observation}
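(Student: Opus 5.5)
The plan is to prove the two claims of \cref{obs:superposition_similarity} in order, the second following from the first by writing out the difference of the two compression operators.

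For the first claim, I would compute the distance directly. Write $a=\abs{\calA}$, $a'=\abs{\calA'}$, $k=\abs{\calA\cap\calA'}$, so that $(a-k)+(a'-k)\le b$. Then
\begin{align*}
    \braket{\calA}{\calA'} = \frac{k}{\sqrt{aa'}},
\end{align*}
and hence
\begin{align*}
    \norm{\ket{\calA}-\ket{\calA'}}^2 = 2 - \frac{2k}{\sqrt{aa'}}.
\end{align*}
Since $\sqrt{aa'}\le (a+a')/2$, this is at most
\begin{align*}
    2 - \frac{4k}{a+a'} = \frac{2\bigl((a-k)+(a'-k)\bigr)}{a+a'} \le \frac{2b}{2c} = \frac{b}{c},
\end{align*}
using $a+a'\ge 2c$. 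Taking square roots gives the first bound. The only point needing care is the AM--GM step, which goes in the right direction because it lower-bounds the overlap.

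For the second claim, I would expand the definition of $\pCfromPlus$. Since $\ket{I}$ is orthogonal to both superpositions, each operator is well defined. Writing $\ket{u}=\ket{\calA}$ and $\ket{v}=\ket{\calA'}$, the terms $\Id$ and $\ketbra{I}$ cancel, leaving
\begin{align*}
    \pCfromPlus(\ket{u},\ket{I})-\pCfromPlus(\ket{v},\ket{I}) = -\ketbra{u}+\ketbra{v} + \ketbra{u-v}{I} + \ketbra{I}{u-v}.
\end{align*}
The main obstacle is bounding this with constant $2$ rather than the naive $4$. Let $\delta = \norm{u-v}$.

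I would split the difference as $D = D_1 + D_2$, with $D_1 = \ketbra{v}-\ketbra{u}$ and $D_2 = \ketbra{u-v}{I}+\ketbra{I}{u-v}$.

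\textbf{Bounding $D_2$.} Decompose $\ket{u-v}=\delta\ket{w}$. The vector $\ket{w}$ is orthogonal to $\ket{I}$, so $D_2$ acts as $\delta$ times a swap-type operator $\ketbra{w}{I}+\ketbra{I}{w}$ on $\Span(\ket{w},\ket{I})$. That operator has norm $1$, so $\norm{D_2}=\delta$.

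\textbf{Bounding $D_1$.} $D_1$ is a difference of two rank-one projectors, and its norm is
\begin{align*}
    \sqrt{1-\abs{\braket{u}{v}}^2} \le \norm{u-v} = \delta.
\end{align*}
The inequality holds because the overlap is real and nonnegative, and $1-x^2\le 2-2x$ for $x\in[0,1]$.

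\textbf{Combining.} The triangle inequality gives $\norm{D}\le 2\delta \le 2\sqrt{b/c}$, which is exactly the claimed bound. So the naive obstacle disappears once each piece is bounded by $\delta$ rather than $2\delta$. The two key facts are that $D_2$ is an off-diagonal block with orthogonal legs, and that a projector difference has norm equal to the sine of the angle between $\ket{u}$ and $\ket{v}$.
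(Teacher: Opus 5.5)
Your proof is correct and essentially the paper's. For the first bound you compute the overlap $k/\sqrt{aa'}$ as the paper does, and your AM--GM step $\sqrt{aa'}\le (a+a')/2$ is the same inequality the paper uses when it discards $(\sqrt{a}-\sqrt{a'})^2$; for the second you use the same split into the projector difference of norm $\sqrt{1-\abs{\braket{u}{v}}^2}$ plus the off-diagonal term of norm $\norm{u-v}$, and you also justify $\sqrt{1-x^2}\le\sqrt{2-2x}$ for the real nonnegative overlap, which the paper leaves implicit.
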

\begin{proof}
    By expanding the square of the norm, we have,
    \begin{align*}
        \norm{\ket{\calA} - \ket{\calA^\prime}}^2 &= 2 - 2\Re \braket{\calA}{\calA^\prime}\\
        &= 2 -\frac{2 \abs{\calA \cap \calA^\prime}}{\sqrt{\abs{\calA} \abs{\calA^\prime}}}\\
        &= 2 -\frac{\abs{\calA} + \abs{\calA^\prime} - \abs{(\calA \cup \calA^\prime) \setminus (\calA \cap \calA^\prime)}}{\sqrt{\abs{\calA} \abs{\calA^\prime}}}\\
        &=\frac{\abs{(\calA \cup \calA^\prime) \setminus (\calA \cap \calA^\prime)} - (\sqrt{\abs{\calA}} - \sqrt{\abs{\calA^\prime}})^2}{\sqrt{\abs{\calA} \abs{\calA^\prime}}}\\
        &\leq \frac{\abs{(\calA \cup \calA^\prime) \setminus (\calA \cap \calA^\prime)}}{\sqrt{\abs{\calA} \abs{\calA^\prime}}} \leq \frac{b}{c}
    \end{align*}
    Taking the square root completes the first statement. For the second statement, expanding the compression operators and applying the triangle inequality gives,
    \begin{align*}
        \norm{\pCfromPlus(\ket{\calA}, \ket{I}) - \pCfromPlus(\ket{\calA^\prime}, \ket{I})} &\leq \norm{\ketbra{\calA^\prime} - \ketbra{\calA}} + \norm{(\ket{\calA} - \ket{\calA^\prime})\bra{I} + \ket{I}(\bra{\calA} - \bra{\calA^\prime})}\\
        &= \sqrt{1-\abs{\braket{\calA}{\calA^\prime}}^2} + \norm{\ket{\calA} - \ket{\calA^\prime}}\\
        &\leq 2\norm{\ket{\calA} - \ket{\calA^\prime}} \leq 2\sqrt{\frac{b}{c}},
    \end{align*}
    where the equality uses the fact that $\ket{I}$ is orthogonal to both states.
\end{proof}

Another useful statement is the following.

\begin{proposition}\label{prop:recompression_unlikely}
    Let $x,y\in[N]$ and $I$ be an injective database over $[N]$ such that $I(x)=y$, and $I_\bot = I\setminus \{(x,y)\}$. Then,
    \begin{align*}
        \norm{(\pC_x - \Id)\ket{I}}^2 = \frac{2}{N - \abs{I}+1}.
    \end{align*}
    More generally, fix $x$ and let $\zeta$ be a vector supported on databases $I$ with $x\in\dom(I)$ and $\abs{I}\leq r$, with an orthogonal record $\ket{I(x)}$ in a subregister of the algorithm register. Then,
    \begin{align*}
        \norm{(\pC_x - \Id)\zeta} \leq \sqrt{\frac{2}{N - r+1}}\norm{\zeta}.
    \end{align*}
\end{proposition}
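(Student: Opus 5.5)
The plan is to compute the first identity directly from \cref{obs:compressed_permutation_oracle_already_queried}, and then obtain the general bound by decomposing $\zeta$ into pieces that remain mutually orthogonal after applying $\pC_x-\Id$.

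\emph{Single database.} I apply \cref{obs:compressed_permutation_oracle_already_queried} to the database $I_\bot$. Since $x\notin\dom(I_\bot)$ and $y\notin\im(I_\bot)$, we have $I=I_\bot[x\to y]$. With $m=N-\abs{I_\bot}=N-\abs{I}+1$, the observation gives
\begin{align*}
(\pC_x-\Id)\ket{I}=\frac{1}{\sqrt m}\bigl(\ket{I_\bot}-\ket{+_{x,I_\bot}}\bigr).
\end{align*}
The vector $\ket{+_{x,I_\bot}}$ is supported on databases defined at $x$, while $\ket{I_\bot}$ is not. So these two unit vectors are orthogonal, and the squared norm is $(1+1)/m=2/(N-\abs{I}+1)$.

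\emph{General case.} I write $\zeta=\sum_{I}\ket{a_I}\ket{I}_\I$. Here $I$ ranges over databases with $x\in\dom(I)$ and $\abs{I}\le r$. The record hypothesis means $\ket{a_I}=\ket{I(x)}\otimes\ket{b_I}$, with the record register holding $\ket{I(x)}$. Since $\pC_x$ acts only on $\I$, the single-database computation gives
\begin{align*}
(\pC_x-\Id)\zeta=\sum_I \frac{1}{\sqrt{m_I}}\,\ket{I(x)}\ket{b_I}\otimes\bigl(\ket{I_\bot}-\ket{+_{x,I_\bot}}\bigr),
\end{align*}
where $I_\bot=I\setminus\{(x,I(x))\}$ and $m_I=N-\abs{I}+1$.

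\emph{Orthogonality of the summands.} The key step is to show the summands are pairwise orthogonal, so that Pythagoras applies.
\begin{itemize}
\item Take two distinct databases $I\neq I'$. If $I_\bot\neq I'_\bot$, the database parts lie in the distinct subspaces $\calH_{x,I_\bot}$ and $\calH_{x,I'_\bot}$. These are orthogonal, because the $\calH_{x,J}$ form an orthogonal decomposition for fixed $x$.
\item If instead $I_\bot=I'_\bot$, then necessarily $I(x)\neq I'(x)$. The database parts now coincide, but the records $\ket{I(x)}$ and $\ket{I'(x)}$ are orthogonal.
\end{itemize}
This second case is the main subtlety, and it is exactly why the orthogonal record is assumed. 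Without it, the vectors $\ket{+_{x,I_\bot}}$ coming from different $y$ would add coherently and the bound would fail.

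\emph{Conclusion.} By Pythagoras,
\begin{align*}
\norm{(\pC_x-\Id)\zeta}^2=\sum_I\frac{2}{m_I}\norm{b_I}^2\le \frac{2}{N-r+1}\sum_I\norm{a_I}^2=\frac{2}{N-r+1}\norm{\zeta}^2,
\end{align*}
using $m_I\ge N-r+1$ for $\abs{I}\le r$. Taking square roots gives the claimed bound.
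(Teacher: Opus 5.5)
Your proposal is correct and follows essentially the same route as the paper: the single-database identity comes from \cref{obs:compressed_permutation_oracle_already_queried} together with $\ket{I_\bot}\perp\ket{+_{x,I_\bot}}$, and the general bound comes from decomposing $\zeta$ by the pair $(I(x),I_\bot)$ and summing the per-term contributions $2/(N-\abs{I_\bot})$. The only difference is that you state the orthogonality of the summands explicitly, while the paper uses it without comment: distinct $I_\bot$ lie in orthogonal blocks $\calH_{x,I_\bot}$, and equal $I_\bot$ are separated by the record register.
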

\begin{proof}
    By \cref{obs:compressed_permutation_oracle_already_queried},
    \begin{align}
        \norm{(\pC_x - \Id)\ket{I}}^2 = \norm{\frac{\ket{I_\bot} - \ket{+_{x,I_\bot}}}{\sqrt{N-\abs{I_\bot}}}}^2 = \frac{2}{N - \abs{I_\bot}}=\frac{2}{N-|I|+1},\label{eq:recompression_unlikely_pt1}
    \end{align}
    where in the second equality we used the fact that $\ket{I_\bot}$ and $\ket{+_{x,I_\bot}}$ are orthogonal. For the second part of the statement, we may write $\zeta$ as,
    \begin{align*}
        \zeta = \sum_{y, I_\bot} (\zeta_{y,I_\bot}\otimes\ket{y})_{\A} \ket{I_\bot[x\to y]}_{\I},
    \end{align*}
    Therefore,
    \begin{align*}
        \norm{(\pC_x - \Id)\zeta}^2 &= \sum_{y,I_\bot} \frac{2}{N-\abs{I_\bot}} \norm{\zeta_{y,I_\bot}}^2 &\mbox{(By \cref{eq:recompression_unlikely_pt1})}\\
        &\leq \frac{2}{N - r + 1} \sum_{y,I_\bot} \norm{\zeta_{y,I_\bot}}^2 = \frac{2}{N - r + 1}\norm{\zeta}^2.
    \end{align*}
    Taking the square root completes the statement.
\end{proof}

\paragraph{Expanding the compression terms.} Let us expand the compression terms in more detail. Write $\db|_x=\{I\in\db_N:x\notin\dom(I)\}$. In this paragraph, we also allow $\pC_x$ to use a normalized uniform superposition over any nonempty set $\Omega(I)\subseteq[N]\setminus\im(I)$ in each block (for example, in \cpO, $\Omega(I) = [N] \setminus \im(I)$). Let us decompose it using the insertion operator $P_x^{+}$ and its adjoint $P_x^{-}$, which are defined as follows,
\begin{align*}
    P_x^{+}\ket{I} &\coloneq \begin{cases}
        \frac{1}{\sqrt{\abs{\Omega(I)}}} \sum_{y\in \Omega(I)} \ket{I[x\to y]} &\text{if $x\not\in \dom(I)$}\\
        0 &\text{if $x\in \dom(I)$.}
    \end{cases}\\
    P_x^{+} &=\sum_{I \in \db|_x} \ketbra{+_{x,I}}{I}\\
    P_x^{-} &\coloneq (P_x^{+})^{\dagger} = \sum_{I \in \db|_x} \ketbra{I}{+_{x,I}},
\end{align*}
where $\ket{+_{x,I}}$ now denotes the normalized uniform superposition over $\{\ket{I[x\to y]}:y\in\Omega(I)\}$.
Notice the following behavior of $P_x^{-}$ for $I\in\db|_x$ and $y\in\Omega(I)$,
\begin{align}
    P_x^{-}\ket{I[x\to y]} = \frac{1}{\sqrt{\abs{\Omega(I)}}}\ket{I}.\label{eq:a_x_destroy}
\end{align}
For $y\in([N]\setminus\im(I))\setminus\Omega(I)$, this action is zero. Furthermore, let us define the following projectors on the injective databases $\Pi_{x}^{\in}$ and $\Pi_{x}^{\not\in}$,
\begin{align*}
    \Pi_{x}^{\in} &\coloneq \sum_{I: x\in \dom(I)} \ket{I}\bra{I}\\
    \Pi_{x}^{\not\in} &\coloneq  \sum_{I: x\not\in \dom(I)} \ket{I}\bra{I} = \Id - \Pi_{x}^{\in} 
\end{align*}
Let us show how these operators decompose $\pC_{x}$. Note that we are fixing some specific database (such as $\InjColPres_{c,\alpha}$ or $\InjLabel$) onto which we are applying the compression operator $\pC$, which for each $x$ and $I$ such that $x\not\in \dom(I)$, switches between $\ket{I}$ and $\ket{+_{x,I}}$ while fixing everything orthogonal to $\Span\{\ket{I}, \ket{+_{x,I}}\}$.
\begin{lemma}\label{lem:expansion_pc_operator}
    Fix some operator $\pC_{x}$ being applied in $\pC$. Then we have that,
    \begin{align*}
        \pC_{x} \Pi_{x}^{\not\in} &= P_x^{+} \Pi_{x}^{\not\in}\\
        \pC_{x} \Pi_{x}^{\in} &= (\Id + P_x^{-} - P_x^{+} P_x^{-})\Pi_{x}^{\in}
    \end{align*}
\end{lemma}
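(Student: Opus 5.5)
This is a direct computation on the two orthogonal sectors $\Pi_x^{\not\in}$ and $\Pi_x^{\in}$, using the block decomposition $\pC_x=\bigoplus_{I\in\db|_x}\pC_{x,I}$ on the subspaces $\calH_{x,I}$. For a database $J$ with $x\in\dom(J)$ there is a unique pair $I\in\db|_x$, $y$ with $J=I[x\to y]$: take $I = J$ with $x$ removed, and $y=J(x)$. So every basis state lies in exactly one block. On each block, $\pC_{x,I}=\Id-\ketbra{+_{x,I}}-\ketbra{I}+\ketbra{+_{x,I}}{I}+\ketbra{I}{+_{x,I}}$. I will compare this, on basis vectors, with the claimed operator expressions, reading off the action of $P_x^{\pm}$ from their definitions and \cref{eq:a_x_destroy}.

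\textbf{The sector $\Pi_x^{\not\in}$.} Take a basis state $\ket{I}$ with $I\in\db|_x$. Since $\braket{+_{x,I}}{I}=0$, the block formula gives $\pC_{x,I}\ket{I}=\ket{+_{x,I}}$. By definition $P_x^{+}\ket{I}=\ket{+_{x,I}}$, so the first identity holds on each basis vector and hence everywhere.

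\textbf{The sector $\Pi_x^{\in}$.} Take $J=I[x\to y]$ in block $I$. There are two cases.
\begin{itemize}
    \item[(a)] $y\in\Omega(I)$. The block formula gives
    \begin{align*}
        \pC_x\ket{J}=\ket{J}-\tfrac{1}{\sqrt{|\Omega(I)|}}\ket{+_{x,I}}+\tfrac{1}{\sqrt{|\Omega(I)|}}\ket{I},
    \end{align*}
    matching \cref{obs:compressed_permutation_oracle_already_queried}. On the other side, \cref{eq:a_x_destroy} gives $P_x^{-}\ket{J}=|\Omega(I)|^{-1/2}\ket{I}$. Then $P_x^{+}P_x^{-}\ket{J}=|\Omega(I)|^{-1/2}\ket{+_{x,I}}$. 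Hence $(\Id+P_x^{-}-P_x^{+}P_x^{-})\ket{J}$ is the same vector.
    \item[(b)] $y\notin\Omega(I)$. Now $\ket{J}$ is orthogonal to both $\ket{I}$ and $\ket{+_{x,I}}$, so $\pC_x\ket{J}=\ket{J}$. Also $P_x^{-}\ket{J}=0$, since $P_x^{-}=\sum_{I'}\ketbra{I'}{+_{x,I'}}$ and only $I'=I$ can overlap with $\ket{J}$, where the overlap vanishes. So the right-hand side is also $\ket{J}$.
\end{itemize}

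For case (b) to be well defined, the block $\calH_{x,I}$ must contain $\ket{I[x\to y]}$ for $y$ outside $\Omega(I)$. The definition of $\calH_{x,I}$ includes all $y\in[N]\setminus\im(I)$, so this holds.

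\textbf{Main obstacle.} The only delicate point is the bookkeeping that $P_x^{-}$ annihilates everything outside $\Span\{\ket{+_{x,I}}\}$ within each block. Also needed is that $P_x^{+}$ vanishes on $\Pi_x^{\in}$, so no stray term appears. I don't need that fact in the computation above because $P_x^{+}$ only appears composed after $P_x^{-}$, which lands in the $\Pi_x^{\not\in}$ sector.

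Both identities then follow by linearity over the basis, completing the proof.
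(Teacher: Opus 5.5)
Your proof is correct and follows the paper's route. Both arguments verify the two identities sector by sector using the block decomposition $\pC_x=\bigoplus_{I}\pC_{x,I}$, and both treat the assigned states with $y\notin\Omega(I)$ separately. The only difference is presentational: you check each identity on basis vectors and extend by linearity, while the paper takes a general vector $\sum_{y}\ket{\psi_y}\ket{I[x\to y]}$ with an auxiliary register and splits it into its $\ket{+_{x,I}}$-component plus an orthogonal remainder, which is the same computation.
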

\begin{proof}
    Let us fix some database $I$, assume that $x\not\in \dom(I)$ and recall the definition of $\pC_{x,I}$ in \cref{eq:pc_x_I}. Therefore,
    \begin{align*}
        \pC_{x}\ket{I} &= \pC_{x,I}\ket{I} = \ket{+_{x,I}} = P_x^{+}\ket{I}.
    \end{align*}
    Here $\ket{I}$ is the only undefined basis state in $\calH_{x,I}$, and $\pC_x$ is a direct sum over these orthogonal blocks. Next, let us focus on the second case. Assigned basis states with $y\notin\Omega(I)$ are fixed by $\pC_{x,I}$ and annihilated by $P_x^{-}$, so the second identity holds on those states. It remains to consider an arbitrary vector supported on the assigned states with $y\in\Omega(I)$ in one block,
    \begin{align*}
        \ket{\Psi} = \sum_{y\in \Omega(I)} \ket{\psi_y}_{\A} \otimes \ket{I[x\to y]}_{\I},
    \end{align*}
    where the $\A$ is some additional register (in our case, it contains the algorithm and other injective databases).\footnote{We are assuming that $\ket{\psi_y}$ is some arbitrary vector, so we can omit normalization factors.} Furthermore, let us define a $+$ state $\ket{\psi_+}$ which we use to decompose $\ket{\Psi}$,
    \begin{align*}
        \ket{\psi_+} &\coloneq \frac{1}{\sqrt{\abs{\Omega(I)}}} \sum_{y\in \Omega(I)} \ket{\psi_y},\\
        \ket{\Psi} &= \ket{\psi_+} \ket{+_{x,I}} + \sum_{y\in \Omega(I)} \left(\ket{\psi_y} - \frac{\ket{\psi_+}}{\sqrt{\abs{\Omega(I)}}} \right)_{\A} \otimes \ket{I[x\to y]}_{\I}.
    \end{align*}
    Notice that for any state $\ket{\phi}$ in $\A$, the second part of the equation is orthogonal to $\ket{+_{x,I}}$,
    \begin{align*}
        \left(\bra{\phi}_{\A} \otimes \bra{+_{x,I}}_{\I}\right)\sum_{y\in \Omega(I)} &\left(\ket{\psi_y} - \frac{\ket{\psi_+}}{\sqrt{\abs{\Omega(I)}}} \right)_{\A} \otimes \ket{I[x\to y]}_{\I}\\
        &= \frac{1}{\sqrt{\abs{\Omega(I)}}} \sum_{y\in \Omega(I)} \bra{\phi}\left(\ket{\psi_y} - \frac{\ket{\psi_+}}{\sqrt{\abs{\Omega(I)}}}\right)\\
        &=\frac{1}{\sqrt{\abs{\Omega(I)}}} \bra{\phi}\left(\sum_{y\in \Omega(I)} \ket{\psi_y} - \sqrt{\abs{\Omega(I)}} \cdot \ket{\psi_+}\right)\\
        &=\frac{1}{\sqrt{\abs{\Omega(I)}}} \bra{\phi}\left(\sqrt{\abs{\Omega(I)}} \cdot \ket{\psi_+} -  \sqrt{\abs{\Omega(I)}} \cdot \ket{\psi_+}\right) = 0.
    \end{align*}
    Therefore, applying $\pC_{x,I}$ on $\ket{\Psi}$ gives us,
    \begin{align*}
        \pC_{x,I}\ket{\Psi} = \ket{\psi_+}\ket{I} + \sum_{y\in \Omega(I)} \left(\ket{\psi_y} - \frac{\ket{\psi_+}}{\sqrt{\abs{\Omega(I)}}} \right)_{\A} \otimes \ket{I[x\to y]}_{\I}.
    \end{align*}
    On the other hand, let us consider the applications of $P_x^{-}$ and $P_x^{+} P_x^{-}$ on $\ket{\Psi}$,
    \begin{align*}
        (\Id \otimes P_x^{-}) \ket{\Psi} &= \sum_{y\in \Omega(I)} \ket{\psi_y}_{\A} \otimes \left(\ket{I}\braket{+_{x,I}}{I[x\to y]}\right)_{\I}\\
        &= \left(\frac{1}{\sqrt{\abs{\Omega(I)}}} \sum_y \ket{\psi_y} \right) \ket{I} = \ket{\psi_+} \ket{I}.\\
        (\Id \otimes P_x^{+} P_x^{-}) \ket{\Psi} &= \ket{\psi_+} P_x^{+}\ket{I}\\
        &=\sum_y \frac{\ket{\psi_+}}{\sqrt{\abs{\Omega(I)}}} \ket{I[x\to y]}
    \end{align*}
    Combining this, we have that,
    \begin{align*}
        (\Id \otimes (\Id + P_x^{-} - P_x^{+} P_x^{-}))\ket{\Psi} &= \sum_{y} \ket{\psi_y} \otimes \ket{I[x\to y]} + \ket{\psi_+} \ket{I} - \sum_y \frac{\ket{\psi_+}}{\sqrt{\abs{\Omega(I)}}} \ket{I[x\to y]}\\
        &= \ket{\psi_+} \ket{I} + \sum_{y\in \Omega(I)} \left(\ket{\psi_y} - \frac{\ket{\psi_+}}{\sqrt{\abs{\Omega(I)}}} \right) \otimes \ket{I[x\to y]}\\
        &=\pC_{x,I}\ket{\Psi}.\qedhere
    \end{align*}
\end{proof}

Therefore, the applications of the $\pC$ operators can be decomposed using these terms.

\paragraph{Intertwiner Lemma.}

The proof of \cref{thm:soundness} in~\cite{Car25} uses an \emph{intertwiner} to compare a purification of the real-oracle computation with $\ket*{\psi_q^{(\cpO)}}_{\A\I}$. Note that the construction in~\cite{Car25} assumes that the permutation $\phi$ is over an alphabet whose size is a power of $4$. As our oracles cannot satisfy this condition in general, we require the following generalization.

\begin{lemma}[General-input Intertwiner]\label{lem:general_intertwiner_lemma}
    For a finite nonempty set $S$, let $O_S$ denote oracle access to a uniformly random permutation of $S$ which is sampled at the start. Let $I_S$ store partial injective maps from $S$ to $S$, and let $\cpO_S$ be the corresponding compressed permutation oracle. There exist a purified realization $\cmfO_S$ of $O_S$, with private register $\regP_S$ initialized to $\ket{\bot}_{\regP_S}$, and an isometry $\calI_S: \calH(\regP_S) \to \calH(\regP_S) \otimes \calH(\I_S)$ such that, for an absolute constant $K$ and any state $\ket{\psi_q}$ obtained from $\ket{0}_{\A}\ket{\bot}_{\regP_S}$ using $q$ calls to $\cmfO_S$,
    \begin{align*}
        \norm{(\calI_S \cmfO_S - \cpO_S \calI_S) \ket{\psi_q}} \leq K\frac{(q+1)^2}{\abs{S}^{1/4}}.
    \end{align*}
    Additionally, $\calI_S$ is such that $\calI_S \ket{\bot}_{\regP_S} = \ket{\bot}_{\regP_S}\ket{\bot}_{I_S}$.
\end{lemma}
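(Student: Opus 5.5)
The plan is to reduce the general alphabet $S$ to the power-of-$4$ case handled in~\cite{Car25}. Where the reduction does not directly apply, I would rebuild the intertwiner of~\cite{Car25} with $S$ in place of $[N]$.

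\textbf{Purified oracle and intertwiner.} For $\cmfO_S$ I take the standard "lazy sampling" purification. The register $\regP_S$ stores a partial injective map $J$ from $S$ to $S$, initialized to $\ket{\bot}$. A forward query on $x\notin\dom(J)$ first applies a unitary that coherently extends $J$ by a uniform superposition over $S\setminus\im(J)$, then XORs $J(x)$ into the output register. Inverse queries are handled the same way after conjugating by the flip $\cOp{F}$.

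This process samples exactly a uniformly random permutation, so $\cmfO_S$ is a purified realization of $O_S$ for any finite $S$; divisibility of $|S|$ plays no role. The intertwiner is then the isometry
\begin{align*}
    \calI_S\ket{J}_{\regP_S}=\ket{J}_{\regP_S}\otimes\ket{\Phi_J}_{\I_S},
\end{align*}
where $\ket{\Phi_J}$ is the "compressed" encoding of $J$. It is obtained by applying the product of compression operators $\pC_{x}$ over $x\in\dom(J)$, in insertion order, to the database $J$. Equivalently, it is the state the database would be in had the queries that created $J$ been made to $\cpO_S$ with the values already fixed. Clearly $\calI_S\ket{\bot}=\ket{\bot}\ket{\bot}$.

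Where~\cite{Car25} uses the power-of-$4$ assumption (Fourier/Pauli structure over $\{0,1\}^d$ to write uniform superpositions as Hadamard transforms), I instead use the uniform states $\ket{+_{x,I}}$ over the complement of $\im(I)$ directly, together with the expansion of \cref{lem:expansion_pc_operator}. In that expansion only the combinatorics of the set sizes $|S|-|I|$ enter.

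\textbf{Per-query error bound.} I would compare $\calI_S\cmfO_S$ and $\cpO_S\calI_S$ on states reachable in $q$ queries. By the analogue of \cref{lem:bounded}, these are supported on databases of size at most $q$. For a single query on input $x$ there are three cases.
\begin{itemize}
    \item[(a)] If $x\notin\dom(J)$, both sides insert a fresh uniform value. They agree exactly, except that $\cpO_S$ uses $\pC^\dagger$ before $\xorOP$.
    \item[(b)] If $x\in\dom(J)$ and the database records it "cleanly", then $\pC_x$ acts almost as the identity, with error $\sqrt{2/(|S|-q+1)}$ by \cref{prop:recompression_unlikely}.
    \item[(c)] Mismatches come from superpositions over $\im$ that differ by at most $O(q)$ elements between the real and compressed sides. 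By \cref{obs:superposition_similarity}, this contributes $O(\sqrt{q/|S|})$.
\end{itemize}
Summing the errors over the $O(q)$ records involved gives a per-query error of $O(q/|S|^{1/4})$ after the same balancing step as in~\cite{Car25}. That step trades a $\sqrt{\cdot}$ loss for the collision between forward and inverse records. A hybrid sum over the $q$ queries then yields $K(q+1)^2/|S|^{1/4}$.

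\textbf{Main obstacle.} The main obstacle is the mixed forward/inverse interaction. An inverse query conjugates by $\cOp{F}$, and the compressed image-side superposition ranges over $S\setminus\dom(I)$, which differs from the real sampler's distribution by $O(q)$ points. I would first try to prove a "commutation up to error" lemma, $\norm{(\cOp{F}\pC_y\cOp{F}^\dagger)\pC_x-\pC_x(\cOp{F}\pC_y\cOp{F}^\dagger)}_{\le q}=O(|S|^{-1/4})$, by direct computation on the two-dimensional blocks using \cref{obs:superposition_similarity}. This is the only place where the $|S|^{1/4}$ (rather than $|S|^{1/2}$) rate should arise.

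If that fails, the fallback is to embed $S$ into $[M]$ with $M$ the next power of $4$ and invoke~\cite{Car25} directly. However, a uniform permutation of $[M]$ does not restrict to one of $S$, so this would require an extra argument bounding how far the restricted sampling deviates. I consider that route less clean.
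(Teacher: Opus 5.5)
Your construction has a genuine gap: the purified oracle and the intertwiner both fail, and the per-query estimate is never actually established.

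\textbf{The sampler is not a purification.} A sampler that extends $J$ on fresh inputs and keeps the entry afterwards records \emph{which} inputs were queried. Classical correctness of lazy sampling does not survive superposition queries. Take $s=\abs{S}\geq 2$, prepare $s^{-1/2}\sum_x\ket{0,x}_{\X}\ket{0}_{\Y}$, and query twice. For every $\sigma$ the second query undoes the first, so $\X$ ends in a pure uniform superposition. With your sampler the final state is $s^{-1/2}\sum_x\ket{0,x}\ket{0}\otimes s^{-1/2}\sum_z\ket{\{x\to z\}}_{\regP_S}$. The register states for different $x$ are orthogonal, so $\X$ is maximally mixed. If you instead undo the extension after the XOR, you have rebuilt $\cpO_S$ itself, for which only the approximate \cref{thm:soundness} is available. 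An exact purification is easy (hold $\sum_\sigma\ket{\sigma}/\sqrt{s!}$); the real content of the lemma is the isometry between compressed databases and the purification register.

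\textbf{The intertwiner fails at order one.} Your map $\ket{J}\mapsto\ket{J}\ket{\Phi_J}$ leaves $J$ in $\regP_S$, on which $\cpO_S$ does not act. Already at $q=0$, for a query to a fresh $x$, $\calI_S\cmfO_S\ket{\psi_0}$ has $\regP_S$ supported on $\{\ket{\{x\to z\}}\}_z$, while $\cpO_S\calI_S\ket{\psi_0}$ has $\regP_S=\ket{\bot}$. The difference therefore has norm $\sqrt{2}$, not $O(\abs{S}^{-1/4})$. An intertwiner must move the information out of $\regP_S$; the paper's satisfies $\calI_S\calV_S\ket{I}=\ket{\bot}_{\regP_S}\ket{I}_{\I_S}$. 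Moreover, the compressed database shrinks when a query is uncomputed while your $J$ only grows, so no $\ket{\Phi_J}$ depending on $J$ alone can track it. ``Insertion order'' is also not a function of $J$.

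\textbf{The error analysis is not grounded.} Cases (a)--(c) and the ``balancing step'' rest on a guess about where \cite{Car25} uses $\abs{S}=4^k$, so the per-query bound is not proved.

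\textbf{The missing idea} is how to reduce to power-of-four sizes; your fallback approaches this but embeds where the paper partitions.
\begin{enumerate}
    \item The paper writes $S=\bigsqcup_{j\in[B]}S_j$ from the base-four expansion of $s$, so each $\abs{S_j}$ is a power of four and $B=O(\log s)$.
    \item It conjugates by uniformly random masks $(\pi,\omega)\in\Sym(S)^2$. A forward query on $x$ is routed to the block containing $\pi(x)$, answered there by the power-of-four purified oracle of \cite{Car25}, and mapped back through $\omega$.
    \item Since $\omega\circ(\bigsqcup_j\sigma_j)\circ\pi$ is uniform on $\Sym(S)$ whenever $\pi$ is, this is an exact purification, with no restriction problem.
    \item A partition isometry $\intPart$ sends $\ket{I}$ to a weighted superposition over masks and block databases. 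Then $\pC_{\blocks,x}\intPart$ agrees with $\intPart\pC_{S,x}$ exactly on $\ket{I}$ and $\ket{+_{S,x,I}}$. On the zero-sum components $\sum_y\alpha_y\ket{I[x\to y]}$, which the blocks split unevenly, they agree up to $\sqrt{2B/(s-d)}$.
    \item Carolan's per-block bound $K(d+3)/\abs{S_j}^{1/4}$ is weighted by the probability $\abs{S_j}/s$ that a query lands in block $j$. Using $\sum_j\sqrt{\abs{S_j}}\leq 6\sqrt{s}$, this sums to $O((d+3)/s^{1/4})$.
    \item The intertwiner is assembled from $\calV_S^\dagger$ with $\calV_S=\intBlocks\intPart$, and the $(q+1)^2$ follows by telescoping with \cref{lem:bounded}.
\end{enumerate}
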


The proof of \cref{lem:general_intertwiner_lemma} may be found in \cref{sec:general_intertwiner_lemma}. We note that as shown in~\cite[Theorem 5.14]{Car25}, \cref{lem:general_intertwiner_lemma} implies \cref{thm:soundness} for an arbitrary domain.

\section{Welded trees oracles}

In this section, we use the permutation compressed oracle techniques in order to construct a welded trees instance. This allows us to bound the query complexity of finding a path between the start and exit vertex of a welded tree.

\subsection{Standard welded trees oracle}\label{sec:construction_welded_trees}

Assume that $n$ is even and $N\coloneq 2^n$. An instance of a welded tree $G_n = (V_n, E_n)$ consists of two binary trees $T_1,T_2$ of height $n+1$ whose leaves are joined by a bipartite cycle. We denote the set of edges on the cycle $E_W$ and call them the \say{welded edges}. For $c\in [0,2n+1]$, we define the column nodes $\col(c) \subseteq V_n$ as,
\begin{align*}
    \col(c) \coloneq \begin{cases}
        \{v\in V(T_1): \depth(v) = c\} &\text{if $c\leq n$.}\\
        \{v\in V(T_2): \depth(v) = 2n+1-c\} &\text{if $c\geq n+1$.}
    \end{cases}
\end{align*}

Notice that as $T_1, T_2$ are binary trees, $\abs{\col(c)} = 2^{\min(c, 2n+1-c)}$. Each vertex $v\in V_n$ is assigned a structural index $\strucIdx: V_n \to [0,2^{n+2}-3]$ as follows,
\begin{align*}
    \strucIdx(v) \coloneq \begin{cases}
        0 &\text{if $v=\groot(T_1)$.}\\
        2^{n+1} - 1 &\text{if $v=\groot(T_2)$.}\\
        2\cdot \strucIdx(v^\prime) + 1 &\text{$\col(v)\in [1,n]$ and $v$ is the \emph{left} child of $v^\prime$.}\\
        2\cdot \strucIdx(v^\prime) + 2 &\text{$\col(v)\in [1,n]$ and $v$ is the \emph{right} child of $v^\prime$.}\\
        2\cdot \strucIdx(v^\prime) - 2^{n+1} + 2 &\text{$\col(v)\in [n+1,2n]$ and $v$ is the \emph{left} child of $v^\prime$.}\\
        2\cdot \strucIdx(v^\prime) - 2^{n+1} + 3  &\text{$\col(v)\in [n+1,2n]$ and $v$ is the \emph{right} child of $v^\prime$.}
    \end{cases}
\end{align*}

For the remainder of the paper, we will use the vertex label $v\in V_n$ and the structural index $\strucIdx(v)$ interchangeably. We will use $p: V_n \to V_n\cup\{\bot\}$ to denote the parent of a vertex,
\begin{align*}
    p(w) \coloneq \begin{cases}
        \floor{(w - 1)/2} &\text{if $\col(w) \in [1,n]$.}\\
        \floor{(w-2^{n+1})/2} + 2^{n+1} - 1 &\text{if $\col(w) \in [n+1, 2n]$}\\
        \bot &\text{otherwise.}
    \end{cases}
\end{align*}

Next, let us assign colors from the set $\colorSet = \{{\color{colorA}\alpha},{\color{colorB}\beta},{\color{colorC}\gamma}\}$ to the edges.

\begin{observation}[Edge coloring $G_n$]\label{lem:color_welded_graph}
    For any $G_n$, there exists a proper edge coloring $\colF: E_n \to \colorSet$.
\end{observation}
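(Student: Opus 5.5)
We need a proper 3-edge-coloring of $G_n$. Every vertex has degree at most $3$: the roots have degree $2$, internal tree vertices have degree $3$ (one parent and two children), and each leaf has its parent plus two weld edges. The plan is to color the weld cycle first and then extend the coloring upward through both trees.

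\textbf{Step 1: the weld cycle.} The cycle $E_W$ alternates between the leaves of $T_1$ and the leaves of $T_2$. It therefore has even length $2\cdot 2^n$. Color its edges alternately ${\color{colorB}\beta}$ and ${\color{colorC}\gamma}$. Each leaf then has one incident weld edge of each of these two colors.

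If $E_W$ were a union of several cycles, as in the relaxed distribution mentioned in the introduction, the argument would be unchanged. Each cycle is bipartite between the two leaf sets, so each has even length and can be alternately colored.

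\textbf{Step 2: the leaf-to-parent edges.} Color every edge from a leaf to its parent ${\color{colorA}\alpha}$. Each leaf now sees all three colors exactly once.

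\textbf{Step 3: induction upward.} Process the columns from the leaves toward the roots, separately in $T_1$ and $T_2$. Take an internal vertex $v$ whose two child edges are already colored. Both child edges got the color of their child's parent edge, and that color was chosen for the parent edge at an earlier stage.

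To keep this consistent, fix the invariant that all edges between column $c$ and column $c+1$ in $T_1$ receive a single color depending only on $c$. The edges at leaf level get ${\color{colorA}\alpha}$, and we move up one column at a time. Then the two child edges of $v$ share a color $x$. Coloring the parent edge of $v$ with any color $y\neq x$ makes the coloring proper at $v$. The children's constraints are untouched, because the parent edge of $v$ is not incident to them.

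For example, alternate the level colors ${\color{colorA}\alpha},{\color{colorB}\beta},{\color{colorA}\alpha},\dots$. The roots have only two child edges, both colored $x$, so there is no conflict at a root. Check the leaves once more: each leaf sees its parent edge ${\color{colorA}\alpha}$ and the two weld colors ${\color{colorB}\beta},{\color{colorC}\gamma}$, so it is proper.

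\textbf{Main point to verify.} The only nonroutine point is the even length of the weld cycle(s) in Step 1, and this follows from bipartiteness. Everything else is a direct check of properness at each vertex type: roots, internal tree vertices, and leaves.
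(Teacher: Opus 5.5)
\textbf{Genuine gap.} Your construction does not produce a proper edge coloring. Properness requires \emph{every} pair of edges sharing an endpoint to get distinct colors, and that includes the two child edges of an internal vertex.

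\begin{itemize}
\item Your invariant gives one color to all edges between columns $c$ and $c+1$. So the two child edges of every internal vertex, and of each root, receive the same color $x$.
\item Hence your claims that choosing $y\neq x$ for the parent edge ``makes the coloring proper at $v$'' and that there is ``no conflict at a root'' are false.
\end{itemize}

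The failure already happens in Steps 1--2, so changing Step 3 cannot repair it. If the weld uses only $\beta$ and $\gamma$, every leaf is forced to have parent edge $\alpha$. The two leaf children of any column-$(n-1)$ vertex then give that vertex two $\alpha$-edges. So in any proper coloring the weld must use all three colors, arranged so that sibling leaves miss different colors. This propagates upward: at every level, siblings must have distinct parent colors, and each vertex's parent color is then forced to be the third color. Coordinating the weld coloring with both trees is the real content of the statement, and your proposal does not address it. The even length of the weld cycles, which you single out as the main point, is not the difficulty.

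The paper avoids this coordination problem. Every edge joins adjacent columns, so $G_n$ is bipartite (even versus odd columns) with maximum degree $3$. K\H{o}nig's edge-coloring theorem then gives a proper $3$-edge-coloring. If you want an elementary version, you can argue as follows.
\begin{itemize}
\item Add the edge $\{\groot(T_1),\groot(T_2)\}$. Columns $0$ and $2n+1$ have opposite parity, so you get a $3$-regular bipartite graph.
\item By Hall's theorem, a regular bipartite graph has a perfect matching. Removing perfect matchings three times splits the edges into three color classes.
\item Deleting the added edge gives the desired coloring of $G_n$.
\end{itemize}
This argument works equally well when the weld is a union of several cycles.
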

\begin{proof}
    Notice that $G_n$ is a bipartite graph as the odd columns may form one partition and the even columns the other. As the maximum degree of $G_n$ is 3, Kőnig's theorem implies that the graph is 3-edge-colorable~\cite{Konig16}.
\end{proof}

For the remainder of the paper, we will fix some edge coloring $\colF$ which was obtained using \cref{lem:color_welded_graph}. An example of a welded tree instance using the procedures above may be found in \cref{fig:welded_tree}.

\begin{figure}
    \center
    \begin{tikzpicture}[
    every node/.style={circle, draw, fill=white, inner sep=0pt, minimum size=14pt, font=\tiny},
    treelabel/.style={rectangle, draw=none, fill=none, font=\normalsize\bfseries},
    collabel/.style={rectangle, draw=none, fill=none, font=\footnotesize, text=gray},
    >=latex
]

% ============================================================
%   Color definitions  (change these to recolor all edges)
% ============================================================

% Edge styles for each color
\tikzset{
    edgeA/.style={thick, colorA},
    edgeB/.style={thick, colorB},
    edgeC/.style={thick, colorC},
}

% ============================================================
%   Layout parameters
% ============================================================
\def\xsep{1.8}   % horizontal distance between columns
 
% ============================================================
%   LEFT TREE  (columns 0..3,  depth 3)
%   BFS labeling from root: 1, 2, 3, ..., 15
%
%   Tree-edge coloring (parent color → child colors):
%     L0:  →L1a = A,  →L1b = B
%     L1a: parent A,  →L2a = B,  →L2b = C
%     L1b: parent B,  →L2c = A,  →L2d = C
%     L2a: parent B,  →L3a = A,  →L3b = C
%     L2b: parent C,  →L3c = A,  →L3d = B
%     L2c: parent A,  →L3e = B,  →L3f = C
%     L2d: parent C,  →L3g = A,  →L3h = B
% ============================================================
 
% --- Col 0: root (entrance) ---
\node (L0) at (0, 0) {0};
 
% --- Col 1 ---
\node (L1a) at (\xsep,  1.0) {1};
\node (L1b) at (\xsep, -1.0) {2};
 
% --- Col 2 ---
\node (L2a) at (2*\xsep,  1.8) {3};
\node (L2b) at (2*\xsep,  0.6) {4};
\node (L2c) at (2*\xsep, -0.6) {5};
\node (L2d) at (2*\xsep, -1.8) {6};
 
% --- Col 3 (leaves) ---
\node (L3a) at (3*\xsep,  2.45) {7};
\node (L3b) at (3*\xsep,  1.75) {8};
\node (L3c) at (3*\xsep,  1.05) {9};
\node (L3d) at (3*\xsep,  0.35) {10};
\node (L3e) at (3*\xsep, -0.35) {11};
\node (L3f) at (3*\xsep, -1.05) {12};
\node (L3g) at (3*\xsep, -1.75) {13};
\node (L3h) at (3*\xsep, -2.45) {14};
 
% --- Left tree edges ---
\draw[edgeA] (L0) -- (L1a);
\draw[edgeB] (L0) -- (L1b);
 
\draw[edgeB] (L1a) -- (L2a);
\draw[edgeC] (L1a) -- (L2b);
\draw[edgeA] (L1b) -- (L2c);
\draw[edgeC] (L1b) -- (L2d);
 
\draw[edgeA] (L2a) -- (L3a);
\draw[edgeC] (L2a) -- (L3b);
\draw[edgeA] (L2b) -- (L3c);
\draw[edgeB] (L2b) -- (L3d);
\draw[edgeB] (L2c) -- (L3e);
\draw[edgeC] (L2c) -- (L3f);
\draw[edgeA] (L2d) -- (L3g);
\draw[edgeB] (L2d) -- (L3h);

% ============================================================
%   RIGHT TREE  (columns 4..7,  mirror of left tree)
%   BFS labeling from root: 16, 17, 18, ..., 30
%
%   Tree-edge coloring (same scheme as left tree):
%     R0:  →R1a = A,  →R1b = B
%     R1a: parent A,  →R2a = B,  →R2b = C
%     R1b: parent B,  →R2c = A,  →R2d = C
%     R2a: parent B,  →R3a = A,  →R3b = C
%     R2b: parent C,  →R3c = A,  →R3d = B
%     R2c: parent A,  →R3e = B,  →R3f = C
%     R2d: parent C,  →R3g = A,  →R3h = B
% ============================================================
 
% --- Col 4 (leaves) ---
\node (R3a) at (4*\xsep,  2.45) {22};
\node (R3b) at (4*\xsep,  1.75) {23};
\node (R3c) at (4*\xsep,  1.05) {24};
\node (R3d) at (4*\xsep,  0.35) {25};
\node (R3e) at (4*\xsep, -0.35) {26};
\node (R3f) at (4*\xsep, -1.05) {27};
\node (R3g) at (4*\xsep, -1.75) {28};
\node (R3h) at (4*\xsep, -2.45) {29};
 
% --- Col 5 ---
\node (R2a) at (5*\xsep,  1.8) {18};
\node (R2b) at (5*\xsep,  0.6) {19};
\node (R2c) at (5*\xsep, -0.6) {20};
\node (R2d) at (5*\xsep, -1.8) {21};
 
% --- Col 6 ---
\node (R1a) at (6*\xsep,  1.0) {16};
\node (R1b) at (6*\xsep, -1.0) {17};
 
% --- Col 7: root (exit) ---
\node (R0) at (7*\xsep, 0) {15};
 
% --- Right tree edges ---
\draw[edgeA] (R0) -- (R1a);
\draw[edgeB] (R0) -- (R1b);
 
\draw[edgeB] (R1a) -- (R2a);
\draw[edgeC] (R1a) -- (R2b);
\draw[edgeA] (R1b) -- (R2c);
\draw[edgeC] (R1b) -- (R2d);
 
\draw[edgeA] (R2a) -- (R3a);
\draw[edgeC] (R2a) -- (R3b);
\draw[edgeA] (R2b) -- (R3c);
\draw[edgeB] (R2b) -- (R3d);
\draw[edgeB] (R2c) -- (R3e);
\draw[edgeC] (R2c) -- (R3f);
\draw[edgeA] (R2d) -- (R3g);
\draw[edgeB] (R2d) -- (R3h);

% ============================================================
%   GLUING EDGES
%   Designed so that every degree-3 leaf vertex has all
%   three colors {A, B, C} on its incident edges.
%
%   Left leaf tree colors:           Right leaf tree colors:
%     L3a=A  L3b=C  L3c=A  L3d=B      R3a=A  R3b=C  R3c=A  R3d=B
%     L3e=B  L3f=C  L3g=A  L3h=B      R3e=B  R3f=C  R3g=A  R3h=B
%
%   Each leaf needs its two gluing edges to use the other
%   two colors.  The gluing below satisfies this at every
%   vertex (verified exhaustively).
% ============================================================
 
\draw[edgeB] (L3a) -- (R3c);
\draw[edgeC] (L3a) -- (R3g);
 
\draw[edgeA] (L3b) -- (R3b);
\draw[edgeB] (L3b) -- (R3f);
 
\draw[edgeC] (L3c) -- (R3a);
\draw[edgeB] (L3c) -- (R3g);
 
\draw[edgeA] (L3d) -- (R3e);
\draw[edgeC] (L3d) -- (R3h);
 
\draw[edgeC] (L3e) -- (R3d);
\draw[edgeA] (L3e) -- (R3h);
 
\draw[edgeB] (L3f) -- (R3b);
\draw[edgeA] (L3f) -- (R3f);
 
\draw[edgeB] (L3g) -- (R3a);
\draw[edgeC] (L3g) -- (R3c);
 
\draw[edgeA] (L3h) -- (R3d);
\draw[edgeC] (L3h) -- (R3e);
 
% ============================================================
%   LABELS
% ============================================================
 
% Entrance / Exit labels
\node[treelabel, above=4pt of L0] {Entrance};
\node[treelabel, above=4pt of R0] {Exit};
 
% Column labels
\node[collabel] at (0,       -3.2) {col 0};
\node[collabel] at (\xsep,   -3.2) {col 1};
\node[collabel] at (2*\xsep, -3.2) {col 2};
\node[collabel] at (3*\xsep, -3.2) {col 3};
\node[collabel] at (4*\xsep, -3.2) {col 4};
\node[collabel] at (5*\xsep, -3.2) {col 5};
\node[collabel] at (6*\xsep, -3.2) {col 6};
\node[collabel] at (7*\xsep, -3.2) {col 7};

% Title
% \node[rectangle, draw=none, fill=none, font=\large\bfseries] at (3.5*\xsep, 3.8)
%     {Glued Trees ($n = 3$)};

\end{tikzpicture}
    \caption{An instance of a welded tree $G_3$.}\label{fig:welded_tree}
\end{figure}

For all nodes except the roots of $T_1,T_2$, we define the parent color $\upcolor$ as the color that leads to a node's parent, $\upcolor(v) = \colF( \{v, p(v) \})$. We will partition the nodes $v$ in $G_n$ based on $\col(v)$ and $\upcolor(v)$. Specifically, for all $c\in [2n]$ and $\alpha\in \colorSet$, define the \emph{color class} $\colClass{c,\alpha}$ and its size $\colClassSize{c,\alpha}$ as,
\begin{align*}
    \colClass{c,\alpha} \coloneq \{v: \col(v) = c, \upcolor(v) = \alpha \}, \;\; \colClassSize{c,\alpha} \coloneq \abs{\colClass{c,\alpha}}
\end{align*}

We call the edges in column $c$ those between vertices in columns $c-1$ and $c$. The following lemma shows that the number of edges with a color is approximately $\tfrac{1}{3}$ of all edges in the column.

\begin{lemma}[Lemma 59 in~\cite{CCG23}]\label{lem:colors_per_column}
    Let $\alpha_1$ be the color which is not used by any edges to $\groot(T_1)$. For any color $\alpha^\prime \in \colorSet$, let $\gamma(\alpha^\prime, c)$ denote the number of edges colored $\alpha^\prime$ in column $c\in [n]$ of $T_1$. Then,
    \begin{align}
        \gamma(\alpha^\prime, c) = \begin{cases}
            \floor*{\frac{2^c}{3}} &\text{if $c$ is odd and $\alpha^\prime=\alpha_1$ or $c$ is even and $\alpha^\prime \neq \alpha_1$}\\
            \ceil*{\frac{2^c}{3}} &\text{otherwise.}
        \end{cases}\label{eq:colors_per_column}
    \end{align}
    For $T_2$ (i.e. $c\in [n+2,2n+1]$), replace $c$ in \cref{eq:colors_per_column} with $2n-c+2$.
\end{lemma}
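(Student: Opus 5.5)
The plan is to show that, inside each tree, the per-column color counts are forced by a first-order recurrence, and then solve that recurrence in closed form. The key structural fact is that in a binary tree with a proper $3$-edge-coloring, every non-root internal vertex $v$ of $T_1$ has degree $3$. So its two child edges must use exactly the two colors in $\colorSet\setminus\{\upcolor(v)\}$. The root $\groot(T_1)$ has two child edges, which by definition of $\alpha_1$ use the two colors other than $\alpha_1$. The weld edges are not tree edges and do not enter the count for columns $c\in[n]$, so the restriction of $\colF$ to $T_1$ is all that matters.

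\textbf{Step 1 (recurrence).} Fix a color $\alpha'$ and let $f_{\alpha'}(c)=\gamma(\alpha',c)$. Each edge in column $c$ whose color is $x$ ends at a vertex in column $c$ with parent color $x$. That vertex contributes exactly one edge of each color $\ne x$ to column $c+1$, for $c\in[1,n-1]$. Hence the number of color-$\alpha'$ edges in column $c+1$ equals the number of column-$c$ edges whose color is not $\alpha'$:
\begin{align*}
    f_{\alpha'}(c+1) = 2^c - f_{\alpha'}(c).
\end{align*}
The initial conditions are $f_{\alpha_1}(1)=0$ and $f_{\alpha'}(1)=1$ for $\alpha'\neq\alpha_1$.

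\textbf{Step 2 (closed form).} The general solution of this recurrence is $f(c)=2^c/3+k(-1)^c$. The initial conditions give
\begin{align*}
    f_{\alpha_1}(c) = \frac{2^c+2(-1)^c}{3}, \qquad f_{\alpha'}(c) = \frac{2^c-(-1)^c}{3}\quad(\alpha'\ne\alpha_1).
\end{align*}
Both formulas can be checked directly by induction on $c$.

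\textbf{Step 3 (matching floors and ceilings).} We have $2^c\equiv 2 \pmod 3$ for $c$ odd and $2^c\equiv 1\pmod 3$ for $c$ even. So for odd $c$, $\lfloor 2^c/3\rfloor=(2^c-2)/3$ and $\lceil 2^c/3\rceil=(2^c+1)/3$. For even $c$, $\lfloor 2^c/3\rfloor=(2^c-1)/3$ and $\lceil 2^c/3\rceil=(2^c+2)/3$. Comparing with Step 2:
\begin{itemize}
    \item $\alpha_1$ gets the floor for odd $c$ and the ceiling for even $c$;
    \item the other colors get the ceiling for odd $c$ and the floor for even $c$.
\end{itemize}
This is exactly \cref{eq:colors_per_column}.

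\textbf{Step 4 ($T_2$).} For $T_2$, apply the same argument with $\groot(T_2)$ as root and $\alpha_1$ replaced by the color unused at $\groot(T_2)$. An edge in column $c$ of $T_2$ lies at depth $2n+2-c$ from $\groot(T_2)$, which gives the stated substitution.

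The only delicate point is Step 1: making sure that properness of $\colF$, together with degree exactly $3$ at internal non-root vertices, really forces the two child colors. This holds in $T_1$ for columns $1$ through $n-1$. The leaves' extra degree comes from weld edges and never affects tree columns, so the argument goes through as planned.
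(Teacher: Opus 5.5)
Your proof is correct and follows the same route as the paper. Your recurrence $\gamma(\alpha',c+1)=2^c-\gamma(\alpha',c)$ is exactly the paper's $\gamma(\alpha,c+1)=\sum_{\alpha''\neq\alpha}\gamma(\alpha'',c)$, and the only difference is that you solve it in closed form instead of inducting directly on the floor/ceiling expressions. Your Step 4 remark, that for $T_2$ the color $\alpha_1$ must be replaced by the color missing at $\groot(T_2)$, is a correct refinement that the paper's statement leaves implicit.
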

\begin{proof}
    The proof is by induction on the edge-column $c$. When $c=1$, the graph contains two differently colored edges neither of which is equal to $\alpha_1$, meaning \cref{eq:colors_per_column} holds. Assume \cref{eq:colors_per_column} holds for some $c\in [n-1]$, $c$ is odd and $\alpha$ is some color. For edge-column $c+1$, the parent $v$ of each edge colored $\alpha$ must have $\upcolor(v) \neq \alpha$. Therefore,
    \begin{align*}
        \gamma(\alpha, c+1) &= \sum_{\alpha^\prime \neq \alpha} \gamma(\alpha^\prime, c)\\
        &=\begin{cases}
            2\cdot \ceil{\frac{2^c}{3}} &\text{if $\alpha = \alpha_1$.}\\
            \ceil{\frac{2^c}{3}} + \floor{\frac{2^c}{3}} &\text{otherwise,}
        \end{cases}\\
        &=\begin{cases}
            \ceil{\frac{2^{c+1}}{3}} &\text{if $\alpha = \alpha_1$.}\\
            \floor{\frac{2^{c+1}}{3}} &\text{otherwise.}
        \end{cases}
    \end{align*}
    The result when $c$ is even or $c\in [n+2,2n+1]$ follows analogously.
\end{proof}

Furthermore, it will be useful to partition the edges in $E_n$ based on their colors. For an arbitrary edge $e = \{u,v\}$ where $\col(u) > \col(v)$, we use $\head(e)$ and $\tail(e)$ to denote their direction,
\begin{align*}
    \head(e) &\coloneq \begin{cases}
        v &\text{if $e\in E_{T_1}$,}\\
        u &\text{if $e \in E_{T_2}$,}\\
        u &\text{if $e\in E_W$,}
    \end{cases}\\
    \tail(e) &\coloneq \begin{cases}
        u &\text{if $\head(e) = v$,}\\
        v &\text{otherwise.}
    \end{cases}
\end{align*}

Thus a tree edge is always directed from a child to its parent. Formally, for any edge $e\in E_{T_1} \cup E_{T_2}$, $\upcolor(\tail(e)) = \colF(e)$. For welded edges $E_W$, the order is arbitrary, so we set it to go from column $n$ to $n+1$. Letting $c\in [2n+1]$ and $\alpha\in \colorSet$,
\begin{align}
    \treeEdges &\coloneq \left\{e = \{u,v\} \in E_{T_1} \cup E_{T_2}: \col(\tail(e)) = c, \colF(e) = \alpha \right\}\\
    &= \left\{ \{v,p(v)\}: v\in \colClass{c,\alpha} \right\}\label{eq:tree_edges_def_alt} \\
    \weldedEdges &\coloneq \left\{ e\in E_W: \colF(e) = \alpha \right\}.\label{eq:welded_edges_def}
\end{align}

By \cref{eq:tree_edges_def_alt,eq:welded_edges_def}, we have that $\abs{\treeEdges} = \colClassSize{c,\alpha}$ and $\abs{\weldedEdges} = \sum_{\beta \in \colorSet \setminus \{\alpha\}} \colClassSize{n,\beta}$.

\paragraph{Constructing the oracle.} For the remainder of the section, we will construct the welded trees oracle $\wtO$ and describe its distribution. There are two independent sources of randomness, both of which will use permutations. First, we randomize the edges using $\colorPerm$ and then the labels using \labelPerm.

Let us begin by defining the edge permutation \colorPerm. Simply put, \colorPerm permutes over edges in each column which have the same color. Formally, we consider permutations $\pi: E_n \to E_n$ such that,
\begin{align}
    \forall c\in [2n+1], \alpha\in \colorSet: \pi(\treeEdges) = \treeEdges \text{ and } \pi(\weldedEdges) = \weldedEdges.\label{eq:color_preserving_definition}
\end{align}
We call permutations which satisfy \cref{eq:color_preserving_definition} \emph{color-preserving}, as all vertices $v$ remain on the same column and their parent color $\upcolor(v)$ does not change. 

We may decompose color-preserving permutations based on the permutations over each set of edges separately. We will be solely interested in permutations between columns $c\in [n/2,3n/2]$, which we call the \emph{middle columns} and denote by $\middleCol = [n/2, 3n/2]$. The entire set of permuted edges is denoted by $\middleEdge$,
\begin{align*}
    \middleEdge &\coloneq \{ (\mathrm{tr}, c, \alpha): c\in \middleCol, \alpha\in \colorSet \} \cup \{ (\mathrm{wld}, \alpha): \alpha\in \colorSet \}.
\end{align*}
We also define the interior vertices and boundary of the middle region by
\begin{align*}
    \middleNodes &\coloneq \{v\in V_n : n/2 < \col(v) < 3n/2\},\\
    \midBoundNodes &\coloneq \col(n/2) \cup \col(3n/2).
\end{align*}
For $b\in \middleEdge$, we let $\sigma_b: E_b \to E_b$ denote the permutation over the corresponding edges and $S_b$ be the set of all permutations. Finally, we define the edge permutation $\colorPerm$ and the set of all permutations $\colorPresSet$ as,
\begin{align*}
    \colorPerm &\coloneq \{ c_b\}_{b\in \middleEdge}\\
    \colorPresSet &\coloneq \prod_{b\in \middleEdge} S_b.
\end{align*}
Notice that when we permute edges using $\colorPerm$, we do not change the columns of the vertices. Therefore, if our oracle returns the structural index, we are leaking the current position in the graph, making the pathfinding problem trivial. In order to avoid this leak, we randomly assign each node with a label. Let $M \coloneq 2^{2n}$. The corresponding set of permutations is,
\begin{align*}
    S_M^{(0)} &\coloneq \{\pi \in S_{M \cup \{0\}}: \pi(0) = 0\}.
\end{align*}
We will always sample the labeling permutation $\labelPerm \sim S_M^{(0)}$. Notice that as $\labelPerm(0)=0$, we have that the entrance label is fixed and thus known to the algorithm. Furthermore, since $M \gg \abs{V_n}$, randomly guessing the label of a structural vertex has a negligible probability.

For $v\in [0,M]$ and $\alpha\in\colorSet$, we may define the edge-color function $\edgeVertex{\alpha}{v}$ to be the edge which contains $v$ in direction $\alpha$,
\begin{align}
    \edgeVertex{\alpha}{v} & \coloneq \begin{cases}
        e &\text{if $\exists e = \{u,v\} \in E_n$ s.t. $\colF(e) = \alpha$}\\
        \bot &\text{otherwise.}
    \end{cases}
\end{align}
Notice that $\edgeVertex{\alpha}{v}$ returns $\bot$ either when the label is not in $V_n$ or when we query the one color which the edges from the roots of the trees were not assigned by \cref{lem:color_welded_graph}. The edge-cell indicator $\edgeIndic(e)$ describes which permutation an edge belongs to,
\begin{align*}
    \edgeIndic(e) &\coloneq \begin{cases}
        (\mathrm{tr}, \col(\tail(e)), \colF(e)) &\text{if $e\in E_{T_1} \cup E_{T_2}$ and $\col(\tail(e)) \in \middleCol$,}\\
        (\mathrm{wld}, \colF(e)) &\text{if $e\in E_W$,}\\
        \bot &\text{otherwise.}
    \end{cases}
\end{align*}

Assume that we have some fixed $\colorPerm \in \colorPresSet$. Letting $v\in [M]$, $\alpha\in \colorSet$ and $e=\edgeVertex{\alpha}{v}$, the neighbor function $\ncF_{\colorPerm }$ is defined as,
\begin{align}
    \ncF_{\colorPerm}(v,\alpha) &\coloneq \begin{cases}
        v &\text{if $v\notin V_n$ or $e = \bot$,}\\
        \head(e)  &\text{if $e \neq \bot$, $\edgeIndic(\edgeVertex{\alpha}{v}) = \bot$ and $v=\tail(e)$,}\\
        \tail(e)  &\text{if $e \neq \bot$, $\edgeIndic(\edgeVertex{\alpha}{v}) = \bot$ and $v=\head(e)$,}\\
        \head(\sigma_{\edgeIndic(e)}(e)) &\text{if $\edgeIndic(e) \neq \bot$ and $v=\tail(e)$,}\\
        \tail(\sigma_{\edgeIndic(e)}^{-1}(e)) &\text{if $\edgeIndic(e) \neq \bot$ and $v=\head(e)$.}
    \end{cases}\label{eq:def_ncf_perm}
\end{align}
Notice that for a fixed color, the function is an involution,
\begin{align*}
    \ncF_{\sigma}(\ncF_{\sigma}(v,\alpha),\alpha) = v.
\end{align*}
Furthermore, this handles the boundary of \middleCol on columns $n/2$ and $3n/2$. For some $\sigma = (\labelPerm, \colorPerm) \in S_M^{(0)} \times \colorPresSet$, $l,l^\prime \in [0,M]$ and $\alpha\in\colorSet$, the welded trees oracle $\wtO_{\sigma}$ is defined as,
\begin{align}
    \wtF_{\sigma}(l, \alpha) &\coloneq \labelPerm( \ncF_{\sigma}( \labelPerm^{-1}(l), \alpha))\label{eq:oracle_function_definition} \\
    \wtO_\sigma \ket{l, \alpha, l^\prime} &\coloneq \ket*{l, \alpha, l^\prime \oplus \wtF_{\sigma}(l, \alpha)}\nonumber
\end{align}

When it is obvious that $\sigma$ is fixed, we will omit it from the notation and write $\wtF$ and $\wtO$.
A visualization of the effect of \colorPerm may be found in \cref{fig:color_preserving}.

\begin{figure}[h]
    \center
    \begin{tikzpicture}[
    every node/.style={circle, draw, fill=white, inner sep=0pt, minimum size=14pt, font=\tiny},
    treelabel/.style={rectangle, draw=none, fill=none, font=\small\bfseries},
    >=latex
]
 
\tikzset{
    edgeA/.style={semithick, colorA},
    edgeB/.style={semithick, colorB},
    edgeC/.style={semithick, colorC},
}
 
\def\xA{0}
\def\xB{0.8}
\def\xC{1.6}
\def\xD{2.4}
\def\xE{4.3}
\def\xF{5.1}
\def\xG{5.9}
\def\xH{6.7}
 
\def\yshrink{0.85}
\def\hgap{8.0}
 
% %%%%%%%%%%%%%%%%%%%%%%%%%%%%%%%%%%%%%%%%%%%%%%%%%%%%%%%%%%%%
%   PANEL 1:  Before pi  (original edges)
% %%%%%%%%%%%%%%%%%%%%%%%%%%%%%%%%%%%%%%%%%%%%%%%%%%%%%%%%%%%%
 
\begin{scope}
 
\node[treelabel] at ({(\xD+\xE)/2}, \yshrink*2.45+0.55) {Before};
 
% -- Nodes --
\node (L0) at (\xA, 0) {0};
\node (L1a) at (\xB,  \yshrink*1.0) {1};
\node (L1b) at (\xB, -\yshrink*1.0) {2};
\node (L2a) at (\xC,  \yshrink*1.8) {3};
\node (L2b) at (\xC,  \yshrink*0.6) {4};
\node (L2c) at (\xC, -\yshrink*0.6) {5};
\node (L2d) at (\xC, -\yshrink*1.8) {6};
\node (L3a) at (\xD,  \yshrink*2.45) {7};
\node (L3b) at (\xD,  \yshrink*1.75) {8};
\node (L3c) at (\xD,  \yshrink*1.05) {9};
\node (L3d) at (\xD,  \yshrink*0.35) {10};
\node (L3e) at (\xD, -\yshrink*0.35) {11};
\node (L3f) at (\xD, -\yshrink*1.05) {12};
\node (L3g) at (\xD, -\yshrink*1.75) {13};
\node (L3h) at (\xD, -\yshrink*2.45) {14};
\node (R3a) at (\xE,  \yshrink*2.45) {22};
\node (R3b) at (\xE,  \yshrink*1.75) {23};
\node (R3c) at (\xE,  \yshrink*1.05) {24};
\node (R3d) at (\xE,  \yshrink*0.35) {25};
\node (R3e) at (\xE, -\yshrink*0.35) {26};
\node (R3f) at (\xE, -\yshrink*1.05) {27};
\node (R3g) at (\xE, -\yshrink*1.75) {28};
\node (R3h) at (\xE, -\yshrink*2.45) {29};
\node (R2a) at (\xF,  \yshrink*1.8) {18};
\node (R2b) at (\xF,  \yshrink*0.6) {19};
\node (R2c) at (\xF, -\yshrink*0.6) {20};
\node (R2d) at (\xF, -\yshrink*1.8) {21};
\node (R1a) at (\xG,  \yshrink*1.0) {16};
\node (R1b) at (\xG, -\yshrink*1.0) {17};
\node (R0)  at (\xH,  0)            {15};
 
% -- Original left tree edges --
\draw[edgeA] (L0) -- (L1a);  \draw[edgeB] (L0) -- (L1b);
\draw[edgeB] (L1a) -- (L2a); \draw[edgeC] (L1a) -- (L2b);
\draw[edgeA] (L1b) -- (L2c); \draw[edgeC] (L1b) -- (L2d);
\draw[edgeA] (L2a) -- (L3a); \draw[edgeC] (L2a) -- (L3b);
\draw[edgeA] (L2b) -- (L3c); \draw[edgeB] (L2b) -- (L3d);
\draw[edgeB] (L2c) -- (L3e); \draw[edgeC] (L2c) -- (L3f);
\draw[edgeA] (L2d) -- (L3g); \draw[edgeB] (L2d) -- (L3h);
 
% -- Original right tree edges --
\draw[edgeA] (R0) -- (R1a);  \draw[edgeB] (R0) -- (R1b);
\draw[edgeB] (R1a) -- (R2a); \draw[edgeC] (R1a) -- (R2b);
\draw[edgeA] (R1b) -- (R2c); \draw[edgeC] (R1b) -- (R2d);
\draw[edgeA] (R2a) -- (R3a); \draw[edgeC] (R2a) -- (R3b);
\draw[edgeA] (R2b) -- (R3c); \draw[edgeB] (R2b) -- (R3d);
\draw[edgeB] (R2c) -- (R3e); \draw[edgeC] (R2c) -- (R3f);
\draw[edgeA] (R2d) -- (R3g); \draw[edgeB] (R2d) -- (R3h);
 
% -- Original gluing edges --
\draw[edgeB] (L3a) -- (R3c); \draw[edgeC] (L3a) -- (R3g);
\draw[edgeA] (L3b) -- (R3b); \draw[edgeB] (L3b) -- (R3f);
\draw[edgeC] (L3c) -- (R3a); \draw[edgeB] (L3c) -- (R3g);
\draw[edgeA] (L3d) -- (R3e); \draw[edgeC] (L3d) -- (R3h);
\draw[edgeC] (L3e) -- (R3d); \draw[edgeA] (L3e) -- (R3h);
\draw[edgeB] (L3f) -- (R3b); \draw[edgeA] (L3f) -- (R3f);
\draw[edgeB] (L3g) -- (R3a); \draw[edgeC] (L3g) -- (R3c);
\draw[edgeA] (L3h) -- (R3d); \draw[edgeC] (L3h) -- (R3e);
 
\end{scope}

% %%%%%%%%%%%%%%%%%%%%%%%%%%%%%%%%%%%%%%%%%%%%%%%%%%%%%%%%%%%%
%   PANEL 2:  After pi  (edges rewired by pi . ncF . pi^{-1})
%
%   pi:  5<->7,  8->10->14->8,  9<->13,  11->15->12->11,
%        20<->22,  23->29->25->23,  24<->28,  26->30->27->26
% %%%%%%%%%%%%%%%%%%%%%%%%%%%%%%%%%%%%%%%%%%%%%%%%%%%%%%%%%%%%
 
\begin{scope}[xshift=\hgap cm]
 
\node[treelabel] at ({(\xD+\xE)/2}, \yshrink*2.45+0.55) {After};
 
% -- Nodes (same positions and labels) --
\node (L0p) at (\xA, 0) {0};
\node (L1ap) at (\xB,  \yshrink*1.0) {1};
\node (L1bp) at (\xB, -\yshrink*1.0) {2};
\node (L2ap) at (\xC,  \yshrink*1.8) {3};
\node (L2bp) at (\xC,  \yshrink*0.6) {4};
\node (L2cp) at (\xC, -\yshrink*0.6) {5};
\node (L2dp) at (\xC, -\yshrink*1.8) {6};
\node (L3ap) at (\xD,  \yshrink*2.45) {7};
\node (L3bp) at (\xD,  \yshrink*1.75) {8};
\node (L3cp) at (\xD,  \yshrink*1.05) {9};
\node (L3dp) at (\xD,  \yshrink*0.35) {10};
\node (L3ep) at (\xD, -\yshrink*0.35) {11};
\node (L3fp) at (\xD, -\yshrink*1.05) {12};
\node (L3gp) at (\xD, -\yshrink*1.75) {13};
\node (L3hp) at (\xD, -\yshrink*2.45) {14};
\node (R3ap) at (\xE,  \yshrink*2.45) {22};
\node (R3bp) at (\xE,  \yshrink*1.75) {23};
\node (R3cp) at (\xE,  \yshrink*1.05) {24};
\node (R3dp) at (\xE,  \yshrink*0.35) {25};
\node (R3ep) at (\xE, -\yshrink*0.35) {26};
\node (R3fp) at (\xE, -\yshrink*1.05) {27};
\node (R3gp) at (\xE, -\yshrink*1.75) {28};
\node (R3hp) at (\xE, -\yshrink*2.45) {29};
\node (R2ap) at (\xF,  \yshrink*1.8) {18};
\node (R2bp) at (\xF,  \yshrink*0.6) {19};
\node (R2cp) at (\xF, -\yshrink*0.6) {20};
\node (R2dp) at (\xF, -\yshrink*1.8) {21};
\node (R1ap) at (\xG,  \yshrink*1.0) {16};
\node (R1bp) at (\xG, -\yshrink*1.0) {17};
\node (R0p)  at (\xH,  0)            {15};
 
% -- Rewired left tree edges --
% {1,2}A  {1,3}B  (unchanged)
\draw[edgeA] (L0p) -- (L1ap);  \draw[edgeB] (L0p) -- (L1bp);
% {2,4}B (same)   {2,7}C (was {2,5})
\draw[edgeB] (L1ap) -- (L2ap); \draw[edgeC] (L1ap) -- (L2dp);
% {3,6}A (same)   {3,5}C (was {3,7})
\draw[edgeA] (L1bp) -- (L2cp); \draw[edgeC] (L1bp) -- (L2bp);
% {4,10}A (was {4,8})   {4,13}C (was {4,9})
\draw[edgeA] (L2ap) -- (L3cp); \draw[edgeC] (L2ap) -- (L3fp);
% {5,8}A (was {5,10})   {5,12}B (was {5,11})
\draw[edgeA] (L2bp) -- (L3ap); \draw[edgeB] (L2bp) -- (L3ep);
% {6,11}B (was {6,12})  {6,9}C (was {6,13})
\draw[edgeB] (L2cp) -- (L3dp); \draw[edgeC] (L2cp) -- (L3bp);
% {7,14}A (same)  {7,15}B (same)
\draw[edgeA] (L2dp) -- (L3gp); \draw[edgeB] (L2dp) -- (L3hp);
 
% -- Rewired right tree edges --
% {16,17}A  {16,18}B  (unchanged)
\draw[edgeA] (R0p) -- (R1ap);  \draw[edgeB] (R0p) -- (R1bp);
% {17,19}B (same)   {17,22}C (was {17,20})
\draw[edgeB] (R1ap) -- (R2ap); \draw[edgeC] (R1ap) -- (R2dp);
% {18,21}A (same)   {18,20}C (was {18,22})
\draw[edgeA] (R1bp) -- (R2cp); \draw[edgeC] (R1bp) -- (R2bp);
% {19,29}A (was {19,23})   {19,28}C (was {19,24})
\draw[edgeA] (R2ap) -- (R3cp); \draw[edgeC] (R2ap) -- (R3fp);
% {20,25}A (same)   {20,27}B (was {20,26})
\draw[edgeA] (R2bp) -- (R3ap); \draw[edgeB] (R2bp) -- (R3ep);
% {21,26}B (was {21,27})   {21,24}C (was {21,28})
\draw[edgeB] (R2cp) -- (R3dp); \draw[edgeC] (R2cp) -- (R3bp);
% {22,23}A (was {22,29})   {22,30}B (same)
\draw[edgeA] (R2dp) -- (R3gp); \draw[edgeB] (R2dp) -- (R3hp);
 
% -- Rewired gluing edges --
% {8,23}C   {8,29}B
\draw[edgeC] (L3ap) -- (R3ap); \draw[edgeB] (L3ap) -- (R3gp);
% {9,24}A   {9,28}B   (unchanged)
\draw[edgeA] (L3bp) -- (R3bp); \draw[edgeB] (L3bp) -- (R3fp);
% {10,23}B   {10,25}C
\draw[edgeB] (L3cp) -- (R3ap); \draw[edgeC] (L3cp) -- (R3cp);
% {11,27}A   {11,30}C   (unchanged)
\draw[edgeA] (L3dp) -- (R3ep); \draw[edgeC] (L3dp) -- (R3hp);
% {12,26}C   {12,30}A   (unchanged)
\draw[edgeC] (L3ep) -- (R3dp); \draw[edgeA] (L3ep) -- (R3hp);
% {13,24}B   {13,28}A   (unchanged)
\draw[edgeB] (L3fp) -- (R3bp); \draw[edgeA] (L3fp) -- (R3fp);
% {14,25}B   {14,29}C
\draw[edgeB] (L3gp) -- (R3cp); \draw[edgeC] (L3gp) -- (R3gp);
% {15,26}A   {15,27}C   (unchanged)
\draw[edgeA] (L3hp) -- (R3dp); \draw[edgeC] (L3hp) -- (R3ep);
 
\end{scope}
 
\end{tikzpicture}
    \caption{An example of an application of \colorPerm on $G_3$ (\labelPerm is $\Id$).}\label{fig:color_preserving}
\end{figure}

Suppose a query algorithm $\alg$ has access to \wtO. The final pure and mixed states of the algorithm are,
\begin{align}
    \ket{\psi^{(\wtO)}} &\coloneq A_q \wtO A_{q-1} \dots \wtO A_0 \ket{0}\\
    \rho_{\alg}^{(\wtO)} &\coloneq \EX_{\substack{\labelPerm \sim S_M\\\colorPerm \sim \colorPresSet}}[\rho(\ket{\psi}^{(\wtO)})]\label{eq:mixed_state_gtO_definition}
\end{align}

The construction of $\wtO$ provides us with the following observations.

\begin{observation}\label{obs:min_num_nodes_per_perm}
    Assume that $n\geq 16$. For every $b\in \middleEdge$ and every $\sigma_b: E_b \to E_b$,
    \begin{align*}
        \abs{E_b} \geq \frac{\sqrt{N}}{4}.
    \end{align*} 
\end{observation}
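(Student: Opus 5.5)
The plan is a case split on the two kinds of cells in $\middleEdge$, combining \cref{lem:colors_per_column} with the column sizes. For a tree cell $b=(\mathrm{tr},c,\alpha)$ with $c\in\middleCol$, \cref{eq:tree_edges_def_alt} gives $\abs{E_b}=\colClassSize{c,\alpha}$. This equals the number of edges of color $\alpha$ in edge-column $c$, i.e. the edges joining columns $c-1$ and $c$ in $T_1$ (when $c\le n$) or the corresponding column of $T_2$ (when $c\ge n+1$). \cref{lem:colors_per_column} bounds this count below by $\floor{2^{c'}/3}$, where $c'=c$ for $c\le n$ and $c'=2n-c+2$ on the $T_2$ side; the column $c=n+1$ should be checked against the column convention so that its tree edges go to column $n+2$.

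For the index bound, $c\in[n/2,3n/2]$ gives $c'\ge n/2$, since $2n-c+2\ge n/2+2$ when $c\le 3n/2$. Hence $\abs{E_b}\ge\floor{2^{n/2}/3}\ge 2^{n/2}/3-1$. Because $n\ge 16$ means $2^{n/2}\ge 256$, we get $2^{n/2}/3-1\ge 2^{n/2}/4=\sqrt N/4$, since this only requires $2^{n/2}/12\ge1$.

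For a weld cell $b=(\mathrm{wld},\alpha)$, the paper records $\abs{E_b}=\sum_{\beta\ne\alpha}\colClassSize{n,\beta}$. Each summand is at least $\floor{2^n/3}$ by \cref{lem:colors_per_column}, so $\abs{E_b}\ge 2\floor{2^n/3}$. This is far larger than $\sqrt N/4$.

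The only delicate point is the index bookkeeping: making sure the right-tree reindexing $2n-c+2$ never falls below $n/2$ on $\middleCol$, and handling the boundary columns $n/2$ and $3n/2$ and the columns adjacent to the weld. Everything else is a one-line numerical check using $n\ge16$.
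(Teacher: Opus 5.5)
Your proposal is correct and follows essentially the paper's route: the paper applies \cref{lem:colors_per_column} at the boundary middle columns to get $\colClassSize{c,\alpha}\ge\floor{2^{n/2}/3}\ge\sqrt{N}/4$, and your version additionally makes explicit the uniform-in-$c$ index check and the (much larger) weld cells. One small correction to the bookkeeping you flagged: for $c\ge n+1$ the cell $(\mathrm{tr},c,\alpha)$ consists of edges joining columns $c$ and $c+1$, so the correct reindexed value is $2n-c+1$ (the $T_2$-depth of column $c$) rather than $2n-c+2$; this is still at least $n/2+1$ on $\middleCol$, so the bound is unaffected.
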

\begin{proof}
    The smallest permutations used are the ones over columns $c\in \{n/2, 3n/2\}$ and some color $\alpha\in \colorSet$. By \cref{lem:colors_per_column},
    \begin{align*}
        \abs{\treeEdges} \geq \colClassSize{c,\alpha} &\geq \floor{\tfrac{2^{n/2}}{3}} \geq \tfrac{\sqrt{N}}{4}.\qedhere
    \end{align*}
\end{proof}

\subsection{Compressed welded trees oracle}\label{sec:compressed_welded_tree}

In the following subsection, we construct the compressed welded trees oracle \cpwt and prove its soundness, bounded growth and fundamental properties. The construction will replace $\labelPerm$ and $\colorPerm$ with their compressed versions.

A label database is the partial injection $L: [M] \rightharpoonup [M]$. As with $\labelPerm$, we use the convention $L(0)=0$ to fix the entrance vertex and do not count it in $\dom(L)$. An edge (also called color) database $C$ is a family of partial injections $C = \{C_{b}\}_{b\in \middleEdge}$ where each $C_{b}:E_b \rightharpoonup E_b$. We denote the cardinality of $C$ by $\abs{C} = \sum_b \abs{C_b}$. We denote their respective registers by $\Lab$ and $\C$. A pair of the label and color databases is denoted using $I=(L,C)$ under the register $\I$.

For $e_1, e_2 \in E_b$, we write $C[e_1\to e_2]$ to denote extending the injection $C_b$ with $e\to f$ and keeping the rest of $C$ the same. As each edge has a unique label $\edgeIndic(e)$, we may omit the subscript $b$. Furthermore, $C_1 \preceq C_2$ denotes that for all $b\in \middleEdge$, $C_{1,b} \preceq C_{2,b}$.

Following the definition of $\wtF$ in \cref{eq:oracle_function_definition}, \cpwt first \emph{unlabels} using $\pCLabelInv$ (compression operator for $\labelPerm^{-1}$), followed by moving between columns using $\pCPerm$ (compression operator for $\colorPerm$ in $F^{\textup{perm}}$) and finally \emph{relabels} using $\pCLabel$ (compression operator for $\labelPerm$). All three operators perform the recording action. The query-input and query-output registers $\X$ and $\Y$ are written as $\ket{l,\alpha}_{\X} \ket{a}_{\Y}$ where $l,a\in [M]$ and $\alpha\in \colorSet$. The operator $\xorOP$ XORs the label of the adjacent structural vertex into $\Y$.

For a fixed partial edge database $C$, a structural index $v\in [0,M]$ and a color $\alpha\in \colorSet$, let $e = \edgeVertex{\alpha}{v}$ and $b=\edgeIndic(e)$. The partial neighbor map is defined as,
\begin{align}
    \ncF_C(v,\alpha) &\coloneq \begin{cases}
        v &\text{if $v\notin V_n$ or $e=\bot$,}\\
        \head(e) &\text{if $b=\bot$ and $v=\tail(e)$,}\\
        \tail(e) &\text{if $b=\bot$ and $v=\head(e)$,}\\
        \head(C_b(e)) &\text{if $b\in \middleEdge$, $v=\tail(e)$ and $e\in \dom(C_b)$,}\\
        \tail(C_b^{-1}(e)) &\text{if $b\in \middleEdge$, $v=\head(e)$ and $e\in \im(C_b)$,}\\
        \bot &\text{otherwise.}
    \end{cases}
\end{align}
Thus the map $\ncF_C$ uses $C_b$ exactly as $\ncF_{\colorPerm}$ uses $\sigma_b$ in \cref{eq:def_ncf_perm}. Furthermore, notice that when $\ncF_C(v,\alpha)\neq \bot$, we have that $\ncF_C(\ncF_C(v,\alpha),\alpha) = v$.

Let us define the inverse-label compression operator \pCLabelInv, which is essentially just an inverse call to the $\pC$ operator defined in \cref{eq:pc_def}. Letting $l\in [M]$ and $L$ being some label database such that $l\notin \im(L)$,
\begin{align}
    \ket{+_{l,L}^{\mathrm{inv}}} &\coloneq \frac{1}{\sqrt{M - \abs{L}}} \sum_{v\in [M]\setminus \dom(L)} \ket{L[v \to l]}\label{eq:plus_state_label_inv} \\
    \pC^{\mathrm{inv}}_{l,L} &\coloneq \pCfromPlus(\ket{+_{l,L}^{\mathrm{inv}}}, \ket{L})\nonumber\\ %\Id - \ketbra{+_{l,L}^{\mathrm{inv}}} - \ketbra{L} + \ketbra{+_{l,L}^{\mathrm{inv}}}{L} + \ketbra{L}{+_{l,L}^{\mathrm{inv}}}\\
    \pC^{\mathrm{inv}}_{l} &\coloneq \bigoplus_{L: l\notin \im(L)} \pC^{\mathrm{inv}}_{l,L}\nonumber\\
    \pCLabelInv \ket{l,\alpha}_{\X}\ket{L}_{\Lab} &\coloneq \begin{cases}
        \ket{l,\alpha}_{\X}\left( \pC^{\mathrm{inv}}_{l, \Lab} \ket{L}_{\Lab} \right) &\text{if $l\in [M]$,}\\
        \ket{0,\alpha}_{\X} \ket{L}_{\Lab} &\text{if $l=0$.}
    \end{cases}.\nonumber
\end{align}

Next, we define the compressed edge-permutation operator. Fix $b\in \middleEdge$ and $e\in E_b$. For an edge database $C$, we define the following,
\begin{align}
    \ket{+_{e,b,C}^{\rightarrow}} &\coloneq \frac{1}{\sqrt{\abs{E_b} - \abs{C_b}}} \sum_{e^\prime \in E_b \setminus \im(C_b)} \ket{C[e \to e^\prime]}\label{eq:plus_state_upcolor}\\
    \pC^{\rightarrow}_{e,b} &\coloneq \bigoplus_{C: e\notin \dom(C_b)} \pCfromPlus \left(\ket{+_{e,b,C}^{\rightarrow}}, \ket{C} \right)\label{eq:pC_right_direction}\\
    \ket{+_{e,b,C}^{\leftarrow}} &\coloneq \frac{1}{\sqrt{\abs{E_b} - \abs{C_b}}} \sum_{e^\prime \in E_b \setminus \dom(C_b)} \ket{C[e^\prime \to e]}\label{eq:plus_state_downcolor}\\
    \pC^{\leftarrow}_{e,b} &\coloneq \bigoplus_{C: e\notin \im(C_b)} \pCfromPlus \left(\ket{+_{e,b,C}^{\leftarrow}}, \ket{C} \right)\label{eq:pC_left_direction}
\end{align}
For $v\in V_n$ and $\alpha \in \colorSet$, let $e=\edgeVertex{\alpha}{v}$ and $b=\edgeIndic(e)$. Define,
\begin{align}
    \pC^\prime_{v,\alpha} &\coloneq \begin{cases}
        \pC^{\rightarrow}_{e,b} &\text{if $e\neq \bot$, $b\in \middleEdge$ and $v=\tail(e)$,}\\
        \pC^{\leftarrow}_{e,b} &\text{if $e\neq \bot$, $b\in \middleEdge$ and $v=\head(e)$,}\\
        \Id &\text{otherwise.}
    \end{cases}
\end{align}
The controlled edge-compression operator is defined as,
\begin{align}
    \pCPerm \ket{l,\alpha}_{\X} \ket{L}_{\Lab} \ket{C}_{\C} &\coloneq \begin{cases}
        \ket{l,\alpha}_{\X}\ket{L}_{\Lab} \left( \pC^\prime_{v,\alpha} \ket{C}_{\C} \right) &\text{if $l\in \im(L)$ and $v=L^{-1}(l)$,}\\
        \ket{l,\alpha}_{\X}\ket{L}_{\Lab} \ket{C}_{\C} &\text{otherwise.}
    \end{cases}
\end{align}

The forward-label compression operator is defined using a partial map $\stepFunct(l,\alpha, L, C)$,
\begin{align}
    \stepFunct(l,\alpha, L, C) &\coloneq \begin{cases}
        \ncF_C(L^{-1}(l), \alpha) &\text{if $l\in \im(L) \cup \{0\}$,}\\
        \bot &\text{if $l\notin \im(L) \cup \{0\}$.}
    \end{cases}\\
    \pC^{\mathrm{for}}_{w} &\coloneq \bigoplus_{L: w\notin \dom(L)} \pC_{w,L}\\
    \pCLabel\ket{l,\alpha}_{\X} \ket{L}_{\Lab} \ket{C}_{\C} &\coloneq \begin{cases}
        \ket{l,\alpha}_{\X} \left( \pC^{\mathrm{for}}_{w}\ket{L}_{\Lab}\right) \ket{C}_{\C} &\text{if $w=\stepFunct(l,\alpha, L, C) \notin \{\bot, L^{-1}(l), 0\}$,}\\
        \ket{l,\alpha}_{\X} \ket{L}_{\Lab} \ket{C}_{\C} &\text{if $w\in \{\bot, L^{-1}(l), 0\}$.}
    \end{cases}
\end{align}

Note that we are using the definition of $\pC_{x,I}$ from \cref{eq:pc_x_I}. Finally, we define the XOR operator $\xorOP$ as,
\begin{align*}
    \xorOP\ket{l,\alpha}_{\X}\ket{a}_{\Y} \ket{L,C}_{\I} \coloneq \begin{cases}
        \ket{l,\alpha}_{\X} \ket{a\oplus L(w)}_{\Y} \ket{L,C}_{\I} &\text{if $w = \stepFunct(l,\alpha, L, C) \in \dom(L) \cup \{0\}$,}\\
        \ket{l,\alpha}_{\X}\ket{a}_{\Y} \ket{L,C}_{\I} &\text{otherwise.}
    \end{cases}
\end{align*}

Therefore, we may define the main oracle we will be working with.

\begin{definition}[Compressed Welded Trees Oracle] Let $\I = (\Lab, \C)$ be the injective database registers which are initialized as $\ket{\bot}_{\I}$. The compressed welded trees oracle \cpwt is defined as,
    \begin{align}
        \cpwt \coloneq \pCLabelInv \cdot \pCPerm \cdot \pCLabel \cdot \xorOP \cdot \pCLabel^\dagger \cdot \pCPerm^\dagger \cdot \pCLabelInv^\dagger\label{eq:cpwt_definition}
    \end{align}
\end{definition}
For simplicity of notation, we use $\pcwt$ to denote the three compression operators in \cref{eq:cpwt_definition},
\begin{align*}
    \pcwt \coloneq \pCLabelInv \cdot \pCPerm \cdot \pCLabel.
\end{align*}
The final mixed and pure state of some $q$-query algorithm \alg using \cpwt is denoted by,
\begin{align*}
    \ket*{\psi_{\alg,q}^{(\cpwt)}} &\coloneq A_{q,\A} \cpwt_{\A\I} A_{q-1, \A} \dots \cpwt_{\A\I} A_{0,\A}\ket{0}_{\A} \ket{\bot}_{\I},\\
    \rho_{\alg,q}^{(\cpwt)} &\coloneq \Tr_{\I}\left[\rho\left( \ket*{\psi_{\alg,q}^{(\cpwt)}} \right)\right].
\end{align*}

\subsubsection{Properties of \cpwt}

Let us show a bounded growth lemma for \cpwt.\footnote{In case the stricter proof fails, we have a weaker version: Each $\pC$ operator in \cref{eq:cpwt_definition} adjusts the size of the corresponding database it is altering at most by $1$. As $L$ may be altered at most $4$ times and $C$ at most twice, $\abs{L} \leq 4q$ and $\abs{C} \leq 2q$.}

\begin{lemma}[Bounded growth of \cpwt]\label{lem:cpwt_bounded_growth} After $q$ queries to \cpwt, the sizes of the databases $L$ and $C$ in the support of the final state are bounded by $2q$ and $q$ respectively. Formally,
    \begin{align*}
        \Pi_{\leq 2q, \Lab} \Pi_{\leq q, \C}\ket{\psi_{\alg, q}} = \ket{\psi_{\alg, q}}.
    \end{align*}
\end{lemma}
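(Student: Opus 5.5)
We prove a stronger invariant by induction on the number of queries. The plan is to show that every basis state $\ket{L,C}$ in the support of the joint state after $t$ queries satisfies $\abs{L}\le 2t$ and $\abs{C}\le t$. The unitaries $A_i$ act only on $\A$, so they preserve the support on $\I$. It therefore suffices to analyse a single application of $\cpwt$ on a state supported on databases with $\abs{L}\le 2t$ and $\abs{C}\le t$. We take these sizes as the current bounds and track how each of the seven factors in \cref{eq:cpwt_definition} changes them.

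\textbf{Step 1: per-factor size changes.} Each operator $\pCfromPlus(\ket{+},\ket{I})$ maps $\ket{I}$ into the span of one-point extensions and maps those extensions back to $\ket{I}$. Hence each compression factor changes the size of the database it acts on by at most one, in either direction. It also leaves the other database untouched, since $\pCLabelInv$ and $\pCLabel$ only touch $\Lab$, $\pCPerm$ only touches $\C$, and $\xorOP$ is diagonal in $\I$. A naive count then gives $+3$ for $L$ and $+2$ for $C$. The real content of the lemma is to show that the net change over the sandwich $\pcwt\,\xorOP\,\pcwt^\dagger$ is at most $+2$ for $L$ and $+1$ for $C$.

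\textbf{Step 2: exploit the sandwich structure, as in Lemma 3.2 of~\cite{Car25}.} For a single $\cpO$, the bound comes from observing that $\pC\,\xorOP\,\pC^\dagger$ acts as the identity on the component where $x\notin\dom(I)$. On the component where $x$ is defined, $\pC^\dagger$ can only shrink the database by one, and $\pC$ then grows it back by at most one. I will decompose the state according to which of the three records the query touches: the label $L^{-1}(l)$, the edge $C_b(e)$ or $C_b^{-1}(e)$, and the label $L(w)$ of the neighbour. I will then argue layer by layer. Writing $\cpwt = \pCLabelInv\,(\pCPerm\,(\pCLabel\,\xorOP\,\pCLabel^\dagger)\,\pCPerm^\dagger)\,\pCLabelInv^\dagger$, the innermost bracket behaves like a $\cpO$ query to $L$ controlled on $w$. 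Conditioned on the outer registers, it therefore increases $\abs{L}$ by at most one relative to its input. The next layer, $\pCPerm(\cdot)\pCPerm^\dagger$, similarly increases $\abs{C}$ by at most one relative to its input. The outermost layer contributes at most one more label record. This gives a net change of at most $+2$ for $L$ and $+1$ for $C$ per query, which is exactly the claimed $2q$ and $q$.

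\textbf{Main obstacle.} The difficulty is that the controls of the inner layers, namely $w=\stepFunct(l,\alpha,L,C)$ and $v=L^{-1}(l)$, depend on the very databases modified by the outer layers. So the ``undefined implies identity'' argument does not apply verbatim. I would handle this by noting that each $\pC$ factor is block diagonal with respect to the subspaces $\calH_{x,I}$ for the relevant input. I would then check that, within a block, the controlling value is either fixed or makes the inner operator act trivially. For example, when $l\notin\im(L)$ the maps $\pCPerm$ and $\pCLabel$ are the identity, and $\pCLabelInv^\dagger$ followed by $\pCLabelInv$ cancels on the $\ket{L}$ component. If this fine accounting becomes too delicate, the footnoted fallback still works: counting each factor's $\pm1$ gives the weaker bounds $\abs{L}\le 4q$ and $\abs{C}\le 2q$, which suffice for all later uses up to constants.
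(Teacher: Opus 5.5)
Your proposal is correct and follows the same route as the paper's proof. Both argue by induction over queries and peel the sandwich into three layers. The inner $\pCLabel\,\xorOP\,\pCLabel^\dagger$ layer adds at most one label, because $w$ stays fixed: $\pCLabel$ neither touches $C$ nor alters $L^{-1}(l)$. The $\pCPerm$ layer adds at most one edge record, because $v=L^{-1}(l)$ is preserved inside it. The outer $\pCLabelInv$ layer adds at most one further label, because the record $(v,l)$ survives and everything acts as the identity when $l\notin\im(L)$.

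The only slip is cosmetic: the naive per-factor count for $L$ is $+4$ rather than $+3$, which matches your own $4q$ fallback.
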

\begin{proof}
    The proof may be shown by induction as at the start both databases are empty. Suppose that the query register is in the state $\ket{l,\alpha}_{\X}$ and let $L^\prime$ and $C^\prime$ be some fixed database pair basis after applying $\pCLabelInv$ and $\pCPerm$. Hence, the value $w = \stepFunct(l,\alpha, L^\prime,C^\prime)$ is fixed. If $w\notin \{\bot, L^{\prime,-1}(l)\}$, then $\pCLabel$ cannot alter $\stepFunct(l,\alpha, L^\prime,C^\prime)$ as $C^\prime$ is invariant under $\pCLabel$ and $L^{\prime,-1}(l)$ cannot be altered. On the other hand, if $w\in \{\bot, L^{\prime, -1}(l)\}$, the compression operator functions as the identity. Hence, during $\pCLabel \cdot \xorOP \cdot \pCLabel^\dagger$ we preserve the subspace $\calH(L|^w, C)$ and thus the database $L^\prime$ changes size by at most $1$.

    Next, let us consider the effect of $\pCPerm$. In this case, the specific operator depends on the value $v = L^{-1}(l)$. Per the argument above, this value cannot be changed by $\pCLabel$. As $C$ is invariant under $\pCLabel$ and $\xorOP$, we have that $\pCPerm \cdot \pCLabel \cdot \xorOP \cdot \pCLabel^\dagger \cdot \pCPerm^\dagger$ may only increase the size of $C$ by at most $1$. Similarly, as $L^{\prime,-1}(l)$ is not altered during this process, the $\pC$ operator in $\pCLabelInv$ always calls the same value $l$, and thus the size of the database during these calls may only change by $1$. Therefore, $L$ may change its size by $2$ at most, while $C$ only does so by $1$ at each application of $\cpwt$.
\end{proof}

We will also obtain a version of \cref{lem:fundamental} for \cpwt. Let us first define the necessary variables. Let $\pairPath = (l_0, \alpha_1, l_1,\alpha_2,\dots,\alpha_m, l_m)$ where $l_i\in [0,M]$ and $\alpha_i\in \colorSet$ be some proposed output path of length $m\geq 1$. We assume that $l_0=0$ to start from the entrance vertex and that the labels are pairwise distinct in order to avoid cycles. In particular, $l_i\neq 0$ for $i\geq 1$. Let us consider the following operators,
\begin{align*}
    \Pi^{(l_{j-1}, \alpha_j, l_j)} &= \sum_{L,C: L(\stepFunct(l_{j-1}, \alpha_j, L,C)) = l_j} \ketbra{L,C}_{\I} \\
    \Pi_{(\pairPath)}^{\cpwt} &\coloneq \prod_{j\in [m]} \Pi^{(l_{j-1}, \alpha_j, l_j)}\\
    \Pi_{(\pairPath)}^{\mathrm{wt}} &\coloneq  \prod_{j=m}^{1} \pcwt_{l_{j-1}, \alpha_j} \Pi^{(l_{j-1}, \alpha_j, l_j)} \pcwt_{l_{j-1}, \alpha_j}^{\dagger}.
\end{align*}
These operators verify that specific input-output pairs (which, in our case are determined by a path) agree with the oracle. The joint verification operators are,
\begin{align}
    \Pi_{\A\I}^{\cpwt} &\coloneq \sum_{\pairPath} \Pi_{(\pairPath), \I}^{\cpwt} \otimes \ketbra{l_0, \alpha_1, l_1,\dots, \alpha_m, l_{m}}{l_0, \alpha_1, l_1,\dots, \alpha_m, l_{m}}_{\A}\\
    \Pi_{\A\I}^{\mathrm{wt}} &\coloneq \sum_{\pairPath} \Pi_{(\pairPath), \I}^{\mathrm{wt}} \otimes \ketbra{l_0, \alpha_1, l_1,\dots, \alpha_m, l_{m}}{l_0, \alpha_1, l_1,\dots, \alpha_m, l_{m}}_{\A}\label{eq:project_wt_oracle_outputs}
\end{align}
These operators let us compare verification using the oracle with checking the compressed database directly.
\begin{lemma}[Fundamental lemma for \cpwt]\label{lem:fundamental_cpwt}
    For any state after $q$ queries with $q+m\leq\sqrt{N}/8$,
    \begin{align*}
        \norm{\Pi_{\A\I}^{\mathrm{wt}}\ket{\psi_{\alg, q}^{(\cpwt)}}} \leq \norm{\Pi_{\A\I}^{\cpwt}\ket{\psi_{\alg, q}^{(\cpwt)}}} + O\left(\frac{m}{N^{1/4}}\right).
    \end{align*}
\end{lemma}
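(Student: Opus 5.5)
We mirror the proof of \cref{lem:fundamental} from~\cite{Car25}: peel off one conjugated factor $\pcwt_{l_{j-1},\alpha_j}\Pi^{(l_{j-1},\alpha_j,l_j)}\pcwt^\dagger_{l_{j-1},\alpha_j}$ at a time and replace it by the bare database projector $\Pi^{(l_{j-1},\alpha_j,l_j)}$, paying an additive error per step. Since both operators are block diagonal in the output register $\A$ (\cref{lem:operator_norm_max}), it suffices to fix a path $\pairPath$ and bound
\begin{align*}
\norm{\bigl(\Pi^{\mathrm{wt}}_{(\pairPath)}-\Pi^{\cpwt}_{(\pairPath)}\bigr)\zeta}
\end{align*}
for the component $\zeta$ of the final state on that path. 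Write $\Pi_j=\Pi^{(l_{j-1},\alpha_j,l_j)}$ and $U_j=\pcwt_{l_{j-1},\alpha_j}$. We telescope, with the hybrid $H_k$ in which the first $k$ factors acting on the state (those with $j\le k$) are conjugated and the rest bare. All $\Pi_j$ are diagonal in the database basis and commute, so the ordering is harmless. Consecutive hybrids differ in a single factor, and $\norm{(U_k\Pi_kU_k^\dagger-\Pi_k)\,\xi}$ is what needs bounding, where $\xi$ is $\zeta$ after applying bare projectors $\Pi_{k+1},\dots,\Pi_m$.

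The key step is the single-factor bound. On a basis database $(L,C)$ in the support of $\Pi_k$, the triple $(l_{k-1},\alpha_k,l_k)$ is fully recorded: $l_{k-1}\in\im(L)\cup\{0\}$, the edge record in $C$ is defined if needed, and $L(w)=l_k$. Applying $U_k^\dagger=\pCLabel^\dagger\pCPerm^\dagger\pCLabelInv^\dagger$ then moves at most three records, and each move has amplitude $O(1/\sqrt{\text{available range}})$ by \cref{obs:compressed_permutation_oracle_already_queried}. The ranges are $M-\abs{L}\ge M/2$ for labels and $\abs{E_b}-\abs{C_b}\ge\sqrt N/4-q-m\ge\sqrt N/8$ for edges, using \cref{obs:min_num_nodes_per_perm}, \cref{lem:cpwt_bounded_growth}, the condition $q+m\le\sqrt N/8$, and the observation that the projectors add no records. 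Here $l_k$ plays the role of the orthogonal algorithm record in \cref{prop:recompression_unlikely}, which lets us apply its ``more general'' form to superpositions of databases. Applying it to each of the three compression operators gives
\begin{align*}
\norm{(U_k-\Id)\xi}=O\bigl(N^{-1/4}\bigr)\norm{\xi},
\end{align*}
and likewise for $U_k^\dagger$. Hence
\begin{align*}
U_k\Pi_kU_k^\dagger\xi-\Pi_k\xi=U_k\Pi_k(U_k^\dagger-\Id)\xi+(U_k-\Id)\Pi_k\xi,
\end{align*}
and both terms are $O(N^{-1/4})$, with the states orthogonal across different $\pairPath$ so the errors combine in norm.

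Summing over the $m$ hybrids gives error $O(m/N^{1/4})$, and the triangle inequality yields the claim. The main obstacle is that the three compressions in $\pcwt$ are controlled on each other's outputs: $\pCPerm$ depends on $L^{-1}(l)$, and $\pCLabel$ depends on $\stepFunct$, which reads $C$. We must therefore check, as in the proof of \cref{lem:cpwt_bounded_growth}, that when the triple is recorded the controls are fixed throughout the conjugation, so that each operator reduces to a single $\pC_x$ acting on a present record. The boundary cases also need checking: $l_{k-1}=0$, edges with $\edgeIndic=\bot$, and $w\in\{\bot,0\}$, where the operators are the identity and contribute no error. A second subtlety is that the label range $M$ versus the edge range $\sqrt N$ sets the rate. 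It is the edge permutations of size $\Theta(\sqrt N)$ that produce the $N^{-1/4}$ factor rather than $N^{-1/2}$.
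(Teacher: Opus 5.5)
Your overall plan is the paper's. You reduce to a fixed $\pairPath$ via \cref{lem:operator_norm_max} and telescope over the $m$ conjugated factors, using that the bare consistency projectors are diagonal and size-preserving. You then bound each factor by $O(N^{-1/4})$, with the edge ranges $\abs{E_b}\geq\sqrt{N}/4$ setting the rate.

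The problem is how you justify the single-factor bound. Once $\pairPath$ is fixed, $l_k$ is the same constant on every component of $\zeta$, so it cannot serve as an orthogonal record separating different erased values. Moreover, for two of the three compressions the erased value is not $l_k$. For $\pCLabelInv$ it is the structural vertex $v=L^{-1}(l_{k-1})$, and for $\pCPerm$ it is the edge image $f=C_b(e)$; neither is written in $\A$. So the general form of \cref{prop:recompression_unlikely} does not apply as you invoke it; it does apply to $\pCLabel$, whose erased value is $l_k$. This matters: if components with different $v$ could share the same remaining database, then $(\pC^{\mathrm{inv}}_{l_{k-1}}-\Id)\sum_v c_v\ket{L_\bot[v\to l_{k-1}]}$ would be proportional to $\sum_v c_v$. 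The per-factor error would then be of order one, not $O(N^{-1/4})$.

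The missing idea, which is what the paper uses, is this: within each compression block (all database entries except the queried one fixed), $\Pi_k$ selects at most one assigned basis state. For $\pCLabelInv$, $l_k\neq l_{k-1}$ puts $w=L^{-1}(l_k)$ in the fixed part, and since $\ncF_C$ is an involution this forces $v=\ncF_C(w,\alpha_k)$. For $\pCPerm$, $L$ is fixed and $f$ is forced by $\head(f)=L^{-1}(l_k)$ (or by $\tail$ in the other direction), because heads and tails within $E_b$ are distinct under a proper coloring. Thus $\Pi_k$ has rank at most one on each block. The paper then uses $\bra{D[x\to y]}\pC_{x,D}\ket{D[x\to y]}=1-\frac{1}{S-\abs{D}}$ to get $\norm{\pC\,\Pi_k\,\pC^\dagger-\Pi_k}\leq\sqrt{2/(S-\abs{D})}$ blockwise. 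It splits $\pcwt$ into its three compressions by the triangle inequality on conjugations, each taken against the full $\Pi_k$, which also removes your worry about controls shifting during the conjugation.

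Your decomposition $U_k\Pi_k(U_k^\dagger-\Id)+(U_k-\Id)\Pi_k$ works once the pinning is established, with one correction. The first term must be bounded through $\norm{\Pi_k(U_k^\dagger-\Id)}=\norm{(U_k-\Id)\Pi_k}$ by taking adjoints. A bound on $U_k^\dagger-\Id$ over the support of $\Pi_k$ does not suffice, because $\xi$ is not in the range of $\Pi_k$.
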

\begin{proof}
    In this proof, write $\norm{B}_{\leq r}\coloneq\norm{B\Pi_{\leq2r,\Lab}\Pi_{\leq r,\C}}$, in accordance with \cref{lem:cpwt_bounded_growth}. Notice that,
    \begin{align*}
        \norm{\Pi_{\A\I}^{\mathrm{wt}}\ket{\psi_{\alg, q}^{(\cpwt)}}} - \norm{\Pi_{\A\I}^{\cpwt}\ket{\psi_{\alg, q}^{(\cpwt)}}} &\leq \norm{\left(\Pi_{\A\I}^{\mathrm{wt}} - \Pi_{\A\I}^{\cpwt} \right) \ket{\psi_{\alg, q}^{(\cpwt)}}}\\
        &\leq \norm{\Pi_{\A\I}^{\mathrm{wt}} - \Pi_{\A\I}^{\cpwt}}_{\leq q}.
    \end{align*}
    Both operators are controlled on some specific $\pairPath$, so by \cref{lem:operator_norm_max}, it suffices to fix some specific path of length $m$. Telescoping the products with the diagonal consistency projectors on the right, which preserve database size, gives
    \begin{align}
        \norm{\Pi_{(\pairPath)}^{\mathrm{wt}} - \Pi_{(\pairPath)}^{\cpwt}}_{\leq q} \leq \sum_{j\in [m]} \norm{\pcwt_{l_{j-1}, \alpha_j} \Pi^{(l_{j-1}, \alpha_j, l_j)} \pcwt_{l_{j-1}, \alpha_j}^{\dagger} - \Pi^{(l_{j-1}, \alpha_j, l_j)}}_{\leq q}\label{eq:sum_over_proj}.
    \end{align}
    Consider any arbitrary injective database $D$ and inputs $x,y \in [S]$. Then we have that,
    \begin{align*}
        \bra{D[x\to y]} \pC_{x,D} \ket{D[x\to y]} = 1 - \frac{1}{S- \abs{D}}.
    \end{align*}
    Therefore,
    \begin{align}
        \norm{\pC_{x,D} \Pi^{(x,y)\in D}\pC_{x,D} - \Pi^{(x,y)\in D}} \leq \sqrt{\frac{2}{S-\abs{D}}}.\label{eq:pc_step_interlace}
    \end{align}
    For each compression primitive, fix all database entries except the queried assignment. Since $l_i\neq l_{i-1}$ and the partial neighbor map is an involution, the consistency condition selects at most one assigned basis state in each block. These blocks are orthogonal, so applying \cref{eq:pc_step_interlace} blockwise and using the triangle inequality gives the bound below. The intermediate compression operators add at most one label and one edge assignment, so the primitive bounds below use the size cutoff $q+m$.
    \begin{align*}
        &\norm{\pcwt_{l_{j-1}, \alpha_j} \Pi^{(l_{j-1}, \alpha_j, l_j)} \pcwt_{l_{j-1}, \alpha_j}^{\dagger} - \Pi^{(l_{j-1}, \alpha_j, l_j)}}_{\leq q}\\ &\leq \norm{\pCLabelInv_{l_{j-1},\alpha_j} \Pi^{(l_{j-1}, \alpha_j, l_j)} \pCLabelInv_{l_{j-1},\alpha_j}^\dagger - \Pi^{(l_{j-1}, \alpha_j, l_j)}}_{\leq q+m}
        \\ &\quad + \norm{\pCPerm_{l_{j-1},\alpha_j} \Pi^{(l_{j-1}, \alpha_j, l_j)} \pCPerm_{l_{j-1},\alpha_j}^\dagger - \Pi^{(l_{j-1}, \alpha_j, l_j)}}_{\leq q+m}\\ &\quad +\norm{\pCLabel_{l_{j-1},\alpha_j} \Pi^{(l_{j-1}, \alpha_j, l_j)} \pCLabel_{l_{j-1},\alpha_j}^\dagger - \Pi^{(l_{j-1}, \alpha_j, l_j)}}_{\leq q+m} \\
        &\leq 2\sqrt{\frac{2}{M - 2(q+m)}} + \sqrt{\frac{2}{\tfrac{\sqrt{N}}{4} - q - m}} = O\left(\frac{1}{N^{1/4}}\right),
    \end{align*}
    where the last inequality follows due to \cref{obs:min_num_nodes_per_perm}. The statement follows by combining the result above with \cref{eq:sum_over_proj}.
\end{proof}

Next, we will use the following statement which shows a connection between the standard and compressed oracle.

\begin{theorem}\label{thm:intertwining_oracles}
    There exists a purification $\ket{\psi_{\alg,t}^{(\cmfWT)}}_{\A\regP}$ of $\rho^{(\wtO)}_{\A}$ and an intertwiner $\calI_{\regP}: \calH(\regP) \to \calH(\regP) \otimes \calH(\I)$ such that,
    \begin{align*}
        \norm{(\calI_{\regP} \cdot \cmfWT - \cpwt \cdot \calI_{\regP})\ket{\psi_{\alg,t}^{(\cmfWT)}}} = O\left( \frac{t^2}{N^{1/8}} \right).
    \end{align*}
\end{theorem}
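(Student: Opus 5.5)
The plan is to build the purification and intertwiner componentwise from the general-input intertwiner of \cref{lem:general_intertwiner_lemma}, one for each independent permutation. There is one for $\labelPerm$ over $[M]$, and one $\sigma_b$ over $E_b$ for each $b\in\middleEdge$. Let $\regP=(\regP_{\mathrm{lab}},\{\regP_b\}_b)$ and $\calI_{\regP}\coloneq \calI_{[M]}\otimes\bigotimes_{b}\calI_{E_b}$. The oracle $\cmfWT$ is the purified realization obtained by implementing $\wtF_\sigma$ through the purified oracles $\cmfO_{[M]}$ and $\cmfO_{E_b}$, applied in the same order as $\cpwt$:
\begin{enumerate}
\item an inverse-label query $\labelPerm^{-1}(l)$ into a scratch register;
\item a controlled edge query $\sigma_b^{\pm1}(e)$ with $e=\edgeVertex{\alpha}{v}$, or the fixed structural neighbor when $\edgeIndic(e)=\bot$;
\item a forward-label query to XOR $\labelPerm(w)$ into $\Y$;
\item uncomputation of the scratch registers by the inverse queries.
\end{enumerate}
Since the purified oracles realize uniform random permutations, tracing out $\regP$ gives exactly $\rho_{\A}^{(\wtO)}$. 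A $\labelPerm$ over $S_M^{(0)}$ is handled by fixing $0$ and permuting $[M]$, matching the convention $L(0)=0$.

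Next, I would factor one call of $\cmfWT$ as a product of seven sub-oracle calls $U_7\cdots U_1$, each of which is a controlled call to a single $\cmfO_S$ (or its inverse) with classical bookkeeping. I would factor $\cpwt$ in \cref{eq:cpwt_definition} in the same way, so that each step is the corresponding controlled $\cpO_S$ up to the standard decomposition $\cpO=\pC\cdot\xorOP\cdot\pC^\dagger$. The main structural claim is that, after conjugating by the scratch-register computations, the triple $\pCLabel\cdot\xorOP\cdot\pCLabel^\dagger$ acts as a controlled $\cpO_{[M]}$ on input $w=\stepFunct(\cdot)$. Likewise, $\pCPerm$ and $\pCLabelInv$ sandwich it as controlled $\cpO_{E_b}$ and $\cpO^{\mathrm{inv}}_{[M]}$ queries. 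Because $\calI_S$ acts on its own register and each step touches only one $S$, the tensor-product intertwiner commutes with the idle factors. A hybrid/telescoping argument then gives
\begin{align*}
\norm{(\calI_{\regP}\cmfWT-\cpwt\,\calI_{\regP})\ket{\psi}}\le \sum_{k=1}^{7}\norm{(\calI_{S_k}U_k-\widetilde U_k\calI_{S_k})\ket{\psi^{(k)}}},
\end{align*}
where $\widetilde U_k$ is the compressed counterpart of $U_k$ and $\ket{\psi^{(k)}}$ is the intermediate state. Each term is bounded by \cref{lem:general_intertwiner_lemma}, since $\ket{\psi^{(k)}}$ is obtained with at most $O(t)$ calls to each $\cmfO_S$. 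Controlled versions follow because the bound is uniform over the control value, using \cref{lem:operator_norm_max} blockwise.

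Summing, the label part contributes $O(t^2/M^{1/4})=O(t^2/N^{1/2})$. Each edge permutation contributes $O(t^2/\abs{E_b}^{1/4})=O(t^2/N^{1/8})$ by \cref{obs:min_num_nodes_per_perm}, which is exactly the claimed rate.

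The main obstacle is the edge part. There are $\Theta(n)$ edge permutations, and naively summing $K(t+1)^2/\abs{E_b}^{1/4}$ over all of them loses a factor $n$. To avoid this, I would apply the intertwiner bound only for the cell actually queried, via the control decomposition. A single query touches one $b$ in superposition, so the error vectors for distinct $b$ lie in orthogonal control blocks, and the norm is a maximum rather than a sum. This requires a version of \cref{lem:general_intertwiner_lemma} where the query count $q$ is replaced by the number of queries to that specific $S$, together with an argument that these per-cell errors remain orthogonal across the hybrid. A secondary concern is checking that the controls in $\pCPerm$ and $\pCLabel$, which read $L$ and $C$, agree with the purified side reading the scratch registers. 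This needs the observation from the proof of \cref{lem:cpwt_bounded_growth} that $L^{-1}(l)$ and $w$ are left unchanged by the intermediate compressions.
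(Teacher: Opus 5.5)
Your overall architecture matches the paper's: independent purified permutation oracles queried through scratch registers, a product intertwiner, and the $N^{1/8}$ rate coming from $\abs{E_b}\geq\sqrt{N}/4$. The gap is in the step that connects the purified oracle to $\cpwt$.

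\textbf{The missing estimate.} Your purified oracle makes five permutation queries: compute $v$, compute $f$, XOR $\labelPerm(w)$, uncompute $f$, uncompute $v$. Your count of seven sub-oracle calls already signals a mismatch. The compressed counterpart of these five queries is $\cpwtZ=\cpLabInv\cdot\cpPerm\cdot\cpLabFor\cdot\cpPerm\cdot\cpLabInv$, with five XORs and ten compressions. By contrast, $\cpwt$ has one XOR and six compressions, so it cannot be factored ``the same way'' exactly.

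Expanding the two inverse-label queries leaves an inner pair of $\pC^{\mathrm{inv}}_l$ factors around $\cpPerm\cdot\cpLabFor\cdot\cpPerm$. These do not cancel, because $\cpLabFor$ compresses and decompresses the same register $\Lab$. So your claim that $\pCLabelInv$ ``sandwiches it as controlled $\cpO^{\mathrm{inv}}_{[M]}$ queries'' is false as an identity. The inner edge pair, by contrast, does cancel exactly: $\cpLabFor$ acts only on $\Lab$ and $\Y$, while $\pC_{v,\alpha}$ acts on $\C$, is controlled by the scratch copy of $v$, and is an involution.

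What you need is \cref{lem:separate_oracles_to_no_intermediate_comp}. Between the two inverse-label XORs, the scratch register holds a copy of $L^{-1}(l)$. Hence \cref{prop:recompression_unlikely} shows that $\pC^{\mathrm{inv}}_l$ acts as the identity up to $O(M^{-1/2})$ on the states just before and just after the middle block. You first drop these two compressions; only then can scratch lookups be identified with database lookups, and the product becomes exactly $\cpwt$. Your ``secondary concern'' about controls is this second half. Without the first half, your sub-step telescoping compares $\calI_{\regP}\cmfWT$ with $\cpwtZ\,\calI_{\regP}$, not with $\cpwt\,\calI_{\regP}$.

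\textbf{The step you flagged as the main obstacle.} Summing over cells loses no factor of $n$. The tree cells satisfy $\abs{E_{(\mathrm{tr},c,\gamma)}}\approx 2^{\min(c,2n+1-c)}/3$, which grows geometrically away from columns $n/2$ and $3n/2$, and the weld cells have size $\Theta(N)$. Hence $\sum_b\abs{E_b}^{-1/4}=O(N^{-1/8})$.

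The real issue there is different. \cref{lem:general_intertwiner_lemma} bounds the error only on a reachable state, not in operator norm. So applying it ``blockwise'' to $\Pi_b\ket{\phi}$, or to controlled calls, via \cref{lem:operator_norm_max} is not justified as you state it.

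The paper avoids this by not using the intertwiner until the very end:
\begin{itemize}
\item It sets $\calV=\bigotimes_S\calV_S$ and uses the operator-norm comparison of \cref{lem:appendix_local_permutation_comparison} on databases of size $O(t)$.
\item Combining this with \cref{lem:operator_norm_max} over the queried cell, which gives a maximum rather than a sum, and with \cref{lem:separate_oracles_to_no_intermediate_comp}, it gets $\norm{(\cmfWT\calV-\calV\cpwt)\ket{\psi^{(\cpwt)}_{\alg,t}}}=O((t+1)/N^{1/8})$ per query.
\item Telescoping over queries gives $\norm{\ket{\psi^{(\cmfWT)}_{\alg,t}}-\calV\ket{\psi^{(\cpwt)}_{\alg,t}}}=O(t^2/N^{1/8})$.
\item The identity $\calI_{\regP}\calV=\ket{\bot}_{\regP}\otimes\Id$ then converts this into the stated bound.
\end{itemize}
No per-cell query counts and no orthogonality of errors across the hybrid are needed.
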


The proof of \cref{thm:intertwining_oracles} may be found in \cref{sec:intertwining_proof}. We note \cref{thm:intertwining_oracles} also directly implies a version of \cref{thm:soundness} for \cpwt. Next, we will use \cref{thm:intertwining_oracles} to show the relationship between solving a problem using \wtO and checking whether a solution appears in the compressed database when using \cpwt. For a database $I=(L,C)$ and a colored candidate path $\pairPath=(l_0,\alpha_1,l_1,\dots,\alpha_m,l_m)$ as above, we write $\pairPath \in I$ if
\begin{align*}
    l_0=0,\qquad L(\groot(T_2))=l_m,\qquad
    L(\stepFunct(l_{i-1},\alpha_i,L,C))=l_i\quad\text{for every $i\in[m]$.}
\end{align*}
Thus $\pairPath \in I$ means that $I$ records this particular colored path from entrance to exit. We define
\begin{align}
    \calR &\coloneq \{I:\exists\pairPath\text{ such that }\pairPath \in I\},\label{eq:databases_which_exit} \\
    I\cap \calR &\coloneq \left\{(L^\prime,C^\prime): (L^\prime,C^\prime)\in \calR, L^\prime \preceq L, C^\prime \preceq C\right\}\\
    \calR_{\mathrm{vis}} &\coloneq \{\pairPath:\exists I\text{ such that }\pairPath \in I\}.
\end{align} 
We define $\Pi_\calR$ as the projector on injective databases which contain a path,
\begin{align*}
    \Pi_\calR \ket{I}_{\I} &\coloneq \begin{cases}
        \ket{I}_{\I} &\text{if $I \cap \calR \neq \emptyset$,}\\
        0 &\text{otherwise.}
    \end{cases}
\end{align*}
Note that this definition also extends over the permutations $(\labelPerm, \colorPerm)$ as they are instances of injective databases. Similarly to~\cite{Car25}, we may compare the success probability of a pathfinding algorithm $\alg$ using the following \say{games}.

\begin{algobox}{Standard Oracle Pathfinding}\alglabel{alg:standard_game}
    \begin{enumerate}
        \item Choose uniformly random permutations $\sigma = (\labelPerm, \colorPerm)$ in order to construct $\wtO_{\sigma}$.
        \item Run \alg which makes $q$ queries to $\wtO_{\sigma}$ and outputs some candidate path $\pairPath = (l_0, \alpha_1, l_1, \dots ,\alpha_m, l_m)$.
        \item\label{itm:standard_game_verification} The verifier outputs $1$ if for all $i\in [m]$, $\wtF_{\sigma}(l_{i-1}, \alpha_i) = l_i$, $l_m$ is the exit and $\pairPath\in \calR_{\mathrm{vis}}$.
    \end{enumerate}
\end{algobox}

Let \probWTO denote the probability \cref{alg:standard_game} succeeds. Similarly, we may define a compressed version of \cref{alg:standard_game}.

\begin{algobox}{Pathfinding using Compressed Oracle}\alglabel{alg:compressed_game}
    \begin{enumerate}
        \item Initiate the compressed databases $I=(L,C)$ to $\ket{\bot}_{\I}$.
        \item Run \alg which makes $q$ queries to $\cpwt$.
        \item Apply the measurement $\calM_{\calR} = \{ \Pi_{\calR}, \Pi_{\calR}^{\bot}\}$ to $\I$, accepting if and only if we measure the first outcome.
    \end{enumerate}
\end{algobox}

We denote the success probability of \cref{alg:compressed_game} by \probCWT.

\begin{theorem}\label{thm:success_probability_standard_compressed}
    For any algorithm $\alg$ which makes $q\leq \tfrac{\sqrt{N}}{100}$ queries,
    \begin{align*}
        \probWTO = O\left( \probCWT + \frac{q^6}{N^{1/4}}  \right).
    \end{align*}
\end{theorem}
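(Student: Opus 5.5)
We follow the template of~\cite[Theorem 5.14 and Lemma 3.3]{Car25}: move from the standard game to a purified game, push the purification into the compressed picture with the intertwiner, and then compare the verification projectors.

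First, we express \probWTO in purified form. By \cref{thm:intertwining_oracles} there is a purification $\ket{\psi_{\alg,q}^{(\cmfWT)}}_{\A\regP}$ of $\rho_\A^{(\wtO)}$, with the private register $\regP$ holding the sampled $\sigma=(\labelPerm,\colorPerm)$. The verification in step~\ref{itm:standard_game_verification} of \cref{alg:standard_game} is a projector on $\A\regP$. It checks that $\wtF_\sigma(l_{i-1},\alpha_i)=l_i$ for each $i$, that $l_m=\labelPerm(\groot(T_2))$, and that $\pairPath\in\calR_{\mathrm{vis}}$. So $\probWTO=\norm{\Pi^{\mathrm{ver}}_{\A\regP}\ket{\psi_{\alg,q}^{(\cmfWT)}}}^2$.

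Second, we pass to the compressed picture. We apply the isometry $\calI_\regP$ and telescope over the $q$ queries. Since each $A_i$ acts only on $\A$, it commutes with $\calI_\regP$, and $\calI_\regP\ket{\bot}=\ket{\bot}\ket{\bot}_\I$. The per-query errors from \cref{thm:intertwining_oracles} sum to
\begin{align*}
    \norm{\calI_\regP\ket{\psi_{\alg,q}^{(\cmfWT)}}-\ket{\psi_{\alg,q}^{(\cpwt)}}\otimes(\cdot)}=O(q^3/N^{1/8}).
\end{align*}
More precisely, the right-hand state is the compressed state tensored with whatever remains in $\regP$; we should arrange, as in~\cite{Car25}, that $\calI_\regP$ maps $\regP$ to the compressed database together with a residual register that the verifier ignores. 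Squaring this error gives an additive $O(q^6/N^{1/4})$, which matches the target, so we work with squared norms using $(a+b)^2\le 2a^2+2b^2$. This squaring is also why the statement has a constant-factor $O(\cdot)$ in front of \probCWT rather than an additive form.

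Third, we need the standard verification projector to correspond, under $\calI_\regP$, to $\Pi^{\mathrm{wt}}_{\A\I}$ in \cref{eq:project_wt_oracle_outputs}. Checking $\wtF_\sigma(l_{i-1},\alpha_i)=l_i$ on the real permutation corresponds to recompressing with $\pcwt$, projecting, and decompressing, exactly as $\Pi^{\mathrm{O_\phi}}$ does in \cref{lem:fundamental}. The verifier also checks that $l_m$ is the exit. We handle this by appending one more label check, $L(\groot(T_2))=l_m$, with its own $\pC$ conjugation, which costs another $O(M^{-1/2})$. Then \cref{lem:fundamental_cpwt} gives $\norm{\Pi^{\mathrm{wt}}\psi}\le\norm{\Pi^{\cpwt}\psi}+O(m/N^{1/4})$. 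The path length $m$ needs care, because \cref{lem:fundamental_cpwt} requires $q+m\le\sqrt N/8$ and an additive $O(m/N^{1/4})$. Here we use that $\pairPath\in\calR_{\mathrm{vis}}$ forces the labels to be distinct vertices of $G_n$. We also use the observation that if $\Pi^{\cpwt}$ accepts, every step is already recorded, so $m\le 2q$ by \cref{lem:cpwt_bounded_growth}. The plan is to first project onto accepted paths with $m\le 2q+1$ and argue that longer paths contribute only the error term, since each step requires a fresh label guess costing $O(M^{-1/2})$.

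Finally, $\Pi^{\cpwt}_{\A\I}$ accepts only if the databases contain a colored entrance-to-exit path, so $\Pi^{\cpwt}\le\Pi_\calR\otimes\Id$ and $\norm{\Pi^{\cpwt}\psi}^2\le\probCWT$. Combining the three steps gives $\probWTO\le 2\probCWT+O(q^6/N^{1/4})+O(q^2/N^{1/2})$.

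The main obstacle is the second step. The intertwiner's output must match the compressed state in a way that commutes with the verification projector. In particular, we must check that $\Pi^{\mathrm{ver}}$ on $\regP$ is intertwined with $\Pi^{\mathrm{wt}}$ on $\I$ under $\calI_\regP$, using the per-permutation intertwiners from \cref{lem:general_intertwiner_lemma} tensored across the label permutation and the $b\in\middleEdge$ edge permutations. I would first try to prove that $\calI_\regP$ maps the real-permutation consistency projector exactly onto $\pcwt\,\Pi\,\pcwt^\dagger$, up to the size cutoff.
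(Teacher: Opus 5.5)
Your skeleton matches the paper's: the intertwiner, then \cref{lem:fundamental_cpwt}, the cutoff $m\le 2q$, an extra $\pC^{\mathrm{inv}}_{l_m}$-conjugated exit check costing $O(M^{-1/2})$, and finally $\Pi^{\cpwt}_{\A\I}$ accepting only databases in $\calR$. The step you flag as the main obstacle, however, is a genuine gap, and your setup does not close it.

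You write $\probWTO=\norm{\Pi^{\mathrm{ver}}_{\A\regP}\ket{\psi^{(\cmfWT)}_{\alg,q}}}^2$, where $\Pi^{\mathrm{ver}}$ reads $\sigma$ off the private register. \cref{thm:intertwining_oracles} gives you no register holding $\sigma$ in an accessible basis. The purification comes from Carolan's construction started in $\ket{\bot}_{\regP}$ and is only guaranteed to reproduce $\rho^{(\wtO)}_{\A}$. Also, $\calI_{\regP}$ is only shown to approximately intertwine the oracle calls on reached states. Since $\probWTO$ depends on the correlation between the output and $\sigma$, not on $\rho_{\A}$ alone, the stated properties do not define $\Pi^{\mathrm{ver}}$. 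The exact identity you propose, $\calI_{\regP}\Pi^{\mathrm{ver}}=\pcwt\,\Pi\,\pcwt^\dagger\calI_{\regP}$, has no reason to hold: even the block isometries $\calV_S$ intertwine $\cmfO_S$ with $\cpO_S$ only up to $O((d+3)/\abs{S}^{1/4})$.

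The missing idea is to make the verifier part of the algorithm, so you never need a statement about projectors on $\regP$. Let $\alg^\prime$ run $\alg$, then make the $m$ queries $\wtO(l_{i-1},\alpha_i)$ into fresh registers $\regO_i$, plus one inverse-label query $\labelPerm^{-1}(l_m)$ (covered by \cref{lem:general_intertwiner_lemma}). It accepts iff $\regO$ holds $(l_1,\dots,l_m,\groot(T_2))$.

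The acceptance projector $\Pi_O$ now acts only on $\A$. So $\probWTO\le\norm{\Pi_O\ket{\Psi^{(\wtO)}_{\alg^\prime,q+m+1}}}^2$ depends only on the reduced state, any purification will do, and $\Pi_O$ commutes with $\calI_{\regP}$. Applying \cref{thm:intertwining_oracles} to $q+m+1$ queries costs $O((q+m)^3/N^{1/8})$.

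On the compressed side, the correspondence you wanted becomes an exact algebraic identity rather than a property of the intertwiner. Since $\cpwt=\pcwt\,\xorOP\,\pcwt^\dagger$ and $l_i\neq 0$, we have $\bra{l_i}_{\regO_i}\cpwt_{l_{i-1},\alpha_i}\ket{0}_{\regO_i}=\pcwt_{l_{i-1},\alpha_i}\Pi^{(l_{i-1},\alpha_i,l_i)}\pcwt_{l_{i-1},\alpha_i}^\dagger$. The product of these over $i$ is exactly $\Pi^{\mathrm{wt}}_{\A\I}$.

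One smaller point: for long outputs, do not rely on a per-step label-guessing argument. The additive $O(m/N^{1/4})$ in \cref{lem:fundamental_cpwt} blows up for large $m$. Instead verify only the first $2q+1$ steps. The compressed projector for $2q+1$ distinct nonzero labels vanishes whenever $\abs{L}\le 2q$, so \cref{lem:fundamental_cpwt} bounds that contribution by $O(q/N^{1/4})$.
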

\begin{proof}
    Without loss of generality we may assume the algorithm outputs $x\in\calR_{\mathrm{vis}}$, as otherwise it fails and that $q\geq 1$ as otherwise it does not know anything about the instance. Let $\ket*{\psi_{\alg, q}^{(\cdot)}}$ be the final state of $\alg$, where $\cdot$ is one of $\cpwt$ or $\wtO$.

    After $q$ queries, \cref{lem:cpwt_bounded_growth} tells us that $\abs{L}\leq 2q$, so no database can contain a path with more than $2q$ nonzero labels. Therefore the verification process can be cut off after $2q+1$ steps which by the argument we explain below may only succeed with probability at most $O(q^6/N^{1/4})$. Therefore we may assume that the output length is $m\leq 2q$.

    Let $\alg^\prime$ be the modification of \alg which at the end performs the verification procedure Step~\ref{itm:standard_game_verification} in \cref{alg:standard_game}. Formally, we first perform the queries $\wtO(l_{i-1}, \alpha_i)$ and store the results in some register $\regO$. For the exit check, the verifier additionally queries $\labelPerm^{-1}(l_m)$ into $\regO_{m+1}$. Afterwards, we accept if and only if $\regO$ contains $(l_1, \dots, l_m,\groot(T_2))$. Let us use $\ket*{\Psi_{\alg^\prime, q+m+1}^{(\cdot)}}$ to denote the resulting state and let $\Pi_{O}$ be the corresponding accepting projector. Therefore,
    \begin{align*}
        \probWTO \leq \norm{\Pi_{O}  \ket*{\Psi_{\alg^\prime, q+m+1}^{(\wtO)}}}^2.
    \end{align*}
    Note that we are assuming that $\ket*{\Psi_{\alg^\prime, q+m+1}^{(\wtO)}}$ is a pure state which has been purified according to \cref{thm:intertwining_oracles}. The product intertwiner also handles the verifier's additional inverse-label query by \cref{lem:general_intertwiner_lemma}, with an error of $O((q+m)^2/M^{1/4})$ absorbed below. Therefore,
    \begin{align*}
        \sqrt{\probWTO} &\leq \norm{ \calI_{\regP} \, \Pi_{O}\, \ket*{\Psi_{\alg^\prime, q+m+1}^{(\wtO)}}}\\
        & = \norm{\Pi_{O}\, \calI_{\regP}\, \ket*{\Psi_{\alg^\prime, q+m+1}^{(\wtO)}}}\\
        &\leq \norm{\Pi_{O} \ket*{\Psi_{\alg^\prime, q+m+1}^{(\cpwt)}}} + O\left(\frac{(q+m)^3}{N^{1/8}} \right),
    \end{align*}
    where we used the fact that since $\Pi_O$ and $\calI$ are on different registers, they commute. Suppose that for $i\in[m]$, each register $\regO_i$ in $\regO$ contains the output of the query $(l_{i-1}, \alpha_i)$ and let $\Pi_{O_i}$ be the corresponding projector on $\regO_i$. Then the edge verification is as follows,
    \begin{align*}
        \bra{l_i}_{\regO_i} \cpwt_{l_{i-1}, \alpha_i} \ket{0}_{\regO_i} &= \pcwt_{l_{i-1}, \alpha_i} \left[ \bra{l_i}_{\regO_i} \xorOP_{l_{i-1},\alpha_i} \ket{0}_{\regO_i}  \right] \pcwt_{l_{i-1}, \alpha_i}^\dagger\\
        &= \pcwt_{l_{i-1}, \alpha_i} \Pi^{(l_{i-1}, \alpha_i, l_i)}_{\I} \pcwt_{l_{i-1}, \alpha_i}^\dagger,
    \end{align*}
    meaning that we may replace the edge verification procedure in $\alg^\prime$ with $\Pi^{\mathrm{wt}}_{\A\I}$. For the exit check, let $E_{\A\I}$ project onto $L(\groot(T_2))=l_m$ and let $E^{\mathrm{wt}}_{\A\I}=\pC_{l_m}^{\mathrm{inv}}E_{\A\I}\pC_{l_m}^{\mathrm{inv}}$, controlled by the final output label. By \cref{eq:pc_step_interlace},
    \begin{align*}
        \norm{E^{\mathrm{wt}}_{\A\I}-E_{\A\I}}_{|L|\leq 2(q+m)}
        \leq \sqrt{\frac{2}{M-2(q+m)}}=O(M^{-1/2}).
    \end{align*}
    Therefore, using this estimate and the operator-norm estimate proved in \cref{lem:fundamental_cpwt},
    \begin{align*}
        \sqrt{\probWTO} &\leq \norm{\Pi_{O} \ket*{\Psi_{\alg^\prime, q+m+1}^{(\cpwt)}}} + O\left(\frac{(q+m)^3}{N^{1/8}} \right)\\
        &= \norm{E^{\mathrm{wt}}_{\A\I}\Pi^{\mathrm{wt}}_{\A\I} \ket*{\psi_{\alg, q}^{(\cpwt)}}}+ O\left(\frac{(q+m)^3}{N^{1/8}} \right)\\
        &\leq \norm{E_{\A\I}\Pi^{\cpwt}_{\A\I} \ket*{\psi_{\alg, q}^{(\cpwt)}}} + O\left(\frac{m}{N^{1/4}} + \frac{(q+m)^3}{N^{1/8}}\right)\\
        &\leq \sqrt{\probCWT} + O\left(\frac{q^3}{N^{1/8}}\right).
    \end{align*}
    The last inequality uses $m\leq2q$ and the fact that $E_{\A\I}\Pi^{\cpwt}_{\A\I}$ accepts only databases containing an entrance-to-exit path. The statement follows from squaring.
\end{proof}

\subsection{Proof of intertwiner theorem}\label{sec:intertwining_proof}

To prove \cref{thm:intertwining_oracles}, we will use an intermediate oracle which writes out its computation between each step of evaluating each permutation. Let $\Z$ be the intermediate computation register. Assume that the queried input is $\ket{l,\alpha}$. We define the oracle $\wtOZ$ to act as follows.
\begin{enumerate}
    \item Write into $\Z$ the inverse label $v = \labelPerm^{-1}(l)$.
    \item Let $e = \edgeVertex{\alpha}{v}$. If $e\neq\bot$ and $b=\edgeIndic(e)\in\middleEdge$, write $f$ into $\Z$, where
    \begin{align*}
        f = \begin{cases}
            \sigma_b(e) &\text{if $v=\tail(e)$,}\\
            \sigma^{-1}_b(e) &\text{if $v=\head(e)$.}
        \end{cases}
    \end{align*}
    Otherwise, set $f=\bot$. We encode $\bot$ by $0$ and all valid edge indices by nonzero strings.
    \item Let $w = \ncF_{\colorPerm}(v,\alpha)$ be the corresponding neighbor of $v$ which is a function of $(v,\alpha, f)$.
    \item XOR $\labelPerm(w)$ with the output $\Y$.
    \item Uncompute $f$ and then uncompute $v$.
\end{enumerate}

We use the following conventions, also in the compressed version below. If $l\neq0$ and $v=0$, skip the middle computation and only uncompute $v$. If an edge-permutation query returns $f=\bot$, set $w=\bot$ and skip the forward-label step. When $w=v$, XOR the original label $l$ directly, otherwise when $w=0$, do nothing. These conventions preserve the action of \wtO, since genuine permutation queries are always defined and $\labelPerm(v)=l$.

Therefore, the computation can be viewed as invoking the following permutations as oracles,
\begin{align*}
    \labelPerm^{-1} \cdot \sigma_b \cdot \labelPerm \cdot \sigma_b \cdot \labelPerm^{-1}.
\end{align*}
\begin{proposition}
    Running the oracle $\wtOZ$ with the intermediate register $\ket{0,0}_{\Z}$ returns it to $\ket{0,0}_{\Z}$ and behaves like \wtO.
\end{proposition}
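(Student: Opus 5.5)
The plan is to fix $\sigma=(\labelPerm,\colorPerm)$ and check the action of $\wtOZ$ on a computational basis state $\ket{l,\alpha}_{\X}\ket{a}_{\Y}\ket{0,0}_{\Z}$. Linearity then gives the statement on arbitrary superpositions. The structure is the standard compute--copy--uncompute argument. Every step other than the XOR into $\Y$ is an XOR of a deterministic function of $\X$ and of registers already written, and that function is fixed by $\sigma$. The uncomputation in step 5 applies exactly the same XORs in reverse order, so it cancels them. What remains is to show that the value XORed into $\Y$ equals $\wtF_\sigma(l,\alpha)$, and that the conventions for $\bot$, $0$ and $w=v$ are consistent.

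The first part is the walk through the steps. Step 1 writes $v=\labelPerm^{-1}(l)$. This is well defined since $\labelPerm$ is a bijection of $[0,M]$ fixing $0$, and $v=0$ forces $l=0$. Step 2 writes $f$. This $f$ is a function of $(v,\alpha,\sigma_b)$ only: $e=\edgeVertex{\alpha}{v}$ and $b=\edgeIndic(e)$ depend only on $v,\alpha$ and the fixed coloring $\colF$, and the tail/head case is determined by $v$. Step 3 computes $w$ from $(v,\alpha,f)$. I would compare this case by case with \cref{eq:def_ncf_perm}:
\begin{itemize}
\item if $v\notin V_n$ or $e=\bot$, then $f=\bot$ (encoded $0$) and $w=v$;
\item if $e\neq\bot$ and $\edgeIndic(e)=\bot$, then $f=\bot$, and $w$ is $\head(e)$ or $\tail(e)$, read off from $(v,\alpha)$ alone;
\item if $b\in\middleEdge$, then $w=\head(f)$ when $v=\tail(e)$, and $w=\tail(f)$ when $v=\head(e)$.
\end{itemize}
Each case matches $\ncF_\sigma(v,\alpha)$. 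Step 4 XORs $\labelPerm(w)$, which equals $\wtF_\sigma(l,\alpha)$ by \cref{eq:oracle_function_definition}. The stated conventions are chosen so that this identity survives the degenerate cases: when $w=v$ one XORs $l=\labelPerm(v)$, and when $w=0$ one XORs $\labelPerm(0)=0$, which does nothing.

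The second part is to confirm that $\Z$ returns to $\ket{0,0}$. Step 4 touches only $\Y$, and $w$ is recomputable from $(v,\alpha,f)$ without being stored. So uncomputing $f$ (XOR $f$ again, with $f$ recomputed from $v$) and then $v$ (XOR $\labelPerm^{-1}(l)$ again, computed from $\X$) resets $\Z$. The $\Y$ register holds $a\oplus\wtF_\sigma(l,\alpha)$, so the map agrees with $\wtO_\sigma$ on every basis state and hence everywhere.

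The only real obstacle is bookkeeping of the degenerate cases. One must check that the ``genuine'' permutation queries never return $\bot$, so the conventions about $f=\bot$ are triggered exactly when $\edgeIndic(e)=\bot$ or $e=\bot$. One must also check that the encoding of $\bot$ by $0$ does not collide with a valid edge index, which holds because valid edges are nonzero strings. With these checks done, the equality is immediate.
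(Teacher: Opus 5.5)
Your proposal is correct and follows the paper's proof. The paper traces a basis state $\ket{l,\alpha,a}_{\X\Y}\ket{0,0}_{\Z}$ through the same five steps you use: write $v$, write $f$, XOR $\labelPerm(w)$ into $\Y$, uncompute $f$, uncompute $v$. Your explicit check that $w=\ncF_{\colorPerm}(v,\alpha)$ and that the degenerate-case conventions agree with $\wtF_\sigma$ is just a more detailed version of the paper's one-line remark on conventions. One small clarification: the rule ``$f=\bot \Rightarrow w=\bot$'' should apply only to an actual edge-permutation query, which never returns $\bot$, and not to the $f=\bot$ set in the non-middle case; your case analysis already treats it this way.
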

\begin{proof}
    Conditioned on some query register $\ket{l,\alpha,a}_{\X\Y}$, \wtOZ acts as follows,
    \begin{align*}
        \ket{l,\alpha,a}_{\X\Y}\ket{0,0}_{\Z} &\xrightarrow{\labelPerm^{-1}} \ket{l,\alpha,a}_{\X\Y}\ket{v,0}_{\Z}\\
        &\xrightarrow{\sigma_b}  \ket{l,\alpha,a}_{\X\Y}\ket{v,f}_{\Z}\\
        &\xrightarrow{\labelPerm}  \ket{l,\alpha,a \oplus \labelPerm(w)}_{\X\Y}\ket{v,f}_{\Z}\\
        &\xrightarrow{\sigma_b}  \ket{l,\alpha,a \oplus \labelPerm(w)}_{\X\Y}\ket{v,0}_{\Z}\\
        &\xrightarrow{\labelPerm^{-1}}  \ket{l,\alpha,a \oplus \labelPerm(w)}_{\X\Y}\ket{0,0}_{\Z} = (\wtO_{\sigma} \ket{l,\alpha,a}_{\X\Y} )\ket{0,0}_\Z.
    \end{align*}
\end{proof}

We let \cpwtZ be the compressed version of \wtOZ, meaning that,
\begin{align*}
    \cpwtZ &\coloneq \cpLabInv \cdot \cpPerm \cdot \cpLabFor \cdot \cpPerm \cdot \cpLabInv.
\end{align*}

Note that as we are using compression operators, it is possible that the queries are not perfectly uncomputed. However, this is not a large issue.

\begin{lemma}\label{lem:separate_oracles_to_no_intermediate_comp}
    Let $\ket{\psi}$ be an arbitrary state such that $\abs{L} \leq t$ and $\abs{C}\leq t$. If $t < \sqrt{N}/4$, then,
    \begin{align*}
        \norm{\cpwtZ (\ket{\psi}{\ket{0}_\Z }) - (\cpwt\ket{\psi})\ket{0}_{\Z}} \leq 4\sqrt{\frac{2}{\sqrt{N}/4-t}}\norm{\ket{\psi}}.
    \end{align*}
\end{lemma}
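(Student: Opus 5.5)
The plan is to compare the five-stage oracle $\cpwtZ = \cpLabInv \cdot \cpPerm \cdot \cpLabFor \cdot \cpPerm \cdot \cpLabInv$ with the three-sandwich form $\cpwt = \pCLabelInv \cdot \pCPerm \cdot \pCLabel \cdot \xorOP \cdot \pCLabel^\dagger \cdot \pCPerm^\dagger \cdot \pCLabelInv^\dagger$. The idea is to peel off one layer at a time, from the inside out. Each compressed permutation oracle in $\cpwtZ$ has the form $\pC\cdot\xorOP\cdot\pC^\dagger$ (possibly conjugated by the flip $\cOp{F}$), acting on a subregister of $\Z$. Writing $\cpwtZ$ out, consecutive layers meet in products of the form $\pC^\dagger \cdot \pC$ with the same queried point. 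For example, between the first $\cpLabInv$, which computes $v$ into $\Z$, and the last $\cpLabInv$, which uncomputes $v$, the edge and forward-label stages act. The operators $\pC_l^{\mathrm{inv}}$ in those two calls sandwich operators that are controlled on the value $v$ written in $\Z$.

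The key steps are as follows.

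\textbf{Step 1 (outer layer).} After the first $\cpLabInv$, the register $\Z$ contains $v$, which agrees with the database record $L^{-1}(l)$. I then commute the stored record past the inner stages. Define $\widetilde{\mathrm{Inner}}$ to be the inner operator controlled on $\Z$. By construction, the first call $\pC_l^{\mathrm{inv}}\xorOP\pC_l^{\mathrm{inv}\dagger}$, then the inner part, then the uncompute call, equals the target $\pC_l^{\mathrm{inv}}(\text{inner controlled on }L^{-1}(l))\pC_l^{\mathrm{inv}\dagger}$, except for one discrepancy. The inner $\cpPerm$ and $\cpLabFor$ calls may apply their own $\pC$ operators while the label record is "present". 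The mismatch is exactly a factor of the form $(\pC_l^{\mathrm{inv}}-\Id)$ acting on a state where the record $l\mapsto v$ is held in $\Z$, an orthogonal algorithm-side copy. \cref{prop:recompression_unlikely} applies to exactly this situation and gives a norm at most $\sqrt{2/(M-2t+1)}$.

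\textbf{Step 2 (edge layer).} I repeat the same argument for the pair of $\cpPerm$ calls. The first writes $f$, and the second uncomputes it, with $\cpLabFor$ in between. The copy of $f$ in $\Z$ again makes \cref{prop:recompression_unlikely} applicable, now with domain size $\abs{E_b}\ge\sqrt N/4$ by \cref{obs:min_num_nodes_per_perm}. This gives error $\sqrt{2/(\sqrt N/4-t)}$. The forward-label stage is already in the form $\pCLabel\xorOP\pCLabel^\dagger$ once we note that $w$ is a function of $(v,\alpha,f)$ and coincides with $\stepFunct(l,\alpha,L,C)$ on the relevant support. The conventions for $\bot$, $w=v$ and $w=0$ match the case split in the definitions of $\pCLabel$ and $\xorOP$.

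\textbf{Step 3 (totals).} There are two erroneous recompressions in each of the two paired layers, one when computing and one when uncomputing, for four terms in total. Each term is bounded by $\sqrt{2/(\sqrt N/4-t)}$, since $M\gg\sqrt N$. The triangle inequality then yields the stated $4\sqrt{2/(\sqrt N/4-t)}\norm{\ket\psi}$. Unitarity ensures that errors do not amplify when they pass through the remaining operators.

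The main obstacle will be the bookkeeping in Step 1. Here the inner operators act while the $\Z$ register holds $v$, and I must verify that "controlled on $\Z$" versus "controlled on $L^{-1}(l)$" differ only on the component where $\pC^{\mathrm{inv}}_l$ has moved the record. It may also be necessary to handle the database size changing by at most one or two inside the sandwich; this would replace $t$ by $t+2$, which the hypothesis $t<\sqrt N/4$ should absorb.
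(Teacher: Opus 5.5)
Your proposal matches the paper's proof. Both remove the inner $\pC^{\mathrm{inv}}_l$ pair and the inner edge-compression pair that sit between the computing and uncomputing XORs, bound each of the four removals with \cref{prop:recompression_unlikely} using the copy held in $\Z$ (with $M$ for the labels and $\abs{E_b}\geq\sqrt{N}/4$ for the edges), and then identify the remaining product with $\cpwt$.

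One case should be made explicit: the component where the queried record is absent after decompression, i.e.\ $\Z$ still holds $0$, or $f=\bot$. There \cref{prop:recompression_unlikely} does not apply. Instead, the middle computation is skipped by convention, so the two inner compressions compose to $\Id$ and cancel exactly; the paper handles this by restricting to $\Pi_{v\neq 0}$.
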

\begin{proof}
    As each operator fixes the query register $\X$, we may fix some query $l,\alpha$. Suppose $l\neq 0$, as otherwise the result trivially follows. We may expand \cpLabInv in \cpwtZ as,
    \begin{align*}
        \cpwtZ = \pCLabelInv_{,l} \xorOP^{\mathrm{inv}}_\mathrm{lab} \pCLabelInv_{,l} \cdot (\cpPerm \cdot \cpLabFor \cdot \cpPerm )\pCLabelInv_{,l} \xorOP^{\mathrm{inv}}_\mathrm{lab} \pCLabelInv_{,l}.
    \end{align*}
    Let $\cpwtZ^{(1)}$ be the following,
    \begin{align}
        \cpwtZ^{(1)} \coloneq \pCLabelInv_{,l} \xorOP^{\mathrm{inv}}_\mathrm{lab} \cdot (\cpPerm \cdot \cpLabFor \cdot \cpPerm ) \xorOP^{\mathrm{inv}}_\mathrm{lab} \pCLabelInv_{,l},\label{eq:drop_inner_pc_operator_label}
    \end{align}
    which is just removing the intermediate $\pCLabelInv_{,l}$ calls after XORing. Let $\ket{\phi}=\xorOP^{\mathrm{inv}}_{\mathrm{lab}}\pCLabelInv_{,l}\ket{\psi}\ket{0}_{\Z}$ be the state immediately after the first XOR. In the case that $v=0$, the middle computation is the identity, so the compression operators perfectly cancel out. Otherwise, the register $\Z$ records that $L(v)=l$ and this does not change until the last XOR operator. Therefore \cref{prop:recompression_unlikely} applies to both before and after the computation between the XORs. As the database increases by at most $2$, by the triangle inequality we have,
    \begin{align*}
        &\norm{(\cpwtZ - \cpwtZ^{(1)})\ket{\psi}\ket{0}_{\Z}} \leq \norm{(\pCLabelInv_{,l} - \Id) \Pi_{v\neq 0} \ket{\phi}}\\
        &\quad + \norm{(\pCLabelInv_{,l} - \Id) (\cpPerm \cdot \cpLabFor \cdot \cpPerm) \Pi_{v\neq 0} \ket{\phi}} \\
        &\quad\leq 2\sqrt{\frac{2}{M-t-1}} \norm{\ket{\psi}}. &\mbox{(By \cref{prop:recompression_unlikely})}
    \end{align*}
    Similarly, write $\pC_{v,\alpha}$ for the edge-compression operator, or the identity when the edge query is skipped. Then,
    \begin{align*}
        \cpPerm \cdot \cpLabFor \cdot \cpPerm = \pC_{v,\alpha} \cdot \xorOP_{\mathrm{edg}} \cdot \pC_{v,\alpha} \cdot \cpLabFor \cdot \pC_{v,\alpha} \cdot \xorOP_{\mathrm{edg}} \cdot \pC_{v,\alpha}
    \end{align*}
    And similar to $\cpwtZ^{(1)}$, we may compare this with the version where we remove the inner $\pC_{v,\alpha}$ compression primitives,
    \begin{align}
        \pC_{v,\alpha} \cdot \xorOP_{\mathrm{edg}} \cdot\cpLabFor \cdot \xorOP_{\mathrm{edg}}  \cdot\pC_{v,\alpha} \nonumber%\label{eq:drop_inner_pc_operator_permutation}
    \end{align}
    If the edge database is not queried or $f=\bot$, the two expressions are equal. Otherwise when $f\neq \bot$, the register $\Z$ records the queried edge assignment which is again preserved until XORed again. Therefore, the same triangle inequality and \cref{prop:recompression_unlikely}, together with $\abs{E_b}\geq\sqrt{N}/4$ by \cref{obs:min_num_nodes_per_perm}, give
    \begin{align*}
        \norm{ (\cpPerm \cdot \cpLabFor \cdot \cpPerm - \pC_{v,\alpha} \xorOP_{\mathrm{edg}} \cpLabFor \xorOP_{\mathrm{edg}} \pC_{v,\alpha}) \ket{\phi}} \leq 2\sqrt{\frac{2}{\sqrt{N}/4-t}}\norm{\ket{\psi}}.
    \end{align*}
    Notice that after removing these operators, each lookup in $\Z$ is equivalent to looking it up in the actual database. Furthermore, these values in $\Z$ are still always uncomputed. Therefore, suppressing the intermediate lookups, the remaining operators are,
    \begin{align}
        \pCLabelInv_{,l} \cdot \pC_{v,\alpha} \cdot  \cpLabFor \pC_{v,\alpha} \pCLabelInv_{,l}\label{eq:removing_operators_in_cpwtz}
    \end{align}
    The operations in \cref{eq:removing_operators_in_cpwtz}, now controlled by the corresponding database entries, are exactly \cpwt. Since $M-1\geq\sqrt{N}/4$, summing the two errors and using orthogonality of the query register proves the statement.
\end{proof}

Finally, we may prove the main statement.

\begin{proof}[Proof of \cref{thm:intertwining_oracles}]
    Let $\cmfWT$ be the oracle which behaves as $\wtOZ$, except each oracle call is replaced with the oracle $\cmfO$ from \cref{lem:general_intertwiner_lemma}, and let $\calI_{\regP}$ be the intertwiner over both the label and edge registers and $\calV$ be the product of the isometries $\calV_S$ from \cref{lem:appendix_local_permutation_comparison} over these registers. Note that the $\Z$ register is implicit.

    Let $\ket*{\psi^{(\cdot)}}$ be the state of running an algorithm with $t$ calls to either $\cmfWT$ or $\cpwt$. By \cref{lem:cpwt_bounded_growth}, the databases in support of the state have $\abs{L}\leq 2t$ and $\abs{C}\leq t$. Also, we may assume that $t\leq \tfrac{\sqrt{N}}{16}$ as otherwise the bound is trivial. By applying \cref{lem:appendix_local_permutation_comparison,lem:separate_oracles_to_no_intermediate_comp},
    \begin{align*}
        \norm{(\cmfWT\calV-\calV\cpwt)\ket{\psi_{\alg,t}^{(\cpwt)}}} &= O\left(\frac{t+1}{N^{1/8}}\right).
    \end{align*}
    Telescoping over all queries gives,
    \begin{align*}
        \norm{\ket{\psi_{\alg,t}^{(\cmfWT)}}-\calV\ket{\psi_{\alg,t}^{(\cpwt)}}}=O(t^2/N^{1/8}).
    \end{align*}
    The construction in \cref{sec:general_intertwiner_lemma} satisfies $\calI_{\regP}\calV=\ket{\bot}_{\regP}\otimes\Id$. Hence,
    \begin{align*}
        &\norm{(\calI_{\regP}\cmfWT-\cpwt\calI_{\regP})\ket{\psi_{\alg,t}^{(\cmfWT)}}}\\
        &\quad\leq2\norm{\ket{\psi_{\alg,t}^{(\cmfWT)}}-\calV\ket{\psi_{\alg,t}^{(\cpwt)}}}+\norm{(\cmfWT\calV-\calV\cpwt)\ket{\psi_{\alg,t}^{(\cpwt)}}}\\
        &\quad=O\left(\frac{t^2}{N^{1/8}}\right).\qedhere
    \end{align*}
\end{proof}

\section{Hardness of pathfinding}

The goal of this section is proving the following theorem.

\begin{theorem}\label{thm:pathfinding_hard}
    Assume that $n\geq16$ is even. For any quantum algorithm $\alg$ which makes $1\leq q\leq \tfrac{\sqrt{N}}{100}$ queries to the glued trees oracle,
    \begin{align*}
        \Pr[\text{$\alg$ outputs a path from entrance to exit}] = O\left(\frac{q^6}{N^{1/4}}\right).
    \end{align*}
\end{theorem}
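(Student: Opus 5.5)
Our plan follows the outline in the technical overview. By \cref{thm:success_probability_standard_compressed}, it suffices to show $\probCWT = O(q^6/N^{1/4})$. We then replace \cpwt with the fresh oracle \mcpwt, in which every new edge assignment in $C_b$ is restricted to endpoints outside the currently recorded subgraph $\rG$. The recorded subgraph has $O(q)$ vertices by \cref{lem:cpwt_bounded_growth}, while every cell has $\abs{E_b}\geq \sqrt{N}/4$ by \cref{obs:min_num_nodes_per_perm}. Hence each uniform superposition $\ket{+^{\rightarrow}_{e,b,C}}$ loses only $O(q)$ of its $\Omega(\sqrt N)$ terms. By \cref{obs:superposition_similarity}, each compression primitive changes in operator norm by $O(\sqrt{q}/N^{1/4})$. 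Telescoping over the $q$ queries, each of which uses $O(1)$ primitives, gives $\norm{\ket{\psi^{(\cpwt)}}-\ket{\psi^{(\mcpwt)}}} = O(q^{3/2}/N^{1/4})$. This costs $O(q^3/N^{1/4})$ in probability, which is within budget.

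Under \mcpwt we will first prove two invariants by induction on queries. First, $\rG$ is always a forest. Second, a newly inserted edge never joins two existing components. Erasures and replacements preserve both properties, and fresh insertions attach only an unrecorded endpoint. Now suppose the final database contains an entrance-to-exit path. That path must cross the weld through some welded record $\rho$. We pass to a Feynman-path (history) decomposition and look at the last time $t$ at which $\rho$ was inserted, together with its direction. At time $t$ the far endpoint was fresh, so by the forest invariant the portion of the final path beyond $\rho$ was grown after time $t$, with increasing last-insertion times. In particular, the component of $\rho$ must at some later point reach column $n/2$ or $3n/2$ while $\rho$ stays present; we call this \emph{escaping}. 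A union bound over the choice of weld record, the insertion time $t\le q$ and the direction expresses the acceptance amplitude through $O(q^2)$ projected terms. By Cauchy--Schwarz this gives $\probCWT \lesssim q^2\cdot \max_{\rho,t}\Pr[\text{escape}\mid\rho,t]$, up to the freshness error.

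The main obstacle is bounding the escape probability for fixed $(\rho,t)$ by roughly $q^{2}\,2^{-\Omega(n)}$, or at least by $O(q^{2}/N^{1/4})$. We will split the database at every time after $t$ into an \emph{inside} part, namely the component of $\rho$ described only by color sequences from $\rho$ and the labels attached to them, and an \emph{outside} remainder. For each inside/outside description, we define a canonical state: a superposition over structural vertex assignments to the inside color sequences, with amplitudes equal to the square roots of random-walk probabilities from $\rho$. We restrict to assignments that stay in \middleNodes and are collision-free. Since each non-backtracking step moves upward with probability about $1/2$, escaping over $n/2$ columns has weight $2^{-\Omega(n)}$, so the truncation is negligible. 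We then show that each application of \mcpwt maps the span of these canonical states approximately into itself. To do so, we expand each primitive with \cref{lem:expansion_pc_operator} into insertion $P^{+}$, deletion $P^{-}$ and replacement terms, and classify queries into inverse-label, forward-label, edge-outward and edge-inward types. For each type, the error term is $O(1/N^{1/4})$ per query, and errors coming from different description groups are mutually orthogonal, so they add in quadrature. The delicate point, and the one we expect to take the most work, is that erasing or replacing an inside record must not mix groups. We would verify this by checking directly that the deletion term $P^{-}$ applied to a canonical state returns the canonical state of the smaller description, which works because the amplitudes are exactly the walk marginals. Summing per-query errors over $q$ queries, multiplying by the $O(q^2)$ union-bound overhead and adding the freshness and soundness errors then yields the claimed $O(q^6/N^{1/4})$.

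Finally, the oracle distribution used here lets the welded edges form several cycles, whereas the target distribution has a single cycle. We handle the change of distribution with a hybrid argument: sample coupled instances from the two distributions and show that any $q$-query algorithm distinguishes them with advantage within the stated bound. This transfers the bound to the single-cycle distribution and completes the proof of \cref{thm:pathfinding_hard}.
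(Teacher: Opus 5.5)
Your proposal follows the paper's proof. The matching steps are:
\begin{itemize}
\item the reduction to $\probCWT$ via \cref{thm:success_probability_standard_compressed};
\item the switch to \mcpwt;
\item the decomposition of the crossing event by canonical weld record, last insertion time and direction;
\item the inside/outside canonical states with the four query types and mutually orthogonal errors;
\item the single-cycle hybrid.
\end{itemize}
The sum over $\rho$ is not handled by a union bound. It comes from the orthogonal canonical-weld partition together with \cref{lem:decomp_insertions_per_state}, which yields the $8T^2\Gamma$ overhead.

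The one slip is quantitative. The per-step error of the canonical-state comparison is $8\sqrt{\delta}=O(T/N^{1/4})$, not $O(1/N^{1/4})$, because every step must pay for possible collisions with the $O(T)$ recorded vertices. This still gives $O(q^6/\sqrt{N})$ for the fresh-oracle success probability, well within the $O(q^6/N^{1/4})$ already incurred by the soundness step.
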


\subsection{Fresh oracle}\label{sec:fresh_oracle}

We define a modified version of \cpwt, denoted by \mcpwt, which we call \emph{fresh} as every single assignment in the edge database is completely untouched by the current database.

Let us formalize how the database $I = (L,C)$ describes a known graph $\rG(I) = (\rV(I), \rE(I))$ in the middle region,
\begin{align}
    \rE(I) &\coloneq \left\{ \{ \tail(e), \head(f) \}: e \in \dom(C), C(e) = f \right\},\\
    \rV(I) &\coloneq \bigcup_{e \in \rE(I)} e.
\end{align}
We call the edges in $\rE(I)$ \emph{logical} edges as they represent the logic of how $\ncF_C$ connects nodes.
Let $\vPerm = \cup_{{b\in \middleEdge}}\cup_{ e\in E_b} \{ \tail(e), \head(e)\}$ be all the vertices which are connected to edges in $\middleEdge$. Furthermore, we also define the set of all endpoints used,
\begin{align}
    \supp(I) &\coloneq \left(\dom(L) \cap \vPerm  \right) \cup \left( \bigcup_{e\in \dom(C) \cup \im(C)} \{\tail(e), \head(e)\} \right).\label{eq:support_database_def}
\end{align}
Intuitively, $\supp(I)$ denotes the set of points we want to avoid in order to keep the assignments \emph{fresh}. For some fixed $b\in \middleEdge$ and $e\in E_b$, define,
\begin{align}
    \calA_{e,b,I}^{\rightarrow} &\coloneq \{f\in E_b \setminus \left( \dom(C_b) \cup \im(C_b) \right): \head(f) \notin \supp(I) \},\label{eq:available_assignments_fresh_right}\\
    \ket{\tilde{+}_{e,b,I}^{\rightarrow}} &\coloneq \frac{1}{\sqrt{\abs{\calA_{e,b,I}^{\rightarrow}}}}\sum_{f\in \calA_{e,b,I}^{\rightarrow}} \ket{C[e\to f]},\label{eq:plus_state_fresh_right}\\
    \calA_{e,b,I}^{\leftarrow} &\coloneq \{f\in E_b \setminus \left( \dom(C_b) \cup \im(C_b) \right): \tail(f) \notin \supp(I) \},\label{eq:available_assignments_fresh_left}\\
    \ket{\tilde{+}_{e,b,I}^{\leftarrow}} &\coloneq \frac{1}{\sqrt{\abs{\calA_{e,b,I}^{\leftarrow}}}} \sum_{f\in \calA_{e,b,I}^{\leftarrow}} \ket{C[f\to e]}\label{eq:plus_state_fresh_left}.
\end{align}
We define \mpCPerm as \pCPerm, except we replace the plus states from \cref{eq:plus_state_upcolor,eq:plus_state_downcolor} with those from \cref{eq:plus_state_fresh_right,eq:plus_state_fresh_left}. Uf the corresponding set is empty, \mpCPerm acts as the identity. We define the modified fresh oracle.

\begin{definition}The \emph{fresh compressed welded trees} oracle is defined as,
    \begin{align}
        \mcpwt &\coloneq \pCLabelInv \cdot \mpCPerm \cdot \pCLabel \cdot \xorOP \cdot \pCLabel\cdot \mpCPerm  \cdot \pCLabelInv.\label{eq:mcpwt_def}
    \end{align}
\end{definition}

First, we may observe that using \mcpwt avoids creating any cycles.

\begin{lemma}\label{lem:growth_argument}
    Fix some query basis $\ket{l,\alpha}_{\X}$ and some database $I$ which is not acted as identity by $\mpCPerm$, and assume that the queried edge assignment is absent. For any database $J$ which appears as a child in the plus state in \cref{eq:plus_state_fresh_left,eq:plus_state_fresh_right}, there exists some \emph{logical} edge $g$ such that,
    \begin{align}
        \rE(J) &= \rE(I) \dot{\cup} \{g\},\label{eq:increase_edge_db}\\
        \abs{g \cap \supp(I)} &= 1\label{eq:support_intersect}\\
        \abs{g \cap \rV(I)} &\leq 1.\label{eq:bound_number_known_edges}
    \end{align}
\end{lemma}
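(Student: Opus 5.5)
Fix the query $\ket{l,\alpha}_{\X}$ and a database $I=(L,C)$ on which $\mpCPerm$ acts nontrivially. Then $l\in\im(L)$, $v=L^{-1}(l)$, $e=\edgeVertex{\alpha}{v}\neq\bot$ and $b=\edgeIndic(e)\in\middleEdge$. By symmetry it suffices to treat the case $v=\tail(e)$. Here the queried assignment is absent, meaning $e\notin\dom(C_b)$, and the plus state is $\ket{\tilde{+}^{\rightarrow}_{e,b,I}}$. The case $v=\head(e)$, with $e\notin\im(C_b)$ and the plus state $\ket{\tilde{+}^{\leftarrow}_{e,b,I}}$, is identical after swapping the roles of $\head$ and $\tail$ and of $\dom$ and $\im$. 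A child is $J=(L,C[e\to f])$ with $f\in\calA^{\rightarrow}_{e,b,I}$. Since $f\notin\dom(C_b)\cup\im(C_b)$ and $e\notin\dom(C_b)$, the extension $C[e\to f]$ adds exactly the pair $(e,f)$ and overwrites nothing. So $\rE(J)=\rE(I)\cup\{g\}$ with $g=\{\tail(e),\head(f)\}=\{v,\head(f)\}$.

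The next step is to show the union is disjoint and to establish \cref{eq:support_intersect}. First, $v\in\supp(I)$: we have $v\in\dom(L)$, and $v\in\vPerm$ because it is an endpoint of $e\in E_b$. Second, $\head(f)\notin\supp(I)$ by the definition of $\calA^{\rightarrow}_{e,b,I}$. In particular $\head(f)\neq v$, so $g$ is a genuine two-element set with $\abs{g\cap\supp(I)}=1$.

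For \cref{eq:bound_number_known_edges}, I would show $\rV(I)\subseteq\supp(I)$. Every vertex of $\rV(I)$ has the form $\tail(e')$ or $\head(f')$ for some recorded pair $C(e')=f'$. Since $e'\in\dom(C)$ and $f'\in\im(C)$, both endpoints lie in the union in \cref{eq:support_database_def}. Hence $\head(f)\notin\rV(I)$, which gives $\abs{g\cap\rV(I)}\leq1$. Disjointness of the union in \cref{eq:increase_edge_db} follows immediately, because every edge of $\rE(I)$ lies inside $\rV(I)$, while $g$ contains $\head(f)\notin\rV(I)$.

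The main subtlety is that the map from pairs $(e',f')$ to logical edges must not collapse in a way that spoils these claims. It does not matter here: the claims are set-level statements, and $g$ is new because it contains a vertex outside $\rV(I)$. The one point needing care is that $v$ really lies in $\supp(I)$, which uses $v\in\dom(L)$ (guaranteed since $\mpCPerm$ only acts when $l\in\im(L)$) together with $v\in\vPerm$. No earlier results are needed beyond the definitions.
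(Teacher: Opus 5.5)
Your proof is correct and follows the paper's argument step for step. You reduce to the $\tail$ case, use $v\in\dom(L)\cap\vPerm\subseteq\supp(I)$ while freshness gives $\head(f)\notin\supp(I)$, and use $\rV(I)\subseteq\supp(I)$ for both the last bound and disjointness. Your explicit checks that $C[e\to f]$ adds exactly one record and that $\head(f)\neq v$ spell out details the paper leaves implicit. The one point the paper states that you pass over, $v\neq 0$, is harmless because edges at the entrance root never lie in $\middleEdge$.
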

\begin{proof}
    Let $v=L^{-1}(l)$ and $e=\edgeVertex{\alpha}{v}$. Since $I$ is not acted as identity by $\mpCPerm$, we have that $e\neq \bot$, $b=\edgeIndic(e) \in \middleEdge$ and $v\neq 0$. Therefore, $v\in \dom(L)\cap \vPerm$. Without loss of generality, assume that $v=\tail(e)$ and let $J = (L,C[e\to f])$ for some $f\in E_b$. Therefore the new logical edge is $g=\{\tail(e), \head(f)\}$.

    As $v\in \dom(L)\cap \vPerm$, we have that $\tail(e) \in \supp(I)$, while $\head(f)\notin\supp(I)$ by \cref{eq:available_assignments_fresh_right}, implying \cref{eq:support_intersect}. Moreover, by definition $\rV(I) \subseteq \supp(I)$, leading to \cref{eq:bound_number_known_edges}. Lastly, the disjoint union in \cref{eq:increase_edge_db} follows from the fact that every edge in $\rE(I)$ lies in $\supp(I)$, meaning that $g$ could not be present. The case when $v=\head(e)$ follows analogously.
\end{proof}

\begin{observation}\label{obs:bounded_growth_mcpwt}
    After $q$ applications of $\mcpwt$, every database in the support of the state $\ket{\psi_q}$ satisfies $\abs{L} \leq 2q$ and $\abs{C} \leq q$. Hence, $\abs{\supp(I)} \leq \abs{L} + 4\abs{C} \leq 6q$.
\end{observation}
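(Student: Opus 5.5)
The plan is to mirror the proof of \cref{lem:cpwt_bounded_growth}, since \mcpwt has exactly the same skeleton as \cpwt. The only change is that $\mpCPerm$ replaces $\pCPerm$, and it uses different plus states.

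\textbf{Step 1: structure of the fresh compression primitive.} For a fixed query basis $\ket{l,\alpha}_{\X}$, each block of $\mpCPerm$ is either the identity or a primitive $\pCfromPlus(\ket{\tilde{+}_{e,b,I}}, \ket{C})$. Such a primitive swaps a database $C$ lacking the assignment at $e$ with a superposition of databases $C[e\to f]$. Every database in its support therefore differs from $C$ by at most one record of $C_b$. Exactly as for $\pCPerm$, the controlling value $v=L^{-1}(l)$ is unchanged by $\pCLabel$ and $\xorOP$. Moreover $\xorOP$ and $\pCLabel$ leave $\C$ invariant. Hence $\mpCPerm\cdot \pCLabel \cdot \xorOP \cdot \pCLabel\cdot \mpCPerm$ changes $\abs{C}$ by at most $1$.

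There is one subtlety. The fresh available set $\calA$ depends on $\supp(I)$, which involves $L$, and $L$ may be altered by $\pCLabel$ between the two applications of $\mpCPerm$. I will argue that this does not matter for the size count. Each primitive application, whatever plus state it uses, only toggles the presence of the single record at $e$. So across one query, $C$ gains at most one record. (The self-inverse pairing in \cref{eq:mcpwt_def} is not needed for this counting, only that each of the two $\mpCPerm$ applications is block-diagonal on "record at $e$ present/absent".)

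\textbf{Step 2: the label bound.} I will reuse the argument of \cref{lem:cpwt_bounded_growth} verbatim. The middle $\pCLabel\cdot\xorOP\cdot\pCLabel$ acts on the fixed coordinate $w=\stepFunct(l,\alpha,L,C)$, which is fixed on each block. The outer $\pCLabelInv$ pair always queries the same $l$, because $L^{-1}(l)$ is preserved in between. So $L$ grows by at most $2$ per query. Induction from $\ket{\bot}_{\I}$ then gives $\abs{L}\le 2q$ and $\abs{C}\le q$.

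\textbf{Step 3: the support bound.} This follows directly from \cref{eq:support_database_def}. The first part contributes at most $\abs{\dom(L)}$ vertices. Each pair $(e,C(e))$ with $e\in\dom(C)$ contributes at most the four endpoints $\tail(e),\head(e),\tail(C(e)),\head(C(e))$, and elements of $\im(C)$ are exactly these $C(e)$. Hence $\abs{\supp(I)}\le \abs{L}+4\abs{C}\le 2q+4q=6q$.

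The only step requiring care is the subtlety in Step 1. Since it concerns only sizes, not exact cancellation, I expect it to be routine.
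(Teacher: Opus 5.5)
Your proposal is correct and follows the paper's route. The paper's proof is a one-liner: the argument of \cref{lem:cpwt_bounded_growth} carries over unchanged, and the support bound follows from \cref{eq:support_database_def}. You also noticed that the fresh plus states depend on $L$, which the inner $\pCLabel$ can change. Your resolution is right: each $\mpCPerm$ application only modifies the record at the same queried edge $e$, so $\abs{C}$ still grows by at most one per query. This fills in a detail the paper leaves implicit.
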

\begin{proof}
    The same proof as of \cref{lem:cpwt_bounded_growth} applies here, bounding $\abs{L}$ and $\abs{C}$. The bound on $\abs{\supp(I)}$ follows by \cref{eq:support_database_def}.
\end{proof}

Furthermore, we may observe that using $\mcpwt$ does not drastically differ from $\cpwt$.

\begin{lemma}\label{lem:cpwt_mcpgt}
    For any quantum algorithm $\alg$ and projector $\Pi$ on $\I$,
    \begin{align*}
        \norm{\Pi \ket{\psi_{\alg, q}^{(\cpwt)}}}^2 \leq 2 \norm{\Pi \ket{\psi_{\alg, q}^{(\mcpwt)}}}^2 + O\left( \frac{q^3}{\sqrt{N}} \right).
    \end{align*}
\end{lemma}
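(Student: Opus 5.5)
We will prove the lemma by a hybrid argument comparing the two oracles query by query. The bound we aim for is
\begin{align*}
    \norm{\ket{\psi_{\alg,q}^{(\cpwt)}} - \ket{\psi_{\alg,q}^{(\mcpwt)}}} = O\left(\frac{q^{3/2}}{N^{1/4}}\right).
\end{align*}
Given this, the statement follows from the triangle inequality $\norm{\Pi\ket{\psi}} \leq \norm{\Pi\ket{\psi^\prime}} + \norm{\ket{\psi}-\ket{\psi^\prime}}$ and $(a+b)^2\leq 2a^2+2b^2$. Squaring the error term gives $O(q^3/\sqrt{N})$, which explains the form of the lemma.

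\textbf{Telescoping.} Define hybrid states $\ket{\phi_i}$ that use \cpwt for the first $i$ queries and \mcpwt afterwards. Since the $A_j$ and both oracles are unitary or norm-nonincreasing, we get
\begin{align*}
    \norm{\ket{\psi^{(\cpwt)}}-\ket{\psi^{(\mcpwt)}}} \leq \sum_{t=1}^{q} \norm{(\cpwt - \mcpwt)\ket{\chi_t}}.
\end{align*}
Here $\ket{\chi_t}$ is the state just before the $t$-th query of the \cpwt-run. By \cref{lem:cpwt_bounded_growth}, it is supported on databases with $\abs{L}\leq 2t$ and $\abs{C}\leq t$. We would also check that the intermediate compression steps inside one query enlarge $L$ and $C$ by $O(1)$, so that $\abs{\supp(I)} = O(t)$ at every stage, as in \cref{obs:bounded_growth_mcpwt}.

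\textbf{One-query bound.} The two oracles differ only in $\pCPerm$ versus $\mpCPerm$, which appears once and together with its adjoint in \cref{eq:mcpwt_def}. By the triangle inequality we can peel off the identical factors $\pCLabelInv$, $\pCLabel$ and $\xorOP$, which are unitary. It then suffices to bound $\norm{(\pCPerm-\mpCPerm)\zeta}$ on vectors $\zeta$ whose database support satisfies $\abs{\supp(I)} = O(t)$, and similarly for the adjoints. Both operators are block-diagonal over the blocks $(e,b,C)$, as in $\pC^{\rightarrow}_{e,b}$ and $\pC^{\leftarrow}_{e,b}$. Within a block, each is $\pCfromPlus(\ket{+},\ket{C})$, with plus states that are uniform over $E_b\setminus\im(C_b)$ and over $\calA_{e,b,I}^{\rightarrow}$ respectively. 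The set difference between these two sets consists of edges whose head lies in $\supp(I)$ or which lie in $\dom(C_b)$. In a single color class, each vertex is the head or tail of at most one edge, so this difference has size $O(t)$. By \cref{obs:min_num_nodes_per_perm}, both sets have size at least $\sqrt{N}/4 - O(t) \geq \sqrt{N}/8$, using $q\leq\sqrt{N}/100$.

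\cref{obs:superposition_similarity} then gives an operator-norm difference of $O(\sqrt{t}/N^{1/4})$ per block. By \cref{lem:operator_norm_max}, the same bound holds for the whole operator. The emptiness case, in which \mpCPerm acts as the identity, cannot occur under these size bounds. Summing over $t$ gives $\sum_t \sqrt{t}/N^{1/4} = O(q^{3/2}/N^{1/4})$.

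\textbf{Main obstacle.} The delicate step is making the blockwise comparison legitimate. The fresh plus state depends on all of $I$, including $L$, through $\supp(I)$, and not only on $C$. We therefore need that the block decomposition used by \mpCPerm is still orthogonal and is preserved by both operators. We would handle this by noting that $L$ is held fixed by $\pCPerm$ and $\mpCPerm$, which act only on $\C$ controlled by $L$. Inside a block with fixed $(L, C\setminus\{e\})$, the set $\supp$ computed from the unassigned state $\ket{C}$ is a fixed set. Hence \mpCPerm is a direct sum of swaps of the form $\pCfromPlus$ over the same blocks as \pCPerm, and the comparison above applies block by block. A further point to check is that \mpCPerm is in fact defined this way, that is, with $\supp$ computed from the parent database $I$ in the block, so that it is unitary.
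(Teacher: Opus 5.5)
Your proposal is correct and follows essentially the same route as the paper. You compare \pCPerm and \mpCPerm blockwise via \cref{obs:superposition_similarity} and \cref{lem:operator_norm_max} to get an $O(\sqrt{q}/N^{1/4})$ operator-norm difference on bounded-size databases, telescope over the $2q$ occurrences to obtain $\norm{\ket{\psi_{\alg,q}^{(\cpwt)}}-\ket{\psi_{\alg,q}^{(\mcpwt)}}}=O(q^{3/2}/N^{1/4})$, and square using $(a+b)^2\leq 2a^2+2b^2$. Your additional checks, that the blocks are indexed by the base database $(L,C)$ so \mpCPerm is a well-defined unitary direct sum and that database sizes stay $O(t)$ inside a query, fill in points the paper's proof leaves implicit.
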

\begin{proof}
    Let us first bound the norm of the difference between \pCPerm and \mpCPerm after $q$ queries. Notice that the color databases in the direct sum in \cref{eq:pC_right_direction,eq:pC_left_direction} are orthogonal. By \cref{lem:operator_norm_max}, it suffices to bound the difference between any arbitrary $\pCfromPlus$ applied to some $C$. By \cref{obs:superposition_similarity}, this depends on the relationship between the available sets. 
    
    Suppose we fix some $e,b$ and $C$. Without loss of generality, let us consider the $\rightarrow$ case as the case with $\leftarrow$ follows analogously. Notice that for any $\calA_{e,b,I}^{\rightarrow} \subseteq E_b \setminus \im(C_b)$. By \cref{obs:min_num_nodes_per_perm}, we have that,
    \begin{align*}
        \abs{E_b \setminus \im(C_b)} &\geq \frac{\sqrt{N}}{4} - q\\
        \abs{\calA_{e,b,I}^{\rightarrow}} &\geq \frac{\sqrt{N}}{4} - \abs{C} - 2\abs{\supp(I)} \geq \frac{\sqrt{N}}{4} - 13q
    \end{align*}
    Therefore, under the assumption $q \leq \tfrac{\sqrt{N}}{100}$,
    \begin{align*}
        \norm{\pCPerm - \mpCPerm}_{\leq q} = O\left(\frac{\sqrt{q}}{N^{1/4}}\right)
    \end{align*}
    By repeatedly applying the triangle inequality,
    \begin{align*}
        \norm{\ket{\psi_{\alg, q}^{(\cpwt)}} - \ket{\psi_{\alg, q}^{(\mcpwt)}}} &\leq \norm{ \sum_{i \in [2q]}(\pCPerm - \mpCPerm)}\\
        &\leq \sum_{i \in [2q]} \norm{\pCPerm - \mpCPerm} = O\left( \frac{q^{3/2}}{N^{1/4}} \right).
    \end{align*}
    The statement follows by squaring and using that fact that $(a+b)^2 \leq 2a^2 + 2b^2$.
\end{proof}

\subsubsection{Valley lemma}

In the following section, we formalize the idea that due to \cref{lem:growth_argument}, any path on $\rG(I)$ must grow one step at a time in some direction. First, we expand the oracle $\mcpwt$ into the $7$ primitives in \cref{eq:mcpwt_def}. Hence, any $q$-query algorithm \alg may be written as $T=8q+1$ steps $\calV = (V_1, V_2,\dots, V_T)$ where each $V_i$ is either one of the algorithm unitaries $A_0,\dots,A_q$, a compression operator $\pCLabelInv, \pCLabel, \mpCPerm$ or the XOR operator $\xorOP$. Using this decomposition, we define the basis history of the state.

\begin{definition}[Basis History] A \emph{basis history} is a sequence
    \begin{align*}
        h = ((a_0,I_0), (a_1, I_1),\dots,(a_T, I_T)),
    \end{align*}
    of basis states in $\A\I$ with $I_0=\bot$ such that
    \begin{align*}
        \Pi_{t\in [T]} \bra{a_t, I_t} V_t \ket{a_{t-1}, I_{t-1}} \neq 0.
    \end{align*}
\end{definition}

Let us first show that throughout any basis history, each database over $\middleEdge$ represents a tree.

\begin{lemma}\label{lem:no_cycles_in_graph}
    Fix some $t\in [T]$ and let $A_r = \rE(I_t) \setminus \rE(I_{t-1})$. If $A_r \neq \emptyset$, then $A_r = \{e\}$ such that $\abs{e \cap \rV(I_{t-1})} \leq 1$ and $\rG(I_t)$ is a forest.
\end{lemma}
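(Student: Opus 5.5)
We argue by induction on $t$, proving the two claims together: every database $I_t$ in a basis history has $\rG(I_t)$ a forest, and whenever a step adds a logical edge it adds exactly one, with at most one endpoint in $\rV(I_{t-1})$. The base case is $I_0=\bot$, where $\rE(I_0)=\emptyset$ and the graph is trivially a forest.

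For the inductive step we go through the possible types of $V_t$ and ask which of them can change $\rE$.
\begin{itemize}
\item The algorithm unitaries $A_i$ and $\xorOP$ act as the identity on $\I$, so $I_t=I_{t-1}$.
\item $\pCLabelInv$ and $\pCLabel$ change only $L$. Since $\rE(I)$ depends only on $C$, we get $\rE(I_t)=\rE(I_{t-1})$.
\end{itemize}
So $A_r$ can be nonempty only when $V_t=\mpCPerm$ or its adjoint (in \cref{eq:mcpwt_def} the outer copies are used up to self-adjointness of $\pCfromPlus$). In every one of these identity-on-$C$ cases, $A_r=\emptyset$ and the forest property is inherited.

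Now take $V_t=\mpCPerm$ with query $\ket{l,\alpha}$. The operator is a direct sum of blocks $\pCfromPlus(\ket{\tilde{+}},\ket{C})$, controlled on $L$ and on $v=L^{-1}(l)$, which it does not change. The nonzero matrix elements $\bra{I_t}V_t\ket{I_{t-1}}$ therefore fall into three kinds.
\begin{itemize}
\item \emph{Identity.} Here $I_t=I_{t-1}$, so $A_r=\emptyset$.
\item \emph{Insertion.} The queried assignment is absent in $C_{t-1}$ and $C_t$ is a child $C_{t-1}[e\to f]$ in the fresh plus state.
\item \emph{Removal or replacement.} The queried assignment $e$ is present in $C_{t-1}$, and $C_t$ is either $C_{t-1}$ with that entry deleted or another child of the same parent block.
\end{itemize}
In the insertion case, \cref{lem:growth_argument} applies directly. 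It gives $A_r=\{g\}$ with $\abs{g\cap\rV(I_{t-1})}\le 1$, and by \cref{eq:support_intersect} one endpoint of $g$ lies outside $\supp(I_{t-1})\supseteq\rV(I_{t-1})$. Adding a pendant edge whose new endpoint is a fresh vertex cannot close a cycle, so $\rG(I_t)$ remains a forest.

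In the removal or replacement case, write $C_{t-1}=C_\bot[e\to f]$ and $C_t=C_\bot$ or $C_\bot[e\to f']$. The parent $C_\bot$ has $\rE(C_\bot)=\rE(I_{t-1})\setminus\{g\}\subseteq \rE(I_{t-1})$, which is a forest. If $C_t=C_\bot$, then $A_r=\emptyset$. If $C_t=C_\bot[e\to f']$, then $f'$ lies in the fresh set computed relative to $(L,C_\bot)$. Applying \cref{lem:growth_argument} to the parent database shows that $\rE(I_t)=\rE(C_\bot)\dot\cup\{g'\}$ with $g'$ pendant at a vertex outside $\supp(L,C_\bot)$. This already gives that $\rG(I_t)$ is a forest, and $A_r\subseteq\{g'\}$.

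What remains is the bound $\abs{g'\cap\rV(I_{t-1})}\le 1$, and this is the step we expect to be the main obstacle. The difficulty is that the fresh endpoint of $g'$ is fresh only relative to $\supp(L,C_\bot)$. It could still coincide with the old endpoint $\head(f)$, which the removed edge $g$ contributed to $\rV(I_{t-1})$.

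We would resolve this as follows. The other endpoint of $g'$ is $v=\tail(e)$, which it shares with $g$. If $\head(f')=\head(f)$ then $f'=f$ and $g'=g$, so $A_r=\emptyset$. Otherwise $\head(f')\ne\head(f)$. Any vertex of $\rV(I_{t-1})$ other than $\head(f)$ lies in $\rV(C_\bot)\cup\{v\}\subseteq\supp(L,C_\bot)$, where $v\in\supp$ because $v\in\dom(L)\cap\vPerm$. So $\head(f')\notin\rV(I_{t-1})$. This gives $\abs{g'\cap\rV(I_{t-1})}\le1$, completing the induction.

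The care needed throughout is in handling the direct-sum structure: a nonzero transition must be attributed to a single block, and the freshness sets are defined with respect to that block's parent database.
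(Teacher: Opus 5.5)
Your proof is correct and follows the same route as the paper. Both argue by induction on $t$, observe that only $\mpCPerm$ can change $\rE$, use \cref{lem:growth_argument} for insertions, and treat a replacement (via \cref{lem:expansion_pc_operator}) as a removal followed by a fresh insertion relative to the parent database; you are simply more explicit than the paper's one-line claim that the new endpoint ``is different from the old one and must be fresh'', justifying $\head(f')\neq\head(f)$ from $f'\neq f$ (an edge in a fixed color class is determined by its head) and noting that every other vertex of $\rV(I_{t-1})$ lies in $\supp(L,C_\bot)$.
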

\begin{proof}
    Let us apply induction on $t$. By definition, $\rG(I_0)$ does not contain any edges and hence is a forest. Assume $\rG(I_{t-1})$ is a forest. If $V_t$ is one of $\pCLabelInv, \pCLabel, \xorOP$ or $A_i$, then $\rE(I_{t-1}) = \rE(I_{t})$, which implies the statement.

    Assume that $V_t$ is $\mpCPerm$. If $\abs{I_t} < \abs{I_{t-1}}$, we must erase an edge and hence the statement also holds. Therefore, we either replaced one edge with another or added a new edge. If we added a new edge, then by \cref{lem:growth_argument}, we added one unique edge $e$ such that $\abs*{e \cap \rV(I_{t-1})} \leq 1$, meaning that $e$ cannot connect two vertices in $\rV(I_{t-1})$ and hence $\rG(I_t)$ is a forest. If we replaced an edge, by \cref{lem:expansion_pc_operator} we have essentially removed the original edge and then added a new edge, meaning that \cref{lem:growth_argument} applies to the database after removal. The new structural vertex it points to is different from the old one and must be fresh, meaning it is not in $\rV(I_{t-1})$.
\end{proof}

For each edge which appears in the final database $\rE(I_T)$, we will be interested in the moment it was added to the database.

\begin{definition}[Last-addition time] For a history $h$ and an edge $e\in \rE(I_T)$, the last-addition time is defined to be,
    \begin{align*}
        \lastAddTime_h(e) &\coloneq \max \{t \in [T]: e\notin \rE(I_{t-1}), e\in \rE(I_{t})\}
    \end{align*}
\end{definition}

Notice that by \cref{obs:bounded_growth_mcpwt}, for any two edges $e\neq e^\prime \in \rE(I_T)$, $\lastAddTime_h(e) \neq \lastAddTime_h(e^\prime)$. Lastly, we will define \emph{simple paths}, which are paths which never retract.

\begin{definition} For $\{v_i\}_{i\in[0,m]} \in \vPerm^{m+1}$ and $\{e_i\}_{i\in [m]} \in (\vPerm \times \vPerm)^{m}$, a \emph{simple path} in $\rG(I)$ is a sequence of vertices and edges
    \begin{align}
        \calP = \{v_0, e_1, v_1, \dots e_m, v_m\}
    \end{align}
    such that each edge is in $\rE(I)$ and no two edges repeat. The length of $\calP$ is denoted by $\abs{\calP} = m$. For a given graph $G$ and $u,v\in V$, we use $u\xleftrightarrow{G} v$ to denote that there exists a simple path on $G$ between $u$ and $v$.
\end{definition}

By \cref{lem:no_cycles_in_graph}, no simple path can form a cycle.

\begin{lemma}[Valley Shape]
    For any simple path $\calP$ on $\rG(I_T)$ of a basis history $h$, there exists a unique $j \in [\abs{\calP}]$ such that,
    \begin{align*}
        \lastAddTime_h(e_1) > \dots > \lastAddTime_h(e_{j-1}) > \lastAddTime_h(e_{j}) < \lastAddTime_h(e_{j+1}) < \dots < \lastAddTime_h(e_{m}).
    \end{align*}
\end{lemma}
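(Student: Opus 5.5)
The plan is to show that along the path the sequence $\lastAddTime_h(e_1),\dots,\lastAddTime_h(e_m)$ has no interior local maximum: there is no $i\in[2,m-1]$ with $\lastAddTime_h(e_i)>\lastAddTime_h(e_{i-1})$ and $\lastAddTime_h(e_i)>\lastAddTime_h(e_{i+1})$. All last-addition times are distinct (remark after the definition), so a sequence with no interior local maximum decreases to its unique minimum and then increases. Taking $j$ as the index of the minimum gives existence and uniqueness, with the degenerate cases $j=1$ and $j=m$ corresponding to monotone sequences.

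The key step is a local claim. Suppose $e_i$ has strictly larger last-addition time than both neighbors on the path, and set $t=\lastAddTime_h(e_i)$. Since $e_{i-1},e_{i+1}\in\rE(I_T)$ were last added before $t$ and never removed afterwards, both lie in $\rE(I_{t-1})$. Indeed, if one were absent at time $t-1$, it would have to be added again after $t-1$, and that later addition time would exceed $t$.

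I need to check this carefully. An edge could be present at $t-1$ and removed later, but then it would have to be re-added, which contradicts $\lastAddTime_h<t$. So $e_{i-1},e_{i+1}\in\rE(I_s)$ for every $s\ge \lastAddTime_h(e_{i\pm1})$, and in particular for $s=t-1$.

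The edge $e_i=\{v_{i-1},v_i\}$ shares $v_{i-1}$ with $e_{i-1}$ and $v_i$ with $e_{i+1}$. Hence both endpoints of $e_i$ lie in $\rV(I_{t-1})$. The path is simple and $\rG(I_T)$ is a forest by \cref{lem:no_cycles_in_graph}, so $v_{i-1}\neq v_i$ and the two endpoints are distinct. This gives $\abs{e_i\cap\rV(I_{t-1})}=2$. But at time $t$ the edge $e_i$ is newly added, so $A_t=\rE(I_t)\setminus\rE(I_{t-1})=\{e_i\}$, and \cref{lem:no_cycles_in_graph} forces $\abs{e_i\cap\rV(I_{t-1})}\le 1$, a contradiction.

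The main obstacle is the replacement case inside a single $\mpCPerm$ step, where an edge is removed and another added simultaneously. Here $\rV(I_{t-1})$ might contain a vertex only through the edge being removed. The concern is whether \cref{lem:no_cycles_in_graph}'s bound is genuinely about $\rV(I_{t-1})$ rather than the post-removal database. The neighbors $e_{i\pm1}$ are not the removed edge, since they remain present at time $t$ and thereafter. Their endpoints therefore lie in the post-removal vertex set as well, so applying \cref{lem:growth_argument} to the database after removal, with $\supp$ containing those vertices, still yields the contradiction. I would state this refinement explicitly.
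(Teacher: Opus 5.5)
Your proof is correct and follows the paper's argument. You rule out an interior local maximum $\lastAddTime_h(e_{i-1})<\lastAddTime_h(e_i)>\lastAddTime_h(e_{i+1})$ by noting that at $t=\lastAddTime_h(e_i)$ both neighbouring edges, and hence both endpoints of $e_i$, already lie in $\rG(I_{t-1})$, which contradicts \cref{lem:no_cycles_in_graph}; distinctness of last-addition times then gives the valley shape, and you justify the persistence of the neighbours more carefully than the paper does. Your replacement-case refinement is harmless but unnecessary, since \cref{lem:no_cycles_in_graph} is already stated relative to $\rV(I_{t-1})$ and its proof covers replacements.
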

\begin{proof}
    Let us use $b_i = \lastAddTime_h(e_{i})$ for all $i\in [\abs{\calP}]$. For the sake of contradiction, assume that there exists some $i$ such that $b_{i-1} < b_i > b_{i+1}$. That means that $v_{i-1}, v_{i} \in \rV(I_{b_i-1})$, meaning that $\abs{e_i \cap \rV(I_{b_i-1})} = 2$, contradicting \cref{lem:no_cycles_in_graph}. Therefore, the sequence of $b_j$ cannot contain a local maximum. As each $b_j$ is unique, there must exist a unique $i$ such that for both $j< i$ and $i<j$, the sequence of $b_j$ is increasing as otherwise we would produce a local maximum.
\end{proof}

\subsection{Reduction to a fixed weld}\label{sec:reduction_to_fixed_weld}

First, let us define a weaker version of $\Pi_{\calR}$ which solely conditions on the presence of edges which cross from column $n/2$ to $3n/2$.
\begin{align}
    \eventCrossEdge(I) &\coloneq \Id\left[ \exists u,v\in \rV(I) \text{ s.t. } \col(u) = n/2, \col(v)=3n/2 \text{ and } u \xleftrightarrow{\rG(I)} v \right]\\
    \prCross &\coloneq \sum_{I: \eventCrossEdge(I)=1} \ketbra{I}_{\I}
\end{align}
In our case, $u \xleftrightarrow{\rG(I)} v $ implies that the edge database $C$ contains a simple path between $u$ and $v$, even if the labels of the intermediate nodes are not known. As any path which crosses from the entrance to the exit must contain a subpath between columns $n/2$ and $3n/2$ and both projectors are diagonal,
\begin{align}
    \Pi_{\calR} \preceq \prCross.\label{eq:crossing_subsumes_finding_path}
\end{align}
The remainder of the proof will be concerned with bounding the probability that a database $I$ in the support of a state satisfies $\eventCrossEdge(I)$. Next, we partition the set of databases satisfying $\eventCrossEdge(I)$ based on which welded edge they cross.

A \emph{weld record} $\rho$ is the tuple $\rho = (\alpha, e,f)$ where $\alpha\in \colorSet$ and $e,f\in E_{\mathrm{wld}, \alpha}$. We say that $\rho \in I$ when $C(e)=f$ and the associated edge with $\rho$ is $E(\rho) = \{\tail(e), \head(f)\}$.

Let us assume that there exists an order for all pairs $(u,v) \in \col(n/2) \times \col(3n/2)$ and all weld records $\rho$ (for example using the structural vertices). By \cref{lem:no_cycles_in_graph}, we have that at all times $\rG(I)$ is forest, meaning that any simple path between two vertices in $\rG(I)$ is unique. For some $I$ such that $\eventCrossEdge(I)=1$, let $\calP_I$ denote the unique simple path between the first pair $(u,v)\in \col(n/2) \times \col(3n/2)$ which satisfies the event $\eventCrossEdge$. The \emph{canonical weld record} of $I$ is,
\begin{align*}
    \rho^*(I) &\coloneq \min \{ \rho\in I: E(\rho) \in \calP_I \},
\end{align*}
where the minimization is using the aforementioned ordering. As this is unique, we may partition the $I$ in $\prCross$ as follows,
\begin{align}
    \prCanonicalCross &\coloneq \sum_{\substack{I: \eventCrossEdge(I) = 1 \\ \rho^*(I) = \rho}} \ketbra{I},\\
    \prCross &= \sum_{\rho} \prCanonicalCross.\label{eq:decomposition_pr_cross_canonical}
\end{align}

\paragraph{Last surviving insertion.} Next, we will partition the state further based on the direction. For simplicity of notation, we will use the following notation for intermediate states,
\begin{align*}
    \ket{\psi_0} &\coloneq \ket{0}_{\A} \ket{\bot}_{\I},\\
    \ket{\psi_t} &\coloneq V_t\dots V_1 \ket{\psi_0},
\end{align*}
where we are using the decomposition $\calV$ of an algorithm $\alg$. For a weld record $\rho$, we define the following projectors,
\begin{align*}
    &\prWeld \coloneq \sum_{I: \rho\in I} \ketbra{I} &\prWeldOrtho \coloneq \Id - \prWeld.
\end{align*}
Notice that $ \prCanonicalCross \preceq \prWeld$ and $\prWeld\ket{\psi_0} = 0$. For an arbitrary $t\in [T]$, we define the \emph{future retained weld} projector as,
\begin{align*}
    \prFuture &\coloneq \prWeld V_T \prWeld \dots \prWeld V_{t+1} \prWeld.
\end{align*}
We will use $\prFuture$ to decompose the state as follows,
\begin{align}
    \prCanonicalCross \ket{\psi_T} &= \prCanonicalCross V_T \dots V_1 \ket{\psi_0} \nonumber\\
    &= \prCanonicalCross V_T \prWeldOrtho \dots V_1 \ket{\psi_0} + \prCanonicalCross V_T \prWeld \dots V_1 \ket{\psi_0} \nonumber\\
    &= \left(\sum_{t\in [T]} \prCanonicalCross \prFuture V_t \prWeldOrtho V_{t-1} \dots V_1 \ket{\psi_0}\right) + \prCanonicalCross V_T \prWeld \dots V_1 \prWeld \ket{\psi_0} \nonumber\\
    &=\sum_{t\in [T]} \prCanonicalCross \prFuture V_t \prWeldOrtho \ket{\psi_{t-1}}.\label{eq:decomp_canonical_weld_proj}
\end{align}
Finally, we will decompose the projectors based on the \emph{direction} of insertion. Fix some weld $\rho = (\alpha, e,f)$. We will use $\sigma \in \{\rightarrow, \leftarrow\}$ to denote the insertion, which denotes which one of the available sets in \cref{eq:available_assignments_fresh_right,eq:available_assignments_fresh_left} was used.

\begin{definition}
    Fix some $t\in [T]$. We define the edge-insertion direction operator \insertDir as follows. If $V_t \in \{A_i, \pCLabel, \pCLabelInv, \xorOP\}$, $\insertDir = 0$. Otherwise, let $\ket{x} = \ket{a,L,C}_{\A\I}$ and $\ket{y} = \ket{a, L, C^\prime}_{\A\I}$ be two basis states. We define $\insertDir$ to satisfy,
    \begin{align*}
        \bra{y} \insertDir \ket{x} = \bra{y} \mpCPerm \ket{x}
    \end{align*}
    exactly when,
    \begin{enumerate}
        \item $\rho \notin C$ and $\rho \in C^\prime$,
        \item Letting $B = C \cap C^\prime$, we have $C^\prime = B[e\to f]$
        \item If $\sigma = \rightarrow$, the applied action is $\pCfromPlus\left(\ket{\tilde{+}^{\rightarrow}_{e,(\mathrm{wld}, \alpha), B}}, \ket{B}\right)$, while $\sigma = \leftarrow$ implies that the action was $\pCfromPlus\left(\ket{\tilde{+}^{\leftarrow}_{f,(\mathrm{wld}, \alpha), B}}, \ket{B}\right)$.
    \end{enumerate}
    Furthermore, we define the endpoint of $\rho$ in $\sigma$ as,
    \begin{align*}
        r_{\rho,\sigma} &\coloneq \begin{cases}
            \head(f) &\text{if $\sigma = \rightarrow$,}\\
            \tail(e) &\text{if $\sigma = \leftarrow$.}
        \end{cases}
    \end{align*}
\end{definition}
We use $r$ as it is the root of the connected component. The other side of this edge is the seed, denoted by $s$.
We immediately get the following.

\begin{observation}\label{obs:split_futures}
    For any $t\in [T]$ and weld $\rho$,
    \begin{align*}
        \prWeld V_t \prWeldOrtho = A_{\rho, t}^{\rightarrow} + A_{\rho, t}^{\leftarrow}
    \end{align*}
\end{observation}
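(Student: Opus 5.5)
The identity is checked entrywise on basis states, splitting on the type of $V_t$. If $V_t$ is one of $A_i$, $\pCLabelInv$, $\pCLabel$ or $\xorOP$, then $V_t$ does not touch the register $\C$. Since $\prWeld$ depends only on $C$, $V_t$ commutes with $\prWeld$, so $\prWeld V_t \prWeldOrtho = V_t\prWeld\prWeldOrtho = 0$. This matches $A_{\rho,t}^{\rightarrow} = A_{\rho,t}^{\leftarrow} = 0$ in this case.

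Now let $V_t = \mpCPerm$. Take basis states $\ket{x}=\ket{a,L,C}$ with $\rho\notin C$ and $\ket{y}=\ket{a',L',C'}$ with $\rho\in C'$ and nonzero matrix element. Then $\mpCPerm$ is block diagonal, controlled on $\ket{l,\alpha'}_{\X}$ and on $v=L^{-1}(l)$. It leaves $\A$ and $L$ unchanged, so $a'=a$ and $L'=L$. Within a block it is a direct sum over a base database $B$ (with the queried edge unassigned) of $\pCfromPlus(\ket{\tilde{+}^{\sigma}_{e',b,I}},\ket{B})$.

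By \cref{lem:expansion_pc_operator}, every nonzero entry is one of the following:
\begin{enumerate}
    \item $\ket{B}\mapsto\ket{B[\cdot]}$ (the $P^+$ term);
    \item $\ket{B[\cdot]}\mapsto\ket{B}$ (the $P^-$ term);
    \item $\ket{B[\cdot]}\mapsto\ket{B[\cdot']}$ (the identity and $P^+P^-$ terms).
\end{enumerate}
Moving from $\rho\notin C$ to $\rho\in C'$ requires that the record $e\to f$ (in cell $b=(\mathrm{wld},\alpha)$) is the queried assignment in $C'$. Hence $C'=B[e\to f]$ and $C\in\{B\}\cup\{B[\cdot\,]\neq C'\}$, so $B=C\cap C'$ as in condition~2.

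The queried edge is $e$ exactly when $v=\tail(e)$, giving $\sigma=\rightarrow$. It is $f$ exactly when $v=\head(f)$, giving $\sigma=\leftarrow$. Since the query is on a single edge of a single cell in direction $\rightarrow$ or $\leftarrow$, every such entry satisfies the conditions of exactly one of $A^{\rightarrow}_{\rho,t}$ or $A^{\leftarrow}_{\rho,t}$. Conversely, every entry of $A^\sigma_{\rho,t}$ is an entry of $\mpCPerm$ between such states, so it appears in $\prWeld V_t\prWeldOrtho$.

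The one point to check is that the two directions cannot both apply to the same matrix entry. This would require $v=\tail(e)=\head(f)$, which is impossible because a welded edge goes from column $n$ to column $n+1$. So $\tail(e)\in\col(n)$ and $\head(f)\in\col(n+1)$. Summing the entries gives the stated decomposition.
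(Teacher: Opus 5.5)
Your proof is correct and is essentially the paper's argument. The non-$\mpCPerm$ primitives are disposed of because they are diagonal in $C$. Strictly speaking, $\pCLabel$ and $\xorOP$ are controlled on $\C$ rather than not touching it, but diagonality is all that commuting with $\prWeld$ requires. For $V_t=\mpCPerm$, \cref{lem:expansion_pc_operator} gives $C'=B[e\to f]$ with $B=C\cap C'$, and $\sigma$ is read off from whether $v=\tail(e)$ or $v=\head(f)$. Your column argument ($\tail(e)\in\col(n)$, $\head(f)\in\col(n+1)$) just makes explicit the paper's brief remark that the direction is unique.
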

\begin{proof}
    We may assume that $V_t = \mpCPerm$ as otherwise $C$ does not change, meaning that both sides of the equation are $0$. Let $\ket{x} = \ket{a,L,C}_{\A\I}$ and $\ket{y}=\ket{a,L,C^\prime}_{\A\I}$. Then,
    \begin{align*}
        \bra{a,L,C^\prime} \prWeld \mpCPerm \prWeldOrtho \ket{a,L,C}_{\A\I} = \Id[\rho\notin C, \rho\in C^\prime] \bra{a,L,C^\prime}\mpCPerm \ket{a,L,C}_{\A\I}.
    \end{align*}
    Let $B= C\cap C^\prime$ be the base database and $h$ the queried edge. By \cref{lem:expansion_pc_operator}, we have that $C^\prime \setminus C = \{\rho\}$ and $C^\prime = B[e\to f]$. Let $(l,\beta)$ be the query in $a$ and $v=L^{-1}(l)$. By \cref{eq:plus_state_fresh_right,eq:plus_state_fresh_left},
    \begin{align*}
        (\beta, v) = \begin{cases}
            (\alpha, \tail(e)) &\text{if $\sigma = \rightarrow$,}\\
            (\alpha, \head(f)) &\text{if $\sigma=\leftarrow$.}
        \end{cases}
    \end{align*}
    As the direction is unique, every nonzero entry appears in one of $A_{\rho,t}^{\sigma}$. Therefore,
    \begin{align*}
        \sum_{\sigma \in \{\rightarrow, \leftarrow\}} \bra{a,L,C^\prime} A_{\rho,t}^{\sigma} \ket{a,L,C}_{\A\I} &=\Id[\rho\notin C, \rho\in C^\prime] \bra{a,L,C^\prime}\mpCPerm \ket{a,L,C}_{\A\I}\\
        &=\bra{a,L,C^\prime} \prWeld \mpCPerm \prWeldOrtho \ket{a,L,C}_{\A\I}.\qedhere
    \end{align*}
\end{proof}

Combining \cref{eq:decomp_canonical_weld_proj,obs:split_futures},
\begin{align}
    \prCanonicalCross \ket{\psi_T} &= \sum_{t\in [T]} \sum_{\sigma\in \{\leftarrow, \rightarrow\}}  \prCanonicalCross \prFuture \insertDir \ket{\psi_{t-1}}\label{eq:decomp_canonical_weld_directions}
\end{align}
Note that the summands are not necessarily orthogonal. Furthermore, the following lemma will be useful.
\begin{lemma}\label{lem:decomp_insertions_per_state}
    For any $t\in [T]$ and vector $\ket{\psi}$,
    \begin{align}
        \sum_{\rho} \sum_{\sigma \in \{\rightarrow, \leftarrow\}} \norm{\insertDir \ket{\psi}}^2 \leq 4\norm{\ket{\psi}}^2
    \end{align}
\end{lemma}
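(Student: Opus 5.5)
The plan is to reduce to a single block of $\mpCPerm$, use \cref{lem:expansion_pc_operator} to see that a new weld record can only be created through the insertion operator $P^{+}$, and then pay for the two ways this happens with $(a+b)^2\le 2a^2+2b^2$. This should give the bound with constant $2$, which is stronger than the claimed $4$.

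\textbf{Reduction to one block.} If $V_t\neq\mpCPerm$, every $\insertDir$ vanishes and there is nothing to prove, so assume $V_t=\mpCPerm$. The operator $\mpCPerm$, and hence every $\insertDir$, is block diagonal: it is controlled on the query $(l,\beta)$ in $\A$ and on $L$, and it acts only on $\C$. Each $\insertDir$ maps the block of $(a,L)$ into the same block. The left side therefore splits as a sum of squared norms over blocks, and so does $\norm{\ket{\psi}}^2$. It suffices to prove the bound for $\ket{\psi}$ supported on one fixed $(a,L)$, and hence one fixed $v=L^{-1}(l)$, edge $e_0=\edgeVertex{\beta}{v}$ and cell $b$. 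As in the proof of \cref{obs:split_futures}, whether $v=\tail(e_0)$ or $v=\head(e_0)$ fixes $\sigma$ uniquely, so only one direction contributes. The same fact forces the weld record to have $e=e_0$ (for $\rightarrow$) or $f=e_0$ (for $\leftarrow$). Below I take $\sigma=\rightarrow$; the other case is symmetric.

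\textbf{Identifying where records are created.} By \cref{lem:expansion_pc_operator}, with the fresh sets $\calA^{\rightarrow}$ in place of $\Omega$,
\begin{align*}
\mpCPerm=P^{+}_{e_0}\Pi^{\not\in}_{e_0}+(\Id+P^{-}_{e_0}-P^{+}_{e_0}P^{-}_{e_0})\Pi^{\in}_{e_0}.
\end{align*}
The term $\Id$ does not change $C$, and $P^{-}$ only deletes the $e_0$ record, so neither can create a new record $\rho$. Hence, with $\rho=(\alpha,e_0,f)$,
\begin{align*}
\insertDir\ket{\psi}=\prWeld P^{+}_{e_0}\Pi^{\not\in}_{e_0}\ket{\psi}-\prWeld P^{+}_{e_0}P^{-}_{e_0}\Pi^{\in}_{e_0}\prWeldOrtho\ket{\psi}.
\end{align*}
For a vector $x$ supported on databases undefined at $e_0$, the vectors $\prWeld P^{+}x$ for different $f$ have disjoint supports, and $\norm{\prWeld P^{+}x}=\norm{x}/\sqrt{\abs{\calA}}$ whenever $f\in\calA$ (and $0$ otherwise).

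\textbf{Summing over $\rho$.} Apply $(a+b)^2\le 2a^2+2b^2$ to the two terms. For the first term, $\sum_\rho\norm{\prWeld P^{+}\Pi^{\not\in}\ket{\psi}}^2\le\norm{P^{+}\Pi^{\not\in}\ket{\psi}}^2\le\norm{\Pi^{\not\in}\ket{\psi}}^2$, by disjointness of supports and because $P^{+}$ is an isometry on undefined states. For the second term,
\begin{align*}
\norm{\prWeld P^{+}P^{-}\Pi^{\in}\prWeldOrtho\ket{\psi}}^2=\frac{\norm{P^{-}\Pi^{\in}\prWeldOrtho\ket{\psi}}^2}{\abs{\calA}}\le\frac{\norm{\Pi^{\in}\ket{\psi}}^2}{\abs{\calA}},
\end{align*}
using $\norm{P^{-}}\le1$. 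Summing over the at most $\abs{\calA}$ admissible $f$ gives at most $\norm{\Pi^{\in}\ket{\psi}}^2$. Adding both terms yields
\begin{align*}
\sum_{\rho,\sigma}\norm{\insertDir\ket{\psi}}^2\le 2\norm{\Pi^{\not\in}\ket{\psi}}^2+2\norm{\Pi^{\in}\ket{\psi}}^2=2\norm{\ket{\psi}}^2\le4\norm{\ket{\psi}}^2 .
\end{align*}

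\textbf{Main obstacle.} The delicate step is the second term. There the input projector $\prWeldOrtho$ depends on $\rho$, so one cannot simply pull a single operator out of the sum over $\rho$. The plan handles this by bounding each summand by $\norm{\Pi^{\in}\ket{\psi}}^2/\abs{\calA}$ uniformly in $\rho$ and only then summing over the $\abs{\calA}$ choices of $f$. I also need to check that the identity case of $\mpCPerm$ (empty available set) contributes nothing, which holds because it never creates a record.
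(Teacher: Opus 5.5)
Your argument is correct and actually gives the constant $2$, but it uses a different decomposition from the paper. The paper reduces to a single $\pCfromPlus$ block with base $B$ and writes $\prWeldOrtho=\Id-\prWeld$, so each summand becomes $\norm{\prWeld\mpCPerm\ket{\psi}-\prWeld\mpCPerm\prWeld\ket{\psi}}^2$. After $(a+b)^2\le 2a^2+2b^2$, it bounds both resulting sums over $\rho\notin B$ by $\norm{\ket{\psi}}^2$, using only unitarity of $\mpCPerm$ and the mutual orthogonality of the projectors $\prWeld$ inside the block. That argument treats the compression operator as a black box, but it charges all of $\ket{\psi}$ twice, hence $4$.

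You instead open up $\mpCPerm$ via \cref{lem:expansion_pc_operator} and drop the $\Id$ and $P^{-}$ terms, since they cannot create $\rho$. You then split along the complementary projectors $\Pi^{\not\in}_{e_0}$ and $\Pi^{\in}_{e_0}$, so the two pieces are charged to orthogonal parts of $\ket{\psi}$. In the $\rho$-dependent second term, the factor $1/\abs{\calA}$ exactly pays for the $\abs{\calA}$ choices of $f$.

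Your constant is essentially sharp. Within one block, $\ket{\psi}\propto\ket{B}-\ket{\tilde{+}^{\rightarrow}_{e_0,b,(L,B)}}$ gives a ratio of $2-O(1/\abs{\calA})$. The gain only turns $8T^2\Gamma$ into $4T^2\Gamma$ in \cref{lem:bounding_future_bounds_exit_probability}, so it does not affect the final bound.

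One step needs tightening. The set $\calA$ is not determined by $(a,L)$ alone: $\calA^{\rightarrow}_{e_0,b,(L,B)}$ depends on the base $B$ through $\dom(B_b)\cup\im(B_b)$ and through $\supp(L,B)$. Hence the identity $\norm{\prWeld P^{+}x}=\norm{x}/\sqrt{\abs{\calA}}$, and the count of at most $\abs{\calA}$ admissible $f$, hold only for $x$ supported on a single base.

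The fix is easy. The operators $P^{\pm}$, $\Pi^{\in}_{e_0}$, $\Pi^{\not\in}_{e_0}$, $\prWeld$ and $\prWeldOrtho$ all preserve the subspaces $\Span(\{\ket{B}\}\cup\{\ket{B[e_0\to f]}: f\in E_b\setminus\im(B_b)\})$. So you can reduce one step further to a fixed $B$, exactly as the paper does, and your computation then goes through verbatim.
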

\begin{proof}
    We may assume that $V_t = \mpCPerm$ as otherwise $A_{\rho,t}^{\sigma} = 0$, making the statement trivial. By \cref{eq:pC_right_direction,eq:pC_left_direction}, we may decompose $\mpCPerm$ into orthogonal $\pCfromPlus$ blocks. By \cref{lem:operator_norm_max}, it suffices to prove the bound for $\ket{\psi}$ which is solely supported on one such block whose base database is $B$.

    As we cannot add the same record twice,
    \begin{align*}
        \sum_{\rho,\sigma} \norm{A_{\rho,t}^{\sigma} \ket{\psi}}^2 &= \sum_{\rho\notin B} \norm{\prWeld \mpCPerm \prWeldOrtho \ket{\psi}}^2 &\mbox{(By \cref{obs:split_futures})}\\
        &= \sum_{\rho \notin B}  \norm{\prWeld \mpCPerm \ket{\psi} - \prWeld \mpCPerm \prWeld \ket{\psi}}^2 \\
        &\leq 2\sum_{\rho \notin B}\left(\norm{\prWeld \mpCPerm \ket{\psi}}^2 + \norm{\prWeld \mpCPerm \prWeld \ket{\psi}}^2 \right)\\
        &\leq 2\left(\norm{\mpCPerm \ket{\psi}}^2 + \sum_{\rho \notin B} \norm{\prWeld \ket{\psi}}^2 \right)\\
        &\leq 4\norm{\ket{\psi}}^2,
    \end{align*}
    where the second term in the last inequality follows due to the fact that the projectors $\prWeld$ are orthogonal when restricted to some fixed block in $\pCfromPlus$.
\end{proof}

We will use $K_{\rho, \sigma}(I) = \mathrm{Comp}_{\rG(I) \setminus E(\rho)}(r_{\rho,\sigma})$ to denote the connected component of $\rG(I)$ which is created after the inserted edge. Next, let us define the projector that a path reaches the boundary after crossing $\rho$ in direction $\sigma$,
\begin{align*}
    \eventBoundary_{\rho,\sigma}(I) &\coloneq \Id\left[ \exists u\in V(K_{\rho, \sigma}(I)): \col(u) \in \{n/2, 3n/2\}\right],
\end{align*}
and let $\prBoundary_{\rho,\sigma}$ be the corresponding diagonal projector. Notice that for any $\rho, \sigma$,
\begin{align}
    \prCanonicalCross \preceq \prBoundary_{\rho,\sigma},\label{eq:prCanonical_subsumed_boundary}
\end{align}
as crossing through $\rho$ implies that we must have reached one of the exit columns after crossing in this direction. This leads to the main result of this section.

\begin{lemma}\label{lem:bounding_future_bounds_exit_probability}
    Suppose that there exists some $\Gamma \geq 0$ such that for all $\rho, t, \sigma$,
    \begin{align}
        \norm{\prBoundary_{\rho, \sigma} \prFuture \insertDir \ket{\psi_{t-1}}}^2 \leq \Gamma \norm{\insertDir\ket{\psi_{t-1}}}^2.\label{eq:future_bound_goal}
    \end{align}
    Then,
    \begin{align*}
        \norm{\Pi_{\calR}\ket{\psi_T}}^2 \leq 8T^2 \Gamma.
    \end{align*}
\end{lemma}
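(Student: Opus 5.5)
We chain the containments already established and then apply Cauchy–Schwarz twice, once over the insertion times and directions and once over weld records. By \cref{eq:crossing_subsumes_finding_path}, $\Pi_\calR \preceq \prCross$; both are diagonal, so $\norm{\Pi_\calR\ket{\psi_T}}^2 \le \norm{\prCross\ket{\psi_T}}^2$. By \cref{eq:decomposition_pr_cross_canonical}, the projectors $\prCanonicalCross$ for different $\rho$ are mutually orthogonal diagonal projectors summing to $\prCross$. Hence
\begin{align*}
    \norm{\prCross\ket{\psi_T}}^2 = \sum_\rho \norm{\prCanonicalCross\ket{\psi_T}}^2 .
\end{align*}
This orthogonality is what prevents the number of weld records from entering the bound.

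For a fixed $\rho$, we expand using \cref{eq:decomp_canonical_weld_directions} as a sum of $2T$ terms $\prCanonicalCross\prFuture\insertDir\ket{\psi_{t-1}}$. These terms are not orthogonal, so we bound the norm by the triangle inequality and then apply Cauchy–Schwarz, $(\sum_{i=1}^{2T} x_i)^2 \le 2T\sum_i x_i^2$:
\begin{align*}
    \norm{\prCanonicalCross\ket{\psi_T}}^2 \le 2T\sum_{t\in[T]}\sum_{\sigma}\norm{\prCanonicalCross\prFuture\insertDir\ket{\psi_{t-1}}}^2 .
\end{align*}
Next we use \cref{eq:prCanonical_subsumed_boundary}, $\prCanonicalCross \preceq \prBoundary_{\rho,\sigma}$. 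Both projectors are diagonal in the database basis, so $\prCanonicalCross = \prCanonicalCross\prBoundary_{\rho,\sigma}$. Therefore each summand is at most $\norm{\prBoundary_{\rho,\sigma}\prFuture\insertDir\ket{\psi_{t-1}}}^2$, which the hypothesis \cref{eq:future_bound_goal} bounds by $\Gamma\norm{\insertDir\ket{\psi_{t-1}}}^2$.

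Summing over $\rho$ and exchanging the order of summation gives
\begin{align*}
    \norm{\Pi_\calR\ket{\psi_T}}^2 \le 2T\Gamma\sum_{t\in[T]}\sum_\rho\sum_\sigma\norm{\insertDir\ket{\psi_{t-1}}}^2 .
\end{align*}
By \cref{lem:decomp_insertions_per_state}, the inner double sum is at most $4\norm{\ket{\psi_{t-1}}}^2 = 4$, since all intermediate states are unit vectors. Summing over $t$ yields $\norm{\Pi_\calR\ket{\psi_T}}^2 \le 2T\cdot\Gamma\cdot 4T = 8T^2\Gamma$.

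The main point to check is that the Cauchy–Schwarz step is taken only over the $2T$ pairs $(t,\sigma)$ and never over $\rho$. The sum over $\rho$ must be handled by exact orthogonality in the first step and by \cref{lem:decomp_insertions_per_state} in the last step. We also need to confirm that the diagonal projectors commute, so that inserting $\prBoundary_{\rho,\sigma}$ is legitimate.
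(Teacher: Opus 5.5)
Your proposal is correct and follows the paper's proof essentially step for step. For each fixed $\rho$ you apply Cauchy--Schwarz over the $2T$ pairs $(t,\sigma)$, then the containment $\prCanonicalCross \preceq \prBoundary_{\rho,\sigma}$ and the hypothesis; you then use the orthogonal partition $\prCross=\sum_\rho\prCanonicalCross$ together with \cref{lem:decomp_insertions_per_state} to reach $8T^2\Gamma$, and $\Pi_\calR \preceq \prCross$, applied at the start rather than at the end as in the paper, which is only a cosmetic reordering.
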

\begin{proof}
    Suppose we fix some $\rho$. Notice,
    \begin{align*}
        \norm{\prCanonicalCross \ket{\psi_T}}^2 &= \norm{\sum_{t,\sigma} \prCanonicalCross \prFuture \insertDir \ket{\psi_{t-1}}}^2 &\mbox{(By \cref{eq:decomp_canonical_weld_directions})}\\
        &\leq 2T\sum_{t,\sigma} \norm{\prCanonicalCross \prFuture \insertDir \ket{\psi_{t-1}}}^2 &\mbox{(By Cauchy-Schwarz)}\\
        &\leq 2T \sum_{t,\sigma} \norm{\prBoundary_{\rho, \sigma} \prFuture \insertDir \ket{\psi_{t-1}}}^2 &\mbox{(By \cref{eq:prCanonical_subsumed_boundary})}\\
        &\leq 2T\Gamma \sum_{t,\sigma} \norm{\insertDir\ket{\psi_{t-1}}}^2 &\mbox{(By \cref{eq:future_bound_goal})}
    \end{align*}
    Using the orthogonal partition of $\prCross$,
    \begin{align*}
        \norm{\prCross \ket{\psi_T}}^2 &= \sum_{\rho} \norm{\prCanonicalCross \ket{\psi_T}}^2 &\mbox{(By \cref{eq:decomposition_pr_cross_canonical})}\\
        &\leq 2T\Gamma \sum_{t\in [T]}\sum_{\rho, \sigma} \norm{\insertDir\ket{\psi_{t-1}}}^2 &\mbox{(By \cref{lem:decomp_insertions_per_state})}\\
        &\leq 8T\Gamma \sum_{t\in [T]} \norm{\ket{\psi_{t-1}}}^2 = 8T^2\Gamma.
    \end{align*}
    The final statement follows due to \cref{eq:crossing_subsumes_finding_path}.
\end{proof}

Hence the remainder of the proof is concerned with establishing the constant $\Gamma$ in \cref{eq:future_bound_goal}.

\subsection{Fixed seed escaping probability}\label{sec:escaping_probability_bound}

Unless stated otherwise, assume that we are working with some fixed weld crossing $\rho = (\alpha, e,f)$ and direction $\sigma \in \{\rightarrow, \leftarrow\}$. Remember that these define the root $r \coloneq r_{\rho, \sigma}$.

As a reminder, we assume that $n\geq16$. The desired escape bound is trivial unless
\begin{align}
    \frac{2^{14}(T+1)^4}{\sqrt{N}} < 1. \label{eq:assumption_number_queries_bound}
\end{align}
We therefore assume this inequality throughout the subsection. This implies $T\leq\sqrt{N}/100$ and $\delta<1/2$, where $\delta$ is defined in \cref{lem:goodness_probability}.

\subsubsection{Fixed color paths do not escape}

For some $u\in \middleNodes$, $e= \edgeVertex{\alpha}{u}$ and $b=\edgeIndic(e)$, we let $\probAssignEdge{\alpha}{u,v}$ be the probability that an edge is randomly assigned to go between $u$ and $v$,
\begin{align}
    \probAssignEdge{\alpha}{u,v} &\coloneq \begin{cases}
        \tfrac{1}{\abs{E_b}} &\text{if $u=\tail(e)$ and $v\in \{\head(g): g\in E_b\}$,}\\
        \tfrac{1}{\abs{E_b}} &\text{if $u=\head(e)$ and $v\in \{\tail(g): g\in E_b\}$,}\\
        0 &\text{otherwise.}
    \end{cases}\label{eq:prob_assign_edge_distr}
\end{align}

Notice that this is a valid probability distribution over all possible vertices as $\sum_{v\in [M] \cup \{\bot\}} \probAssignEdge{\alpha}{u,v} = 1$. Next, we discuss color sequences $\mathbf{\alpha} = (\alpha_1, \alpha_2,\dots \alpha_h)$. Intuitively, we will think of them as paths from the root $r$. We call a color sequence \emph{valid} if it does not immediately cross $\rho$ and never retraces an edge. We use $\emptySeq$ to denote the empty color sequence. Let $\colSeqSet_h$ denote the set of valid color sequences up to length $h$,
\begin{align*}
    \colSeqSet_h &\coloneq \{\emptySeq\}\cup\{\mathbf{\alpha} = (\alpha_1,\dots,\alpha_k): 1\leq k\leq h, \alpha_1 \neq \alpha\text{ and } \forall i\in [k-1], \alpha_{i} \neq \alpha_{i+1} \}.
\end{align*}
Let $H \subseteq \colSeqSet_h$ be some set of color sequences.
In general, we will use valid color sequences to describe the general structure of the tree whose origin is $r$. We will use $\colorVertMap: H \to V_n \cup \{\bot\}$ to be some arbitrary mapping from color sequences to structural nodes. This will be useful to define the probability distribution $\mu_H(\colorVertMap)$ over assignments,
\begin{align}
    \mu_{H}(\colorVertMap) &\coloneq \Id[\colorVertMap(\emptySeq) = r_{\rho,\sigma}] \prod_{\tau \beta \in H} \probAssignEdge{\beta}{\colorVertMap(\tau),\colorVertMap(\tau \beta)},\label{eq:probability_distribution_paths}
\end{align}
where for $\tau \in \colSeqSet$ and $\beta \in \colorSet$, $\tau\beta$ denotes the concatenation of $\beta$ with $\tau$. We call a color sequence $\tau \in H$ a leaf if there does not exist another longer color sequence in $H$ which contains it. For a color sequence of length $h$ and $i\in [h]$, we let $\tau_{\leq i}$ be the first $i$ colors of the sequence. Notice that $\sum_{\colorVertMap} \mu_H(\colorVertMap) = 1$, meaning that this is a valid probability distribution.

We assume that $H$ must contain $\emptySeq$ and all prefixes of its sequences in order to represent a tree which starts from $r_{\rho,\sigma}$. Furthermore, after a path has left $\middleNodes$, the distribution on those nodes is $\probAssignEdge{\beta}{u,v} = \Id[v= \bot]$.

\begin{lemma}[Fixed color sequences do not escape]\label{lem:fixed_sequences_do_not_escape}
    Fix some \emph{valid} color sequence $\mathbf{\tau} = (\tau_1,\dots, \tau_h)\in H$. Then,
    \begin{align*}
        \Pr_{\colorVertMap \sim \mu_{H}(\colorVertMap)}[\exists i \in [h] \text{ s.t. } \colorVertMap(\mathbf{\tau}_{\leq i}) \notin \middleNodes] \leq \frac{8h}{\sqrt{N}}.
    \end{align*}
\end{lemma}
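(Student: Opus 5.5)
The plan is to exploit the Markov structure of $\mu_H$ along the single sequence $\mathbf{\tau}$. Since $\mu_H$ is a product of the transition kernels $\probAssignEdge{\beta}{\cdot,\cdot}$ over the tree $H$, the marginal law of $(\colorVertMap(\mathbf{\tau}_{\leq 0}),\dots,\colorVertMap(\mathbf{\tau}_{\leq h}))$ is the Markov chain that starts at $r=r_{\rho,\sigma}$ (which lies in column $n$ or $n+1$) and at step $i$ moves from $u=\colorVertMap(\mathbf{\tau}_{\leq i-1})$ according to $\probAssignEdge{\tau_i}{u,\cdot}$. The branches of $H$ off this sequence marginalize out, because each kernel sums to one.

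First I will classify the moves of this chain while it stays in $\middleNodes$. A weld-colored move keeps it in columns $\{n,n+1\}$. A move with $\tau_i\neq\upcolor(u)$ is a \emph{down} move toward the weld. It lands on a uniformly random element of $\colClass{c\pm1,\tau_i}$, whose parent color is $\tau_i$. Validity of the sequence gives $\tau_{i+1}\neq\tau_i$, so the next move is again down, or a weld move. Hence once the chain goes down, it keeps going down until it returns to the weld columns, and down moves never leave $\middleNodes$.

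The only way to escape is therefore a run of consecutive \emph{up} moves that starts in a weld column and travels $n/2$ columns to reach column $n/2$ or $3n/2$. I will call such a run a \emph{climb} starting at index $i$. Escaping requires some climb of length at least about $n/2$, and there are at most $h$ starting indices. By a union bound, it suffices to show that each fixed starting index $i$ yields a climb of $n/2$ ups with probability at most $8/\sqrt{N}$.

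For a single climb, I will compute the up-step probability exactly. Suppose we move up with color $\beta$ from column $c$ to column $c'=c\mp1$. The kernel picks $\head(g)$ for $g$ uniform in $E_{(\mathrm{tr},c,\beta)}$. Such a head is a uniform vertex among those in column $c'$ having a $\beta$-colored child, which are exactly the vertices with $\upcolor\neq\beta$. Continuing up requires $\upcolor(v)=\tau_{i+1}$, a fixed color $\gamma\neq\beta$. This has probability
\begin{align*}
    \frac{\colClassSize{c',\gamma}}{\colClassSize{c',\gamma}+\colClassSize{c',\gamma'}}\leq \frac{\ceil{2^{d}/3}}{\floor{2^{d}/3}+\ceil{2^{d}/3}}\leq \frac12\left(1+\frac{4}{2^{d}}\right),
\end{align*}
by \cref{lem:colors_per_column}, where $d=\min(c',2n+1-c')\geq n/2$ is the depth.

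The first up move out of the weld column is only charged probability $1$, and I will also handle the boundary column crudely. Multiplying the remaining roughly $n/2-1$ conditional probabilities gives at most $2^{-(n/2-1)}\prod_d(1+2^{2-d})\leq 4e^{o(1)}/\sqrt{N}$, which is below $8/\sqrt{N}$ for $n\geq16$.

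The main obstacle I expect is bookkeeping rather than estimation. I must make sure the conditional probability at each step is computed given the entire past of the chain, which the Markov property of the kernels provides. I must also pin down the exact number of forced up-steps and the off-by-one at the boundary columns $n/2$ and $3n/2$, so that the constant $8$ comes out. In particular, I need to check whether the chain already counts as escaped upon landing in column $n/2$, since that column is not in $\middleNodes$, and whether the last step's kernel is the boundary-handled one, which sends it to $\bot$ after exit.
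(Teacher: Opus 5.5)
Your proposal is correct and follows essentially the same route as the paper: the Markov chain along $\mathbf{\tau}$, validity forcing every upward run to start in a weld column, an upward-step probability of roughly $\frac12$ from \cref{lem:colors_per_column}, and a union bound over at most $h$ starting indices. Your ratio $\colClassSize{c',\gamma}/(\colClassSize{c',\gamma}+\colClassSize{c',\gamma'})$ is exactly the paper's $\colClassSize{k-1,\tau_{i+1}}/\colClassSize{k,\tau_i}$, and the off-by-one issues you flag are harmless given the slack: charging the first up-move $1$ and requiring $n/2$ ups gives about $2/\sqrt{N}$, and the paper itself uses only $n/2-2$ factors of $\eta$.
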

\begin{proof}
    Define the variable $X_0 = r_{\rho,\sigma}$ and $X_i = \colorVertMap(\tau_{\leq i})$. By \cref{eq:prob_assign_edge_distr,eq:probability_distribution_paths}, we have that,
    \begin{align*}
        \Pr[X_i = v|X_0, \dots, X_{i-1}] = \probAssignEdge{\tau_i}{X_{i-1}, v}.
    \end{align*}
    We say that a step goes \emph{upward} if it goes towards the root of $T_1$ or $T_2$ and downward otherwise. Notice that,
    \begin{align*}
        \upcolor(X_i) = \tau_i &\implies \text{step $i+1$ is not upward,}
    \end{align*}
    as if we just crossed the upcolor of the vertex $X_i$, we must have just gone downward. Therefore, any valid color sequence next either also goes downwards or crosses the weld. Therefore, an upward run starts either from $r_{\rho,\sigma}$ or after another crossing of the weld. Thus in order to reach $\midBoundNodes$, we must make at least $n/2-1$ steps upwards.

    Fix some path $X_0,\dots X_{i-1}$ with nonzero probability where step $i$ goes upward from depth $k$ and $\midBoundNodes$ has not been reached. By the definition of $\mu_H(\colorVertMap)$,
    \begin{align*}
        \Pr[\text{step $i+1$ is upward}|X_0,\dots X_{i-1}] &\leq \begin{cases}
            \frac{\colClassSize{k-1, \tau_{i+1}}}{\colClassSize{k,\tau_i}} &\text{in $T_1$,}\\
            \frac{\colClassSize{2n+2 -k, \tau_{i+1}}}{\colClassSize{2n+1 -k,\tau_i}} &\text{in $T_2$,}
        \end{cases}\\
        &\leq \frac{2^{k-1}/3 + 1}{2^{k}/3 - 1} &\mbox{(By \cref{lem:colors_per_column})}\\
        &\leq \frac{1}{2} + \frac{6}{\sqrt{N}} \coloneq \eta. &\mbox{(By \cref{eq:assumption_number_queries_bound})}
    \end{align*}
    Let $l=\tfrac{n}{2}-1$ and assume that $h\geq 1$ as otherwise it cannot reach the boundary.
    Let $R_{i,m}$ be the event that no vertex in $\midBoundNodes$ was reached before step $i$, $X_{i-1}$ is a leaf and that steps $i,\dots i+m-1$ are upwards. Therefore,
    \begin{align*}
        \{\exists j\in [h]: X_j \notin \middleNodes\} \subseteq \bigcup_{i\in [h-l+1]} R_{i,l}.
    \end{align*}
    Notice that for $m\in [l-1]$, $\Pr[R_{i,m+1}] \leq \eta \Pr[R_{i,m}]$. Therefore,
    \begin{align*}
        \Pr[R_{i,l}] \leq \eta^{l-1} \leq \frac{8}{\sqrt{N}}.
    \end{align*}
    The result follows by union-bounding over $i\in[h-l+1]$.
\end{proof}

In addition to tracking how the graph develops from $r$, we also need to track how the rest of the database develops. In our approach, we will view the tree from the root simply through the general structure of color assignments. On the other hand, the rest of it, which is connected component to the seed $s_{\rho,\sigma}$ and other disconnected components, are going to be considered using some database instance (which can also develop during the algorithm).

First, we define a set on mappings $\mapLabels$ which specify whether the label is inside the connected component of $r_{\rho,\sigma}$ or outside of it,
\begin{align*}
    \mapLabels &\coloneq \left(([0,M] \setminus \{r_{\rho,\sigma}\}) \times \{o\}\right) \cup (H \times \{i\}).
\end{align*}
Let $P: \mapLabels \rightharpoonup [0,M]$ be some injective map which represents the labels of these points. Without loss of generality we always assume that $P(0,o) = 0$, and count this pair in $\abs{P}$. The tuples in $P$ with $o$ represent points outside of the connected component of $r$ and those with $i$ are connected to $r$ via some path. The database $\outsideColorDb$ is an instance of an edge database such that $\{\theta \in \outsideColorDb: r_{\rho,\sigma} \in E(\theta)\} = \emptyset$ and $\outsideColorDb$ forms a valid instance which can be created using $\mcpwt$. Also, we require that $\outsideColorDb\cup\{\rho\}$ is an injective database.
We will group these variables together as follows,
\begin{align*}
    \reprTuple &\coloneq \{(H,P, \outsideColorDb): \abs{P} + \abs{\outsideColorDb}+ \abs{H} \leq T+1\}.
\end{align*}
We condition on the sizes as each variable represents a part of the database and by \cref{lem:cpwt_bounded_growth}, after $T$ steps the size of the domain is at most $T$. The additional $1$ is to account for the entrance vertex $0$ being assigned a priori. Under some fixed $D\in \reprTuple$, we use $\outsideLabelDb: [M] \rightharpoonup [M]$ to be the injective label database defined as $\outsideLabelDb(u) = P(u,o)$.
Furthermore, we will define the \emph{neighborhoods} of each structural vertex $v \in [M]$ as,
\begin{align*}
    &\nbhood(v) = \{v\} \cup \{u: \{u,v\}\in E_n\}, &\nbhood(S) = \bigcup_{s\in S}\nbhood(s).
\end{align*}
For the undefined value $\bot$, we have $\nbhood(\bot) = \emptyset$. Notice that as the degree of each node is at most $3$, we have that for any set $S$, $\abs{\nbhood(S)} \leq 4\abs{S}$. Suppose that we have some fixed $D\in \reprTuple$ and map $\colorVertMap$ such that $\mu_{H}(\colorVertMap) > 0$. We say that $\colorVertMap$ is \emph{good}, which we indicate using $\goodMap_{D}(\colorVertMap)$ as,
\begin{align}
    \goodMap_D(\colorVertMap) = \Id\left[ \begin{array}{l}
        \colorVertMap(H) \subseteq \middleNodes\\
        \nbhood(\colorVertMap(\tau)) \cap \nbhood(\colorVertMap(\nu)) = \emptyset \quad(\tau \neq \nu)\\
        \nbhood(\colorVertMap(\tau)) \cap \supp(\outsideLabelDb, \outsideColorDb \cup \{\rho\}) = \emptyset \quad (\tau\in H\setminus \{\emptySeq\})\\
        \mu_{H}(\colorVertMap) > 0
    \end{array} \right]\label{eq:goodness_definition}
\end{align}

Intuitively, the assignment is good if it does not escape the middle portion of the graph, all neighborhoods are disconnected and it can occur under $\mu_{H}(\colorVertMap)$. We find that most assignments are good.

\begin{lemma}[Goodness probability]\label{lem:goodness_probability}
    For every $D\in \reprTuple$,
    \begin{align*}
        \Pr_{\mu_H}[\goodMap_D = 0] \leq \delta \coloneq \frac{256(T+1)^2}{\sqrt{N}}.
    \end{align*}
\end{lemma}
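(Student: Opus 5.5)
We bound the probability that $\goodMap_D(\colorVertMap)=0$ by a union bound over the three nontrivial conditions in \cref{eq:goodness_definition}. The fourth condition, $\mu_H(\colorVertMap)>0$, holds almost surely. Throughout we use $\abs{H}\leq T+1$ and $\abs{P}+\abs{\outsideColorDb}\leq T+1$. From these, $\abs{\supp(\outsideLabelDb,\outsideColorDb\cup\{\rho\})}\leq \abs{\outsideLabelDb}+4(\abs{\outsideColorDb}+1)\leq 4(T+2)$, exactly as in \cref{obs:bounded_growth_mcpwt}. We also use repeatedly that each step of $\mu_H$ is uniform over $\{\head(g):g\in E_b\}$ or $\{\tail(g):g\in E_b\}$. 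These are sets of $\abs{E_b}\geq\sqrt N/4$ distinct vertices by \cref{obs:min_num_nodes_per_perm}, so any fixed vertex is hit with probability at most $4/\sqrt N$, conditioned on the history.

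\emph{Escape.} Applying \cref{lem:fixed_sequences_do_not_escape} to each leaf $\tau\in H$ covers all nodes of $H$, since every node is a prefix of some leaf. The leaf $\tau$ has length at most $\abs{H}\leq T+1$. Summing over at most $T+1$ leaves gives $\Pr[\colorVertMap(H)\not\subseteq\middleNodes]\leq 8(T+1)^2/\sqrt N$.

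\emph{Collisions with the outside support.} Fix $\tau\beta\in H\setminus\{\emptySeq\}$ and condition on $\colorVertMap(\tau)$. The event $\nbhood(\colorVertMap(\tau\beta))\cap S\neq\emptyset$, with $S$ the outside support, forces $\colorVertMap(\tau\beta)\in\nbhood(S)$, a set of at most $4\abs{S}\leq16(T+2)$ vertices. Each has probability at most $4/\sqrt N$, so summing over $\tau\beta$ gives $O((T+1)^2/\sqrt N)$.

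\emph{Pairwise disjoint neighborhoods.} This is the main obstacle, since nodes of $H$ are not independent and adjacent tree nodes $\tau,\tau\beta$ are themselves neighbors, so $\nbhood(\colorVertMap(\tau))\cap\nbhood(\colorVertMap(\tau\beta))\ni\colorVertMap(\tau)$ always. I would therefore read the condition as intended for non-adjacent pairs, or check that the paper's convention excludes tree-adjacent pairs. For the remaining pairs I order $H$ by a BFS/prefix order. For $\nu$ appearing after $\tau$, let $\nu=\nu'\gamma$, and condition on all assignments of sequences preceding $\nu$. Then $\colorVertMap(\nu)$ is a fresh uniform draw from a set of size at least $\sqrt N/4$. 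The bad event requires $\colorVertMap(\nu)\in\nbhood(\nbhood(\colorVertMap(\tau)))$, a set of size at most $16$, excluding the forced parent relation. So its probability is at most $64/\sqrt N$. Summing over at most $(T+1)^2/2$ ordered pairs gives $32(T+1)^2/\sqrt N$.

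The extra case where $\nu$ is the sibling or grandchild structure forced by the tree step is also covered. Since the step from $\colorVertMap(\nu')$ along color $\gamma$ lands on a uniformly random endpoint among at least $\sqrt N/4$, it lands on the structurally nearby vertices with probability $O(1/\sqrt N)$. The case $\tau=\emptySeq$, i.e.\ the root $r$, is handled identically.

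Adding the three contributions, together with slack for constants and $T+2\leq 2(T+1)$, stays below $256(T+1)^2/\sqrt N=\delta$.
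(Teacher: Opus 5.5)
Your proof follows the paper's route. The escape event is handled by \cref{lem:fixed_sequences_do_not_escape} summed over $H$. Every other condition is handled by exposing $H$ in prefix order: each new image is uniform over at least $\sqrt{N}/4$ distinct endpoints, so it lands in a small fixed forbidden set with probability $O(1/\sqrt{N})$ per forbidden vertex. The paper merges your second and third union bounds into a single event per new node, namely $\colorVertMap(\tau_i)\in\nbhood(S_i)$ with $S_i=S\cup\bigcup_{j<i}\nbhood(\colorVertMap(\tau_j))$, at a cost of $128(T+1)/\sqrt{N}$ each. Your per-pair bookkeeping is equivalent.

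The ``main obstacle'' you raise rests on a misreading of the model, and you should drop the proposed reinterpretation. Restricting the condition to non-adjacent pairs would change the lemma, whose goodness condition ranges over all $\tau\neq\nu$.
\begin{itemize}
\item $\nbhood$ is taken in the unpermuted structural graph $E_n$.
\item By contrast, $\colorVertMap(\tau\beta)\sim\probAssignEdge{\beta}{\colorVertMap(\tau),\cdot}$ is an endpoint of a \emph{uniformly random} edge $g\in E_b$.
\item So $\colorVertMap(\tau\beta)$ is the structural $\beta$-neighbor of $\colorVertMap(\tau)$ only when $g=\edgeVertex{\beta}{\colorVertMap(\tau)}$, which has probability $1/\abs{E_b}$.
\end{itemize}
Hence $\colorVertMap(\tau)\in\nbhood(\colorVertMap(\tau\beta))$ is not forced. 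The parent--child pair is bounded exactly like every other pair: the bad event is $\colorVertMap(\tau\beta)\in\nbhood(\nbhood(\colorVertMap(\tau)))$, which has probability at most $64/\sqrt{N}$.

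Your ``extra case'' sentence effectively says this, but it contradicts your earlier claim that the intersection always contains $\colorVertMap(\tau)$. Remove that claim and include parent--child pairs in the pairwise sum. Your constants, $8+64+32$ times $(T+1)^2/\sqrt{N}$, then stay below the stated $\delta$ and the proof is complete.
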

\begin{proof}
    Fix some $D = (H,P, \outsideColorDb)$ and let $S = \supp(\outsideLabelDb, \outsideColorDb \cup \{\rho\})$. Notice that by the growth bounds, $\abs{S} \leq 4(T+1)$. Let $A = \{\colorVertMap: \colorVertMap(H)\not \subseteq \middleNodes\}$. Then by union-bounding over at most $T$ different paths $\tau\in H$,
    \begin{align}
        \Pr_{\mu_H}[A] &\leq \sum_{\tau \in H} \Pr[\exists i \leq \abs{\tau}: \colorVertMap(\tau_{\leq i}) \notin \middleNodes]\nonumber\\
        &\leq \frac{8}{\sqrt{N}}\sum_{\tau\in H} \abs{\tau} &\mbox{(By \cref{lem:fixed_sequences_do_not_escape})}\nonumber\\
        &\leq \frac{8(T+1)^2}{\sqrt{N}}. \label{eq:bound_prob_any_path_escapes}
    \end{align}
    Suppose we order the positions $\tau_0 = \emptySeq,\dots \tau_{\abs{H}-1} \in H$ such that if $\tau_i$ is in $\tau_j$, then $i\leq j$. Let $H_i = \{\tau_0,\dots,\tau_i\}$ and define,
    \begin{align*}
        S_i \coloneq S \cup \bigcup_{j\in [0,i-1]} \nbhood(\colorVertMap(\tau_j)).
    \end{align*}
    Notice that $\abs{S_i} \leq \abs{S} + 4i \leq 8(T+1)$ and $\abs{\nbhood(S_i)} \leq 32(T+1)$. Let $\tau_i = \nu\beta$. By \cref{eq:probability_distribution_paths},
    \begin{align}
        \Pr[\colorVertMap(\tau_i)=v| \colorVertMap|_{H_{i-1}}] = \probAssignEdge{\beta}{\colorVertMap(\nu),v} \label{eq:writeup_conditional_placement}
    \end{align}
    If all previous images lie in $\middleNodes$, the next image is uniform on the appropriate endpoints of $E_b$ where $b=\edgeIndic(\edgeVertex{\beta}{\colorVertMap(\nu)})$. Therefore,
    \begin{align}
        \Pr[\nbhood(\colorVertMap(\tau_i)) \cap S_i \neq \emptyset| \colorVertMap|_{H_{i-1}}] &\leq \frac{\abs{\nbhood(S_i)}}{\abs{E_b}} &\mbox{(By \cref{eq:writeup_conditional_placement})}\nonumber\\
        &\leq \frac{128(T+1)}{\sqrt{N}}. &\mbox{(By \cref{obs:min_num_nodes_per_perm})} \label{eq:bound_bad_nbh_event}
    \end{align}
    Let $B_i$ be the event that all previous images lie in $\middleNodes$ and the neighborhood intersection just bounded is nonempty. Therefore,
    \begin{align*}
        \Pr[\goodMap_D = 0] &\leq \Pr[A] + \sum_{i\in [\abs{H}-1]} \Pr[B_i] &&\mbox{(By \cref{eq:goodness_definition})}\\
        &\leq \frac{256(T+1)^2}{\sqrt{N}} = \delta &&\mbox{(By \cref{eq:bound_bad_nbh_event,eq:bound_prob_any_path_escapes})}\qedhere
    \end{align*}
\end{proof}

Next, let us describe how the pair $(D,\colorVertMap)$ can be transformed into an instance of the database using the map $\shapeToDb:\mapLabels \rightharpoonup [0,M]$. For some $\goodMap_D(\colorVertMap)=1$, we define it as,
\begin{align*}
    \shapeToDb(x) &= \begin{cases}
        u &\text{if $x=(u,o)$,}\\
        \colorVertMap(\tau) &\text{if $x=(\tau,i)$,}\\
        \bot &\text{otherwise.}
    \end{cases}
\end{align*}
Furthermore, for $\tau\beta\in H$, let,
\begin{align*}
    \theta_{\tau\beta}(\colorVertMap) &\coloneq \begin{cases}
        (\edgeVertex{\beta}{\colorVertMap(\tau)}, \edgeVertex{\beta}{\colorVertMap(\tau\beta)}) &\text{if $\colorVertMap(\tau) = \tail(\edgeVertex{\beta}{\colorVertMap(\tau)})$}\\
        (\edgeVertex{\beta}{\colorVertMap(\tau\beta)}, \edgeVertex{\beta}{\colorVertMap(\tau)}) &\text{if $\colorVertMap(\tau) = \head(\edgeVertex{\beta}{\colorVertMap(\tau)})$}
    \end{cases}
\end{align*}
Using this, we define the database $I_{\colorVertMap} = (L_{\colorVertMap}, C_{\colorVertMap})$ as,
\begin{align}
    L_{\colorVertMap}(\shapeToDb(x)) &\coloneq P(x), \quad(x\in \dom(P))\label{eq:writeup_database_definition_label} \\
    C_{\colorVertMap} &\coloneq \outsideColorDb \cup \{\rho\} \cup \{ \theta_{\tau\beta}(\colorVertMap): \tau\beta \in H \}. \label{eq:writeup_database_definition_color}
\end{align}
By definition, we immediately get the following.
\begin{lemma}\label{lem:valid_representation_transformation}
    For any instance $(D,\colorVertMap) \rightarrow I_{\colorVertMap}$ mapped as described above, $I_{\colorVertMap}$ is a valid injective database, its logical graph it represents is a forest, the mapping itself is injective and we have that,
    \begin{align*}
        \abs{L_{\colorVertMap}} + \abs{C_{\colorVertMap}} + 1 &= \abs{P} + \abs{\outsideColorDb}+ \abs{H},\\
        K_{\rho, \sigma}(I_{\colorVertMap}) &= (\colorVertMap(H), \{ \{\colorVertMap(\tau), \colorVertMap(\tau \beta): \tau \beta \in H\}  \}).
    \end{align*}
\end{lemma}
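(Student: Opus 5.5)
The plan is to verify each claim directly from the goodness conditions in \cref{eq:goodness_definition}, handling the label part, the edge part, the forest property, the size identity and the component description in that order.

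\emph{Label database.} For $L_{\colorVertMap}$ we need $\shapeToDb$ to be injective on $\dom(P)$; then $L_{\colorVertMap}$ is well defined, and its injectivity follows from that of $P$. Inside points are separated because the disjoint-neighborhood condition gives $\colorVertMap(\tau)\neq\colorVertMap(\nu)$ for $\tau\neq\nu$. An inside point $(\tau,i)$ with $\tau\neq\emptySeq$ cannot collide with an outside point $(u,o)\in\dom(P)$: either $u\in\vPerm$, so $u\in\supp(\outsideLabelDb,\cdot)$, which is disjoint from $\nbhood(\colorVertMap(\tau))$; or $u\notin\vPerm$, while $\colorVertMap(\tau)\in\middleNodes\subseteq\vPerm$. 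For $\tau=\emptySeq$ we have $\colorVertMap(\emptySeq)=r$, and $(r,o)\notin\mapLabels$ by definition. Injectivity of the map $(D,\colorVertMap)\mapsto I_{\colorVertMap}$ will then follow by reading $D$ back from $I_{\colorVertMap}$, using \cref{lem:no_cycles_in_graph}-style uniqueness of the component. Concretely, $\outsideColorDb$ is $C_{\colorVertMap}$ minus $\rho$ and the edges of the component of $r$; $H$ is the set of color sequences of simple paths from $r$ in that component; and $P$ is recovered by relabelling.

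\emph{Edge database.} For $C_{\colorVertMap}$ we need two things. First, each $\theta_{\tau\beta}$ is a legal pair in the same cell $E_b$, which holds because $\mu_H(\colorVertMap)>0$ forces $\probAssignEdge{\beta}{\cdot,\cdot}>0$. Second, the new pairs have domains and images disjoint from each other and from $\outsideColorDb\cup\{\rho\}$. Every endpoint edge of $\theta_{\tau\beta}$ has the form $\edgeVertex{\beta}{\colorVertMap(\tau)}$ or $\edgeVertex{\beta}{\colorVertMap(\tau\beta)}$. Two distinct sequences share a vertex only if one is the parent of the other, and they then use distinct colors, since valid sequences never repeat consecutive colors. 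So collisions among the new pairs reduce to collisions of neighborhoods, which are excluded. Collisions with the outside records are excluded by the support condition, except at the root $r$ itself. The root is handled by the hypothesis that no record of $\outsideColorDb$ touches $r$, together with $\alpha_1\neq\alpha$, so the first step does not reuse $\rho$'s edge.

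\emph{Forest, sizes and component.} The logical edges $\{\colorVertMap(\tau),\colorVertMap(\tau\beta)\}$ form a tree rooted at $r$, because $H$ is prefix-closed and the vertices are distinct. This tree meets the rest of $\rG$ only at $r$, via $\rho$, since the other vertices avoid $\supp(\outsideColorDb\cup\{\rho\})$. Since $\outsideColorDb\cup\{\rho\}$ is assumed realizable by \mcpwt, it is a forest by \cref{lem:no_cycles_in_graph}, and gluing a tree at one vertex preserves acyclicity. Removing $E(\rho)$ then leaves exactly this tree as the component of $r$, which gives the formula for $K_{\rho,\sigma}$. For the size identity, count records. $\abs{L_{\colorVertMap}}=\abs{P}-1$, because $(0,o)$ is counted in $\abs{P}$ but not in $\dom(L)$; note that $(\emptySeq,i)$ is labelled while the root sits in $P$ through $H$. $\abs{C_{\colorVertMap}}=\abs{\outsideColorDb}+1+(\abs{H}-1)$, since each non-empty $\tau\beta$ contributes one record and the extra $1$ is $\rho$.

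\emph{Expected obstacle.} The only delicate point is bookkeeping at the root $r$ and at the seed $s$. The support condition is imposed only for $\tau\neq\emptySeq$, so I expect the care needed there to be in checking that children of $r$ do not hit $s$ or edges of $\outsideColorDb$ adjacent to $\rho$. This follows because $s\in\supp(\cdot\cup\{\rho\})$ and $\nbhood(\colorVertMap(\tau))$ for the children excludes $s$.
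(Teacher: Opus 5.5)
Your proposal is correct and follows essentially the same route as the paper's proof. Both show injectivity of $\shapeToDb$ on $\dom(P)$ from the goodness conditions, injectivity of each $C_{\colorVertMap,b}$ because the $H$-records use free slots, acyclicity via \cref{lem:no_cycles_in_graph}, and a direct record count, and both read $(D,\colorVertMap)$ back from $I_{\colorVertMap}$. Your version is more careful than the paper's at the root $r$ and at the point that $\supp$ only contains $\dom(\outsideLabelDb)\cap\vPerm$.

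The one slip is that the paper assumes only $\outsideColorDb$, not $\outsideColorDb\cup\{\rho\}$, is realizable by \mcpwt. Acyclicity of $\rG(\outsideColorDb\cup\{\rho\})$ should instead come from the hypothesis that no record of $\outsideColorDb$ has $r$ as a logical endpoint, which makes $E(\rho)$ a pendant edge at $r$.
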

\begin{proof}
    Fix some good map $\colorVertMap$. By \cref{eq:goodness_definition}, $\colorVertMap(\tau) = \colorVertMap(\nu)$ implies that $\tau = \nu$. Furthermore by \cref{eq:goodness_definition}, $ \colorVertMap(H\setminus\{\emptySeq\})\cap\dom(\outsideLabelDb) = \emptyset$ and $\colorVertMap(\emptySeq) = r_{\rho, \sigma} \notin \dom(\outsideLabelDb)$, so,
    \begin{align*}
        \colorVertMap(H)\cap\dom(\outsideLabelDb)=\emptyset.
    \end{align*}
    Therefore $\shapeToDb$ is injective on $\dom(P)$, meaning that $L_\colorVertMap$ is also a partial injection such that $\abs{L_\colorVertMap} = \abs{P}-1$. Similarly, all $C_{\colorVertMap, b}$ are partial injections as $\outsideColorDb\cup\{\rho\}$ is a partial injection by definition and goodness with the validity of color assignments means that the records in $H$ use available slots.

    Furthermore, as $H$ is a forest by definition and $\outsideColorDb$ is only attainable by \mcpwt, by \cref{lem:no_cycles_in_graph} $\rG(I_{\colorVertMap})$ is a forest. Additionally, $K_{\rho,\sigma}(I_{\colorVertMap}) = (\colorVertMap(H), \{\{ \colorVertMap(\tau), \colorVertMap(\tau\beta)\}: \tau\beta\in H \})$, meaning that we retain the same structure.

    All the records in \cref{eq:writeup_database_definition_label,eq:writeup_database_definition_color} are distinct, so,
    \begin{align*}
        \abs{C_\colorVertMap} &= \abs{\outsideColorDb} + 1 + (\abs{H} - 1)\\
        \abs{L_\colorVertMap} + \abs{C_\colorVertMap} + 1 &= \abs{P} + \abs{\outsideColorDb} + \abs{H}.
    \end{align*}
    Finally, we may see that these are bijections, meaning that fixing $(D,\colorVertMap)$ is the same as fixing the database $I$.
\end{proof}

\subsubsection{Arbitrary modification}

Next, we define each canonical good vector and its normalization factor. For each $D\in \reprTuple$, we have,
\begin{align*}
    \canVec{D} &\coloneq \sum_{\colorVertMap: \goodMap_D(\colorVertMap) = 1} \sqrt{\mu_H(\colorVertMap)} \ket{L_{\colorVertMap}, C_{\colorVertMap}}_{\Lab \C},\\
    p_D &\coloneq \sum_{\colorVertMap} \mu_H(\colorVertMap) \goodMap_D(\colorVertMap).
\end{align*}
Notice that by \cref{lem:goodness_probability,lem:valid_representation_transformation},
\begin{align}
    &\braket{g_D}{g_{D^\prime}} = \Id[D = D^\prime]p_D, &1-\delta \leq p_D \leq 1.\label{eq:kronecker_database_instances}
\end{align}
For $j\in [0,T+1]$, we define the uniform projector $\prUnif{j}$ as follows,
\begin{align*}
    \prUnif{j} &\coloneq \sum_{D\in \reprTuple: \abs{D} \leq j} \frac{\ketbra{g_D}}{p_D},
\end{align*}
where $\abs{D} = \abs{P} + \abs{\outsideColorDb}+ \abs{H}$. Notice that $\prBoundary_{\rho,\sigma} \prUnif{j} = 0$ as by the definition of goodness, $K_{\rho, \sigma}(I)$ has all of its vertices in $\middleNodes$.

We immediately have the following.
\begin{proposition}\label{prop:unif_proj_algo_xor}
    Letting $V$ be either an algorithm unitary $A_i$ or an XOR operator $\xorOP$,
    \begin{align*}
        (\Id - \prUnif{j}) (\prWeld V \prWeld) \prUnif{j} = 0.
    \end{align*}
\end{proposition}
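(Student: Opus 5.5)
Since $\prUnif{j}$ is a sum of rank-one terms $\ketbra{g_D}/p_D$ with pairwise orthogonal $\canVec{D}$ by \cref{eq:kronecker_database_instances}, it suffices to show that for every $D\in\reprTuple$ with $\abs{D}\leq j$, the vector $\prWeld V\prWeld\,(\ket{a}_{\A}\otimes\canVec{D})$ lies in the range of $\prUnif{j}$ for every algorithm basis state $\ket{a}$ (extending linearly over $\A$). First, $\prWeld\canVec{D}=\canVec{D}$, because every $C_{\colorVertMap}$ contains $\rho$ by \cref{eq:writeup_database_definition_color}. So the only thing to check is that $V$ maps $\ket{a}\canVec{D}$ into $\A\otimes\Span\{\canVec{D'}:\abs{D'}\le j\}$; the outer $\prWeld$ then acts trivially.

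For $V=A_i$ this is immediate. $A_i$ acts only on $\A$, so $A_i(\ket{a}\canVec{D})=(A_i\ket{a})\canVec{D}$, which already lies in the range of $\ketbra{g_D}/p_D$.

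For $V=\xorOP$, I fix the query $\ket{l,\beta}_{\X}\ket{y}_{\Y}$ inside $\ket{a}$. The aim is to show that the value $w=\stepFunct(l,\beta,L_\colorVertMap,C_\colorVertMap)$ and the label $L_\colorVertMap(w)$ written into $\Y$ are the same for all good $\colorVertMap$ in the support of $\canVec{D}$. Then $\xorOP(\ket{a}\canVec{D})=\ket{a'}\canVec{D}$ for a single basis state $a'$, since $\xorOP$ never changes $\I$. I will do the case analysis on $x=P^{-1}(l)$:
\begin{itemize}
  \item If $l\notin\im(P)$, the query label is unrecorded for every $\colorVertMap$, so $\xorOP$ acts trivially.
  \item If $x=(u,o)$, the vertex $u$ is fixed. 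Its neighbour in colour $\beta$ is computed from $\outsideColorDb\cup\{\rho\}$ or from the fixed tree structure. Goodness, namely the disjointness of $\nbhood(\colorVertMap(\tau))$ from $\supp(\outsideLabelDb,\outsideColorDb\cup\{\rho\})$, ensures that no inside record $\theta_{\tau\beta}$ touches $u$'s edges, except those at the seed. The seed's neighbour through $\rho$ is the root $r$, whose label is $P(\emptySeq,i)$ if that is defined, and in either case this does not depend on $\colorVertMap$.
  \item If $x=(\tau,i)$, the vertex $\colorVertMap(\tau)$ varies with $\colorVertMap$, but its recorded neighbour in colour $\beta$ is one of three things: $\colorVertMap(\tau\beta)$ when $\tau\beta\in H$; the parent sequence when $\beta$ is the last colour of $\tau$; or, for $\tau=\emptySeq$ and $\beta=\alpha$, the seed through $\rho$. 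In each case the returned label is $P(\tau\beta,i)$, $P(\tau',i)$, or the seed's outside label, all fixed by $D$.
  \item Otherwise the neighbour is undefined for every good $\colorVertMap$. Goodness rules out accidental adjacency, because the neighbourhoods are pairwise disjoint and also disjoint from the outside support.
\end{itemize}

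The main obstacle is the last case and the boundary conditions: an edge not in $\middleEdge$ (a fixed tree edge outside the middle region) could in principle give a $\colorVertMap$-dependent answer. This is excluded by $\colorVertMap(H)\subseteq\middleNodes$ together with the neighbourhood-disjointness in \cref{eq:goodness_definition}. Any fixed (non-permuted) neighbour of $\colorVertMap(\tau)$ lies in $\nbhood(\colorVertMap(\tau))$, so it is neither labelled by $\outsideLabelDb$ nor equal to another inside vertex, and it therefore returns nothing in every branch. Once the answer is shown to be $\colorVertMap$-independent, $\canVec{D}$ is left unchanged, so $(\Id-\prUnif{j})\prWeld\xorOP\prWeld\prUnif{j}=0$.
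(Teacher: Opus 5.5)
Your proof is correct and follows the same route as the paper. $A_i$ acts only on $\A$, $\xorOP$ leaves $\I$ untouched, and the XOR answer is the same for every good $\colorVertMap$ in the support of $\canVec{D}$, so $\canVec{D}$ is preserved. The paper asserts this last fact in one line; your case analysis on $P^{-1}(l)$ (using neighbourhood disjointness, $\colorVertMap(H)\subseteq\middleNodes$, and that outside records avoid $r$) supplies the justification. The non-permuted-edge worry in your final paragraph is vacuous, because every edge at a vertex of $\middleNodes$ already lies in $\middleEdge$.
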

\begin{proof}
    An algorithm unitary only acts on $\A$ while the XOR $\xorOP$ only alters the output register $\Y$ and not the database registers. Furthermore, for some fixed $D$ and query, the XOR answer is already determined. Hence, the same operation acts on every summand of $\canVec{D}$.
\end{proof}

The remainder of the proof is mainly focused on how \pC operators behave under $\prUnif{j}$. Next, we define several different types of queries we will analyze and then prove they all behave similarly.

\paragraph{Types and assignment groups.} We define query types $\typeInv-\typeEdgeIn$ below, which characterize respectively unlabeling, labeling, querying an edge outside of $H$ and querying an edge inside $H$ except for $\rho$. 

Fix a step $j\in [0,T]$ and some algorithm basis state $a$ with query inputs $(l,\beta)$. We let $V$, which is one of $\pCLabelInv, \pCLabel, \mpCPerm$, denote some compression operator which is fixed by the type. The set $\famQParts_{V,a,j}$ contains sets of descriptions $\mathfrak{b}\subseteq \reprTuple$, which we call \emph{assignment groups}. Each group has a \emph{base} $D_\bot=(H,P,\outsideColorDb)\in\reprTuple$ where $\abs{D_\bot}\leq j$ and the value which is queried is absent, meaning it maps to $\bot$ in the database. Within a group, $I_{\colorVertMap}=(L_{\colorVertMap},C_{\colorVertMap})$ denotes the base database represented by $(D_\bot,\colorVertMap)$.

For types $\typeInv,\typeFor$ and $\typeEdgOut$, we specify a set $\assignVar$ of possible assignments and the corresponding extensions $D_z$. Formally, the assignment group is the set,
\begin{align*}
    \mathfrak{b} &\coloneq \{D_\bot\}\cup\{D_z:z\in\assignVar\}.%\label{eq:pc_complete_group}
\end{align*}
For type $\typeEdgeIn$, the group consists of the base $D_\bot$ and a single extension $D_1$, defined below.

\paragraph{Type $\typeInv$ - Inverse labels.} Let $V =\pCLabelInv$ and suppose that $l\neq 0$ and $l\in [M]\setminus \im(P)$, meaning that we are assigning the label to some free location. Define,
\begin{align*}
    \assignVar &\coloneq \mapLabels\setminus\dom(P),&D_z &\coloneq (H,P[z\to l],\outsideColorDb)\quad(z\in\assignVar).%\label{eq:pc_inverse_family}
\end{align*}
Notice that $z$ is a location in $\mapLabels$, whose structural value under $\colorVertMap$ is $\shapeToDb(z)$.

\paragraph{Type $\typeFor$ - Forward labels.} Let $V=\pCLabel$. For every good map $\colorVertMap$, the structural source $v= L^{-1}_{\colorVertMap}(l)$. Suppose there exists some structural location $w\in \mapLabels\setminus\dom(P)$ such that,
\begin{align*}
    v^\prime &= \stepFunct(l,\beta, L_{\colorVertMap}, C_{\colorVertMap}) = \shapeToDb(w), &v^\prime \notin \dom(L_{\colorVertMap})\cup \{0,\bot,v\}.
\end{align*}
This means that the query reaches the unlabeled location $w$, which is the queried argument in this case. We define,
\begin{align*}
    \assignVar &\coloneq [M]\setminus \im{P}, &D_z &\coloneq (H,P[w\to z],\outsideColorDb)\quad(z\in\assignVar).
\end{align*}
Hence $w$ is labeled with $z$.

For types $\typeEdgOut$ and $\typeEdgeIn$, we have $V=\mpCPerm$. At each good base map, we require that $l\in \im(L_\colorVertMap) \setminus \{0\}$ which maps to a structural vertex $v\in V_n$. We use $h$ for the queried structural edge, $b\in \middleEdge$ for its indicator and $\varsigma$ the direction of the query,
\begin{align*}
    h &\coloneq \edgeVertex{\beta}{v}\neq \bot, &b\coloneq \edgeIndic(h)\in\middleEdge, &\varsigma \coloneq \begin{cases}
        \rightarrow &\text{if $v=\tail(h)$,}\\
        \leftarrow &\text{if $v=\head(h)$.}
    \end{cases}
\end{align*}
\paragraph{Type $\typeEdgOut$ - Outside edges.} Suppose that $l=P(u,o)$ for some $u\neq r_{\rho,\sigma}$, meaning that $v=u$. Let $I^{\mathrm{out}} \coloneq (\outsideLabelDb,\outsideColorDb\cup\{\rho\})$. We require $h$ to be such that,
\begin{align*}
    \begin{cases}
        h\notin\dom((\outsideColorDb\cup\{\rho\})_b) &\text{if $\varsigma=\rightarrow$,}\\
        h\notin\im((\outsideColorDb\cup\{\rho\})_b) &\text{if $\varsigma=\leftarrow$,}
    \end{cases}
\end{align*}
meaning that the queried assignment is absent. Furthermore, remember that no edges in $\outsideColorDb$ touch the root $r_{\rho,\sigma}$. Let $F_H$ denote the set of edges used by the root, $F_H = \{\edgeVertex{\gamma}{r_{\rho,\sigma}}: \gamma\in H, \abs{\gamma}=1\}$. Using the fresh assignment sets from \cref{eq:available_assignments_fresh_right,eq:available_assignments_fresh_left},
\begin{align}
    \assignVar &\coloneq \begin{cases}
        \{z\in\calA^{\rightarrow}_{h,b,I^{\mathrm{out}}}:\head(z)\notin\bigcup_{g\in F_H}g\} &\text{if $\varsigma=\rightarrow$,}\\
        \{z\in\calA^{\leftarrow}_{h,b,I^{\mathrm{out}}}:\tail(z)\notin\bigcup_{g\in F_H}g\} &\text{if $\varsigma=\leftarrow$,}
    \end{cases}\label{eq:pc_outside_assignments}\\
    D_z &\coloneq \begin{cases}
        (H,P,\outsideColorDb[h\to z]) &\text{if $\varsigma=\rightarrow$,}\\
        (H,P,\outsideColorDb[z\to h]) &\text{if $\varsigma=\leftarrow$,}
    \end{cases}\qquad(z\in\assignVar).\label{eq:pc_outside_family}
\end{align}
Thus $z$ is an edge in $E_b$ which avoids $H$. Note that these assignments ensure that $\outsideColorDb\cup\{\rho\}$ is injective as the available sets are defined using $(\outsideLabelDb,\outsideColorDb)$.

\paragraph{Type $\typeEdgeIn$ - Inside edges.} Lastly, we are concerned with the edges in $H$. Suppose that the source is some location $\tau\in H$ in $K_{\rho,\sigma}(I_{\colorVertMap})$ which adds a child to $\tau$, meaning that $\tau\beta \notin H$ such that,
\begin{align*}
    \beta\neq\begin{cases}
        \alpha &\text{if $\tau=\emptySeq$,}\\
        \gamma &\text{if $\tau=\nu\gamma\neq\emptySeq$,}
    \end{cases}
\end{align*}
ensuring that $\tau \beta$ remains valid. Define,
\begin{align*}
    D_1 &\coloneq (H \cup \{\tau\beta\}, P, \outsideColorDb), &\mathfrak{b} \coloneq \{D_\bot,D_1\}.
\end{align*}
Notice that as $P$ does not change, the new leaf does not yet have a label. For any $z\in E_b$, extend the map to the new leaf $\colorVertMap_z$ by letting $\colorVertMap_z|_H = \colorVertMap$ and,
\begin{align}
    \colorVertMap_z(\tau\beta) &\coloneq \begin{cases}
        \head(z) &\text{if $\varsigma=\rightarrow$,}\\
        \tail(z) &\text{if $\varsigma=\leftarrow$.}
    \end{cases}
\end{align}

\paragraph{Available assignments.} Fix some assignment group $\mathfrak{b}\in\famQParts_{V,a,j}$ whose base is $D_\bot=(H,P,\outsideColorDb)$. Assume that $H$ contains $\emptySeq$ and all prefixes of any present color sequence. Recall that by definition of $L_{\colorVertMap}$,
\begin{align}
    \dom(L_{\colorVertMap}) &= \shapeToDb(\dom(P)\setminus\{(0,o)\}), &L_{\colorVertMap}(\shapeToDb(x)) &= P(x)\quad(x\in\dom(P)), &\abs{L_{\colorVertMap}} = \abs{P}-1.\label{eq:pc_label_compatibility}
\end{align}
For each $\colorVertMap$, let $\assignVar_\colorVertMap$ denote the assignments available in the represented database. We let $m_\colorVertMap \coloneq \abs{\assignVar_\colorVertMap}$ and $\ket{\xi_\bot(\colorVertMap)} \coloneq \ket{a,L_{\colorVertMap},C_{\colorVertMap}}_{\A\I}$. For each type, we define the available sets and corresponding database states as,
\begin{align}
    \begin{array}{c|c|c}
        \text{Type} & \assignVar_{\colorVertMap} & \ket{\xi_z(\colorVertMap)}\quad(z\in\assignVar_{\colorVertMap})\\\hline
        \typeInv
        & \assignVar\setminus\{(\colorVertMap(\nu),o):\nu\in H\setminus\{\emptySeq\}\}
        & \ket{a,L_{\colorVertMap}[\shapeToDb(z)\to l],C_{\colorVertMap}}_{\A\I}\\[4pt]
        \typeFor
        & [M]\setminus\im(L_{\colorVertMap})
        & \ket{a,L_{\colorVertMap}[\shapeToDb(w)\to z],C_{\colorVertMap}}_{\A\I}\\[4pt]
        \typeEdgOut,\typeEdgeIn,\ \varsigma=\rightarrow
        & \calA^\rightarrow_{h,b,I_{\colorVertMap}}
        & \ket{a,L_{\colorVertMap},C_{\colorVertMap}[h\to z]}_{\A\I}\\[4pt]
        \typeEdgOut,\typeEdgeIn,\ \varsigma=\leftarrow
        & \calA^\leftarrow_{h,b,I_{\colorVertMap}}
        & \ket{a,L_{\colorVertMap},C_{\colorVertMap}[z\to h]}_{\A\I}
    \end{array}\label{eq:pc_physical_extensions}
\end{align}
Notice that for type $\typeInv,\typeFor$ and $\typeEdgOut$, an available assignment $z$ contributes to $\canVec{D_z}$ only when $\goodMap_{D_z}(\colorVertMap)=1$ and that
\begin{align}
    \{z\in\assignVar:\goodMap_{D_z}(\colorVertMap)=1\} &\subseteq\assignVar_{\colorVertMap}\subseteq\assignVar.\label{eq:short_choice_condition}
\end{align}
For type $\typeEdgeIn$, the corresponding condition is $\goodMap_{D_1}(\colorVertMap_z)=1$. For type $\typeEdgeIn$, we will also use the following compatibility assumptions in the later estimates,
\begin{align}
    \mu_{H\cup\{\tau\beta\}}(\colorVertMap_z) &= \frac{\mu_H(\colorVertMap)}{\abs{E_b}}\qquad(z\in E_b),\nonumber\\
    \ket{a,I_{\colorVertMap_z}}_{\A\I} &= \ket{\xi_z(\colorVertMap)}
    \qquad(z\in\assignVar_{\colorVertMap},\ \goodMap_{D_1}(\colorVertMap_z)=1),\label{eq:pc_leaf_compatibility}
\end{align}
where $I_{\colorVertMap_z}$ is the database represented by $(D_1,\colorVertMap_z)$. Additionally, we have that every good map $\psi$ for $D_1$ is obtained this way,
\begin{align}
    \goodMap_{D_1}(\psi)=1
    \implies
    \exists\,\colorVertMap,\ z\in E_b:\quad
    \goodMap_{D_\bot}(\colorVertMap)=1,\qquad \psi=\colorVertMap_z,\label{eq:pc_leaf_existence}
\end{align}
where $\colorVertMap=\psi|_H$. Next, we define the base database $I_\bot(V,a,I)$ by removing the queried assignment,
\begin{align}
    \begin{array}{c|c}
        \text{Operator and condition} & I_\bot(V,a,I)\\\hline
        V=\pCLabelInv,\ l\neq 0
        & (L\setminus\{(z,l):z\in[M]\},C)\\[4pt]
        V=\pCLabel,\ v^\prime\notin\{\bot,0,v\}
        & (L\setminus\{(v^\prime,z):z\in[M]\},C)\\[4pt]
        V=\mpCPerm,\ \varsigma=\rightarrow
        & (L,C\setminus\{(h,z):z\in E_b\})\\[4pt]
        V=\mpCPerm,\ \varsigma=\leftarrow
        & (L,C\setminus\{(z,h):z\in E_b\})
    \end{array}\label{eq:pc_base_database}
\end{align}
If the conditions do not hold, set $I_\bot(V,a,I)=I$. Notice that $I_\bot(V,a,I_{\colorVertMap})= I_{\colorVertMap}$ as we are already assuming that the queried argument is not assigned. Lastly, notice that for any $z\in \assignVar$ and map $\colorVertMap$,
\begin{align}
    \goodMap_{D_z}(\colorVertMap) &\leq \goodMap_{D_\bot}(\colorVertMap),\label{eq:pc_support_implies_nesting}
\end{align}
as removing an edge cannot decrease goodness. Next we show several useful properties.

\begin{proposition}[Available assignments for types $\typeInv$-$\typeEdgOut$]\label{prop:short_choices}
    Fix some type in $\{\typeInv,\typeFor,\typeEdgOut\}$ and an assignment group $\mathfrak{b}\in\famQParts_{V,a,j}$ of this type. For every $\colorVertMap$ with $\goodMap_{D_\bot}(\colorVertMap)=1$,
    \begin{align}
        \begin{array}{c|c|c}
            \text{Type} & m_{\colorVertMap} & \abs{\assignVar\setminus\assignVar_{\colorVertMap}}\\\hline
            \typeInv & M+1-\abs{P} & \abs{H}-1\\[3pt]
            \typeFor & M+1-\abs{P} & 0\\[3pt]
            \typeEdgOut & \geq\abs{E_b}-\abs{L_{\colorVertMap}}-6\abs{C_{\colorVertMap}} & \leq 6\abs{D_\bot}
        \end{array}\label{eq:short_counts}
    \end{align}
    Furthermore,
    \begin{align}
        \ket{a}_{\A}\canVec{D_z}
        &= \sum_{\colorVertMap:\goodMap_{D_z}(\colorVertMap)=1}
        \sqrt{\mu_H(\colorVertMap)}\ket{\xi_z(\colorVertMap)}
        \qquad(z\in\assignVar),\label{eq:pc_vector_identification}
    \end{align}
\end{proposition}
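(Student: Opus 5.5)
We prove the table \cref{eq:short_counts} row by row, and then the identification \cref{eq:pc_vector_identification}, in both cases directly from the definitions of $\assignVar$, $\assignVar_\colorVertMap$ and the transformation $(D,\colorVertMap)\mapsto I_\colorVertMap$. Throughout, fix a good $\colorVertMap$ for $D_\bot$, so by \cref{lem:valid_representation_transformation} the map $\shapeToDb$ is injective on $\dom(P)$ and $\abs{L_\colorVertMap}=\abs{P}-1$ by \cref{eq:pc_label_compatibility}.

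\textbf{Type $\typeInv$.} The label $l$ is free, so the available extensions are the structural locations outside $\dom(L_\colorVertMap)$. There are $M-\abs{L_\colorVertMap}=M+1-\abs{P}$ of them, which gives $m_\colorVertMap$. The set $\assignVar=\mapLabels\setminus\dom(P)$ differs from $\assignVar_\colorVertMap$ only by the outside tags $(\colorVertMap(\nu),o)$ with $\nu\in H\setminus\{\emptySeq\}$. These are removed because the same structural vertex is already represented as an inside location. By goodness none of them lies in $\dom(P)$, and they are distinct, so exactly $\abs{H}-1$ elements are removed.

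\textbf{Type $\typeFor$.} Here $\assignVar=[M]\setminus\im(P)$ and $\assignVar_\colorVertMap=[M]\setminus\im(L_\colorVertMap)$. Since $\im(L_\colorVertMap)=\im(P)\setminus\{0\}$ and $0\notin[M]$, the two sets coincide. Their common size is $M-(\abs{P}-1)$, so $m_\colorVertMap=M+1-\abs{P}$ and $\abs{\assignVar\setminus\assignVar_\colorVertMap}=0$.

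\textbf{Type $\typeEdgOut$.} For $m_\colorVertMap$, the set $\calA^{\varsigma}_{h,b,I_\colorVertMap}$ excludes from $E_b$ the edges in $\dom(C_b)\cup\im(C_b)$, at most $2\abs{C_\colorVertMap}$ of them. It also excludes edges whose relevant endpoint lies in $\supp(I_\colorVertMap)$. Each vertex is an endpoint of exactly one edge of $E_b$ in the relevant orientation, so this removes at most $\abs{\supp(I_\colorVertMap)}\le \abs{L_\colorVertMap}+4\abs{C_\colorVertMap}$ more. In total $m_\colorVertMap\ge\abs{E_b}-\abs{L_\colorVertMap}-6\abs{C_\colorVertMap}$.

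For $\abs{\assignVar\setminus\assignVar_\colorVertMap}$, compare $\supp(I_\colorVertMap)$ with $\supp(I^{\mathrm{out}})\cup\bigcup_{g\in F_H}g$. The extra support consists of two kinds of points. The first kind are the endpoints of the inside records $\theta_{\tau\beta}$, i.e.\ the points $\colorVertMap(\tau)$ and $\colorVertMap(\tau\beta)$ with $\tau\beta\in H$. The second kind are the labelled inside points $\shapeToDb(x)$ for $x\in\dom(P)$ with inside tag. Together with the dom/im exclusions of the $\abs{H}-1$ inside records, this gives at most about $2\abs{H}+\abs{H}+2\abs{H}\le 6\abs{D_\bot}$ removed edges. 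The root's own edges are already excluded via $F_H$, which is exactly why that condition is in \cref{eq:pc_outside_assignments}.

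The main point to check carefully is that $\assignVar_\colorVertMap\subseteq\assignVar$ in this case. This needs $\supp(I^{\mathrm{out}})\subseteq\supp(I_\colorVertMap)$, which is immediate, together with the $F_H$ exclusion being implied by freshness in $I_\colorVertMap$. The latter holds because $r$ and the root edges' endpoints lie in $\supp(I_\colorVertMap)$ through $\rho$ and the first-level records. A root edge not yet in $H$ is not covered by this argument. I expect to handle it by noting that, for $\typeEdgOut$, the inclusion direction actually used in \cref{eq:short_choice_condition} only requires goodness of $D_z$, which forbids $z$ touching $\nbhood(r)$.

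\textbf{Identification \cref{eq:pc_vector_identification}.} Expand $\canVec{D_z}$ by definition. Its summands run over $\colorVertMap$ with $\goodMap_{D_z}(\colorVertMap)=1$, and the measure $\mu_H$ is unchanged because $H$ is unchanged in types $\typeInv$–$\typeEdgOut$. It remains to check, using \cref{eq:writeup_database_definition_label,eq:writeup_database_definition_color}, that $I^{(D_z)}_\colorVertMap$ equals the database in $\ket{\xi_z(\colorVertMap)}$ from \cref{eq:pc_physical_extensions}. For $\typeInv$, $P[z\to l]$ produces $L_\colorVertMap[\shapeToDb(z)\to l]$. For $\typeFor$, $P[w\to z]$ produces $L_\colorVertMap[\shapeToDb(w)\to z]$. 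For $\typeEdgOut$, $\outsideColorDb[h\to z]$ (respectively $[z\to h]$) gives $C_\colorVertMap[h\to z]$ (respectively $[z\to h]$). In each case no overwriting collision occurs, by freshness and injectivity.
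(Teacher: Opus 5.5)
Your proposal is correct and follows the paper's proof. You read each row of the table off the definitions of $\assignVar$, $\assignVar_{\colorVertMap}$ and the representation $(D,\colorVertMap)\mapsto I_{\colorVertMap}$, and you get the identification by substituting the label or outside-edge update into $\canVec{D_z}$ with $\mu_H$ unchanged; you also spell out the type $\typeEdgOut$ counts, which the paper only defers to the freshness analysis of the fresh-oracle section.

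Two repairs are needed in that part. First, each inside record $\theta_{\tau\beta}$ puts up to four points into $\supp(I_{\colorVertMap})$, not two: besides $\colorVertMap(\tau)$ and $\colorVertMap(\tau\beta)$, it adds their structural $\beta$-neighbors. The correct tally is therefore at most $2(\abs{H}-1)+4(\abs{H}-1)\le 6\abs{D_\bot}$, not $5\abs{H}$. Second, your worry about root edges is moot: $F_H$ only contains root edges whose color already lies in $H$, and you correctly place both their endpoints in $\supp(I_{\colorVertMap})$ via the first-level records. In any case, the inclusion $\assignVar_{\colorVertMap}\subseteq\assignVar$ is not part of this proposition. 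Your fallback argument is also false: the third clause of goodness excludes $\tau=\emptySeq$, so goodness of $D_z$ does not keep $z$ away from $\nbhood(r)$; it is the $F_H$ clause in $\assignVar$ that does that.
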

\begin{proof}
    Fix some $z\in\assignVar$ and $\colorVertMap$ with $\goodMap_{D_\bot}(\colorVertMap)=1$. For \cref{eq:pc_vector_identification}, by \cref{eq:pc_support_implies_nesting,eq:short_choice_condition}, the base map is good and $z\in\assignVar_{\colorVertMap}$. Substituting the label or outside-edge update gives exactly the database in $\ket{\xi_z(\colorVertMap)}$ from \cref{eq:pc_physical_extensions}. As $H$ does not change, the weight $\mu_H(\colorVertMap)$ does not change either. Substituting these states into the definition of $\canVec{D_z}$ proves \cref{eq:pc_vector_identification}.
    
    Next, let us consider each type separately.
    Start with type $\typeInv$. By definition of $\mu_H$, $\colorVertMap(\emptySeq) = r_{\rho,\sigma}$. By the definition of goodness in \cref{eq:goodness_definition}, $\colorVertMap$ is injective and for all $\nu \in H\setminus \{\emptySeq\}, (\colorVertMap(\nu), o) \notin \dom(P)$. Therefore, the removed outside locations in \cref{eq:pc_physical_extensions} are $\abs{H}-1$ distinct members of $\assignVar$.

    The map $\shapeToDb$ restricted to $\assignVar_{\colorVertMap}$ is a bijection onto $[M]\setminus\dom(L_{\colorVertMap})$, meaning that,
    \begin{align*}
        \abs{\assignVar} &= M+\abs{H}-\abs{P},\\
        \abs{\assignVar\setminus\assignVar_{\colorVertMap}} &= \abs{H}-1,\\
        m_{\colorVertMap} &= M+1-\abs{P}=M-\abs{L_{\colorVertMap}}.
    \end{align*}

    For type $\typeFor$, the definition gives us that $\assignVar = \assignVar_\colorVertMap$, which implies \cref{eq:short_counts}. Lastly, the analysis for $\typeEdgOut$ follows from the analysis on $\calA^{\varsigma}_{h,b,I_{\colorVertMap}}$ in \cref{sec:fresh_oracle}.
\end{proof}
Notice that \cref{obs:min_num_nodes_per_perm,eq:assumption_number_queries_bound,prop:short_choices} immediately imply that,
\begin{align}
    m_{\colorVertMap} &\geq \frac{\sqrt{N}}{8}, &
    \abs{\assignVar\setminus\assignVar_{\colorVertMap}} &\leq 6(T+1).\label{eq:pc_uniform_available_bounds}
\end{align}

Next, we characterize the orthogonality relationships between different states $\ket{\xi_z(\colorVertMap)}$.
\begin{proposition}[Base databases]\label{prop:base_databases_ortho}
    Fix some compression operator $V$ and a database $I$ such that $I = I_\bot(V,a,I)$. For every database $J$ in the associated $\ket{+}$ state,
    \begin{align}
        I_\bot(V,a,J)=I\label{eq:pc_recover_base}
    \end{align}
    Additionally,
    \begin{align}
        V\ket{\xi_\bot(\colorVertMap)}
        &= \frac{1}{\sqrt{m_{\colorVertMap}}}\sum_{y\in\assignVar_{\colorVertMap}}\ket{\xi_y(\colorVertMap)},\label{eq:pc_empty_insertion}\\
        V\ket{\xi_z(\colorVertMap)}
        &= \ket{\xi_z(\colorVertMap)}
        +\frac{\ket{\xi_\bot(\colorVertMap)}-V\ket{\xi_\bot(\colorVertMap)}}{\sqrt{m_{\colorVertMap}}}.\label{eq:pc_existing_insertion_specialized}
    \end{align}
    Furthermore, for every $\colorVertMap,\colorVertMap^\prime$ such that $\goodMap_{D_\bot}(\colorVertMap)=\goodMap_{D_\bot}(\colorVertMap^\prime)=1$,
    \begin{align}
        \braket{\xi_y(\colorVertMap)}{\xi_z(\colorVertMap^\prime)}
        &= \Id[\colorVertMap=\colorVertMap^\prime]\Id[y=z],\label{eq:pc_expanded_orthogonality}
    \end{align}
    where $y\in\{\bot\}\cup\assignVar_{\colorVertMap}$ and $z\in\{\bot\}\cup\assignVar_{\colorVertMap^\prime}$.
\end{proposition}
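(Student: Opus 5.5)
The plan is to prove the three claims in order. Each reduces to a direct computation on the block structure of the relevant compression primitive, as described in \cref{lem:expansion_pc_operator}.

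For \cref{eq:pc_recover_base}, we use \cref{eq:pc_base_database}. The map $I_\bot(V,a,\cdot)$ removes exactly the record at the queried argument. That argument is $l$ for $\pCLabelInv$, $v^\prime$ for $\pCLabel$, and $h$ for $\mpCPerm$. The queried argument is determined by $a$ together with entries of $J$ that $V$ does not alter: $L^{-1}(l)$ and $C$ stay fixed under $\pCLabel$, and $L$ stays fixed under $\mpCPerm$. This is the same invariance used in the proof of \cref{lem:cpwt_bounded_growth}. Every $J$ in the $\ket{+}$ state of $I$ is $I$ with one record added at that argument, so removing it returns $I$.

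For \cref{eq:pc_empty_insertion} and \cref{eq:pc_existing_insertion_specialized}, we identify the plus state of $V$ at the base $I_{\colorVertMap}$. We need to show that its support is exactly $\{\ket{\xi_y(\colorVertMap)}: y\in\assignVar_{\colorVertMap}\}$, each with amplitude $1/\sqrt{m_{\colorVertMap}}$.
\begin{itemize}
\item For type $\typeInv$, the free structural locations are $[M]\setminus\dom(L_{\colorVertMap})$. By the proof of \cref{prop:short_choices}, these are in bijection with $\assignVar_{\colorVertMap}$ via $\shapeToDb$.
\item For type $\typeFor$, the free labels are $[M]\setminus\im(L_{\colorVertMap})$.
\item For the edge types, the free assignments form exactly the fresh set $\calA^{\varsigma}_{h,b,I_{\colorVertMap}}$ by \cref{eq:plus_state_fresh_right,eq:plus_state_fresh_left}.
\end{itemize}
Given this, \cref{lem:expansion_pc_operator} with $\Pi_x^{\not\in}$ yields \cref{eq:pc_empty_insertion}. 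For \cref{eq:pc_existing_insertion_specialized}, we apply the second identity of \cref{lem:expansion_pc_operator} (equivalently \cref{obs:compressed_permutation_oracle_already_queried}). We have $P^-\ket{\xi_z}=\ket{\xi_\bot}/\sqrt{m_{\colorVertMap}}$ and $P^+P^-\ket{\xi_z}=V\ket{\xi_\bot}/\sqrt{m_{\colorVertMap}}$, which gives the formula.

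The main obstacle is the orthogonality relation \cref{eq:pc_expanded_orthogonality}. All these vectors are computational basis states with the same $a$, so the claim amounts to injectivity of $(\colorVertMap,y)\mapsto$ database. We split into three cases.
\begin{itemize}
\item \emph{Two base states.} We use \cref{lem:valid_representation_transformation}: the map $(D,\colorVertMap)\mapsto I_{\colorVertMap}$ is injective, so $\colorVertMap\neq\colorVertMap^\prime$ gives distinct databases.
\item \emph{A base state against an extension.} The extension has the queried argument assigned while the base does not, so the two differ.
\item \emph{Two extensions.} Apply $I_\bot(V,a,\cdot)$ to both. By \cref{eq:pc_recover_base}, this recovers $I_{\colorVertMap}$ and $I_{\colorVertMap^\prime}$ respectively. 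If the extensions are equal, the bases are equal, hence $\colorVertMap=\colorVertMap^\prime$. Then the added record, namely the value at the queried argument, determines $y$.
\end{itemize}
The subtle point is checking that $I_\bot(V,a,\cdot)$ applied to $\ket{\xi_y(\colorVertMap)}$ uses the same queried argument for both maps. For type $\typeFor$ the target $v^\prime=\shapeToDb(w)$ depends on $\colorVertMap$, and we must confirm that the recovered base is still $I_{\colorVertMap}$. This follows from the fact that $v^\prime$ is computed from entries that $V$ leaves unchanged. We would check this invariance separately for each type, using the definitions of types $\typeInv$–$\typeEdgeIn$.
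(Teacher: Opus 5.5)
Your proposal is correct and follows essentially the same route as the paper. You derive \cref{eq:pc_recover_base} from the definition of $I_\bot$ (the queried argument is computed from entries $V$ does not alter), the insertion formulas from \cref{lem:expansion_pc_operator}, and orthogonality by applying $I_\bot(V,a,\cdot)$ to a common database to get $I_{\colorVertMap}=I_{\colorVertMap'}$ and then invoking the injectivity in \cref{lem:valid_representation_transformation}. Your three-way case split is unnecessary but harmless: since $I_\bot(V,a,I_{\colorVertMap})=I_{\colorVertMap}$, the same argument covers $y=\bot$ or $z=\bot$ uniformly. Your explicit identification of the plus-state support with $\{\ket{\xi_y(\colorVertMap)}: y\in\assignVar_{\colorVertMap}\}$ fills in \cref{eq:pc_empty_insertion}, which the paper leaves implicit.
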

\begin{proof}
    Notice that the result that $I_\bot(V,a,J)=I$ is immediate from definition as every $\pC$ operator removes one assignment which is exactly the assignment removed in $I_\bot(V,a,J)$. Furthermore, \cref{eq:pc_existing_insertion_specialized} is directly due to \cref{lem:expansion_pc_operator}.
    
    For \cref{eq:pc_expanded_orthogonality}, notice that,
    \begin{align*}
        \ket{\xi_y(\colorVertMap)} = \ket{\xi_z(\colorVertMap^\prime)} &\implies I_\colorVertMap = I_{\colorVertMap^\prime} &\mbox{(By \cref{eq:pc_recover_base})}\\
        &\implies \colorVertMap = \colorVertMap^\prime &\mbox{(By \cref{lem:valid_representation_transformation})}.
    \end{align*}
    Therefore we have that $y=z$ as the assignments must be the same for the vectors to be equal.
\end{proof}

Next we fix the states we compare. Fix some type in $\{\typeInv,\typeFor,\typeEdgOut\}$ and an assignment group $\mathfrak{b}\in\famQParts_{V,a,j}$. We will be working in the orthonormal basis using $\ket{\bot}$ $\{\ket{z}\}_{z\in \assignVar}$. Define,
\begin{align}
    \ket{\assignVar} &\coloneq \frac{1}{\sqrt{\abs{\assignVar}}}\sum_{z\in\assignVar}\ket{z}, &
    \ket{\assignVar_{\colorVertMap}} &\coloneq \frac{1}{\sqrt{m_{\colorVertMap}}}\sum_{z\in\assignVar_{\colorVertMap}}\ket{z},\nonumber\\
    \prGoodAssign &\coloneq \ketbra{\bot}+\sum_{\substack{z\in\assignVar\\\goodMap_{D_z}(\colorVertMap)=1}}\ketbra{z}.\label{eq:pc_coordinate_averages}
\end{align}
Thus $\prGoodAssign$ retains the good subspace. For arbitrary coefficients $c_\bot\in\mathbb{C}$ and $(c_z)_{z\in\assignVar}\in\mathbb{C}^{\assignVar}$, define,
\begin{align}
    c_\bot^\prime &\coloneq \frac{1}{\sqrt{\abs{\assignVar}}}\sum_{z\in\assignVar}c_z, &
    c_z^\prime &\coloneq c_z+\frac{c_\bot-c_\bot^\prime}{\sqrt{\abs{\assignVar}}},\nonumber\\
    \ket{c} &\coloneq c_\bot\ket{\bot}+\sum_{z\in\assignVar}c_z\ket{z}, &
    \ket{c^\prime} &\coloneq c_\bot^\prime\ket{\bot}+\sum_{z\in\assignVar}c_z^\prime\ket{z}.\label{eq:pc_coefficients}
\end{align}
The corresponding database vectors which we will compare are,
\begin{align}
    \zeta_{\mathfrak{b}} &\coloneq \ket{a}_{\A}\left(c_\bot\canVec{D_\bot}+\sum_{z\in\assignVar}c_z\canVec{D_z}\right),\nonumber\\
    \zeta_{\mathfrak{b}}^\prime &\coloneq \ket{a}_{\A}\left(c_\bot^\prime\canVec{D_\bot}+\sum_{z\in\assignVar}c_z^\prime\canVec{D_z}\right).\label{eq:pc_compared_vectors}
\end{align}
We first bound the weight removed by $\prGoodAssign$.
\begin{observation}[Good weight]\label{obs:pc_omitted_weight}
    Fix some vector $\ket{d}=d_\bot\ket{\bot}+\sum_{z\in\assignVar}d_z\ket{z}$ whose coefficients are independent of $\colorVertMap$. Then,
    \begin{align}
        \sum_{\colorVertMap:\goodMap_{D_\bot}(\colorVertMap)=1}\mu_H(\colorVertMap)\norm{(\Id-\prGoodAssign)\ket{d}}^2
        &= \sum_{z\in\assignVar}\abs{d_z}^2(p_{D_\bot}-p_{D_z})
        \leq \delta\norm{\ket{d}}^2.\label{eq:pc_omitted_weight}
    \end{align}
\end{observation}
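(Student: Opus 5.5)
Since $\prGoodAssign$ is diagonal in the basis $\{\ket{\bot}\}\cup\{\ket{z}\}_{z\in\assignVar}$ and keeps $\ket{\bot}$, the plan is to expand the norm coordinatewise and then exchange the order of summation. For a fixed $\colorVertMap$,
\begin{align*}
    \norm{(\Id-\prGoodAssign)\ket{d}}^2 = \sum_{z\in\assignVar}\abs{d_z}^2\bigl(1-\goodMap_{D_z}(\colorVertMap)\bigr).
\end{align*}
The coefficients $d_z$ do not depend on $\colorVertMap$, so after multiplying by $\mu_H(\colorVertMap)$ and summing over maps with $\goodMap_{D_\bot}(\colorVertMap)=1$, the left side becomes
\begin{align*}
    \sum_{z\in\assignVar}\abs{d_z}^2\sum_{\colorVertMap}\mu_H(\colorVertMap)\,\goodMap_{D_\bot}(\colorVertMap)\bigl(1-\goodMap_{D_z}(\colorVertMap)\bigr).
\end{align*}

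Next I will apply the nesting property \cref{eq:pc_support_implies_nesting}, namely $\goodMap_{D_z}\leq\goodMap_{D_\bot}$, which gives $\goodMap_{D_\bot}(1-\goodMap_{D_z})=\goodMap_{D_\bot}-\goodMap_{D_z}$. For types $\typeInv$–$\typeEdgOut$ the shape $H$ is the same in $D_\bot$ and in $D_z$, so the weight $\mu_H$ is the same for both. Hence the inner sum is $\sum_{\colorVertMap}\mu_H(\colorVertMap)\goodMap_{D_\bot}(\colorVertMap)-\sum_{\colorVertMap}\mu_H(\colorVertMap)\goodMap_{D_z}(\colorVertMap)=p_{D_\bot}-p_{D_z}$, which is the claimed equality.

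For the inequality, I use $p_{D_\bot}\leq 1$ and $p_{D_z}\geq 1-\delta$. The latter comes from \cref{eq:kronecker_database_instances}, i.e.\ \cref{lem:goodness_probability}, and it requires $D_z\in\reprTuple$. This holds because $\abs{D_\bot}\leq j\leq T$, so adding one record keeps $\abs{D_z}\leq T+1$. Therefore each factor $p_{D_\bot}-p_{D_z}\leq\delta$, and the whole expression is at most $\delta\sum_z\abs{d_z}^2\leq\delta\norm{\ket{d}}^2$.

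The one point to check carefully is that $D_z$ is genuinely a member of $\reprTuple$, so that \cref{lem:goodness_probability} applies to it. In particular, for type $\typeEdgOut$ the updated $\outsideColorDb$ must still avoid $r_{\rho,\sigma}$ and remain injective together with $\rho$. Both follow from the choice of $\assignVar$ in \cref{eq:pc_outside_assignments}, which excludes endpoints of $F_H$ and uses the fresh available sets. Apart from this membership check, the argument is just the expansion, the nesting identity, and the goodness bound.
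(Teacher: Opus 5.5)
Your proposal is correct and follows the paper's proof essentially step for step. Both arguments expand $\norm{(\Id-\prGoodAssign)\ket{d}}^2$ coordinatewise, exchange the sums using that the coefficients do not depend on $\colorVertMap$, apply the nesting identity $\goodMap_{D_\bot}(1-\goodMap_{D_z})=\goodMap_{D_\bot}-\goodMap_{D_z}$, and finish with $p_{D_\bot}-p_{D_z}\leq\delta$ from \cref{eq:kronecker_database_instances}. Your explicit checks that $\mu_H$ is the same for $D_\bot$ and $D_z$ and that $D_z\in\reprTuple$ are left implicit in the paper, and they are consistent with its argument.
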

\begin{proof}
    By \cref{eq:pc_coordinate_averages} and orthonormality of the basis,
    \begin{align*}
        \norm{(\Id - \prGoodAssign)\ket{d}}^2 = \sum_{z\in \assignVar} \abs{d_z}^2(1 - \goodMap_{D_z}(\colorVertMap)).
    \end{align*}
    Since $\goodMap_{D_z}(\colorVertMap)\leq\goodMap_{D_\bot}(\colorVertMap)$ by \cref{eq:pc_support_implies_nesting} and the coefficients are independent of $\colorVertMap$,
    \begin{align*}
        \sum_{\colorVertMap:\goodMap_{D_\bot}(\colorVertMap)=1}\mu_H(\colorVertMap)\norm{(\Id - \prGoodAssign)\ket{d}}^2 &= \sum_{z\in \assignVar} \abs{d_z}^2 \sum_{\colorVertMap} \mu_H(\colorVertMap)\goodMap_{D_\bot}(\colorVertMap)(1 - \goodMap_{D_z}(\colorVertMap))\\
        &=\sum_{z\in \assignVar} \abs{d_z}^2 \sum_{\colorVertMap} \mu_H(\colorVertMap)(\goodMap_{D_\bot}(\colorVertMap) - \goodMap_{D_z}(\colorVertMap))\\
        &= \sum_{z\in \assignVar} \abs{d_z}^2 (p_{D_\bot} - p_{D_z})\\
        &\leq \delta \sum_{z\in \assignVar} \abs{d_z}^2 \leq \delta \norm{\ket{d}}^2,
    \end{align*}
    where the last line uses the fact that $0\leq p_{D_\bot} - p_{D_z} \leq \delta$ due to \cref{eq:pc_support_implies_nesting,eq:kronecker_database_instances}.
\end{proof}

We may prove the main statement for types $\typeInv$-$\typeEdgOut$.

\begin{lemma}[Estimate for types $\typeInv$-$\typeEdgOut$]\label{lem:pc_types_i_iii_estimate}
    Fix some type in $\{\typeInv, \typeFor, \typeEdgOut\}$ and $\mathfrak{b} \in \famQParts_{V,a,j}$ of this type. For any choice of coefficients in \cref{eq:pc_coefficients} and corresponding vectors in \cref{eq:pc_compared_vectors},
    \begin{align}
        \norm{V\zeta_{\mathfrak{b}}-\zeta_{\mathfrak{b}}^\prime}^2
        &\leq 16\delta\left(\abs{c_\bot}^2+\sum_{z\in\assignVar}\abs{c_z}^2\right)
        \leq 32\delta\norm{\zeta_{\mathfrak{b}}}^2.\label{eq:pc_common_bound}
    \end{align}
\end{lemma}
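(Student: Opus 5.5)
The plan is to expand both vectors map-by-map and reduce the estimate to a comparison, for each fixed good base map $\colorVertMap$, of a single compression primitive acting on the coordinate space $\Span\{\ket{\bot},\ket{z}:z\in\assignVar\}$.

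\textbf{Step 1: rewrite the input.} Using \cref{eq:pc_vector_identification} and the definition of $\canVec{D_\bot}$, I will write
\begin{align*}
    \zeta_{\mathfrak{b}} = \sum_{\colorVertMap:\goodMap_{D_\bot}(\colorVertMap)=1}\sqrt{\mu_H(\colorVertMap)}\,\Xi_{\colorVertMap}\big(\prGoodAssign\ket{c}\big),
\end{align*}
where $\Xi_{\colorVertMap}$ is the linear map sending $\ket{\bot}\mapsto\ket{\xi_\bot(\colorVertMap)}$, sending $\ket{z}\mapsto\ket{\xi_z(\colorVertMap)}$ for $z\in\assignVar_{\colorVertMap}$, and sending $\ket{z}\mapsto 0$ otherwise. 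The same expression holds for $\zeta'_{\mathfrak{b}}$ with $c'$ in place of $c$. By \cref{eq:pc_expanded_orthogonality}, the maps $\Xi_{\colorVertMap}$ are isometries on the span of their nonzero images, and their ranges are pairwise orthogonal for distinct $\colorVertMap$.

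\textbf{Step 2: transport $V$ through $\Xi_{\colorVertMap}$.} By \cref{eq:pc_empty_insertion,eq:pc_existing_insertion_specialized}, on the range of $\Xi_{\colorVertMap}$ the operator $V$ acts as the swap primitive $\pCfromPlus(\ket{\assignVar_{\colorVertMap}},\ket{\bot})$ restricted to the coordinates in $\{\bot\}\cup\assignVar_{\colorVertMap}$. It fixes the remaining $\ket{\xi_z}$, and those $\ket{\xi_z}$ remain orthogonal to the other images because $I_\bot$ recovers the base, as in \cref{prop:base_databases_ortho}.

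The map $c\mapsto c'$ of \cref{eq:pc_coefficients} is exactly $\pCfromPlus(\ket{\assignVar},\ket{\bot})\ket{c}$. This holds because $\bra{\assignVar}c\rangle=c'_\bot$, and the update $c_z + (c_\bot - c'_\bot)/\sqrt{|\assignVar|}$ matches the formula in \cref{obs:compressed_permutation_oracle_already_queried}.

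It follows that
\begin{align*}
    V\zeta_{\mathfrak{b}}-\zeta'_{\mathfrak{b}}=\sum_{\colorVertMap}\sqrt{\mu_H(\colorVertMap)}\,\Xi_{\colorVertMap}\Big(W_{\colorVertMap}\prGoodAssign\ket{c}-\prGoodAssign W\ket{c}\Big),
\end{align*}
where $W_{\colorVertMap}=\pCfromPlus(\ket{\assignVar_{\colorVertMap}},\ket{\bot})$ and $W=\pCfromPlus(\ket{\assignVar},\ket{\bot})$. A small issue arises when $W_{\colorVertMap}$ sends mass onto coordinates $z\in\assignVar_{\colorVertMap}$ that are not good for $D_z$. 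Those databases lie outside $\canVec{D_z}$, but they are still orthogonal images of $\Xi_{\colorVertMap}$, so they are simply counted as error. The orthogonality of the ranges gives
\begin{align*}
    \norm{V\zeta_{\mathfrak{b}}-\zeta'_{\mathfrak{b}}}^2\leq\sum_{\colorVertMap}\mu_H(\colorVertMap)\norm{W_{\colorVertMap}\prGoodAssign\ket{c}-\prGoodAssign W\ket{c}}^2.
\end{align*}

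\textbf{Step 3: split the error.} I will decompose
\begin{align*}
    W_{\colorVertMap}\prGoodAssign - \prGoodAssign W = W_{\colorVertMap}(\prGoodAssign-\Id) + (W_{\colorVertMap}-W) + (\Id-\prGoodAssign)W
\end{align*}
and then use $(x+y+z)^2\leq 3(x^2+y^2+z^2)$.

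The first and third terms are controlled by \cref{obs:pc_omitted_weight}, applied with $d=c$ and with $d=c'$ respectively. These coefficient vectors are independent of $\colorVertMap$, and $\norm{c'}=\norm{c}$ because $W$ is unitary. Each term therefore contributes at most $\delta\norm{c}^2$.

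The middle term is handled by \cref{obs:superposition_similarity}. Since $\assignVar_{\colorVertMap}\subseteq\assignVar$, \cref{eq:pc_uniform_available_bounds} gives symmetric difference at most $6(T+1)$ and size at least $\sqrt{N}/8$. This yields an operator-norm bound of $4\cdot 48(T+1)/\sqrt{N}$ on the squared term, which is at most $\delta$ since $\delta=256(T+1)^2/\sqrt N$. Summing over $\colorVertMap$ with $\sum\mu_H\leq1$, the total is at most $3(\delta+\delta+\delta)\norm{c}^2\leq16\delta\norm{c}^2$.

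\textbf{Step 4: relate $\norm{c}$ to $\norm{\zeta_{\mathfrak{b}}}$.} I will compute $\norm{\zeta_{\mathfrak b}}^2=\sum_{\colorVertMap}\mu_H\norm{\prGoodAssign\ket c}^2=|c_\bot|^2p_{D_\bot}+\sum_z|c_z|^2p_{D_z}$. Combined with $p_D\geq1-\delta\geq1/2$, this gives $\norm{c}^2\leq2\norm{\zeta_{\mathfrak b}}^2$, which is the second inequality of \cref{eq:pc_common_bound}.

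The main obstacle is Step 2, namely justifying that $V$ acts through $\Xi_{\colorVertMap}$ as the coordinate primitive with cross terms between different $\colorVertMap$ vanishing. For type $\typeInv$, this requires that the relabeled states for locations $(\colorVertMap(\nu),o)$ excluded from $\assignVar_{\colorVertMap}$ do not collide with images of other maps. I would first try to settle this via \cref{eq:pc_recover_base} and \cref{lem:valid_representation_transformation}.
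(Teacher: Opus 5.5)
Your proposal is correct and follows essentially the same route as the paper. Your $\Xi_{\colorVertMap}$ is the paper's $T_{\colorVertMap}$, extended by zero. You compare $\pCfromPlus(\ket{\assignVar_{\colorVertMap}},\ket{\bot})$ with $\pCfromPlus(\ket{\assignVar},\ket{\bot})$ via \cref{obs:superposition_similarity} and control the two projection errors with \cref{obs:pc_omitted_weight}, reaching the same $9\delta\norm{\ket{c}}^2$. You then finish with $\norm{\zeta_{\mathfrak{b}}}^2\geq(1-\delta)\norm{\ket{c}}^2$ exactly as the paper does; your three-term split is only a regrouping of the paper's.

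The obstacle you flag in Step 2 does not actually arise. By \cref{eq:short_choice_condition}, every excluded coordinate $z\in\assignVar\setminus\assignVar_{\colorVertMap}$ has $\goodMap_{D_z}(\colorVertMap)=0$. Hence $\prGoodAssign\ket{c}$, $W_{\colorVertMap}\prGoodAssign\ket{c}$ and $\prGoodAssign W\ket{c}$ all carry no weight there, and the zero extension never matters.
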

\begin{proof}
    For each good base map $\colorVertMap$, let $\pC_{\assignVar}=\pCfromPlus(\ket{\assignVar}, \ket{\bot})$ and $\pC_\colorVertMap =\pCfromPlus(\ket{\assignVar_\colorVertMap}, \ket{\bot})$. By \cref{obs:superposition_similarity},
    \begin{align}
        \norm{\pC_{\assignVar} - \pC_\colorVertMap}^2 &\leq \frac{4\abs{\assignVar\setminus\assignVar_{\colorVertMap}}}{m_{\colorVertMap}}\nonumber\\
        &\leq \frac{256(T+1)^2}{\sqrt{N}} = \delta.\label{eq:pc_direct_comparison}
    \end{align}
    By definition of the coefficients in \cref{eq:pc_coefficients},
    \begin{align}
        \pC_{\assignVar}\ket{c} = \frac{\sum_{z\in\assignVar}c_z}{\sqrt{\abs{\assignVar}}}\ket{\bot}
        +\sum_{z\in\assignVar}\left(c_z+\frac{c_\bot-c_\bot^\prime}{\sqrt{\abs{\assignVar}}}\right)\ket{z} = \ket{c^\prime}.\label{eq:pc_c_to_c_prime}
    \end{align}
    As $\pC_{\assignVar}$ is a unitary, $\norm{\ket{c}}^2 = \norm{\ket{c^\prime}}^2$. Notice that both coefficient vectors are independent of $\colorVertMap$. Let $E_\colorVertMap = \pC_\colorVertMap \prGoodAssign - \prGoodAssign \pC_{\assignVar}$. Using \cref{eq:pc_c_to_c_prime}, we may decompose $E_\colorVertMap$ as,
    \begin{align*}
        E_\colorVertMap\ket{c} &= \pC_\colorVertMap \prGoodAssign\ket{c} - \prGoodAssign\ket{c^\prime}\\
        &=(\pC_\colorVertMap - \pC_{\assignVar})\prGoodAssign\ket{c} - \pC_{\assignVar}(\Id - \prGoodAssign)\ket{c} + (\Id - \prGoodAssign)\ket{c^\prime}
    \end{align*}
    Applying the triangle inequality and the fact that $\pC_{\assignVar}$ is a unitary,
    \begin{align*}
        \norm{E_\colorVertMap\ket{c}}^2 &\leq 3\left( \norm{\pC_\colorVertMap - \pC_{\assignVar}}^2 \norm{\prGoodAssign\ket{c}}^2 + \norm{(\Id - \prGoodAssign)\ket{c}}^2 + \norm{(\Id - \prGoodAssign)\ket{c^\prime}}^2 \right)\\
        &\leq 3 \left(\delta \norm{\ket{c}}^2 + \norm{(\Id - \prGoodAssign)\ket{c}}^2 + \norm{(\Id - \prGoodAssign)\ket{c^\prime}}^2\right). &\mbox{(By \cref{eq:pc_direct_comparison})}
    \end{align*}
    We now average this quantity over good maps. The first term depends on $\colorVertMap$ only through its weight. For the other two terms, we apply \cref{eq:pc_omitted_weight}. Therefore,
    \begin{align}
        \sum_{\colorVertMap:\goodMap_{D_\bot}(\colorVertMap)=1}&\mu_H(\colorVertMap) \norm{E_{\colorVertMap}\ket{c}}^2\nonumber\\
        &\leq 3\left(\sum_{\colorVertMap:\goodMap_{D_\bot}(\colorVertMap)=1}\mu_H(\colorVertMap) \left( \delta \norm{\ket{c}}^2 + \norm{(\Id - \prGoodAssign)\ket{c}}^2 + \norm{(\Id - \prGoodAssign)\ket{c^\prime}}^2\right) \right)\nonumber\\
        &\leq 3\left(\delta p_{D_\bot}\norm{\ket{c}}^2 + \sum_{z\in \assignVar}\abs{c_z}^2(p_{D_\bot} - p_{D_z}) + \sum_{z\in \assignVar}\abs{c_z^\prime}^2(p_{D_\bot} - p_{D_z}) \right) &\mbox{(By \cref{obs:pc_omitted_weight})}\nonumber\\
        &\leq 3\delta \left( p_{D_\bot}\norm{\ket{c}}^2 + \sum_{z\in \assignVar}\abs{c_z}^2 + \sum_{z\in \assignVar}\abs{c_z^\prime}^2 \right) &\mbox{(By \cref{eq:kronecker_database_instances})}\nonumber\\
        &\leq 9\delta \norm{\ket{c}}^2.\label{eq:bound_on_mu_e}
    \end{align}
    Next, we express the database vectors using these coefficients. For a good base map $\colorVertMap$, define,
    \begin{align*}
        T_\colorVertMap\ket{\bot} \coloneq \ket{\xi_{\bot}(\colorVertMap)}, &T_\colorVertMap\ket{z} \coloneq \ket{\xi_{z}(\colorVertMap)} \qquad(z\in\assignVar_{\colorVertMap}),
    \end{align*}
    and let $\calK_\colorVertMap \coloneq \Span\{\ket{\bot}, \ket{z}: z\in \assignVar_{\colorVertMap}\}$. Notice that by \cref{eq:pc_expanded_orthogonality}, $T_\colorVertMap$ preserves orthogonality and by \cref{eq:pc_empty_insertion,eq:pc_existing_insertion_specialized}, for any $\ket{x}\in \calK_\colorVertMap$,
    \begin{align*}
        V T_\colorVertMap \ket{x} = T_\colorVertMap\pC_\colorVertMap \ket{x}.
    \end{align*}
    By \cref{eq:pc_vector_identification,eq:pc_support_implies_nesting,eq:pc_compared_vectors},
    \begin{align*}
        \zeta_{\mathfrak{b}} &= \sum_{\colorVertMap:\goodMap_{D_\bot}(\colorVertMap)=1} \sqrt{\mu_H(\colorVertMap)}T_{\colorVertMap}\prGoodAssign\ket{c},\\
        \zeta_{\mathfrak{b}}^\prime &= \sum_{\colorVertMap:\goodMap_{D_\bot}(\colorVertMap)=1} \sqrt{\mu_H(\colorVertMap)}T_{\colorVertMap}\prGoodAssign\ket{c^\prime},\\
        V\zeta_{\mathfrak{b}}-\zeta_{\mathfrak{b}}^\prime &= \sum_{\colorVertMap:\goodMap_{D_\bot}(\colorVertMap)=1} \sqrt{\mu_H(\colorVertMap)}T_{\colorVertMap}E_{\colorVertMap}\ket{c}.
    \end{align*}
    Applying the properties of $T_\colorVertMap$,
    \begin{align*}
        \norm{V\zeta_{\mathfrak{b}}-\zeta_{\mathfrak{b}}^\prime}^2 &= \sum_{\colorVertMap:\goodMap_{D_\bot}(\colorVertMap)=1} \mu_H(\colorVertMap)\norm{E_{\colorVertMap}\ket{c}}^2\\
        &\leq 9\delta \norm{c}^2 &\mbox{(By \cref{eq:bound_on_mu_e})}
    \end{align*}
    Finally,
    \begin{align*}
        \norm{\zeta_{\mathfrak{b}}}^2 &= p_{D_\bot}\abs{c_\bot}^2 + \sum_{z\in \assignVar} p_{D_z} \abs{c_z}^2\\
        &\geq (1-\delta) \norm{\ket{c}}^2.
    \end{align*}
    As $\delta < \tfrac{1}{2}$, the statement follows.
\end{proof}

Next, we show that the same estimate for edges inside $H$ holds as well.

\begin{lemma}[Estimate for type $\typeEdgeIn$]\label{lem:pc_type_iv_similarity}
    Fix some assignment group $\mathfrak{b}=\{D_\bot,D_1\}\in\famQParts_{\mpCPerm,a,j}$ of type $\typeEdgeIn$ and coefficients $c_\bot,c_1\in\mathbb{C}$. Then letting $\zeta_\mathfrak{b}, \zeta_\mathfrak{b}^\prime$ be the states from \cref{eq:pc_compared_vectors}, with $c_\bot^\prime=c_1$ and $c_1^\prime=c_\bot$,
    \begin{align*}
        \norm{\mpCPerm\zeta_\mathfrak{b} - \zeta_\mathfrak{b}^\prime}^2 \leq 4\delta\norm{\zeta_\mathfrak{b}}^2.
    \end{align*}
\end{lemma}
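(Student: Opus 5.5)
The plan is to follow the proof of \cref{lem:pc_types_i_iii_estimate}, reduced to a two-dimensional comparison per good base map. The key point is that the uniform distribution over a new leaf's placement $z\in E_b$ is a fixed, map-independent reference. The available fresh set $\assignVar_\colorVertMap=\calA^{\varsigma}_{h,b,I_\colorVertMap}$ differs from it only by a small $\colorVertMap$-dependent set.

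First, fix a good base map $\colorVertMap$ of $D_\bot$. By \cref{eq:pc_leaf_compatibility,eq:pc_leaf_existence} we can write
\begin{align*}
\ket{a}\canVec{D_1}=\sum_{\colorVertMap:\goodMap_{D_\bot}(\colorVertMap)=1}\sqrt{\mu_H(\colorVertMap)}\sum_{z\in E_b}\frac{\goodMap_{D_1}(\colorVertMap_z)}{\sqrt{\abs{E_b}}}\ket{\xi_z(\colorVertMap)}.
\end{align*}
Here goodness of $\colorVertMap_z$ forces $z\in\assignVar_\colorVertMap$, since freshness of the new endpoint is implied by the neighborhood-disjointness clauses of \cref{eq:goodness_definition}. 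Accordingly, I would work in the orthonormal basis $\{\ket{\bot}\}\cup\{\ket{z}\}_{z\in E_b}$ and define three objects:
\begin{itemize}
\item the reference swap $\pC_{E}=\pCfromPlus(\ket{E_b},\ket{\bot})$, with $\ket{E_b}$ the uniform vector;
\item the actual swap $\pC_\colorVertMap=\pCfromPlus(\ket{\assignVar_\colorVertMap},\ket{\bot})$;
\item the good projector $\Pi_\colorVertMap=\ketbra{\bot}+\sum_{z:\goodMap_{D_1}(\colorVertMap_z)=1}\ketbra{z}$.
\end{itemize}
The base vector is then $T_\colorVertMap\Pi_\colorVertMap\ket{c}$ with $\ket{c}=c_\bot\ket{\bot}+c_1\ket{E_b}$. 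By \cref{prop:base_databases_ortho}, $V T_\colorVertMap=T_\colorVertMap\pC_\colorVertMap$ on $\calK_\colorVertMap$. Also $\pC_E\ket{c}=c_1\ket{\bot}+c_\bot\ket{E_b}=\ket{c'}$, which matches the prescribed swap $c_\bot'=c_1$, $c_1'=c_\bot$.

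Second, decompose the per-map error exactly as before:
\begin{align*}
E_\colorVertMap\ket{c}=(\pC_\colorVertMap-\pC_E)\Pi_\colorVertMap\ket{c}-\pC_E(\Id-\Pi_\colorVertMap)\ket{c}+(\Id-\Pi_\colorVertMap)\ket{c'}.
\end{align*}
Each term is controlled as follows.
\begin{itemize}
\item The first term is handled by \cref{obs:superposition_similarity}. We have $\abs{E_b\setminus\assignVar_\colorVertMap}=O(T)$, since it only excludes used edges and support endpoints, and $\abs{E_b}\ge\sqrt N/4$. This gives a contribution of order $T/\sqrt N$, well below $\delta$.
\item The other two terms only involve the $\ket{E_b}$ component. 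Averaging over $\colorVertMap$ with weight $\mu_H$, the omitted weight is
\begin{align*}
\abs{c_1}^2\sum_\colorVertMap\mu_H(\colorVertMap)\abs{E_b}^{-1}\abs{\{z:\goodMap_{D_1}(\colorVertMap_z)=0\}}=\abs{c_1}^2(p_{D_\bot}-p_{D_1}),
\end{align*}
using $\mu_{H\cup\{\tau\beta\}}(\colorVertMap_z)=\mu_H(\colorVertMap)/\abs{E_b}$. This is the analogue of \cref{obs:pc_omitted_weight}, and it is at most $\delta\abs{c_1}^2$ by \cref{eq:kronecker_database_instances}.
\end{itemize}
Orthogonality of the $T_\colorVertMap$ images across different $\colorVertMap$ (\cref{eq:pc_expanded_orthogonality}) then gives
\begin{align*}
\norm{V\zeta_\mathfrak b-\zeta'_\mathfrak b}^2=\sum_\colorVertMap\mu_H(\colorVertMap)\norm{E_\colorVertMap\ket c}^2.
\end{align*}
Finally, $\norm{\zeta_\mathfrak b}^2\ge(1-\delta)(\abs{c_\bot}^2+\abs{c_1}^2)$.

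The main obstacle is the constant 4 rather than the 32 of \cref{lem:pc_types_i_iii_estimate}. The crude $3(\cdot)$ triangle inequality loses too much, so I would tighten the argument in two ways:
\begin{itemize}
\item The first-term error is genuinely negligible, at most $\delta/4$ or so.
\item The two projection-loss terms each act only on the single $\ket{E_b}$ direction, with coefficients $c_1$ and $c_\bot$ respectively. Since $\pC_E$ is a swap, these two losses land on orthogonal pieces, the $\ket{\bot}$ versus $\ket{E_b}$ sectors. Using $(a+b)^2\le2a^2+2b^2$ only once then gives roughly $2\delta(\abs{c_\bot}^2+\abs{c_1}^2)+o(\delta)$.
\end{itemize}
With $\delta<1/2$, this yields the bound $4\delta\norm{\zeta_\mathfrak b}^2$. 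A secondary point I would check carefully is that $\zeta_\mathfrak b$ lies in $\calK_\colorVertMap$ blockwise, i.e.\ that every good $z$ is actually available, so that $V$ acts as $\pC_\colorVertMap$ there.
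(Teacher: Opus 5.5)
Your setup is correct: you write $\zeta_{\mathfrak b}=\sum_\phi\sqrt{\mu_H(\phi)}\,T_\phi\Pi_\phi\ket{c}$, you note $\mathrm{pC}_E\ket{c}=\ket{c'}$, and you average the loss to $p_{D_\bot}-p_{D_1}$. The route differs from the paper's, though, and one justification in your last paragraph is wrong.

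\textbf{How the paper argues.} The paper uses no reference swap. It computes $\norm{\mpCPerm\ket{a}\ket{g_{D_\bot}}-\ket{a}\ket{g_{D_1}}}^2=\sum_{\phi}\mu_H(\phi)\sum_{z\in\mathcal{Z}_\phi}\abs{m_\phi^{-1/2}-G_{D_1}(\phi_z)\abs{E_b}^{-1/2}}^2$ directly. Using only $m_\phi\le\abs{E_b}$, it bounds each inner sum by $1-\abs{E_b}^{-1}\sum_{z\in E_b}G_{D_1}(\phi_z)$, so the total is at most $p_{D_\bot}-p_{D_1}\le\delta$. Because $\mpCPerm$ is a self-adjoint unitary, the same bound holds for $\norm{\mpCPerm\ket{a}\ket{g_{D_1}}-\ket{a}\ket{g_{D_\bot}}}^2$. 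The triangle inequality then gives $\delta(\abs{c_\bot}+\abs{c_1})^2\le 2\delta(\abs{c_\bot}^2+\abs{c_1}^2)$.

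No separate term for $E_b\setminus\mathcal{Z}_\phi$ is needed, because every such $z$ is automatically bad. The mismatch between $\ket{\mathcal{Z}_\phi}$ and the uniform vector on $E_b$ is therefore already inside the goodness deficit. Your term $(\mathrm{pC}_\phi-\mathrm{pC}_E)\Pi_\phi\ket{c}$ pays for that mismatch a second time through \cref{obs:superposition_similarity}. Your version buys a common template with \cref{lem:pc_types_i_iii_estimate}. The price is that reaching the constant $4$ needs more care.

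\textbf{The incorrect step.} The two projection-loss terms are not orthogonal. Put $u=\abs{E_b}^{-1/2}\sum_{z:\,G_{D_1}(\phi_z)=0}\ket{z}$ and $\beta_\phi=\norm{u}^2$; the two terms are then $-c_1\mathrm{pC}_E u$ and $c_\bot u$. The swap $\mathrm{pC}_E$ fixes everything orthogonal to $\ket{\bot}$ and $\ket{E_b}$, and $\braket{E_b}{u}=\beta_\phi$ is much smaller than $\norm{u}=\sqrt{\beta_\phi}$. So both vectors sit essentially on the same bad coordinates. For $c_1=-c_\bot$ their sum has squared norm $(2-\beta_\phi)\beta_\phi\norm{\ket{c}}^2$, not $\beta_\phi\norm{\ket{c}}^2$.

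\textbf{How to repair it.} Your intended bound survives without orthogonality. Since $\norm{(\Id-\Pi_\phi)\ket{c}}^2+\norm{(\Id-\Pi_\phi)\ket{c'}}^2=\beta_\phi(\abs{c_1}^2+\abs{c_\bot}^2)$, a single use of $(a+b)^2\le 2a^2+2b^2$ gives $2\beta_\phi\norm{\ket{c}}^2$. This averages to $2(p_{D_\bot}-p_{D_1})\norm{\ket{c}}^2\le 2\delta\norm{\ket{c}}^2$.

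You must then add the first term by Minkowski in $L^2(\mu_H)$, not by another factor $2$. Another factor $2$ would already give $4\delta\norm{\ket{c}}^2$ before dividing by $1-\delta$. \cref{obs:superposition_similarity} gives $\epsilon=192(T+1)/\sqrt{N}=\tfrac{3}{4(T+1)}\delta$. With $T=8q+1\ge 9$, this yields $(\sqrt{\epsilon}+\sqrt{2\delta})^2\norm{\ket{c}}^2<2.9\,\delta\norm{\ket{c}}^2$.

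The final division by $1-\delta$ then needs $\delta$ noticeably below $1/2$, not merely $\delta<1/2$. That is available here, since \cref{eq:assumption_number_queries_bound} gives $\delta<2^{-6}(T+1)^{-2}$.
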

\begin{proof}
    Fix some $\colorVertMap$ such that $\goodMap_{D_\bot}(\colorVertMap)=1$. Any good mapping is a valid label between edges, meaning that by \cref{obs:min_num_nodes_per_perm,eq:assumption_number_queries_bound}, the fresh assignment sets in \cref{eq:available_assignments_fresh_right,eq:available_assignments_fresh_left} satisfy $m_\colorVertMap \geq \tfrac{\sqrt{N}}{8}$.

    Notice that by goodness, $\supp(I_\colorVertMap) \subseteq \supp(\outsideLabelDb,\outsideColorDb\cup\{\rho\}) \cup \left( \cup_{\nu \in H} \nbhood(\colorVertMap(\nu)) \right)$. Therefore if $\goodMap_{D_1}(\colorVertMap_z) = 1$, by \cref{eq:goodness_definition}, $\{\tail(z),\head(z)\}\subseteq\nbhood(\colorVertMap_z(\tau\beta))$ and this neighborhood is disjoint from $\supp(I_\colorVertMap)$. Therefore,
    \begin{align}
        \{ z\in E_b: \goodMap_{D_1}(\colorVertMap_z)=1 \} \subseteq \assignVar_\colorVertMap \subseteq E_b.\label{eq:pc_leaf_good_inclusion}
    \end{align}
    Additionally, by \cref{eq:pc_leaf_existence} for each $z$ this assignment is unique, meaning under one mapping there is exactly one assignment $z$ which exactly sets the map. Therefore, by \cref{eq:pc_leaf_compatibility,eq:pc_leaf_existence,eq:pc_leaf_good_inclusion,eq:pc_empty_insertion},
    \begin{align}
        V\ket{a}\canVec{D_\bot}
        &= \sum_{\colorVertMap:\goodMap_{D_\bot}(\colorVertMap)=1}\sqrt{\mu_H(\colorVertMap)}
        \sum_{z\in\assignVar_{\colorVertMap}}\frac{\ket{\xi_z(\colorVertMap)}}{\sqrt{m_{\colorVertMap}}},\label{pc_leaf_vector_apply}\\
        \ket{a}\canVec{D_1}
        &= \sum_{\colorVertMap:\goodMap_{D_\bot}(\colorVertMap)=1}\sqrt{\mu_H(\colorVertMap)}
        \sum_{z\in\assignVar_{\colorVertMap}}\frac{\goodMap_{D_1}(\colorVertMap_z)}{\sqrt{\abs{E_b}}}\ket{\xi_z(\colorVertMap)}.\label{eq:pc_leaf_vector_goal}\\
        p_{D_1}
        &= \sum_{\colorVertMap:\goodMap_{D_\bot}(\colorVertMap)=1}
        \frac{\mu_H(\colorVertMap)}{\abs{E_b}}\sum_{z\in E_b}\goodMap_{D_1}(\colorVertMap_z).\label{eq:pc_leaf_weight}
    \end{align}
    Note that the index $b$ and direction $\varsigma$ are evaluated based on $\colorVertMap$. By \cref{prop:base_databases_ortho,pc_leaf_vector_apply,eq:pc_leaf_vector_goal},
    \begin{align*}
        \norm{\mpCPerm \ket{a}\canVec{D_\bot} - \ket{a}\canVec{D_1}}^2 &= \sum_{\colorVertMap:\goodMap_{D_\bot}(\colorVertMap)=1} \mu_H(\colorVertMap) \sum_{z\in \assignVar_{\colorVertMap}} \abs{\frac{1}{\sqrt{m_{\colorVertMap}}} - \frac{\goodMap_{D_1}(\colorVertMap_z)}{\sqrt{\abs{E_b}}}}^2\\
        &\leq \sum_{\colorVertMap:\goodMap_{D_\bot}(\colorVertMap)=1} \mu_H(\colorVertMap) \left(1 - \frac{1}{\abs{E_b}} \sum_{z\in E_b}\goodMap_{D_1}(\colorVertMap_z) \right) &\mbox{(As $m_\colorVertMap \leq \abs{E_b}$)}\\
        &= p_{D_\bot} - p_{D_1} \leq \delta &\mbox{(By \cref{eq:pc_leaf_weight,eq:kronecker_database_instances})}.
    \end{align*}
    Applying the bound above and using the fact that $\mpCPerm$ is unitary,
    \begin{align*}
        \norm{\mpCPerm \ket{a}\canVec{D_1} - \ket{a}\canVec{D_\bot}}^2 &=\norm{\mpCPerm(\mpCPerm \ket{a}\canVec{D_1} - \ket{a}\canVec{D_\bot})}^2\\
        &=\norm{\mpCPerm \ket{a}\canVec{D_\bot} - \ket{a}\canVec{D_1}}^2 \leq \delta.
    \end{align*}
    Therefore as $\mpCPerm$ approximately switches between $\canVec{D_\bot}$ and $\canVec{D_1}$,
    \begin{align*}
        \norm{\mpCPerm \zeta_\mathfrak{b} - \zeta_\mathfrak{b}^\prime}^2 &= \norm{c_\bot(\mpCPerm\ket{a}\canVec{D_\bot} - \ket{a}\canVec{D_1}) + c_1(\mpCPerm\ket{a}\canVec{D_1} - \ket{a}\canVec{D_\bot})}^2\\
        &\leq \delta(\abs{c_\bot} + \abs{c_1})^2 \leq 2\delta(\abs{c_\bot}^2 + \abs{c_1}^2).
    \end{align*}
    Finally, by \cref{eq:kronecker_database_instances}, $\norm{\zeta_\mathfrak{b}}^2 \geq (1-\delta)(\abs{c_\bot}^2 + \abs{c_1}^2)$. As $\delta < \tfrac{1}{2}$, the statement follows.
\end{proof}

Finally, we consider the specific subcase which has not been covered by any of the cases above, that being querying the fixed weld $\rho$ itself.

\begin{lemma}[Querying the weld $\rho$]\label{lem:mpcPerm_weld_similarity}
    For any $\rho = (\alpha, e,f)$, let $\zeta = \sum_{a,I} c_{a,I} \ket{a,I}_{\A\I}$ be a vector such that $\prWeld\zeta=\zeta$, $\abs{L} + \abs{C} \leq T+1$ and $a$ represents the query $(l,\alpha)$ where $L^{-1}(l) \in \{\tail(e), \head(f) \}$. Then,
    \begin{align*}
        \norm{(\prWeld \mpCPerm \prWeld - \Id) \zeta} \leq \frac{8 \norm{\zeta}}{\sqrt{N}}
    \end{align*}
\end{lemma}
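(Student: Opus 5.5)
The plan is to reduce to a single block of the direct sum defining $\mpCPerm$ and then apply \cref{prop:recompression_unlikely}-style reasoning to the fresh compression primitive. The query register is fixed by $\mpCPerm$, and the label database is unchanged. So by \cref{lem:operator_norm_max} we may fix the algorithm basis state $a$ with query $(l,\alpha)$ and the label database $L$, with $v=L^{-1}(l)\in\{\tail(e),\head(f)\}$. Because the weld color is $\alpha$, the queried structural edge is $\edgeVertex{\alpha}{v}$, which is $e$ if $v=\tail(e)$ and $f$ if $v=\head(f)$. In either case the queried assignment is exactly the record $\rho$, which is present in every $C$ in the support of $\zeta$, since $\prWeld\zeta=\zeta$. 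Write each such $C$ as $B[e\to f]$ with $B=C\setminus\{\rho\}$. Distinct $B$'s give orthogonal blocks of $\mpCPerm$, so it suffices to bound the action on a vector supported on $\{\ket{a,L,B[e\to f]}\}$ for one fixed block base $B$.

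In that block, $\mpCPerm$ acts as $\pCfromPlus(\ket{\tilde{+}},\ket{B})$, where $\ket{\tilde{+}}$ is the fresh plus state over $\calA^{\varsigma}_{\cdot,b,(L,B)}$ and $\varsigma$ is the direction determined by $v$. If $B[e\to f]$ is not in the support of $\ket{\tilde{+}}$ (for instance because $f$ is not fresh relative to $(L,B)$), the block fixes this state and the error is zero. Otherwise we use \cref{lem:expansion_pc_operator}, or directly \cref{obs:compressed_permutation_oracle_already_queried} with the set $\Omega=\calA$:
\begin{align*}
    \mpCPerm\ket{B[e\to f]}=\ket{B[e\to f]}+\frac{1}{\sqrt{m}}\left(\ket{B}-\ket{\tilde{+}}\right),\qquad m=\abs{\calA}.
\end{align*}
Applying $\prWeld$ kills $\ket{B}$ and keeps only the $\ket{B[e\to f]}$ component of $\ket{\tilde{+}}$. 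This gives
\begin{align*}
    (\prWeld\mpCPerm\prWeld-\Id)\ket{B[e\to f]}=-\frac{1}{m}\ket{B[e\to f]},
\end{align*}
so in this block the operator is diagonal with entries of size at most $1/m$.

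Each block is thus handled by a diagonal operator of norm at most $1/m$, and the blocks are orthogonal, so $\norm{(\prWeld\mpCPerm\prWeld-\Id)\zeta}\leq \max_B m_B^{-1}\norm{\zeta}$. The main step is the lower bound on $m$. As in the proof of \cref{lem:cpwt_mcpgt}, use \cref{obs:min_num_nodes_per_perm}, the size assumption $\abs{L}+\abs{C}\leq T+1$ (hence $\abs{\supp}\leq 6(T+1)$), and \cref{eq:assumption_number_queries_bound} (which gives $T\leq\sqrt{N}/100$). Together these give
\begin{align*}
    m\geq\frac{\sqrt{N}}{4}-\abs{C}-2\abs{\supp(L,B)}\geq\frac{\sqrt{N}}{8}.
\end{align*}
Hence the norm is at most $8/\sqrt{N}$, as claimed. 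The only delicate point is the bookkeeping showing that the record being queried really is $\rho$ in both directions, which relies on the weld color being $\alpha$ and on injectivity of $C_b$. Note also that the bound is $1/m$ rather than $1/\sqrt{m}$: the $1/\sqrt{m}$ leakage terms are exactly what $\prWeld$ projects away.
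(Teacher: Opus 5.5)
Your proposal is correct and follows essentially the same route as the paper. You fix $(a,L)$ and the base $B=C\setminus\{\rho\}$, expand the fresh compression block via \cref{lem:expansion_pc_operator}, and note that $\prWeld$ annihilates both the $\ket{B}$ term and every $z\neq f$ term, so the operator acts diagonally as multiplication by $-1/m$. You then lower-bound $m\geq\sqrt{N}/8$ using \cref{obs:min_num_nodes_per_perm} together with \cref{eq:assumption_number_queries_bound}. One small point on that last step: it is the full assumption \cref{eq:assumption_number_queries_bound}, not merely $T\leq\sqrt{N}/100$, that makes your chain $\sqrt{N}/4-13(T+1)\geq\sqrt{N}/8$ go through, though for weld edges $\abs{E_b}=\Theta(N)$ in any case.
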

\begin{proof}
    As the direction $\sigma$ of the queries act on orthogonal bases $(a,L)$, assume that $L^{-1}(l) = \tail(e)$. The other case follows analogously. Let $b=(\mathrm{wld},\alpha)$ and
    \begin{align*}
        m(a,L,C) = \abs{\calA_{e,b,(L,C\setminus \{\rho\})}^{\rightarrow}},
    \end{align*}
    use $m$ for $m(a,L,C)$ when the input is obvious. We may assume that $f$ is in the $\calA$ set as otherwise $\mpCPerm$ would act as the identity on this basis state. Therefore, as $\rho$ is in the database $C$ for all databases in $\zeta$, by \cref{lem:expansion_pc_operator},
    \begin{align}
        \mpCPerm \ket{a,L,C} = \left(1 - \frac{1}{m} \right) \ket{a,L,C} + \frac{1}{\sqrt{m}} \ket{a,L,C\setminus\{\rho\}} - \frac{1}{m} \sum_{\substack{z \in \calA_{e,b,(L,C\setminus \{\rho\})}^{\rightarrow}\\ z\neq f}} \ket{a,L,C\setminus \{\rho\}[e\to z]}.\label{eq:apply_mpcperm_db}
    \end{align}
    Notice that the database instances in the second and third term of \cref{eq:apply_mpcperm_db} are $0$ under $\prWeld$ as they do not include $\rho$. Notice that by \cref{obs:min_num_nodes_per_perm,eq:assumption_number_queries_bound},
    \begin{align*}
        m(a,L,C) \geq \abs{E_b} - \abs{L} - 6\abs{C} \geq \frac{\sqrt{N}}{8}.
    \end{align*}
    Therefore by \cref{eq:apply_mpcperm_db} and the orthogonality of each database $I$,
    \begin{align*}
        \norm{(\prWeld \mpCPerm \prWeld - \Id) \zeta}^2 &\leq \sum_{a,L,C} \frac{\abs{c_{a,L,C}}^2}{m(a,L,C)^2} \\
        &\leq \frac{64}{N}\sum_{a,L,C}\abs{c_{a,L,C}}^2 = \frac{64}{N}\norm{\zeta}^2.
    \end{align*}
    The statement follows by taking the square root.
\end{proof}

\paragraph{Error orthogonality.} Next, we show that any nonzero error coefficient determines its assignment group $\mathfrak{b}$. First, notice that for any fixed $\mathfrak{b}\in \famQParts_{V,a,j}$, only one database attains the minimum size $\abs{D}$, that being the base database $D_\bot$ and this is unique for each group $\mathfrak{b}$.

\begin{proposition}\label{prop:non_zero_support_good_map}
    Fix a group $\mathfrak{b}\in\famQParts_{V,a,j}$, an algorithm basis $a$ and a database $J$. If
    \begin{align*}
        \braket{a,J}{V \zeta_{\mathfrak{b}} - \zeta_{\mathfrak{b}}^\prime} \neq 0,
    \end{align*}
    then there exists some $\colorVertMap$ such that $\goodMap_{D_\bot}(\colorVertMap)=1$ and $I_\bot(V,a,J) = I_\colorVertMap$.
\end{proposition}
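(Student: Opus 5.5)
The plan is to argue by support: every database appearing with nonzero amplitude in either $V\zeta_{\mathfrak{b}}$ or $\zeta_{\mathfrak{b}}^\prime$ has base $I_\bot(V,a,\cdot)$ equal to some $I_\colorVertMap$ with $\colorVertMap$ good for $D_\bot$. A nonzero overlap of $\ket{a,J}$ with the difference requires a nonzero overlap with at least one of the two vectors, so it suffices to control the supports separately. We first fix the algorithm basis $a$. This is harmless because $V$ acts as the identity on $\A$ (it is only controlled by the query in $a$), and both vectors carry the prefactor $\ket{a}_{\A}$.

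\textbf{Support of $\zeta_{\mathfrak{b}}^\prime$.} By construction, $\zeta_{\mathfrak{b}}^\prime$ is a combination of $\ket{a}\canVec{D_\bot}$ and the $\ket{a}\canVec{D_z}$ for $z\in\assignVar$ (or of $\canVec{D_1}$ for type $\typeEdgeIn$).

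For types $\typeInv$--$\typeEdgOut$, \cref{eq:pc_vector_identification} writes $\ket{a}\canVec{D_z}$ as a sum of the states $\ket{\xi_z(\colorVertMap)}$ over maps with $\goodMap_{D_z}(\colorVertMap)=1$. By \cref{eq:pc_support_implies_nesting}, each such map is also good for $D_\bot$. Every database $J$ in these states is therefore either $I_\colorVertMap$ itself or a one-assignment extension of it in the queried slot. In both cases \cref{eq:pc_recover_base}, together with $I_\bot(V,a,I_\colorVertMap)=I_\colorVertMap$, gives $I_\bot(V,a,J)=I_\colorVertMap$.

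For type $\typeEdgeIn$, \cref{eq:pc_leaf_vector_goal} expresses $\canVec{D_1}$ through the states $\ket{\xi_z(\colorVertMap)}$ with $\colorVertMap=\psi|_H$ good for $D_\bot$, and \cref{eq:pc_leaf_existence} guarantees that every good map for $D_1$ arises this way. The same recovery argument then applies.

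\textbf{Support of $V\zeta_{\mathfrak{b}}$.} Since $V$ is linear, it suffices to apply it to each $\ket{\xi_y(\colorVertMap)}$ with $y\in\{\bot\}\cup\assignVar_\colorVertMap$ and $\colorVertMap$ good for $D_\bot$. By \cref{eq:pc_empty_insertion,eq:pc_existing_insertion_specialized}, the images are combinations of $\ket{\xi_\bot(\colorVertMap)}$ and the states $\ket{\xi_{y^\prime}(\colorVertMap)}$, with $y^\prime\in\assignVar_\colorVertMap$, for the same map $\colorVertMap$. The reason is that $V$ acts inside the block $\calH$ determined by the base $I_\colorVertMap$ and the fixed query. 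Applying \cref{eq:pc_recover_base} once more shows that each database $J$ in the support satisfies $I_\bot(V,a,J)=I_\colorVertMap$.

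\textbf{Main obstacle.} The delicate point is that the $\pC$ action on $\ket{\xi_z(\colorVertMap)}$ must stay inside the block whose base is exactly $I_\colorVertMap$. Concretely, the queried slot (the label $l$, the location $w$, or the edge $h$ with direction $\varsigma$) must be determined by $(a,I_\colorVertMap)$ and must not be changed by the assignment $z$. I would check this type by type. For $\typeInv$ the slot is the label $l$ from $a$. For $\typeFor$ it is $w$, which is determined by $L_\colorVertMap$, $C_\colorVertMap$ and the query $(l,\beta)$; this was fixed in the definition of the type. For $\typeEdgOut$ and $\typeEdgeIn$ it is $h=\edgeVertex{\beta}{L_\colorVertMap^{-1}(l)}$, and $L$ is untouched by edge assignments. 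Once this is verified, combining the two support statements with the triangle inequality on the overlap proves the proposition.
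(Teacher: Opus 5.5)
Your proposal is correct and is essentially the paper's argument. The paper also places $\zeta_{\mathfrak{b}}$ and $\zeta_{\mathfrak{b}}^\prime$ (including $\ket{a}\canVec{D_1}$ for type $\typeEdgeIn$) in the span $\calS_{\mathfrak{b}}$ of the states $\ket{\xi_y(\colorVertMap)}$ over good base maps, and uses \cref{eq:pc_empty_insertion,eq:pc_existing_insertion_specialized} to see that $V$ preserves each $\Span\{\ket{\xi_y(\colorVertMap)} : y\in\{\bot\}\cup\assignVar_{\colorVertMap}\}$. It then concludes that any $\ket{a,J}$ with nonzero overlap is some $\ket{\xi_y(\colorVertMap)}$, whose base is $I_{\colorVertMap}$. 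Your separate treatment of the two supports and your slot-by-slot check in the last paragraph just make explicit what the paper takes from \cref{prop:base_databases_ortho}.
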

\begin{proof}
    Let $D_\bot$ be the base of $\mathfrak{b}$. For every $\colorVertMap$ such that $\goodMap_{D_\bot}(\colorVertMap)=1$, define,
    \begin{align*}
        &\calS_\colorVertMap \coloneq \Span\{\ket{\xi_y(\colorVertMap)}: y\in \{\bot\}\cup \assignVar_{\colorVertMap} \}, &&\calS_{\mathfrak{b}} \coloneq \sum_{\colorVertMap: \goodMap_{D_\bot}(\colorVertMap)=1} \calS_\colorVertMap.
    \end{align*}
    Notice that the base vector $\ket{a}\canVec{D_\bot}$ is in $\calS_{\mathfrak{b}}$ and for any $z\in \assignVar$, $\ket{a}\canVec{D_z} \in \calS_\mathfrak{b}$. Note that for type $\typeEdgeIn$, $\ket{a}\canVec{D_1} \in \calS_\mathfrak{b}$. By their definition in \cref{eq:pc_existing_insertion_specialized,eq:pc_empty_insertion} and linearity, we have that $V \zeta_\mathfrak{b} - \zeta_\mathfrak{b}^\prime \in \calS_{\mathfrak{b}}$.

    Thus, if $\braket{a,J}{V \zeta_\mathfrak{b} - \zeta_\mathfrak{b}^\prime} \neq 0$, there must exist a good map $\colorVertMap$ and $y\in \{\bot\}\cup \assignVar_{\colorVertMap}$ such that $\ket{a,J} = \ket{\xi_y(\colorVertMap)}$. Whatever $y$ is, we have that $I_\bot(V,a,J) = I_\colorVertMap$ is fixed.
\end{proof}

Next, we show that errors of the vectors are orthogonal to each other, which will be useful for measuring interference.

\begin{lemma}[Error orthogonality]\label{lem:error_orthogonality} Fix a primitive $V$ and two different groups $\mathfrak{b}, \mathfrak{c} \in \famQParts_{V,a,j}$. Then,
    \begin{align}
        \innerproduct{V \zeta_{\mathfrak{b}} - \zeta_{\mathfrak{b}}^\prime}{V \zeta_{\mathfrak{c}} - \zeta_{\mathfrak{c}}^\prime} = 0.\label{eq:error_orthogonality_res}
    \end{align}
\end{lemma}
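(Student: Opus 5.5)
The plan is to prove that the two error vectors have disjoint supports in the computational basis $\{\ket{a',J}\}$. The inner product in \cref{eq:error_orthogonality_res} is a sum over basis states, so disjoint supports make it vanish termwise. The tool is \cref{prop:non_zero_support_good_map}: any basis state $\ket{a',J}$ with nonzero overlap with $V\zeta_{\mathfrak{b}}-\zeta_{\mathfrak{b}}^\prime$ has $a'=a$ (the algorithm register is untouched) and satisfies $I_\bot(V,a,J)=I_{\colorVertMap}$ for some $\colorVertMap$ good for the base $D_\bot^{\mathfrak{b}}$. The same holds for $\mathfrak{c}$ with some $\colorVertMap'$ good for $D_\bot^{\mathfrak{c}}$. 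Since $I_\bot(V,a,J)$ is a deterministic function of $(V,a,J)$, a common support point would give
\begin{align*}
I_{\colorVertMap}^{(D_\bot^{\mathfrak{b}})} = I_{\colorVertMap'}^{(D_\bot^{\mathfrak{c}})}.
\end{align*}

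The next step is to conclude from this equality that $D_\bot^{\mathfrak{b}}=D_\bot^{\mathfrak{c}}$. For this I would use the last assertion of \cref{lem:valid_representation_transformation}: the map $(D,\colorVertMap)\mapsto I_{\colorVertMap}$ is injective on good pairs, because $(D,\colorVertMap)$ can be read off from $I$. Concretely, $K_{\rho,\sigma}(I)$ recovers the colour tree $H$ and the embedding $\colorVertMap$; the labels on $K_{\rho,\sigma}(I)$ recover the inside part of $P$; and the remaining records recover $\outsideColorDb$ and the outside part of $P$. Hence $(D_\bot^{\mathfrak{b}},\colorVertMap)=(D_\bot^{\mathfrak{c}},\colorVertMap')$.

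The final step is to show that, for fixed $V$ and $a$, the base $D_\bot$ determines the group. For types $\typeInv$, $\typeFor$ and $\typeEdgOut$, the set $\assignVar$ and the extensions $D_z$ are defined purely from $D_\bot$, the query $(l,\beta)$ in $a$, and the operator $V$. For type $\typeEdgeIn$, the added leaf $\tau\beta$ is fixed by $D_\bot$ and the query: $\tau$ is the inside location carrying label $l$ under $P$, and $\beta$ comes from the query. I would also check that no base can be simultaneously of type $\typeEdgOut$ and of type $\typeEdgeIn$ for the same $V=\mpCPerm$ and $a$. This holds because the queried label is either $P(u,o)$ with $u\neq r$ or $P(\tau,i)$, and these are disjoint since $P$ is injective; the paper's remark that $D_\bot$ is the unique minimum-size element and is distinct for each group supports this. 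So equal bases force $\mathfrak{b}=\mathfrak{c}$, a contradiction, and the supports are disjoint.

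I expect the main obstacle to be the injectivity step, specifically recovering $D$ from $I_{\colorVertMap}$. The concern is an ambiguity between $H$ and $\outsideColorDb$: an outside edge could in principle attach to the component of $r$ and be misread as part of $H$. Goodness, via the disjoint-neighbourhood conditions in \cref{eq:goodness_definition}, rules this out. I would invoke this through \cref{lem:valid_representation_transformation}, with $K_{\rho,\sigma}(I_{\colorVertMap})$ equal exactly to the image of $H$, and if needed add a one-line argument that no record of $\outsideColorDb$ touches $\colorVertMap(H)$.
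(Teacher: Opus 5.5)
Your proposal is correct and is essentially the paper's proof. A common nonzero coordinate $\ket{a,J}$ yields, by \cref{prop:non_zero_support_good_map}, good maps with $I_{\colorVertMap} = I_\bot(V,a,J) = I^\prime_{\colorVertMap^\prime}$; \cref{lem:valid_representation_transformation} then forces the two bases to coincide; and for fixed $V$ and $a$ the base determines the group, with whether $P^{-1}(l)$ is an inside or outside location separating types (iii) and (iv). Your extra check that goodness keeps outside records from attaching to $\colorVertMap(H)$ is exactly what justifies the injectivity step that the paper only cites.
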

\begin{proof}
    Let $D_\bot, D_\bot^\prime$ be the bases of $\mathfrak{b}, \mathfrak{c}$. Notice that for any group, there is exactly one base. Suppose that \cref{eq:error_orthogonality_res} does not hold, meaning there must exist at least one algorithm basis $a$ and database $J$ such that $\braket{a,J}{V \zeta_{\mathfrak{b}} - \zeta_{\mathfrak{b}}^\prime} \neq 0$ and $\braket{a,J}{V \zeta_{\mathfrak{c}} - \zeta_{\mathfrak{c}}^\prime} \neq 0$. By \cref{prop:non_zero_support_good_map}, there must exist some $\phi, \psi$ such that,
    \begin{align*}
        \goodMap_{D_\bot}(\phi)=\goodMap_{D_\bot^\prime}(\psi)=1\text{, and } I_\phi = I_\bot(V,a,J) = I^\prime_{\psi}.
    \end{align*}
    By \cref{lem:valid_representation_transformation}, this means that their database representations are the same. For fixed $V$ and $a$, this base also determines the query type: for an edge query, $P^{-1}(l)$ specifies whether the source is inside or outside $H$. Hence $\mathfrak{b} = \mathfrak{c}$, as the base and type determine the group. This is a contradiction, implying the statement.
\end{proof}

Lastly, the result below will be useful.

\begin{observation}\label{prop:proj_weld_on_group_states} Fix some type $\{ \typeInv, \typeFor, \typeEdgOut, \typeEdgeIn\}$ and a group $\mathfrak{b}$. Then,
    \begin{align}
        &\prWeld \zeta_{\mathfrak{b}} = \zeta_{\mathfrak{b}}, &&\prWeld V\zeta_{\mathfrak{b}} = V\zeta_{\mathfrak{b}}, &&\prWeld \zeta_{\mathfrak{b}}^\prime = \zeta_{\mathfrak{b}}^\prime. \label{eq:proj_weld_on_group_states}
    \end{align}
\end{observation}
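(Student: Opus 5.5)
The plan is to reduce all three identities in \cref{eq:proj_weld_on_group_states} to one claim: every database appearing in the support of $\canVec{D}$, for any $D\in\reprTuple$, contains the record $\rho$. By \cref{eq:writeup_database_definition_color}, every basis database $I_{\colorVertMap}=(L_{\colorVertMap},C_{\colorVertMap})$ in the support of $\canVec{D}$ satisfies $C_{\colorVertMap}\supseteq\outsideColorDb\cup\{\rho\}$. Hence $\rho\in I_{\colorVertMap}$ and $\prWeld\canVec{D}=\canVec{D}$ for every $D$. Since $\zeta_{\mathfrak{b}}$ and $\zeta_{\mathfrak{b}}^\prime$ in \cref{eq:pc_compared_vectors} are linear combinations of vectors $\ket{a}_{\A}\canVec{D}$ with $D\in\mathfrak{b}$, the first and third identities follow by linearity. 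This holds for every type, including $\typeEdgeIn$, where $\mathfrak{b}=\{D_\bot,D_1\}$.

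For the middle identity I would work inside $\calS_{\mathfrak{b}}$, as in the proof of \cref{prop:non_zero_support_good_map}. There $V\zeta_{\mathfrak{b}}$ is spanned by vectors $\ket{\xi_y(\colorVertMap)}$ with $y\in\{\bot\}\cup\assignVar_{\colorVertMap}$ and $\colorVertMap$ good for $D_\bot$. I would use \cref{eq:pc_empty_insertion,eq:pc_existing_insertion_specialized} to write $V\ket{\xi_\bot(\colorVertMap)}$ and $V\ket{\xi_z(\colorVertMap)}$ as combinations of these $\xi$ vectors.

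The remaining task is to check that each $\ket{\xi_y(\colorVertMap)}$ in \cref{eq:pc_physical_extensions} contains $\rho$. For types $\typeInv$ and $\typeFor$, $C$ is unchanged, so $\rho\in C_{\colorVertMap}$ persists. For types $\typeEdgOut$ and $\typeEdgeIn$, the database is $C_{\colorVertMap}[h\to z]$ or $C_{\colorVertMap}[z\to h]$.

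This is the only place where care is needed, because the update $I[x\to y]$ could in principle overwrite a record. I would argue that it does not. The queried slot $h$ is absent from $C_{\colorVertMap}$: this is required for the base of the group, and $I_\bot(V,a,I_{\colorVertMap})=I_{\colorVertMap}$. The target $z$ lies in the fresh set $\calA^{\varsigma}_{h,b,I_{\colorVertMap}}$, which by \cref{eq:available_assignments_fresh_right,eq:available_assignments_fresh_left} excludes $\dom(C_b)\cup\im(C_b)$. So the update only adds a record and never removes $(e,f)$.

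For type $\typeEdgOut$, the condition on $h$ explicitly involves $\outsideColorDb\cup\{\rho\}$, so $\rho$ itself is not being queried. That case is handled separately in \cref{lem:mpcPerm_weld_similarity}. Hence $\rho$ survives in every term, giving $\prWeld V\zeta_{\mathfrak{b}}=V\zeta_{\mathfrak{b}}$.
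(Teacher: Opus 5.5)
Your proposal is correct and follows essentially the same route as the paper. The first and third identities come from every represented database containing $\rho$. The middle identity comes from $V$ keeping everything inside $\calS_{\mathfrak{b}}$, as in the proof of \cref{prop:non_zero_support_good_map}. The paper phrases that step through the error vector $V\zeta_{\mathfrak{b}}-\zeta_{\mathfrak{b}}^\prime$ plus $\zeta_{\mathfrak{b}}^\prime$, and simply asserts that $\rho$ is retained. You instead check directly that the updates $C_{\colorVertMap}[h\to z]$ and $C_{\colorVertMap}[z\to h]$ never overwrite a record, which is a detail the paper leaves implicit.
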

\begin{proof}
    Every represented database contains $\rho$. Therefore, for any $D\in \mathfrak{b}$, $\prWeld \ket{a}\canVec{D} = \ket{a}\canVec{D}$. By \cref{eq:pc_compared_vectors},
    \begin{align}
        &\prWeld \zeta_{\mathfrak{b}} = \zeta_{\mathfrak{b}}, &\prWeld \zeta_{\mathfrak{b}}^\prime = \zeta_{\mathfrak{b}}^\prime. \label{eq:welded_edge_remains}
    \end{align}
    Notice that by the proof of \cref{prop:non_zero_support_good_map}, we still keep $\rho$ in the error, meaning that $\prWeld (V \zeta_{\mathfrak{b}} - \zeta_{\mathfrak{b}}^\prime)=V \zeta_{\mathfrak{b}} - \zeta_{\mathfrak{b}}^\prime$. By \cref{eq:welded_edge_remains},
    \begin{align*}
        \prWeld V\zeta_{\mathfrak{b}} = \prWeld (V \zeta_{\mathfrak{b}} - \zeta_{\mathfrak{b}}^\prime) + \prWeld \zeta_{\mathfrak{b}}^\prime = V\zeta_{\mathfrak{b}}.
    \end{align*}
\end{proof}

\subsubsection{Bounding escape probability}

We may apply the analysis above to get the following result.

\begin{proposition}\label{prop:combining_primitive_bounds}
    For any $j\in [0,T]$ and any primitive $V$,
    \begin{align*}
        \norm{(\Id - \prUnif{j+1}) (\prWeld V \prWeld) \prUnif{j}} \leq 8\sqrt{\delta}
    \end{align*}
\end{proposition}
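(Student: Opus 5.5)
We split by the type of $V$ and decompose the input using the assignment groups. If $V$ is an algorithm unitary $A_i$ or $\xorOP$, \cref{prop:unif_proj_algo_xor} gives $(\Id-\prUnif{j})(\prWeld V\prWeld)\prUnif{j}=0$. Since $\prUnif{j}\preceq\prUnif{j+1}$, the bound holds trivially. So assume $V$ is one of $\pCLabelInv,\pCLabel,\mpCPerm$.

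Take an arbitrary vector in the range of $\prUnif{j}$. We write it, for each algorithm basis state $a$, as $\sum_{D:\abs{D}\le j}\beta_{a,D}\ket{a}\canVec{D}/\sqrt{p_D}$. We then regroup these terms according to what $V$ does given $a$ and the query $(l,\beta)$.

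First, some terms have a queried argument that is already determined for all good $\colorVertMap$ and uniformly either absent or fixed, and $V$ acts as the identity or trivially on them. Examples are a label query at $0$, a forward step to $\bot$ or to an already labelled vertex, and an edge query at a non-middle edge. These terms are left invariant, up to the weld-query exception handled separately below.

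Second, the remaining terms are attached to assignment groups $\mathfrak{b}\in\famQParts_{V,a,j}$ of types $\typeInv$–$\typeEdgeIn$. Every $D$ with the queried argument assigned belongs to exactly one group: remove the assignment to get its base $D_\bot$, of size at most $j$. Every $D$ with the argument absent is the base of its own group. This gives a decomposition $\zeta=\sum_{a}\sum_{\mathfrak{b}}\zeta_{\mathfrak{b}}$ with coefficients $c_\bot,c_z$ as in \cref{eq:pc_coefficients}. The target vectors $\zeta'_{\mathfrak{b}}$ only involve $\canVec{D}$ with $\abs{D}\le j+1$, so they lie in the range of $\prUnif{j+1}$. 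By \cref{prop:proj_weld_on_group_states} they are also invariant under $\prWeld$.

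Third, a query of the weld $\rho$ itself from $\tail(e)$ or $\head(f)$ is not covered by any group. For it we use \cref{lem:mpcPerm_weld_similarity}: $\prWeld\mpCPerm\prWeld$ acts as the identity up to error $8/\sqrt N\le\sqrt\delta$.

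With this decomposition, $(\Id-\prUnif{j+1})\prWeld V\prWeld\zeta$ is bounded by $\norm{\sum_{a,\mathfrak{b}}(V\zeta_{\mathfrak{b}}-\zeta'_{\mathfrak{b}})}$ plus the weld-query error. The different algorithm bases $a$ are orthogonal, because $V$ is controlled on $a$. Within a fixed $a$, \cref{lem:error_orthogonality} makes the errors of distinct groups orthogonal. The squared norm therefore splits as $\sum_{a,\mathfrak{b}}\norm{V\zeta_{\mathfrak{b}}-\zeta'_{\mathfrak{b}}}^2$. By \cref{lem:pc_types_i_iii_estimate} and \cref{lem:pc_type_iv_similarity}, each summand is at most $32\delta\norm{\zeta_{\mathfrak{b}}}^2$.

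To close the argument we need $\sum_{\mathfrak{b}}\norm{\zeta_{\mathfrak{b}}}^2\lesssim\norm{\zeta}^2$. This follows from the near-orthonormality of the $\canVec{D}/\sqrt{p_D}$ in \cref{eq:kronecker_database_instances}, together with the fact that the groups partition the $D$'s. The weld-query part is orthogonal because its bases $(a,L)$ differ. Combining everything gives $\sqrt{32\delta}+\sqrt\delta\le 8\sqrt\delta$.

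\textbf{Main obstacle.} The hardest step is the bookkeeping in the first two parts: showing that every $D$ in the range of $\prUnif{j}$ falls into exactly one group or invariant class for the given $(V,a)$. In particular, for $\typeFor$ and the edge types we must check that the source $v=L_\colorVertMap^{-1}(l)$ and the target location $w$ are determined by $P$ and $H$ rather than by $\colorVertMap$. This relies on goodness: an inside source is identified through $P^{-1}(l)=(\tau,i)$, and the target reached from an inside source is a child location $\tau\beta$ or parent, determined by $H$. The bookkeeping also has to handle forward steps whose target is an unlabelled but already present location. I would verify this case by case, using \cref{lem:valid_representation_transformation} for injectivity.
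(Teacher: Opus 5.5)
Your proposal is correct and is essentially the paper's own proof. Both split the range of $\prUnif{j}$ into identity instances, assignment groups of types $\typeInv$--$\typeEdgeIn$, and the query of $\rho$ itself, then combine \cref{lem:error_orthogonality}, \cref{lem:pc_types_i_iii_estimate}, \cref{lem:pc_type_iv_similarity} and \cref{lem:mpcPerm_weld_similarity}. The only real difference is that you add the weld-query error by the triangle inequality, $\sqrt{32\delta}+\sqrt{\delta}$, where the paper sums squares, $32\delta+64/N\le 64\delta$. One wording fix: $\pCLabel$ is the identity only when the target is $\bot$, $0$, or the source itself, so a target already labelled elsewhere is the assigned member of a type $\typeFor$ group, as your second step says, not an invariant term.
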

\begin{proof}
    The case when $V$ is an algorithm unitary or XOR operator follows from \cref{prop:unif_proj_algo_xor}. When $j=0$, $\prUnif{0} = 0$, so this case trivially holds.
    
    Let us assume that $V\in \{\pCLabel,\pCLabelInv,\mpCPerm\}$ and $j\geq 1$. By \cref{eq:kronecker_database_instances}, the states $\canVec{D}$ form a basis, meaning that any vector $\zeta \in \ran(\prUnif{j})$ can be written as,
    \begin{align*}
        \zeta = \sum_a \sum_{D \in \reprTuple: \abs{D} \leq j} c_{a,D} \ket{a}\canVec{D}.
    \end{align*}
    Furthermore, we can partition the basis $(a,D)$ based on which assignment group they belong in, queries of $\rho$ or identity instances. These are all the options a query can be. Call such group $\mathfrak{b}$ and for each group, let $\zeta_{\mathfrak{b}}$ be subspace of $\zeta$ conditioned on $\mathfrak{b}$. Every $D^\prime\in\mathfrak{b}$ has $\abs{D^\prime}\leq j+1$ and if $\abs{D^\prime}=j+1$, their amplitude is $0$ due to \cref{lem:cpwt_bounded_growth}. Let $\zeta_{\mathfrak{b}}^\prime$ be the vector from \cref{lem:pc_types_i_iii_estimate,lem:pc_type_iv_similarity}. Then,
    \begin{align}
        \prUnif{j+1} \zeta_{\mathfrak{b}}^\prime = \zeta_{\mathfrak{b}}^\prime.\label{eq:prunif_j_group}
    \end{align}
    By \cref{lem:pc_types_i_iii_estimate,lem:pc_type_iv_similarity}, unless we are querying the weld $\rho$,
    \begin{align*}
        \norm{V\zeta_{\mathfrak{b}} - \zeta_{\mathfrak{b}}^\prime}^2 \leq 32 \delta \norm{\zeta_{\mathfrak{b}}}^2.
    \end{align*}
    By \cref{prop:proj_weld_on_group_states}, inserting $\prWeld$ on both sides of $V$ does not alter these errors. Define $\zeta^\prime$ by replacing each $\zeta_{\mathfrak{b}}$ with $\zeta_{\mathfrak{b}}^\prime$ except for the weld query. The error in the weld is orthogonal to other errors due to fixing the basis $(a,L)$. Letting $\zeta_\rho$ be the weld component, \cref{lem:error_orthogonality,lem:mpcPerm_weld_similarity} give
    \begin{align*}
        \norm{(\prWeld V \prWeld) \zeta - \zeta^\prime}^2 &= \sum_{\mathfrak{b}}\norm{V\zeta_{\mathfrak{b}}-\zeta_{\mathfrak{b}}^\prime}^2+\norm{(\prWeld V\prWeld-\Id)\zeta_\rho}^2\\
        &\leq32\delta\sum_{\mathfrak{b}}\norm{\zeta_{\mathfrak{b}}}^2+\frac{64}{N}\norm{\zeta_\rho}^2\leq64\delta\norm{\zeta}^2,
    \end{align*}
    where the input components are orthogonal by \cref{eq:kronecker_database_instances} and $64/N\leq\delta$. Therefore, by \cref{eq:prunif_j_group},
    \begin{align*}
        \norm{(\Id - \prUnif{j+1}) (\prWeld V \prWeld) \zeta}^2 &= \norm{(\Id - \prUnif{j+1}) ((\prWeld V \prWeld) \zeta - \zeta^\prime )}^2\\
        &\leq \norm{(\prWeld V \prWeld) \zeta - \zeta^\prime}^2 \leq 64\delta\norm{\zeta}^2.
    \end{align*}
    Hence by taking the square root, the statement follows.
\end{proof}

Let us prove the main theorem of the subsection.

\begin{theorem}\label{thm:bound_escape_probability}
    For every weld record $\rho = (\alpha,e,f)$, direction $\sigma \in \{\rightarrow,\leftarrow\}$ and step $t\in [T]$,
    \begin{align*}
        \norm{\prBoundary_{\rho, \sigma} \prFuture \insertDir \ket{\psi_{t-1}}}^2 \leq \frac{2^{14} (T+1)^4}{\sqrt{N}} \norm{\insertDir \ket{\psi_{t-1}}}^2
    \end{align*}
\end{theorem}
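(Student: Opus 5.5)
Fix $\rho,\sigma,t$, let $\zeta_0\coloneq\insertDir\ket{\psi_{t-1}}$, and for $s\ge t$ let $\zeta_s\coloneq \prWeld V_s\prWeld\cdots\prWeld V_{t+1}\prWeld\,\zeta_0$, so that $\prFuture\zeta_0=\zeta_T$. The plan is to show that $\zeta_0$ is already, up to a small error, in the range of a uniform projector, and that each retained step only leaks a little weight out of the uniform subspace. Since $\prBoundary_{\rho,\sigma}\prUnif{j}=0$ for every $j$, only this leaked weight can contribute to the boundary event.

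\textbf{Step 1 (the initial state).} At time $t$ the record $\rho$ has just been freshly inserted in direction $\sigma$. Hence in every database of $\zeta_0$, the root $r_{\rho,\sigma}$ lies outside $\supp(I_\bot)$, where $I_\bot$ is the database before insertion. Its component $K_{\rho,\sigma}$ is therefore just the single vertex $r$, and $H=\{\emptySeq\}$. Such a database is represented by $D=(\{\emptySeq\},P,\outsideColorDb)$ with $\outsideColorDb=C\setminus\{\rho\}$. The insertion superposition places $\head(f)$ (resp.\ $\tail(e)$) uniformly over the fresh set, which is close to the distribution $\probAssignEdge{\alpha}{\cdot,\cdot}$ defining the root. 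I will group basis terms of $\zeta_0$ by $(a,D)$ and compare the fresh uniform superposition with $\canVec{D}$ via \cref{obs:superposition_similarity}, which is the same computation as in \cref{lem:pc_type_iv_similarity}. This should give $\norm{(\Id-\prUnif{j_0})\zeta_0}^2\le O(\delta)\norm{\zeta_0}^2$ for $j_0=|D|\le t+1$.

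The subtlety is that the root $r$ is a fixed vertex in the definition of $\mu_H$. What is uniform here is actually the choice of $f$. I therefore expect to have to absorb this either by treating the seed side as outside data in $P$, or by conditioning on $r$ (the statement fixes $\rho$, hence $r$). Conditioning on $r$ makes $\zeta_0$ supported on exactly $H=\{\emptySeq\}$ states, which lie in $\ran\prUnif{j_0}$ up to the goodness deficit $\delta$ from \cref{lem:goodness_probability}. This is the step I expect to be the main obstacle.

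\textbf{Step 2 (propagation).} Write $\zeta_s=u_s+w_s$, where $u_s\coloneq\prUnif{j_0+s-t}\zeta_s$. Using $\norm{\prWeld V\prWeld}\le1$ and \cref{prop:combining_primitive_bounds},
\begin{align*}
\norm{(\Id-\prUnif{j+1})\zeta_{s+1}}\le \norm{(\Id-\prUnif{j+1})\prWeld V_{s+1}\prWeld u_s}+\norm{w_s}\le 8\sqrt\delta\norm{u_s}+\norm{w_s}.
\end{align*}
The index $j$ may need capping at $T+1$, which is allowed by \cref{lem:cpwt_bounded_growth}. Inducting over at most $T$ steps then gives $\norm{w_T}\le (8T+O(1))\sqrt\delta\,\norm{\zeta_0}$.

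\textbf{Step 3 (conclusion).} Since $\prBoundary_{\rho,\sigma}u_T=0$,
\begin{align*}
\norm{\prBoundary_{\rho,\sigma}\zeta_T}^2\le\norm{w_T}^2\le 64(T+1)^2\delta\norm{\zeta_0}^2\le \frac{2^{14}(T+1)^4}{\sqrt N}\norm{\zeta_0}^2,
\end{align*}
using $\delta=256(T+1)^2/\sqrt N$ and absorbing the constants into the factor $2^{14}$. This is exactly the claimed bound. The remaining care is in Step 1, together with checking that $\ran\prUnif{j}$ is compatible with $\prWeld$; that compatibility is given by \cref{prop:proj_weld_on_group_states}.
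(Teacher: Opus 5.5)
Your proposal is correct and follows the paper's proof: place $\insertDir\ket{\psi_{t-1}}$ in $\ran\prUnif{j_0}$, propagate with \cref{prop:combining_primitive_bounds}, and use $\prBoundary_{\rho,\sigma}\prUnif{j}=0$; your $u_s,w_s$ bookkeeping is equivalent to the paper's $\vecRem,\vecUnif$ telescoping. The step you flag as the main obstacle is in fact exact, because $\insertDir$ already fixes the record $\rho=(\alpha,e,f)$, so there is no superposition over $f$ to compare via \cref{obs:superposition_similarity}; freshness makes $r$ unlabeled and isolated, so each basis state of $\zeta_0$ equals $\ket{a}\canVec{D}$ for $D=(\{\emptySeq\},P,C\setminus\{\rho\})$, a single basis state with $p_D=1$ (goodness is automatic when $H=\{\emptySeq\}$), which gives $(\Id-\prUnif{j_0})\zeta_0=0$ and exactly $64(T-t)^2\delta\le 2^{14}(T+1)^4/\sqrt{N}$.
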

\begin{proof}
    Let $\xi = \insertDir \ket{\psi_{t-1}}$ and assume that $\xi \neq 0$ as otherwise the statement is trivially true. Notice that $\prUnif{t} \xi = \xi$ as by the freshness condition $r$ is isolated and thus $H=\{\emptySeq\}$. We define the following series of vectors for $j\in [t+1,T]$,
    \begin{align*}
        &\vecRem_t \coloneq \xi, & \vecRem_j \coloneq (\prWeld V_j \prWeld) \vecRem_{j-1},\\
        &\vecUnif_t \coloneq \xi, & \vecUnif_j \coloneq \prUnif{j}(\prWeld V_j \prWeld) \vecUnif_{j-1}.
    \end{align*}
    Notice that,
    \begin{align*}
        \vecRem_j - \vecUnif_j = (\prWeld V_j \prWeld)(\vecRem_{j-1} - \vecUnif_{j-1}) + (\Id - \prUnif{j})((\prWeld V_j \prWeld))\vecUnif_{j-1}.
    \end{align*}
    Therefore, by \cref{prop:combining_primitive_bounds} and the triangle inequality,
    \begin{align}
        \norm{\vecRem_T - \vecUnif_T} &\leq \norm{\vecRem_t - \vecUnif_t} + \sum_{j\in [t,T-1]} 8\sqrt{\delta} \norm{\vecUnif_j}\\
        &\leq 8(T-t)\sqrt{\delta} \norm{\xi}.\label{eq:bound_on_diff_vecs}
    \end{align}
    Notice that $\prBoundary_{\rho,\sigma} \vecUnif_T = 0$.
    Therefore,
    \begin{align*}
        \norm{\prBoundary_{\rho,\sigma} \prFuture \xi}^2 &= \norm{\prBoundary_{\rho,\sigma} (\vecRem_T - \vecUnif_T)}^2\\
        &\leq \norm{\vecRem_T - \vecUnif_T}^2 \leq \frac{2^{14}(T+1)^4}{\sqrt{N}} \norm{\xi}^2,
    \end{align*}
    where the last line follows from \cref{eq:bound_on_diff_vecs,lem:goodness_probability}.
\end{proof}

\subsection{Single-cycle distribution}\label{sec:single_cycle}

Notice that the distribution over welded tree graphs using $\colorPerm \sim \colorPresSet$ can cause the welded edges to not form a single cycle, but instead multiple cycles. Let $\multDistr$ be the corresponding distribution over welded tree graphs, \evSingle be the event that the graph contains a single cycle in the weld and $\singDistr \coloneq \multDistr|_{\evSingle}$.

\begin{lemma}\label{lem:mult_to_single_cycle}
    Let $p_{\calD}(\alg)$ be the acceptance probability of a quantum algorithm \alg making $q\geq 1$ queries to an oracle sampled from $\calD$. Then,
    \begin{align*}
        p_{\singDistr}(\alg) = O\left( p_{\multDistr}(\alg) + \frac{n\cdot q^2}{N}  \right).
    \end{align*}
\end{lemma}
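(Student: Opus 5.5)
We will derive the bound from the identity $p_{\multDistr}(\alg) = \Pr[\evSingle]\, p_{\singDistr}(\alg) + (1-\Pr[\evSingle])\, p_{\multDistr|\neg\evSingle}(\alg)$, which gives $p_{\singDistr}(\alg) \le p_{\multDistr}(\alg)/\Pr_{\multDistr}[\evSingle]$. Taken alone this would need $\Pr[\evSingle]=\Omega(1)$, and that need not hold: the probability that a random cycle structure is a single cycle is typically only $\Theta(1/\text{size})$. We therefore plan a coupling/hybrid argument instead. Sample $\colorPerm\sim\colorPresSet$ and then produce $\colorPerm'$ whose weld component has the law $\singDistr$, such that $\colorPerm$ and $\colorPerm'$ differ only on a small set of weld edges. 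Concretely, the weld is a union of cycles in the leaf bipartite graph. We merge the cycles into one by a sequence of ``switches'': pick two edges of the same color lying in different cycles and exchange their heads. This keeps the permutation color-preserving and joins the two cycles. The number of cycles of a random such structure is $O(n)=O(\log N)$ in expectation, by the usual harmonic-sum argument for the cycle count of a random permutation on $\Theta(N)$ elements. We would make the switching canonical and randomized so that the output is exactly (or up to negligible error) uniform on single-cycle structures. The intended way to do this is to choose each switch pair uniformly, and then check that the switch map is regular, i.e.\ each single-cycle structure has the same number of preimages at each level.

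Given the coupling, we apply the hybrid method (the BBBV-style bound). For a fixed pair $(\sigma,\sigma')$ the two oracles differ on a set $S$ of $O(k)$ labelled inputs, where $k$ is the number of switches. Because labels are random and the switched edges are chosen uniformly among $\Omega(N)$ weld edges, for each query step the expected query magnitude that the algorithm places on $S$ is $O(k/N)$. Summing over $q$ steps then gives a final-state distance of $O(q\sqrt{k q/N})$. Squaring and averaging with $\mathbb{E}[k]=O(n)$ bounds the difference in acceptance probability by $O(nq^2/N)$. Combining the two pieces yields $p_{\singDistr}(\alg)\le p_{\multDistr}(\alg)+O(nq^2/N)$, which is stronger than the claimed form; the $O(\cdot)$ leaves slack for any constant-factor loss in the coupling.

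The main obstacle is making the switching coupling produce exactly the conditional law $\singDistr$ while keeping the changed set uniformly random with respect to the algorithm's view. The algorithm's queries are adaptive and correlated with $\colorPerm$, so ``uniformly chosen switched edges'' does not immediately give a $k/N$ query weight on them. We would first handle this by exploiting the symmetry of the random label permutation $\labelPerm$: conditioned on the graph, the switched edges are uniform among same-color weld edges, and the query magnitudes sum to at most one per step. If exact uniformity of the switch map fails, we would fall back to a rejection-style argument comparing $\singDistr$ with $\multDistr$ through an explicit density ratio bounded on typical structures.
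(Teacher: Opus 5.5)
There is a genuine gap. The coupling, which is the entire content of the lemma, is never constructed, and the way you would feed it into the hybrid argument does not handle adaptivity.

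First, choosing merge switches uniformly does not produce an exactly (or even multiplicatively nearly) uniform single-cycle weld. From a configuration with cycles $C_1,\dots,C_k$, the number of available merging switches is $\sum_{\eta}\sum_{i<j}e_\eta(C_i)e_\eta(C_j)$, where $e_\eta(C)$ counts the $\eta$-colored edges of $C$. The number of ways to split a given single cycle by a switch depends on how its same-colored edges are oriented along it. Both counts vary across structures, so the switch map is not regular. You would need a real switching-counting argument with multiplicative error control over roughly $\log N$ rounds. Your fallback does not help either: $\singDistr$ is $\multDistr$ conditioned on $\evSingle$, which (as you note) has probability of order $1/N$. So the density ratio equals $1/\Pr[\evSingle]$, of order $N$, on every single-cycle structure and is not bounded on any typical set.

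Second, even granting a coupling, your direction (derive $\colorPerm'$ from $\colorPerm$) makes the changed set $S$ far from uniform given the multi-cycle graph. Once $k\geq 2$, every cycle must have some edge switched, so on a cycle of length $\ell$ the average switch probability of its edges is at least $1/\ell$. That is constant for short cycles; only the average over all weld edges is $O(n/N)$. The hybrid states are generated by the very oracle on which $S$ depends, and the algorithm learns labels through that oracle, so the random $\labelPerm$ does not decorrelate its queries from the graph. Hence you cannot conclude that the expected query weight on $S$ is $O(k/N)$. The BBBV step needs $\Pr[x\in S\mid\sigma']=O(n/N)$ for \emph{every} input $x$, conditioned on the oracle $\sigma'$ whose intermediate states you use.

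The paper gets exactly this by coupling in the opposite direction. It starts from $h\sim\singDistr$ and reads left vertices in random order. It uses the marginal bound $\Pr[h_\eta(u)=v\mid B]\leq 1/(r_\eta-1)$ to define a nonnegative correction kernel $t_B^\eta$. It keeps $h_\eta(u)$ with probability $1-1/r_\eta$ and otherwise resamples via $t_B^\eta$. As a result $g_\eta=\phi_\eta\circ h_\eta$ is exactly a product of uniform matchings, while $\Pr[g_\eta(x)\neq h_\eta(x)\mid h]=O(n/N)$ for each $x$. The hybrid is then run with states computed under the single-cycle oracle.

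Finally, there are two slips at the end. Let $\psi,\psi'$ be the final states under the two oracles. Cauchy--Schwarz gives a distance of order $q\sqrt{k/N}$, not $q\sqrt{kq/N}$. More importantly, a squared-distance bound only yields $p_{\singDistr}\leq 2p_{\multDistr}+2\,\EX\norm{\psi-\psi'}^2$, from $\sqrt{p'}\leq\sqrt{p}+\norm{\psi-\psi'}$. The additive difference of acceptance probabilities is only $O(q\sqrt{n/N})$. So your claimed bound $p_{\singDistr}\leq p_{\multDistr}+O(nq^2/N)$ does not follow, and the $O(p_{\multDistr}+\cdot)$ form of the lemma is exactly what this argument gives.
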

\begin{proof}
    For each color $\eta\in \colorSet$, define the vertices at the ends of welded edges colored $\eta$ as,
    \begin{align*}
        &L_\eta \coloneq \col(n) \setminus \colClass{n,\eta}, &R_\eta \coloneq \col(n+1) \setminus \colClass{n+1,\eta}\\
        m_\eta&\coloneq\abs{L_\eta}=\abs{R_\eta}
        =N-\colClassSize{n,\eta}=\Theta(N),
    \end{align*}
    where the estimate on the size of $m_\eta$ is due to \cref{lem:colors_per_column}. A color-$\eta$ matching is some bijection $s:L_\eta\to R_\eta$ which represents the edge set $\{ \{u,s(u)\}: u\in L_\eta\}$.

    Let $h=(h_\eta)_{\eta\in \colorSet}$ be the weld edges of a graph sampled from \singDistr and $B=(B_\eta)_{\eta \in \colorSet}$ some set of recorded values which have been read, where $B_\eta \subseteq \{(u,h_\eta(u)):u\in L_\eta\}$. Conditioning on the record $B$ means that they are present. Let $U_\eta \coloneq L_\eta \setminus \dom(B_\eta)$, $V_\eta \coloneq R_\eta \setminus \im(B_\eta)$ and $r_\eta \coloneq \abs{U_\eta} = \abs{V_\eta} = m_\eta - \abs{B_\eta}$. Notice that for $u\in U_\eta, v\in V_\eta$ and $r_\eta>1$,
    \begin{align}
        p_B^\eta(u,v) \coloneq \Pr[h_\eta(u) = v | B] \leq \frac{1}{r_\eta-1}.\label{eq:single_cycle_sequential_marginals}
    \end{align}
    To see this, first condition on complete assignments of color edges for the other colors. These fixed edges along with $B_\eta$ form paths with $r_\eta$ free nodes on each side. Notice that both $u$ and $v$ must be endpoints of their respective paths connecting them. First, suppose that for one of $u$ or $v$, the other endpoint on this path is on the opposite side. Deleting this path from a completed cycle and joining its connecting edges allows it to be reinserted exactly in one place for each of the $r_\eta - 1$ possible partners. Hence $\{u,v\}$ has probability $\tfrac{1}{r_\eta - 1}$, unless it joins the paths endpoints in which case it is $0$.
    
    Otherwise, $u$ is part of a path whose both endpoints are on the left. Each right endpoint on a path with opposite-side endpoints has probability $\tfrac{1}{r_\eta-1}$ by the argument above, while the rest have equal probability due to symmetry. Therefore, their common probability must be less than $\tfrac{1}{r_\eta - 1}$ as otherwise the probabilities would sum over $1$. Hence \cref{eq:single_cycle_sequential_marginals} follows by averaging. For $r_\eta\geq 1$, let,
    \begin{align*}
        t_B^{\eta}(u,v) \coloneq 1 - (r_\eta-1)p_B^\eta(u,v).
    \end{align*}
    Notice that by \cref{eq:single_cycle_sequential_marginals}, $t_B^{\eta}(u,v)$ is nonnegative. Furthermore,
    \begin{align*}
        \sum_{v\in V_\eta}t_B^{\eta}(u,v) = \sum_{u\in U_\eta} t_B^{\eta}(u,v) = r_\eta-(r_\eta-1) = 1.
    \end{align*}
    We create the assignment $B$ using the following process. For every color $\eta$. start with $B_\eta =\emptyset$ and let $\phi_\eta:R_\eta \to R_\eta$ be the identity. One color at a time in some order, repeat the following process until $U_\eta$ is empty.
    \begin{enumerate}
        \item Uniformly sample $u~\sim U_\eta$ and let $y=h_\eta(u)$.
        \item With probability $1-\tfrac{1}{r_\eta}$, set $z=y$. Otherwise, sample $z\sim V_\eta$ with probability $t_B^{\eta}(u,\cdot)$.
        \item Add $(u,y)$ to $B_\eta$ and exchange the mappings of $\phi_\eta(y)$ and $\phi_\eta(z)$.
    \end{enumerate}
    Afterwards, let $g = (g_\eta)_{\eta \in \colorSet}$ where $g_\eta = \phi_\eta \circ h_\eta$. Notice that for each fixed $u$,
    \begin{align*}
        \Pr[z=v|B,u] = \left(1 - \frac{1}{r_\eta}\right) p_B^\eta(u,v) + \frac{1}{r_\eta} t_B^{\eta}(u,v) = \frac{1}{r_\eta},
    \end{align*}
    where $B$ is the record before the current $u$ was handled.
    Notice that after swapping $\phi_\eta(y)$ and $\phi_{\eta}(z)$, we have $\phi_\eta(h_\eta(u))$ is never changed again. Therefore each value of $g_\eta$ is uniform over the right vertices which were not used beforehand. Therefore all $g_\eta$ are independent and uniform. Note that this extends when conditioned on previous random choices as they depend on $h$ through $B$. Therefore, $g$ follows the distribution of welded edges under $\multDistr$ and hence that from \cref{sec:construction_welded_trees}.

    Fix some $h$ and some records $B$. On the second case in step 2, $y=h_\eta(u)$ is uniform over $V_\eta$ as it is bijective and $z$ is uniform as,
    \begin{align*}
        \frac{1}{r_\eta} \sum_{u\in U_\eta} t_B^{\eta}(u,v) = \frac{1}{r_\eta}.
    \end{align*}
    The fixed value of $\phi_\eta(v)$ only changes in this branch when $v\in \{y,z\}$. Hence,
    \begin{align*}
        \Pr[\phi_\eta(v) \text{ changes}|h,B] \leq \frac{1}{r_\eta} \left( \frac{1}{r_\eta} + \frac{1}{r_\eta}\right) = \frac{2}{r_\eta^2}.
    \end{align*}
    For a fixed $x\in L_\eta$, we have that $h_\eta(x)$ is in $V_\eta$ when $x$ has not been read yet. This has probability $\tfrac{r_\eta}{m_\eta}$ as the order is uniformly random and independent of $h$. By union bounding,
    \begin{align}
        \Pr[g_\eta(x) \neq h_\eta(x)|h] \leq \sum_{r\in [2,m_\eta]} \frac{r}{m_\eta} \frac{2}{r^2} = O\left(\frac{n}{N}\right).\label{eq:single_cycle_sequential_discrepancy}
    \end{align}
    The inverse mappings have the same bound due to symmetry. Sample some permutations on other edges besides the weld and vertex labels independently of $(g,h)$ and use them to obtain the instances of welded tree graph $\sigma\sim \multDistr$ and $\sigma^\prime \sim \singDistr$ built from $g$ and $h$ respectively. By \cref{eq:single_cycle_sequential_discrepancy},
    \begin{align*}
        \Pr[\wtF_\sigma(l,\eta) \neq \wtF_{\sigma^\prime}(l,\eta) | \sigma^\prime ] = O\left(\frac{n}{N}\right).
    \end{align*}
    Applying the hybrid argument of~\cite{BBBV97}, taking the expectation over $(\sigma,\sigma^\prime)$ and applying Cauchy-Schwarz,
    \begin{align*}
        \EX \norm{\ket{\psi_{\alg,q}^{\wtO_\sigma}} - \ket{\psi_{\alg,q}^{\wtO_{\sigma^\prime}}}}^2 &\leq q\sum_{j\in [0,q-1]} \EX\norm{(\wtO_\sigma - \wtO_{\sigma^\prime}) \ket{\psi_{\alg,j}^{\wtO_{\sigma^\prime}}} }^2\\
        &= O\left( \frac{n\cdot q^2 }{N}\right)
    \end{align*}
    Therefore by the triangle inequality and projecting on the accepting output,
    \begin{align*}
        p_{\singDistr}(\alg) &= O\left( p_{\multDistr}(\alg) + \frac{n\cdot q^2}{N}  \right).\qedhere
    \end{align*}
\end{proof}

\subsection{Putting it all together}

With all of the tools above, let us prove the main result of the paper.

\begin{proof}[Proof of \cref{thm:pathfinding_hard}]
    Let $\Pi_\calR$ be the projector on $\I$ which contains databases which exit, as defined in \cref{eq:databases_which_exit}.
    Due to \cref{thm:success_probability_standard_compressed}, it suffices to bound $\probCWT$, the probability $\ket*{\psi_{\alg,q}^{(\cpwt)}}_{\A\I}$ contains a database which records an exit. Furthermore, using \cref{lem:cpwt_mcpgt}, we may switch from \cpwt to \mcpwt. Let us use $\ket{\psi}_{\A\I} = \ket*{\psi_{\alg,q}^{(\mcpwt)}}_{\A\I}$ to denote the final state using \mcpwt. This means that it suffices to bound $p_{\mcpwt} = \norm*{\Pi_{\calR} \ket{\psi} }^2$. This is exactly done using \cref{lem:bounding_future_bounds_exit_probability,thm:bound_escape_probability}. Lastly, we apply \cref{lem:mult_to_single_cycle} to ensure that we are analyzing the distribution over graphs with a single cycle in the welded edges. In order to apply \cref{lem:mult_to_single_cycle}, we use the verification procedure from \cref{thm:success_probability_standard_compressed} in order to make the algorithm have a binary output and may assume that they only use $O(q)$ queries during the verification procedure, including checking if the path exits.
    Putting it all together,
    \begin{align*}
        \Pr[\text{$\alg$ succeeds}] &=  O\left(\probCWT + \frac{q^6}{N^{1/4}} \right) &\mbox{(By \cref{thm:success_probability_standard_compressed})}\\
        &= O\left(p_{\mcpwt} + \frac{q^3}{\sqrt{N}} + \frac{q^6}{N^{1/4}} \right) &\mbox{(By \cref{lem:cpwt_mcpgt})}\\
        &= O\left(\frac{q^6}{\sqrt{N}} + \frac{q^6}{N^{1/4}} \right) &\mbox{(By \cref{lem:bounding_future_bounds_exit_probability,thm:bound_escape_probability})}\\
        &= O\left(\frac{q^6}{N^{1/4}} + \frac{n \cdot q^2}{N} \right) &\mbox{(By \cref{lem:mult_to_single_cycle})}\\
        &= O\left(\frac{q^6}{N^{1/4}} \right).&\qedhere
    \end{align*}
\end{proof}

\section{Quantum walk analysis}\label{sec:quantum_walk_analysis}

In order to build intuition for the compressed oracle model, let us describe what information is (and is not) tracked in the database when running a discrete quantum walk. Suppose that we have access to an XOR oracle $O_G$ representing a graph $G=(V_G, E_G)$, which acts as follows,
\begin{align*}
    O_G \ket{v,\alpha, v^\prime} = \ket{v, \alpha, v^\prime \oplus E_G(v,\alpha)},
\end{align*}
where $E_G(v,\alpha)$ returns the neighbor of $v$ on color $\alpha$ if it exists, and returns $v$ otherwise. A discrete quantum walk repeatedly applies two operators, a coin operator $\gcoin$ and a shift operator $\shiftOp$. The coin operator is a Grover coin over the 3 colors in $\colorSet$,
\begin{align*}
    \ket{s_{3}} &\coloneq \frac{1}{\sqrt{3}} \sum_{\alpha\in \colorSet} \ket{\alpha},\\
    \gcoin &\coloneq 2 \ketbra{s_3} - \Id.
\end{align*}
The shift operator $\shiftOp$ behaves as an in-place oracle, $\shiftOp \ket{v,\alpha} = \ket{E_G(v,\alpha), \alpha}$. It can be simulated using 2 calls to $O_G$ and a $\SWAP$ operator,
\begin{align*}
    \ket{v,\alpha, 0} \xrightarrow{O_G} \ket{v,\alpha, E_G(v,\alpha)} \xrightarrow{\SWAP} \ket{E_G(v,\alpha), \alpha, v} \xrightarrow{O_G} \ket{E_G(v,\alpha), \alpha, 0}.
\end{align*}
Based on the continuous quantum walk result of~\cite{CFG02,CCD+03}, it was shown in~\cite{LLL23} that this walk traverses any welded tree in $\poly(n)$ applications of $\shiftOp \cdot \gcoin$. The main idea behind the speedup is that a quantum walk only remembers its current position, which allows for interference to make the walk behave like a walk on a line where each node is a column in $G$.

Intuitively, when replacing $O_G$ with \cpwt, one might expect that the database reflects this idea by solely remembering the current label (and nothing else). We show that this is (approximately) the case.

\paragraph{Directed half-edges.} Let us use $\nu \in \{+, 0, -\}$ to denote direction and $c\in [0,2n+1]$ denote the column.\footnote{We do not use $\rightarrow, \leftarrow$ as in \cref{sec:compressed_welded_tree} as they denote if we are going up/down the tree, here we are adding the columns.} Define,
\begin{align*}
    H_c^{\nu} \coloneq \{(v,\alpha): v\in \col(c), \col(\ncF(v,\alpha)) = c + (\nu \cdot 1) \}.
\end{align*}
Note that $H_c^{0}$ is empty except for the roots in the direction without an edge. The column-direction state is defined as,
\begin{align*}
    \ket{c, \nu} &\coloneq \begin{cases}
        \frac{1}{\sqrt{\abs{H_c^\nu} M} } \sum_{(v,\alpha)\in H_c^\nu} \sum_{l\in [M]} \ket{l,\alpha, 0}_{\X\Y} \ket{[v\to l]}_{\Lab}\ket{\bot}_{\C}, &\text{if $c\neq 0$,}\\
        \frac{1}{\sqrt{\abs{H_{0}^{\nu}}}} \sum_{(0,\alpha)\in H_{0}^{\nu}}\ket{0,\alpha,0}_{\X\Y}\ket{\bot}_{\I} &\text{if $c=0$.}
    \end{cases}
\end{align*}
Furthermore, let $\calR = \Span\{\ket{c, \nu}: \abs{H_c^\nu} > 0, \nu \in \{+, 0, -\}  \}$ denote the subspace of the column-direction states. Note that for any $\ket{\psi}\in \calR$, $\gcoin\ket{\psi}\in \calR$.

Letting $\shiftOp = \cpwt \cdot  \SWAP \cdot \cpwt$ be the shift operator, we define a slightly weaker version of the operator $\shiftComp$ as,
\begin{align*}
    \shiftComp &\coloneq \pCPerm\cdot \pCLabel \cdot \xorOP \cdot \SWAP \cdot \xorOP \cdot \pCLabel \cdot \pCPerm
\end{align*}
Notice that $\shiftComp$ differs from $\shiftOp$ in that we omit all applications of $\pCLabelInv$ and all the compression operators that occur between the two applications of the XOR operator $\xorOP$ in \cpwt. We find that under $\shiftComp$, the walk remains in $\calR$.
\begin{theorem}\label{thm:shift_remains_in_columns}
    For any $c\in [0,2n+1]$ and $\nu \in \{+,0,-\}$ such that $\abs{H_c^{\nu}} > 0$,
    \begin{align*}
        \shiftComp \ket{c,\nu} = \ket{c+ \nu, -\nu}.
    \end{align*}
\end{theorem}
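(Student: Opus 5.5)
The proof tracks $\ket{c,\nu}$ through the seven factors of $\shiftComp$ explicitly. Since $\shiftComp$ never calls $\pCLabelInv$ and is linear, it suffices to follow one summand class at a time.

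Fix a color $\alpha$ and write the part of $\ket{c,\nu}$ with that color as
\begin{align*}
    \frac{1}{\sqrt{\abs{H_c^\nu}M}}\sum_{v}\sum_{l}\ket{l,\alpha,0}\ket{[v\to l]}\ket{\bot}.
\end{align*}
The cases are separated according to $e=\edgeVertex{\alpha}{v}$: (a) $\edgeIndic(e)=\bot$, i.e. a fixed tree edge outside $\middleCol$; (b) $\edgeIndic(e)=b\in\middleEdge$; (c) $\nu=0$ or $c=0$, i.e. the root and self-loop conventions. The main case is (b); the others are deterministic versions of the same computation.

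\textbf{Case (b).}
\begin{enumerate}
    \item The first $\pCPerm$ is controlled on $L^{-1}(l)=v$. Since $C=\bot$, it maps $\ket{\bot}_\C$ to $\ket{+^{\varsigma}_{e,b,\bot}}$, the uniform superposition over all $e'\in E_b$. Say $\varsigma=\rightarrow$; the new neighbor is then $w=\head(e')\in\col(c+\nu)$. In particular $w\neq v$, and $w\neq 0$ in the middle region.
    \item $\pCLabel$ acts on $L=[v\to l]$ with $w\notin\dom(L)$. By \cref{lem:expansion_pc_operator} (the $\Pi^{\not\in}$ case) it produces $(M-1)^{-1/2}\sum_{l'\neq l}\ket{[v\to l,w\to l']}$.
    \item $\xorOP$ writes $l'$ into $\Y$, and $\SWAP$ exchanges the labels in $\X$ and $\Y$, so the query becomes $(l',\alpha)$ with $\Y$ holding $l$.
    \item The second $\xorOP$ computes $\stepFunct(l',\alpha,L,C)=\ncF_C(w,\alpha)=v$, by the involution property of $\ncF_C$, and therefore XORs $L(v)=l$ into $\Y$, clearing it to $0$.
    \item The second $\pCLabel$ is now controlled on the query $l'$ and acts on $v$, which is assigned. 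Use \cref{lem:expansion_pc_operator} (the $\Pi^{\in}$ case): for fixed $(l',e,e')$, the amplitude as a function of $l\in[M]\setminus\{l'\}$ is the constant $(M(M-1)\abs{H_c^\nu}\abs{E_b})^{-1/2}$. Since $[M]\setminus\{l'\}$ is exactly $[M]\setminus\im([w\to l'])$, the vector over $l$ is exactly proportional to $\ket{+_{v,[w\to l']}}$. The operator therefore maps it to $\sqrt{M-1}$ times that amplitude times $\ket{[w\to l']}$, leaving no residual $\ket{\psi_y}-\ket{\psi_+}/\sqrt{m}$ term.
    \item The last $\pCPerm$ is controlled on $L^{-1}(l')=w$, and $w=\head(e')$. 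It applies $\pC^{\leftarrow}_{e',b}$ to the state with $C=\{e\to e'\}$. Summing over $v$ with fixed color $\alpha$ ranges over all tails of $E_b$, because $E_b$ is exactly the set of column-$c$, color-$\alpha$ edges. So for fixed $e'$ the coefficient is again uniform over $e\in E_b=E_b\setminus\dom(\bot)$, i.e. proportional to $\ket{+^{\leftarrow}_{e',b,\bot}}$, and the same lemma sends it to $\ket{\bot}_\C$.
\end{enumerate}

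Collecting amplitudes, the result is
\begin{align*}
    \frac{1}{\sqrt{\abs{H_c^\nu}M}}\sum_{w}\sum_{l'}\ket{l',\alpha,0}\ket{[w\to l']}\ket{\bot},
\end{align*}
with $(w,\alpha)$ ranging over the half-edges of color $\alpha$ in $H_{c+\nu}^{-\nu}$. The bijection between half-edges $(v,\alpha)\in H_c^\nu$ and $(w,\alpha)\in H_{c+\nu}^{-\nu}$, given by the edges between the two columns, gives $\abs{H_c^\nu}=\abs{H_{c+\nu}^{-\nu}}$. This is exactly the color-$\alpha$ part of $\ket{c+\nu,-\nu}$, and summing over $\alpha$ gives the theorem in this case.

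\textbf{Cases (a) and (c).} In case (a), both $\pCPerm$ calls are the identity and $w=\ncF(v,\alpha)$ is deterministic. The amplitude bookkeeping is then the label part of steps 2--5, and the map $v\mapsto w$ is again a bijection between the two half-edge sets. When $w=0$ (walking into the entrance), $\pCLabel$ is the identity and $\xorOP$ writes the fixed label $0$. After the swap, the label $l$ of $v$ is still removed by the second $\pCLabel$, because the query is now $(0,\alpha)$ and the forward step lands on $v\notin\{\bot,0\}$. The result matches the $c=0$ branch of the definition, which carries no label database; the case $c=0$, where we start from the entrance, is the reverse computation. In case (c) with $\nu=0$ we have $w=v$, so both $\pCLabel$ calls are the identity, $\xorOP$ writes $l$, the swap leaves $\X$ unchanged, and the second $\xorOP$ clears $\Y$. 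Hence $\ket{c,0}\mapsto\ket{c,0}$, as claimed.

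The main obstacle is steps 5 and 6. There one must check that, after grouping by the surviving data ($l'$, respectively $e'$), the coefficient vector is exactly the relevant $\ket{+}$ state, with support equal to the full available set $[M]\setminus\{l'\}$, respectively $E_b$. Only then does \cref{lem:expansion_pc_operator} give the pure $P^-$ action with no residual terms. This requires the uniformity over all $l$ and over all edges in the color class built into $\ket{c,\nu}$, together with the fact that $\abs{E_b}$ is the same in both directions of the permutation cell.
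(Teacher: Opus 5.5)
Your proposal is correct and follows the paper's proof essentially step for step: the paper also traces the uniform superposition over the tails of a fixed cell $E_b$ through the seven factors, relying on the same two erasure facts you establish in steps 5 and 6 (the surviving coefficient vectors are exactly $\ket{+_{v,[w\to l']}}$ and $\ket{+^{\leftarrow}_{e',b,\bot}}$), and then treats non-middle edges and the roots analogously. The differences are cosmetic. The paper gets the $\varsigma=\leftarrow$ direction from the self-inverseness of $\shiftComp$ rather than a symmetric computation, and your handling of the entrance ($w=0$) and $\nu=0$ cases is more explicit than the paper's one-line remark.
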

\begin{proof}
    For simplicity we suppress the register subscripts. Fix some $b\in \middleEdge$ with color $\alpha$. Notice that for distinct nonzero vertices $v,w$, $l^\prime\in [M]$ and $f\in E_b$,
    \begin{align}
        \pC_{v}^{\mathrm{for}}\left( \frac{1}{\sqrt{M-1}} \sum_{l\in [M]\setminus \{l^\prime\}} \ket{[v\to l, w\to l^\prime]} \right) &= \ket{[w\to l^\prime]} \label{eq:walk_shift_label_erasure}\\
        \pC_{f,b}^{\leftarrow}\left( \frac{1}{\sqrt{\abs{E_b}}}\sum_{e\in E_b}\ket{[e\to f]} \right) &= \ket{\bot}. \label{eq:walk_shift_color_erasure}
    \end{align}
    Suppose that we start from the uniform state over the tails of $E_b$. By \cref{eq:walk_shift_label_erasure,eq:walk_shift_color_erasure},
    \begin{align}
        &\frac{1}{\sqrt{\abs{E_b}M}} \sum_{\substack{e\in E_b\\ l\in [M]}} \ket{l,\alpha,0}\ket{\tail(e) \to l}\ket{\bot} \nonumber\\
        &\xrightarrow{\xorOP \cdot \SWAP \cdot \xorOP \cdot \pCLabel \cdot \pCPerm} \frac{1}{\abs{E_b}\sqrt{M (M-1)}} \sum_{\substack{e,f\in E_b \\ l,l^\prime\in [M], l\neq l^\prime}} \ket{l^\prime, \alpha, 0} \ket{[\tail(e) \to l, \head(f) \to l^\prime], [e\to f]} \nonumber\\
        &\xrightarrow{\pCLabel} \frac{1}{\abs{E_b}\sqrt{M}} \sum_{\substack{e,f\in E_b \\ l^\prime \in [M]}} \ket{l^\prime, \alpha, 0}\ket{[\head(f)\to l^\prime], [e\to f]} \nonumber\\
        &\xrightarrow{\pCPerm} \frac{1}{\sqrt{\abs{E_b}M}} \sum_{\substack{f\in E_b, l^\prime\in [M]}}\ket{l^\prime, \alpha, 0} \ket{[\head(f)\to l^\prime], \bot}.\label{eq:applying_s_dot}
    \end{align}
    The reverse direction follows by applying $\shiftComp$ as it is self-inverse. The case when $\edgeIndic(e)=\bot$ follows analogously, except we ignore $\pCPerm$. Notice that $\abs{H_{c}^\nu} = \abs{H_{c+\nu}^{-\nu}}$. Therefore by \cref{eq:applying_s_dot} when $c\neq 0$ and $c+\nu\neq0$,
    \begin{align*}
        \shiftComp\ket{c,\nu} = \frac{1}{\sqrt{\abs{H_c^\nu} M}} \sum_{\substack{(w,\alpha) \in H_{c + \nu}^{-\nu} \\ l^\prime \in [M]}} \ket{l^\prime, \alpha, 0}\ket{[w\to l^\prime], \bot} = \ket{c + \nu, -\nu}.
    \end{align*}
    Lastly, the same process shows that $\shiftComp\ket{0,+} = \ket{1,-}$, $\shiftComp\ket{1,-} = \ket{0,+}$, and $\shiftComp\ket{c,0} = \ket{c,0}$ at the roots of the trees.
\end{proof}

Furthermore, as long as we stay in $\calR$, the shift operators do not drastically differ.

\begin{lemma}\label{lem:difference_shift_operators}
    For every $\ket{\psi}\in \calR$,
    \begin{align*}
        \norm{(\shiftOp - \shiftComp) \ket{\psi}} = O(2^{-n/4}) \norm{\ket{\psi}}.
    \end{align*}
\end{lemma}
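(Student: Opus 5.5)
We compare $\shiftOp = \cpwt\cdot\SWAP\cdot\cpwt$ with $\shiftComp$ by expanding both copies of \cpwt into the seven primitives of \cref{eq:cpwt_definition}. The difference is exactly the omitted operators. These are the outer $\pCLabelInv$ and $\pCLabelInv^\dagger$ at both ends, and the ``inner'' compression operators $\pCLabel^\dagger,\pCPerm^\dagger,\pCLabelInv^\dagger$ and $\pCLabelInv,\pCPerm,\pCLabel$ that sit adjacent to the $\SWAP$. We telescope, replacing one omitted operator by the identity at a time. By the triangle inequality, $\norm{(\shiftOp-\shiftComp)\ket{\psi}}$ is then bounded by a sum of $O(1)$ terms of the form $\norm{(W-\Id)\ket{\phi}}$, where $W$ is a single compression primitive and $\ket{\phi}$ is an intermediate state. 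By linearity and orthogonality of the $\ket{c,\nu}$, it suffices to treat $\ket{\psi}=\ket{c,\nu}$ and then combine, using that the intermediate states for different $(c,\nu)$ remain orthogonal because the column-direction data is recoverable from the databases and query register.

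\textbf{Outer $\pCLabelInv$.} On $\ket{c,\nu}$ the queried label $l$ is already assigned, since $L=[v\to l]$, and $l$ sits in the query register $\X$. This is exactly the setting of \cref{prop:recompression_unlikely}: the database contains $(v,l)$ and the algorithm holds an orthogonal record of $l$. Hence $\norm{(\pCLabelInv-\Id)\ket{c,\nu}}\leq\sqrt{2/(M-1)}$. The same holds at the end, because by \cref{thm:shift_remains_in_columns} the state after the middle part is again close to some $\ket{c+\nu,-\nu}$, whose label is likewise recorded in $\X$. This contributes $O(M^{-1/2})=O(2^{-n})$.

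\textbf{Inner operators.} Between the two $\xorOP$ applications of a single \cpwt, the output register $\Y$ holds $L(w)$ (for $\pCLabel$) and, implicitly through $(v,\alpha,L)$, determines the edge record $C(e)$. After the first \cpwt's $\pCLabel\cdot\xorOP$ but before the $\SWAP$, the relevant labels and edge are therefore written in the workspace. Applying the second form of \cref{prop:recompression_unlikely} to the label database gives an error $O(M^{-1/2})$. For the edge database we use the same argument with $\abs{E_b}\geq\sqrt{N}/4$ from \cref{obs:min_num_nodes_per_perm}, which gives $O(\sqrt{1/\abs{E_b}})=O(N^{-1/4})=O(2^{-n/4})$. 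The analogous statement holds for the second \cpwt after the $\SWAP$, where the new current label and old label are both in $\X\Y$. These terms dominate and yield the claimed $O(2^{-n/4})$.

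\textbf{Main obstacle.} The difficulty is justifying that the edge record is genuinely ``held'' orthogonally by the workspace, since $\Y$ stores a label rather than the edge $f$ itself. The plan is to note that, conditioned on $L$ being injective, the map $f\mapsto L(\head(f))$ is injective, so the record $L(w)$ in $\Y$ determines $f$. This lets us rerun the computation in the proof of \cref{prop:recompression_unlikely} blockwise with $\abs{E_b}-\abs{C_b}\geq\sqrt{N}/4-2$.

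A secondary issue is that the intermediate states are explicitly computed in \cref{eq:applying_s_dot}, so their database sizes are at most $2$ labels and $1$ edge. This makes all denominators $\Theta(\abs{E_b})$ or $\Theta(M)$, uniformly in $c$.
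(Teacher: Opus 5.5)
\textbf{Gap: the outer $\pCLabelInv$ step.} Your decomposition matches the paper's proof: the two outer $\pCLabelInv$'s, the six inner primitives around $\SWAP$, the telescoping, and the fact that the inner $\pCPerm$ terms dominate at $O(\abs{E_b}^{-1/2})=O(2^{-n/4})$. Your treatment of the inner operators is also the paper's, via retained labels determining the queried assignment. The outer $\pCLabelInv$ step, however, is wrong as stated.

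$\pCLabelInv$ compresses $L^{-1}$ at the input $l$. So \cref{prop:recompression_unlikely} needs an orthogonal record of the \emph{output} $L^{-1}(l)=v$. The copy of $l$ in $\X$ is only the control and gives you nothing. In $\ket{c,\nu}$ the vertex $v$ is deliberately unrecorded. For fixed $(l,\alpha)$ the block is $\ket{l,\alpha,0}\otimes\sum_v\ket{[v\to l]}$ over the $k_\alpha$ vertices with $(v,\alpha)\in H_c^\nu$. By \cref{obs:compressed_permutation_oracle_already_queried}, every term produces the same error vector $M^{-1/2}\ket{l,\alpha,0}(\ket{\bot}-\ket{+^{\mathrm{inv}}_{l,\bot}})$, so the errors add coherently. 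Concretely, $\norm{(\pCLabelInv-\Id)\ket{c,\nu}}^2=\frac{2}{M\abs{H_c^\nu}}\sum_\alpha k_\alpha^2\geq\frac{2\abs{H_c^\nu}}{3M}$. For $c=n$, $\nu=+$ this is $\frac{4}{3}2^{-n}$, so the error is $\Theta(2^{-n/2})$, not $\sqrt{2/(M-1)}$.

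\textbf{The combination step also fails for this term.} These error vectors carry no column information: the database becomes $\ket{\bot}$ or $\ket{+^{\mathrm{inv}}_{l,\bot}}$ regardless of $c$. So they are not orthogonal across different $(c,\nu)$, and orthogonality of the inputs alone does not let you add squared norms.

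\textbf{The repair.} Do what the paper does: bound $(\pCLabelInv-\Id)$ directly on an arbitrary $\ket\psi\in\calR$ and on $\shiftComp\ket\psi$. Work block by block, with fixed $(l,\alpha)$ and base database $L$ of size at most one, and apply Cauchy--Schwarz over $v$: $\frac{2}{M-\abs{L}}\norm{\sum_v\ket{\zeta_v}}^2\leq\frac{2\abs{V_n}}{M-1}\sum_v\norm{\ket{\zeta_v}}^2$. This gives $O(\sqrt{\abs{V_n}/M})=O(2^{-n/2})$, which fits inside $O(2^{-n/4})$ because $M=2^{2n}\gg\abs{V_n}$. So the lemma survives, but because the label space is large, not because the current label is recorded.

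The inner $\pCLabelInv$ terms do not have this problem, since there the retained edge record and the partner label pin down $L^{-1}(l)$. In any case, the same Cauchy--Schwarz bound covers them.
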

\begin{proof}
    Let $\xi = \xorOP \cdot \pCLabel \cdot \pCPerm \ket{\psi}$ and $\zeta \in \{\ket{\psi}, \xi, \SWAP\xi, \shiftComp\ket{\psi}\}$. Next, decompose $\zeta$ into orthogonal blocks with fixed queried label $l\neq0$ and base database $L$, where $\abs{L}\leq1$ as $\ket{\psi}\in \calR$. Within each block,
    \begin{align*}
        \norm{(\pC_{l}^{\mathrm{inv}} - \Id) \sum_{v} \ket{\zeta_v}\ket{L[v\to l]}}^2 &= \frac{2}{M - \abs{L}}\norm{\sum_v \ket{\zeta_v}}^2\\
        &\leq \frac{2\abs{V_n}}{M-1} \sum_v \norm{\ket{\zeta_v}}^2.
    \end{align*}
    Summing over the orthogonal blocks gives $\norm{(\pCLabelInv-\Id)\zeta}^2\leq \frac{2\abs{V_n}}{M-1}\norm{\ket{\psi}}^2$, while the $l=0$ blocks are fixed. Similarly, for $\zeta\in \{\xi, \SWAP\xi\}$, the retained labels determine the queried assignment, so by \cref{prop:recompression_unlikely,obs:min_num_nodes_per_perm},
    \begin{align*}
        \norm{(\pCLabel - \Id) \zeta}^2 &\leq \frac{2}{M-1}\norm{\psi}^2\\
        \norm{(\pCPerm - \Id) \zeta}^2 &\leq \frac{8}{\sqrt{N}}\norm{\psi}^2.
    \end{align*}
    By applying the bounds above and telescoping,
    \begin{align*}
        \norm{(\shiftOp - \shiftComp) \ket{\psi}} &\leq \norm{(\pCLabelInv - \Id)\ket{\psi}} + \norm{(\pCLabelInv - \Id) \shiftComp\ket{\psi}}\\
        &\quad+\sum_{\zeta\in\{\xi,\SWAP\xi\}}\sum_{V\in \{\pCLabelInv, \pCLabel, \pCPerm\}} \norm{(V-\Id)\zeta}\\
        &\leq O(2^{-n/4})\norm{\ket{\psi}} \qedhere
    \end{align*}
\end{proof}

At the start, the state is in $\calR$. Hence, by a repeated application of \cref{lem:difference_shift_operators}, after $q$ walk steps its distance from an ideal state in $\calR$ is $O(q2^{-n/4})$. Thus, for polynomially many steps, it lies in $\calR$ with overwhelming probability and the database records only the current position.

\bibliographystyle{alpha}
\bibliography{main}

\appendix
\crefalias{section}{appendix}

\section{Proof of general intertwiner lemma}\label{sec:general_intertwiner_lemma}

This section is concerned with proving \cref{lem:general_intertwiner_lemma}. As the main ideas rely on the work of~\cite{Car25}, let us state what we will use.

We will use $D$ to represent an arbitrary domain and add the subscript $D$ to the operators defined in \cref{sec:permutation_oracles_definition}. For a function $g:D\to \{0,1\}^l$, define the XOR operation with $g$ as,
\begin{align}
    \xorOP_{D,x}^{g}\ket{z}_{\Y}\ket{I}_{\I_D} &\coloneq \begin{cases}
        \ket{z\oplus g(I(x))}_{\Y}\ket{I}_{\I_D},&x\in\dom(I),\\
        \ket{z}_{\Y}\ket{I}_{\I_D},&x\notin\dom(I),
    \end{cases}\label{eq:appendix_relabeled_queries}
\end{align}
Any corresponding operator with $g$ uses the XOR operator we just defined. The corresponding standard query with $g$ is,
\begin{align*}
    O_\phi^g\ket{b,x,z} \coloneq\ket{b,x,z\oplus g(\phi^{1-2b}(x))}.
\end{align*}

\begin{lemma}[Consequence of Section 5 in~\cite{Car25}]\label{lem:power_four_useful_facts}
    Let $D$ be a set with $\abs{D}=4^k$ for some $k\in \mathbb{N}_0$. There exists a constant $K$, a register $\regP_D$, an isometry $\calV_D: \calH(\I_D) \to \calH(\regP_D)$ where $\calV_D^\dagger \calV_D = \Id$ and $\ket{\bot}_{\regP_D} = \calV_D\ket{\bot}_{\I_D}$, and operators $\cmfO_{D,x}^{g,b}$ on $\calH(\Y) \otimes \calH(\regP_D)$ such that the following holds.

    For some algorithm $\alg$ and maps $g_1,\dots g_T: D\to \{0,1\}^l$, let,
    \begin{align*}
        \cmfO_D^g &\coloneq \sum_{b\in \{0,1\},x\in D} \ketbra{b,x}_{\X} \otimes \cmfO_{D,x}^{g,b},\\
        \ket{\Psi_T} &\coloneq A_T \cmfO_{D}^{g_T} A_{T-1} \dots A_1 \cmfO_{D}^{g_1} A_0 \ket{0}_{\A} \ket{\bot}_{\regP_D},\\
        \ket{\psi_{\sigma, T}} &\coloneq A_T O_\sigma^{g_T} A_{T-1} \dots A_1 O_{\sigma}^{g_1} A_0 \ket{0}_{\A}. 
    \end{align*}
    Then,
    \begin{align}
        \Tr_{\regP_D}\rho(\ket{\Psi_T}) = \EX_{\sigma\sim \Sym(D)} \rho(\ket{\psi_{\sigma, T}}).\label{eq:exact_purification}
    \end{align}
    Moreover, for any $d\in \mathbb{N}_0, x\in D, b\in \{0,1\}$ and $g:D\to \{0,1\}^l$,
    \begin{align}
        \norm{\left( \cmfO_{D,x}^{g,b} \calV_D - \calV_D \cpO_{D,x}^{g,b} \right)\Pi_{\leq d, \I_D}} \leq \min \left\{ 2, K\frac{d+3}{\abs{D}^{1/4}}\right\}.\label{eq:isometry_between_purification_compression}
    \end{align}
    When $\abs{D}=1$, \cref{eq:isometry_between_purification_compression} is exact.
\end{lemma}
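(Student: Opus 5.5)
The plan is to obtain the lemma as a repackaging of the construction in Section 5 of~\cite{Car25}, which is stated for the plain XOR query $g=\Id$ on a domain of size $4^k$. Concretely, I will identify $D$ with $[4^k]$ and take $\regP_D$, the purified oracle, and the isometry $\calV_D$ verbatim from~\cite{Car25}. The two properties I need from that construction are that $\calV_D$ maps the empty database to the initial purification register, $\ket{\bot}_{\regP_D}=\calV_D\ket{\bot}_{\I_D}$, and that $\calV_D^\dagger\calV_D=\Id$. The only new ingredient is the relabeling map $g$ in \cref{eq:appendix_relabeled_queries}, so the proof reduces to checking that every step of~\cite{Car25} is insensitive to how the looked-up value is written into $\Y$.

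The first step is to observe where $g$ enters. In the construction, a query on $\ket{b,x}$ is a controlled operation: it acts on $\regP_D$ (and on $\I_D$ in the compressed picture) by operators that do not touch $\Y$, and it touches $\Y$ only through a single XOR controlled on the value $\phi^{1-2b}(x)$ (respectively $I(x)$). I will define $\cmfO_{D,x}^{g,b}$ by replacing that controlled XOR of $y$ with the controlled XOR of $g(y)$. Formally, if $\mathrm{X}_y$ denotes the map $\ket{z}\mapsto\ket{z\oplus y}$, I replace the controlled operation $\sum_y \mathrm{X}_y\otimes\ketbra{y}$ with $\sum_y \mathrm{X}_{g(y)}\otimes\ketbra{y}$, and I make the identical replacement in $\cpO_{D,x}^{g,b}$ as in \cref{eq:appendix_relabeled_queries}. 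None of the other factors in either oracle changes.

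The second step is exact purification, \cref{eq:exact_purification}. In~\cite{Car25} the purified oracle holds a uniform superposition over $\sigma\in\Sym(D)$ (in some encoding) and, conditioned on the branch $\sigma$, acts as $O_\sigma$. After the replacement above it therefore acts as $O^{g_t}_\sigma$ on each branch at step $t$. Tracing out $\regP_D$ gives the average over $\sigma$ of $\rho(\ket{\psi_{\sigma,T}})$. This holds for an arbitrary sequence $g_1,\dots,g_T$, because the branch structure is fixed once at initialization and is never affected by the queries.

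The third step is the per-query comparison, \cref{eq:isometry_between_purification_compression}, which is where I expect the real work. The bound in~\cite{Car25} is proved by writing both the purified oracle and $\cpO$ as conjugations of the controlled XOR by local unitaries on $\regP_D$ and $\I_D$, and then comparing these conjugating unitaries on databases of size at most $d$. The conjugating unitaries are unchanged here, and the two controlled-XOR factors, $\sum_y \mathrm{X}_{g(y)}\otimes\ketbra{y}$, are identical on both sides and unitary. So I expect the error term to be bounded by exactly the same telescoping argument, giving $K(d+3)/\abs{D}^{1/4}$ with the same constant $K$, uniformly in $g$. The delicate point is to confirm that the estimate in~\cite{Car25} never uses that the XOR is by the raw value $y$ itself, for example through a Fourier or phase-kickback step on $\Y$.
\begin{itemize}
\item If it does not, the argument above goes through directly.
\item If it does, my fallback is a different simulation: copy $y$ into a fresh ancilla, apply $\ket{y}\ket{z}\mapsto\ket{y}\ket{z\oplus g(y)}$, and uncopy. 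The trouble is that this costs two plain queries, both in~\cite{Car25}'s purified oracle and in its compressed oracle, and the comparison would then have to be done for both queries together. I would handle that by applying the unrelabeled bound twice, which doubles the constant and shifts $d$ by one. This is absorbed into $K$ and the $+3$.
\end{itemize}

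Finally, the term $2$ in the minimum is immediate: $\calV_D$ is an isometry and both oracles are unitary, so the difference has operator norm at most $2$. For the case $\abs{D}=1$, the only permutation is the identity, and the only databases are $\bot$ and $\{x\mapsto x\}$. Both oracles swap these two states and XOR $g(x)$ into $\Y$ exactly when the record is present, so with the identification made by $\calV_D$ the two sides agree exactly.
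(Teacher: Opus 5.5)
Your overall plan is the paper's: reuse the Section~5 construction of~\cite{Car25} and observe that $g$ only changes what is XORed into $\Y$. Two ingredients are off, however.

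First, $\calV_D$ cannot be taken verbatim as a global isometry into a register spanned by permutation branches. The reason is dimensional: $\dim\calH(\I_D)=\sum_{k=0}^{\abs{D}}\binom{\abs{D}}{k}^2k!>\abs{D}!$, already $2>1$ when $\abs{D}=1$. So $\regP_D$ must contain extra directions, and you have to say how the purified query acts on them. The paper does this explicitly:
\begin{itemize}
\item for $\abs{I}\leq\sqrt{\abs{D}}$ it sends $\ket{I}$ to Carolan's sophisticated states;
\item for larger $I$ it extends isometrically into an enlarged register, where only the trivial bound $2$ is claimed;
\item for $\abs{D}=1$ it simply takes $\calV_D=\Id$ and $\cmfO_D^g=\cpO_D^g$.
\end{itemize}

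Second, the per-query estimate is not a comparison of two matching controlled-XOR factors. The paper splits a sophisticated state into its flip-elegant part, handled by Lemma~5.14 of~\cite{Car25}, and a complement spanned by the states $\calV_D\ket{+_{D,x,I}}$. On the complement, $\cpO$ is \emph{exactly} the identity, because decompression removes the queried entry. The purified query, by contrast, is the identity only off Carolan's query-valid subspace, so the error is at most $2\norm{\Pi_{\mathrm{qv}}\ket{\psi}}$, which Lemma~5.4 bounds. On this component the two sides do not XOR the same thing at all. The real reason $g$ is harmless is that this bound never looks at what is XORed.

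Your fallback, on the other hand, is sound, and it is a more black-box route than the paper's one-line ``the same proof applies''. Since $G$ acts only on the ancilla and $\Y$, it commutes with $\pC$ and $\cOp{F}$. Two plain compressed queries around $G$ therefore collapse to $\pC\,\xorOP G\xorOP\,\pC^\dagger$ (conjugated by $\cOp{F}$ when $b=1$). This equals $\cpO^{g}$ exactly on a zero ancilla, provided elements of $D$ are encoded by nonzero strings, so that an empty lookup is distinguishable from a genuine value. Without that convention, $G$ would XOR $g$ of the zero-encoded element on branches with $x\notin\dom(I)$, where $\xorOP^g$ does nothing. The simulation would then fail on the span of the states $\ket{+_{D,x,I}}$. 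With the convention, exact purification holds branchwise. Telescoping with $\calV_D\otimes\ket{0}$ then bounds the error by $\epsilon(d)+\epsilon(d+1)$, where $\epsilon$ is the plain-query bound. This reduces general $g$ to the plain case without opening Carolan's proof, but that plain case still needs the register enlargement described above.
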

\begin{proof}
    The construction of \cref{eq:exact_purification} is the uniform-twirl construction used in the proof of \cite[Theorem 5.19]{Car25}. The map $\calV_D$ sends $\ket{I}$, for $\abs{I}\leq\sqrt{\abs{D}}$, to the corresponding sophisticated state of~\cite[Section 5.1]{Car25}; for larger databases, extend it isometrically into an enlarged purification register, where the bound follows from the trivial estimate $2$.
    To prove \cref{eq:isometry_between_purification_compression}, split a sophisticated state into its flip-elegant component and its orthogonal complement. Lemma 5.14 of~\cite{Car25} gives the required comparison on the first component. The second component is spanned by the uniform-assignment states $\calV_D\ket{+_{D,x,I}}$ (with the database inverted when $b=1$). The compressed query acts as the identity on the corresponding states in $\I_D$: decompression removes the queried assignment, so the XOR has no effect. If $\Pi_{\mathrm{qv}}$ denotes Carolan's query-valid projector, Lemma 5.4 of~\cite{Car25} bounds $\norm{\Pi_{\mathrm{qv}}\ket{\psi}}$ on this second component by $C(d+3)\abs{D}^{-1/4}\norm{\ket{\psi}}$ for an absolute constant $C$. Since the purified query acts as the identity outside the query-valid subspace,
    \begin{align*}
        \norm{(\cmfO_{D,x}^{g,b}-\Id)\ket{\psi}} \leq 2\norm{\Pi_{\mathrm{qv}}\ket{\psi}}.
    \end{align*}
    Combining the two components gives the stated bound after increasing $K$. For $\abs{D}=1$, take $\calV_D=\Id$ and $\cmfO_D^g=\cpO_D^g$, which gives exact queries from the empty database.
    The same proof applies to arbitrary $g$, since XORing $g$ of the returned value changes only the answer register and the decompression estimates do not depend on $g$.
\end{proof}

Next, as an intermediate step in the proof of \cref{lem:general_intertwiner_lemma}, we construct a general version of $\calV_D$ in \cref{lem:power_four_useful_facts} for domains of arbitrary size. The proof is done by partitioning the set into sets whose size is a power of four and for any query, randomly choosing a set. This is similar to the twirling permutations in~\cite{Car25}.

\begin{lemma}\label{lem:appendix_local_permutation_comparison}
    Let $S$ be an arbitrary finite nonempty set. There exists a constant $K^\prime$, a register $\regP_S$, an isometry $\calV_S: \calH(\I_S) \to \calH(\regP_S)$, and an oracle $\cmfO_S$ such that for every $d\in \mathbb{N}_0$ such that $d\leq \abs{S}/2$,
    \begin{align}
        \norm{(\cmfO_S \calV_S - \calV_S \cpO_S) \Pi_{\leq d, \I_S}} \leq K^\prime\frac{d+3}{\abs{S}^{1/4}}
    \end{align}
\end{lemma}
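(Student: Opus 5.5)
We reduce to the power-of-four case of \cref{lem:power_four_useful_facts} by \emph{twirling}. Write the base-$4$ expansion $\abs{S}=\sum_{k}a_k4^k$ with $a_k\in\{0,1,2,3\}$, and fix a partition of $S$ into blocks $D_1,\dots,D_r$ with $\abs{D_i}=4^{k_i}$ (so $r\leq 3\log_4\abs{S}+3$). A uniformly random permutation $\pi$ of $S$ has the same law as $\tau\circ(\pi_1\oplus\dots\oplus\pi_r)\circ\tau'$, where $\pi_i\sim\Sym(D_i)$ are independent and $\tau,\tau'$ are suitable random ``relabeling'' permutations. Concretely, I would take $\tau'$ uniform in $\Sym(S)$ and $\tau=\pi\circ\dots$ fixed by coupling; cleaner is a uniform $\theta\in\Sym(S)$ and $\pi=\theta\circ\pi_{\mathrm{blk}}\circ\theta^{-1}$ followed by one more uniform relabeling. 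The purified register is then $\regP_S\coloneq\regP_{\mathrm{tw}}\otimes\bigotimes_i\regP_{D_i}$. Here $\regP_{\mathrm{tw}}$ holds a uniform superposition over the twirl permutations; since the twirl is sampled once and never recorded, this is exact. Each $\regP_{D_i}$ is the register from \cref{lem:power_four_useful_facts}. The oracle $\cmfO_S$ reads the twirl, computes which block $D_i$ the (relabeled) input lies in, and calls $\cmfO_{D_i}^{g,b}$ with $g$ equal to the outgoing relabeling map composed into the answer. This is exactly why that lemma was stated for arbitrary $g$.

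Next we define $\calV_S$. A database $I\in\db_S$ with twirl $(\theta,\theta')$ splits into block databases $I_i\coloneq\theta'^{-1}$-relabeled restrictions landing in $D_i$. We set
\begin{align*}
\calV_S\ket{I}\coloneq\sum_{\theta,\theta'}\sqrt{w_{I}(\theta,\theta')}\,\ket{\theta,\theta'}\otimes\bigotimes_i\calV_{D_i}\ket{I_i},
\end{align*}
where $w_I(\cdot)$ is the posterior of the twirl given that the block permutation is consistent with $I$. Equivalently, $w_I$ is uniform over twirls making $I$ block-compatible, weighted by the block counts $\prod_i(\abs{D_i}-\abs{I_i})!$ up to normalization. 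Isometry follows from $\calV_{D_i}^\dagger\calV_{D_i}=\Id$ and from orthogonality of different $(\theta,\theta',(I_i))$ tuples. The condition $\calV_S\ket\bot=\ket\bot$ is simply the uniform twirl state tensored with the block $\ket\bot$'s.

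To compare $\cmfO_S\calV_S$ with $\calV_S\cpO_S$ on $\Pi_{\leq d}$, we proceed in three steps. \textbf{Step 1:} conditioned on a twirl, $\cmfO_S$ acts as $\cmfO_{D_i}^{g,b}$ on one block. By \eqref{eq:isometry_between_purification_compression} with $\abs{I_i}\leq d$, this equals $\calV_{D_i}\cpO_{D_i}^{g,b}$ up to $K(d+3)/\abs{D_i}^{1/4}$. \textbf{Step 2:} we show that, in superposition over twirls, the block-wise compressed queries $\cpO_{D_i}$ reproduce $\cpO_S$. A fresh point in $\cpO_S$ is assigned uniformly over $S\setminus\im(I)$. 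Under the twirl posterior, the chosen block and in-block value combine to the same uniform distribution, \emph{except} that the twirl weights $w_I$ must be updated after the new record. This update is the main obstacle. I expect the posterior update to be exact up to a factor ratio $(\abs{D_i}-\abs{I_i})/(\abs{S}-\abs{I})$ versus $\abs{D_i}/\abs{S}$, giving an amplitude error $O(d/\abs{D_i})$ per query. My first attempt would be to verify exact equality of the amplitudes of $\ket{+}$ states by matching $\sqrt{w}$ normalizations. \textbf{Step 3:} small blocks with $\abs{D_i}<\sqrt{\abs{S}}$ carry total mass at most $O(\abs{S}^{-1/2}\log\abs{S})$ under a uniform input. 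For such inputs we use the trivial bound $2$ and absorb the contribution into $K'$; the blocks of size $\geq\sqrt{\abs{S}}$ contribute $O((d+3)\abs{S}^{-1/8})$. If only $\abs{S}^{-1/8}$ comes out of this, I would instead choose the blocks to be $\Theta(\abs{S})$ in size with the leftover handled by the twirl. Throughout, the hypothesis $d\leq\abs{S}/2$ guarantees that the denominators $\abs{S}-\abs{I}$ are $\Theta(\abs{S})$.
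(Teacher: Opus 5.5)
Your construction is the paper's: base-four blocks $S_j$, two independent uniform relabelings $(\pi,\omega)$, posterior weights $w_I(\pi,\omega)^2\propto\prod_j (s_j)_{t_j}^{-1}$ on block-compatible masks, $\mathcal{V}_S=(\Id\otimes\bigotimes_j\mathcal{V}_{S_j})\intPart$, and the outgoing relabeling passed in as $g$. The gaps are in the two estimates.

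\textbf{Step 2.} The posterior update you single out as the main obstacle is not an obstacle. For a free value $y$ with $\omega^{-1}(y)$ in the queried block $S_{j^*}$, one has $w_{I[x\to y]}(\pi,\omega)/\sqrt{s-\abs{I}}=w_I(\pi,\omega)/\sqrt{s_{j^*}-\abs{J_{j^*}}}$. So the block compression swaps $\intPart\ket{I}$ and $\intPart\ket{+_{S,x,I}}$ \emph{exactly}, with no $O(d/\abs{D_i})$ loss.

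What your plan does not address is the rest of $\calH_{S,x,I}$. Take $\ket{\psi}=\sum_y\alpha_y\ket{I[x\to y]}$ with $\sum_y\alpha_y=0$. Then $\pC_{S,x}$ acts as the identity on $\ket{\psi}$. But for each mask, the block compression only sees the partial sum $\sum_{y\in Y_{j^*}}\alpha_y$ over the free values that $\omega$ sends into the queried block. This partial sum is generally nonzero, so $\pC_{\blocks,x}\intPart\ket{\psi}\neq\intPart\ket{\psi}$.

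This discrepancy is the actual error of the reduction, and it has to be bounded. The paper averages $\abs{\sum_{y\in Y_{j^*}}\alpha_y}^2$ over the twirl, which is a sampling-without-replacement variance. This gives squared error $2\bigl(\EX[\#\{j:s_j>\abs{J_j}\}]-1\bigr)\norm{\psi}^2/(s-\abs{I}-1)=O(B/s)\norm{\psi}^2$, i.e.\ $O(\sqrt{\log s/s})$ per query. Without this computation, Step 2 is not established.

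\textbf{Step 3.} Bounding every block of size $\geq\sqrt{\abs{S}}$ by the worst one gives only $(d+3)\abs{S}^{-1/8}$, as you suspect. Your fallback does not work either: a general $s$ cannot be split into power-of-four blocks all of size $\Theta(s)$, and a leftover set would need its own purified permutation oracle, which no twirl supplies.

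The missing step is to add the per-block errors in quadrature with their exact weights. The projectors $\Pi^{\blocks,b}_{x,j}$ onto ``the relabeled query lands in block $j$'' are diagonal in the mask register. Hence they commute with all mask-controlled operators and split the error orthogonally. Moreover, the $\pi$-marginal (resp.\ $\omega$-marginal) of $w_I^2$ is uniform for every $I$, so $\intPart^\dagger\Pi^{\blocks,b}_{x,j}\intPart=\frac{s_j}{s}\Id$ for every query $x$. This is a property of the twirl, not of a ``uniform input''; the query itself is adversarial.

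It follows that the squared error is at most $K^2(d+3)^2\sum_j\frac{s_j}{s}\,s_j^{-1/2}=\frac{K^2(d+3)^2}{s}\sum_j\sqrt{s_j}\leq\frac{6K^2(d+3)^2}{\sqrt{s}}$, using $\sum_j\sqrt{s_j}\leq 6\sqrt{s}$. This gives the $\abs{S}^{-1/4}$ rate directly, with no threshold and no trivial-bound case, because small blocks are suppressed by their small weight.
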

\begin{proof}
    Let $s\coloneq\abs{S}$. When $s\leq 3$, the original argument gives the bound $2$ after increasing $K^\prime$. Therefore, assume $s\geq 4$ and let $s=\sum_{k\in [0,k^*]} a_k4^k$ be the base-four expansion of $s$ and $B=\sum_{k\in [0,k^*]} a_k$. Fix some partition of $S$ such that,
    \begin{align}
        S = \bigsqcup_{j\in [B]} S_j, \label{eq:partition_of_permutation_set}
    \end{align}
    with $a_k$ blocks of size $4^k$, and let $s_j \coloneq \abs{S_j}$. Notice that $B \leq 3(1+\floor{\log_4 s})$ and $\sum_j \sqrt{s_j} \leq 6\sqrt{s}$. We call this the \emph{block} partition.
    For $I\in \db_S$ and $\pi, \omega \in \Sym(S)$, define,
    \begin{align*}
        J_j(I,\pi,\omega) &\coloneq \{(\pi(u),\omega^{-1}(v)):(u,v)\in I\} \cap (S_j \times S_j),\\
        \dbJ(I,\pi,\omega) &\coloneq (J_1(I,\pi,\omega),\dots J_B(I,\pi,\omega)),\\
        \mathrm{Comp}(I) &\coloneq \{ (\pi, \omega)\in \Sym(S)^2: \sum_{j\in [B]} \abs{J_j(I,\pi,\omega)} = \abs{I} \}.
    \end{align*}
    Intuitively, $\mathrm{Comp}(I)$ denotes the masks that are compatible with some database.
    The block register and its basis are,
    \begin{align*}
        \calH(\I_{\mathrm{blocks}}) &\coloneq \calH(\Sym(S)^2)\otimes\bigotimes_{j=1}^{B}\calH(\I_{S_j}),\\
        \ket{\dbJ} &\coloneq \ket{J_1}\otimes\cdots\otimes\ket{J_B},\\
        \Pi_{\leq d, \I_{\blocks}} &\coloneq \sum_{\pi,\omega \in \Sym(S)} \ketbra{\pi,\omega} \otimes \sum_{\dbJ: \abs{\dbJ} \leq d} \ketbra{\dbJ},
    \end{align*}
    where $\abs{\dbJ} = \sum_j\abs{J_j}$. Let $t_j(I,\pi) \coloneq \abs{\pi(\dom(I)) \cap S_j}$ and for integers $m,t\geq 0$, $\decrFact{m}{t} = \prod_{h\in [0,t-1]} (m-h)$. Then we define the partition map as,
    \begin{align}
        w_I(\pi, \omega) &\coloneq \begin{cases}
            \frac{1}{s!} \sqrt{\frac{\decrFact{s}{\abs{I}}}{\prod \decrFact{s_j}{t_j(I,\pi)}}}&\text{if $(\pi,\omega) \in \mathrm{Comp}(I)$,}\\
            0 &\text{otherwise,}
        \end{cases}\\
        \intPart\ket{I} &\coloneq \sum_{\pi,\omega \in \Sym(S)} w_I(\pi, \omega)\ket{\pi,\omega} \ket{\dbJ(I,\pi,\omega)}.
    \end{align}
    On compatible masks, we have that $\abs{J_j(I,\pi,\omega)} = t_j(I,\pi)$ and $\sum_{\omega} w_I(\pi,\omega)^2 = \tfrac{1}{s!}$. Additionally, the original database can be recovered by,
    \begin{align}
        I &= \{(\pi^{-1}(u), \omega(v)): (u,v) \in \bigsqcup_j J_j(I,\pi,\omega) \}.\label{eq:appendix_database_recovery}
    \end{align}
    Disjoint databases $I$ therefore have disjoint supports under $\intPart$. Therefore,
    \begin{align}
        \intPart^\dagger \intPart = \Id,\nonumber\\
        \Pi_{\leq d, \I_{\mathrm{blocks}}} \intPart = \intPart \Pi_{\leq d, \I_S}.\label{eq:appendix_partition_isometry}
    \end{align}
    We let $w_I(\pi,\omega)^2$ be the probability of $(\pi,\omega)$ under some $I$ and $\tfrac{1}{s!}$ for $\pi$. The identity $\Id_{\neq j}$ refers to being the identity everywhere besides block $j$. Let us define block queries as follows,
    \begin{align}
        \Pi^{\blocks,0}_{x,j} &\coloneq \sum_{\substack{\pi,\omega\in\Sym(S) \\ \pi(x)\in S_j}} \ketbra{\pi,\omega},\\
        \Pi^{\blocks,1}_{x,j} &\coloneq \sum_{\substack{\pi,\omega\in\Sym(S) \\ \omega^{-1}(x)\in S_j}} \ketbra{\pi,\omega},\\
        \pC_{\blocks, x} &\coloneq \sum_{\pi,\omega} \ketbra{\pi,\omega} \otimes \sum_{j: \pi(x)\in S_j} (\pC_{S_j,\pi(x)} \otimes \Id_{\neq j}),\\
        \xorOP_{\blocks,x} &\coloneq \sum_{\pi,\omega} \ketbra{\pi,\omega} \otimes \sum_{j: \pi(x)\in S_j} (\xorOP_{S_j, \pi(x)}^{\omega| S_j} \otimes \Id_{\neq j}),\\
        \cOp{F}_{\blocks} \ket{\pi,\omega} \ket{J_1,\dots ,J_B} &\coloneq \ket{\omega^{-1},\pi^{-1}} \ket{J_1^{-1},\dots ,J_B^{-1}},\\
        \cpO_{\blocks,x}^b &\coloneq \begin{cases}
            \pC_{\blocks, x} \xorOP_{\blocks,x} \pC_{\blocks, x}^\dagger &\text{if $b=0$,}\\
            \cOp{F}_{\blocks} \pC_{\blocks, x} \xorOP_{\blocks,x} \pC_{\blocks, x}^\dagger \cOp{F}_{\blocks}^\dagger &\text{if $b=1$.}
        \end{cases}
    \end{align}
    We emphasize that the superscript $\omega|_{S_j}$ invokes the function-based version of $\xorOP$ from \cref{eq:appendix_relabeled_queries}. Notice that,
    \begin{align}
        \xorOP_{\blocks,x} \intPart &= \intPart \xorOP_{S,x},\\
        \cOp{F}_\blocks \intPart &= \intPart \cOp{F}_S\\
        \intPart^\dagger \Pi^{\blocks,b}_{x,j} \intPart &= \frac{s_j}{s}\Id.\label{eq:part_split}
    \end{align}
    Let us compare the compression primitives on $\calH_{S,x,I}$. Fix some $I\in \db_S$ and $x\notin \dom(I)$. For some $(\pi,\omega)\in \mathrm{Comp}(I)$, set $\dbJ = \dbJ(I,\pi,\omega)$, $j^*$ be the unique $j$ such that $\pi(x) \in S_j$ and $Y_{j} = (S\setminus \im(I)) \cap \omega(S_{j})$.
    Define,
    \begin{align}
        \ket{+_{\pi(x), \dbJ}^{\blocks}} &\coloneq \frac{1}{\sqrt{s_{j^*} - \abs{J_{j^*}}}} \sum_{y\in Y_{j^*}} \ket{J_1,\dots,J_{j^*}[\pi(x) \to \omega^{-1}(y)],\dots,J_B} \label{eq:appendix_block_uniform_extension}
    \end{align}
    Notice that for $y\in (S\setminus \im(I)) \cap \omega(S_{j^*})$, $\tfrac{w_{I[x\to y]}(\pi, \omega)}{\sqrt{s-\abs{I}}} = \tfrac{w_I(\pi,\omega)}{\sqrt{s_{j^*} - \abs{J_{j^*}}}}$. Therefore,
    \begin{align}
        \pC_{\blocks,x} \intPart \ket{I} &= \intPart \ket{+_{S,x,I}}, \label{eq:appendix_uniform_extension_forward}\\
        \pC_{\blocks,x} \intPart \ket{+_{S,x,I}} &= \intPart \ket{I}.\label{eq:appendix_uniform_extensions}
    \end{align}
    In the subspace defined above, the only remaining components are the states $\ket{\psi} $ such that,
    \begin{align*}
        &\ket{\psi} \coloneq \sum_{y\in S \setminus \im(I)} \alpha_y \ket{I[x\to y]}, &\sum_y \alpha_y = 0.
    \end{align*}
    By \cref{eq:pc_x_I}, $\pC_{S,x}\ket{\psi}=\ket{\psi}$. Expanding the block compression part, we have,
    \begin{align}
        &(\pC_{\blocks,x} \intPart - \intPart \pC_{S,x})\ket{\psi}\\
        &=\quad\sum_{(\pi,\omega)\in \mathrm{Comp}(I)} \frac{w_I(\pi,\omega) \sqrt{s - \abs{I}} }{s_{j^*} - \abs{J_{j^*}}} \left(\sum_{y\in Y_{j^*}} \alpha_y \right) \cdot \ket{\pi,\omega}\left(\ket{\dbJ} - \ket{+_{\pi(x), \dbJ}^{\blocks}}\right)\label{eq:appendix_compression_difference_psi}
    \end{align}
    Notice that,
    \begin{align}
        \EX_I\left[ \abs{\sum_{y\in Y_{j^*}} \alpha_y}^2 |\pi, \omega^{-1}|_{\im(I)} \right] = \frac{(s_{j^*} - \abs{J_{j^*}})(s - \abs{I} - (s_{j^*} - \abs{J_{j^*}}))}{(s-\abs{I})(s - \abs{I} - 1)} \norm{\ket{\psi}}^2. \label{eq:appendix_conditional_second_moment}
    \end{align}
    The database vectors in each summand in \cref{eq:appendix_compression_difference_psi} are orthogonal unit vectors. Therefore,
    \begin{align}
        \norm{(\pC_{\blocks,x} \intPart - \intPart \pC_{S,x})\ket{\psi}}^2 &= 2\EX_I \left[ \frac{s - \abs{I}}{(s_{j^*} - \abs{J_{j^*}})^2} \abs{\sum_{y\in Y_{j^*}} \alpha_y}^2 \right]\nonumber\\
        &= 2\EX_I \left[ \sum_{j: (s_j - \abs{J_j}) > 0} \frac{(s_j - \abs{J_j})}{s - \abs{I}} \frac{(s - \abs{I} - (s_{j} - \abs{J_{j}}))}{(s_{j} - \abs{J_{j}})(s - \abs{I} - 1)} \right] \norm{\psi}^2 \nonumber\\
        &=\frac{2\left(\EX_I[\abs{\{j: s_j - \abs{J_j} > 0\}}] - 1\right)}{s - \abs{I} - 1} \norm{\psi}^2.\label{eq:appendix_squared_compression_error}
    \end{align}
    where the second equality is due to \cref{eq:appendix_conditional_second_moment}. Under the assumption that $\ket{\psi}$ only has databases of size $d$ in its support, this implies that,
    \begin{align}
        &\norm{(\pC_{\blocks,x} \intPart - \intPart \pC_{S,x})\Pi_{\leq d, \I_S}} \leq \sqrt{\frac{2B}{s - d}}, &\mbox{(By \cref{eq:appendix_squared_compression_error,eq:appendix_uniform_extension_forward,eq:appendix_uniform_extensions})}
    \end{align}
    where we used the fact that error vectors for distinct $I$ are orthogonal due to \cref{eq:appendix_database_recovery}. Notice that
    \begin{align*}
         &\cpO_{\blocks,x}^0 \intPart - \intPart \cpO^0_{S,x} = \pC_{\blocks,x} \xorOP_{\blocks,x} (\pC_{\blocks,x} \intPart - \intPart \pC_{S,x}) \\
         &\quad + (\pC_{\blocks,x} \intPart - \intPart \pC_{S,x}) \xorOP_{S,x} \pC_{S,x}.
    \end{align*}
    Therefore,
    \begin{align}
        \norm{( \cpO_{\blocks,x}^b \intPart - \intPart \cpO^b_{S,x} ) \Pi_{\leq d, \I_S}} &\leq \sqrt{\frac{2B}{s-d}} + \sqrt{\frac{2B}{s-d-1}} \nonumber\\
        &= O \left( \sqrt{\frac{\log s}{s}} \right),\label{eq:appendix_partition_bound}
    \end{align}
    where we used the fact that the database can increase by at most one by \cref{eq:pc_x_I} and the bound on $B$. Note that the case when $b=1$ follows from the case with $b=0$ by conjugating the flips $\cOp{F}$.

    Next, we apply \cref{lem:power_four_useful_facts} with $D=S_j$ to obtain maps $\calV_{S_j}$ and oracle calls $\cmfO_{S_j}$. We define the space $\calH(\regP_S)$ and map $\calV_S$ as,
    \begin{align*}
        \calH(\regP_S) &\coloneq \calH(\Sym(S)^2) \otimes \bigotimes_{j\in [B]} \calH(\regP_{S_j})\\
        \intBlocks &\coloneq \Id_{\pi,\omega} \otimes \bigotimes_{j\in [B]} \calV_{S_j}\\
        \calV_S &\coloneq \intBlocks \intPart.
    \end{align*}
    Notice that $\calV_S^\dagger \calV_S = \Id$. The purified query operators are defined as follows,
    \begin{align*}
        \cmfO_{S,x}^0 &\coloneq \sum_{\pi,\omega}\ketbra{\pi,\omega} \otimes \sum_{j: \pi(x)\in S_j} \left(\cmfO_{S_j,\pi(x)}^{\omega|_{S_j}, 0} \otimes \Id_{\neq j}\right),\\
        \cmfO_{S,x}^1 &\coloneq \sum_{\pi,\omega}\ketbra{\pi,\omega} \otimes \sum_{j: \omega^{-1}(x)\in S_j} \left(\cmfO_{S_j,\omega^{-1}(x)}^{\pi^{-1}|_{S_j}, 1} \otimes \Id_{\neq j}\right),\\
        \cmfO_S &\coloneq \sum_{\substack{b\in \{0,1\}\\ x\in S}} \ketbra{b,x}_{\X} \otimes \cmfO_{S,x}^b.
    \end{align*}
    We initialize the register $\regP_S$ to $\ket{\bot}_{\regP_S}=\calV_S\ket{\bot}_{\I_S}$.
    Let $\ket{\xi}$ be such that $\Pi_{\leq d, \I_S} \ket{\xi} = \ket{\xi}$. For each $x,b$,
    \begin{align}
        &\norm{\left(\cmfO_{S,x}^b \intBlocks - \intBlocks \cpO_{\blocks,x}^b \right) \intPart \ket{\xi}}^2 \nonumber\\
        &\quad\leq K^2(d+3)^2 \sum_{j}\frac{\norm{\Pi^{\blocks,b}_{x,j}\intPart\ket{\xi}}^2}{\sqrt{s_j}} &\mbox{(By \cref{eq:isometry_between_purification_compression,eq:appendix_partition_isometry})}\nonumber\\
        &\quad= \frac{K^2(d+3)^2}{s} \sum_{j} \sqrt{s_j}\norm{\ket{\xi}}^2 &\mbox{(By \cref{eq:part_split})}\nonumber \\
        &\quad\leq \frac{6K^2 (d+3)^2}{\sqrt{s}}\norm{\ket{\xi}}^2 \label{eq:appendix_weighted_block_comparison}
    \end{align}
    Notice that,
    \begin{align*}
        &\cmfO_{S,x}^b \calV_S - \calV_S \cpO_{S,x}^b = \left(\cmfO_{S,x}^b \intBlocks - \intBlocks\cpO_{\blocks,x}^b \right) \intPart\\ 
        \quad&+ \intBlocks\left(\cpO_{\blocks,x}^b \intPart - \intPart \cpO^b_{S,x}\right).
    \end{align*}
    Therefore,
    \begin{align}
        &\norm{\left(\cmfO_{S,x}^b \calV_S - \calV_S \cpO_{S,x}^b\right) \Pi_{\leq d, \I_S}}\\
        &\quad\leq  \norm{\left(\cmfO_{S,x}^b \intBlocks - \intBlocks\cpO_{\blocks,x}^b \right) \intPart \Pi_{\leq d, \I_S}}\\
        &\quad + \norm{\left(\cpO_{\blocks,x}^b \intPart - \intPart \cpO^b_{S,x}\right) \Pi_{\leq d, \I_S}}\\
        &\quad \leq  \frac{\sqrt{6}K (d+3)}{s^{1/4}} + O \left( \sqrt{\frac{\log s}{s}} \right) &\mbox{(By \cref{eq:appendix_weighted_block_comparison,eq:appendix_partition_bound})}\label{eq:appendix_lemma_final_bound}
    \end{align}
    Let $K^\prime$ be some constant based on \cref{eq:appendix_lemma_final_bound}. The final result is due to \cref{lem:operator_norm_max}.
\end{proof}

Finally, we may prove the main result of the section.

\begin{proof}[Proof of \cref{lem:general_intertwiner_lemma}]
    We will use the variables from \cref{lem:appendix_local_permutation_comparison} and let $K$ be the constant in the bound. Define $\inter: \calH(\regP_S)\to \calH(\regP_S) \otimes \calH(\I_S)$ to be the following,
    \begin{align}
        \inter \ket{\xi} &\coloneq \ket{\bot}_{\regP_S} \otimes (\calV_S^\dagger\ket{\xi} )_{\I_S} + \left( (\Id - \calV_S \calV_S^\dagger)\ket{\xi} \right)_{\regP_S} \otimes \ket{\bot}_{\I_S}.\label{eq:appendix_intertwiner_definition}
    \end{align}
    Notice that as $(\Id - \calV_S \calV_S^\dagger)\ket{\bot}_{\regP_S} = 0$,
    \begin{align}
        \norm{\inter \ket{\xi}}^2 = \norm{\calV_S^\dagger \ket{\xi}}^2 + \norm{(\Id - \calV_S \calV_S^\dagger)\ket{\xi}}^2 = \norm{\ket{\xi}}^2. \label{eq:appendix_intertwiner_norm}
    \end{align}
    Furthermore,
    \begin{align}
        \inter \calV_S \ket{I} = \ket{\bot}_{\regP_S}\ket{I}_{\I_S} \label{eq:appendix_intertwiner_on_image}\\
        \inter \ket{\bot}_{\regP_S} = \ket{\bot}_{\regP_S} \ket{\bot}_{\I_S}.\nonumber
    \end{align}
    Assume that $q\leq s/2$ as otherwise the bound follows trivially. Using the algorithm $\alg$ up to $t\in [0,q]$ queries, define,
    \begin{align*}
        \ket{\psi_t^{(\cmfO)}}_{\A\regP_S} &\coloneq A_t \cmfO_S \dots A_1 \cmfO_S A_0 \ket{0}_\A\ket{\bot}_{\regP_S},\\
        \ket{\psi_t^{(\cpO)}}_{\A\I_S} &\coloneq A_t \cpO_S \dots A_1 \cpO_S A_0 \ket{0}_\A \ket{\bot}_{\I_S}.
    \end{align*}
    For each step $t$,
    \begin{align*}
        &\ket{\psi_{t+1}^{(\cmfO)}}_{\A\regP_S} - \calV_S \ket{\psi_{t+1}^{(\cpO)}}_{\A\I_S}\\
        &\quad= A_{t+1} \cmfO_S \left(\ket{\psi_t^{(\cmfO)}}_{\A\regP_S} - \calV_S \ket{\psi_t^{(\cpO)}}_{\A\I_S} \right) + A_{t+1} (\cmfO_S \calV_S - \calV_S \cpO_S)\ket{\psi_t^{(\cpO)}}_{\A\I_S}
    \end{align*}
    Taking the norm and considering the final states after $q\geq1$ queries,
    \begin{align}
        &\norm{\ket{\psi_q^{(\cmfO)}}_{\A\regP_S} - \calV_S \ket{\psi_q^{(\cpO)}}_{\A\I_S}}\nonumber\\
        &\quad\leq \norm{\ket{\psi_{q-1}^{(\cmfO)}}_{\A\regP_S} - \calV_S \ket{\psi_{q-1}^{(\cpO)}}_{\A\I_S}} + K\frac{q+3}{s^{1/4}} &\mbox{(By \cref{lem:bounded,lem:appendix_local_permutation_comparison})}\nonumber\\
        &\leq \frac{K q(q+3)}{s^{1/4}},\label{eq:appendix_reached_state_difference}
    \end{align}
    where we used the fact that $\ket*{\psi_{0}^{(\cmfO)}}_{\A\regP_S} - \calV_S \ket*{\psi_{0}^{(\cpO)}}_{\A\I_S}=0$. Notice that by \cref{eq:appendix_intertwiner_on_image},
    \begin{align}
        &(\inter \cmfO_S - \cpO_S \inter)\ket{\psi_q^{(\cmfO)}}\\
        &\quad= (\inter \cmfO_S - \cpO_S \inter)(\ket{\psi_q^{(\cmfO)}} - \calV_S \ket{\psi_q^{(\cpO)}}) + \inter(\cmfO_S \calV_S - \calV_S \cpO_S) \ket{\psi_q^{(\cpO)}}\label{eq:decomp_using_inter}
    \end{align}
    Therefore,
    \begin{align*}
        &\norm{(\inter \cmfO_S - \cpO_S \inter)\ket{\psi_q^{(\cmfO)}}}\\
        &\quad\leq 2 \norm{\ket{\psi_q^{(\cmfO)}} - \calV_S \ket{\psi_q^{(\cpO)}}} + \norm{(\cmfO_S \calV_S - \calV_S \cpO_S) \ket{\psi_q^{(\cpO)}}} &\mbox{(By \cref{eq:appendix_intertwiner_norm,eq:decomp_using_inter})}\\
        &\quad\leq 2 \norm{\ket{\psi_q^{(\cmfO)}} - \calV_S \ket{\psi_q^{(\cpO)}}} + K\frac{q+3}{s^{1/4}} &\mbox{(By \cref{lem:bounded,lem:appendix_local_permutation_comparison})}\\
        &\quad \leq 3K \frac{(q+1)^2}{s^{1/4}}. &\mbox{(By \cref{eq:appendix_reached_state_difference})}
    \end{align*}
\end{proof}

\end{document}